\documentclass[plain,biber]{nowfnt} %
\ExecuteBibliographyOptions{sorting=nyt} %
\usepackage[utf8]{inputenc}

\usepackage{pgfplots}
\pgfplotsset{compat=newest}
\usepackage{tikz}
\usetikzlibrary{arrows,matrix,positioning}
\usepackage{caption,subcaption}
\DeclareFieldFormat{sentencecase}{\MakeSentenceCase*{#1}}

\renewbibmacro*{title}{%
  \ifthenelse{\iffieldundef{title}\AND\iffieldundef{subtitle}}
    {}
    {\ifthenelse{\ifentrytype{article}\OR\ifentrytype{inbook}%
      \OR\ifentrytype{incollection}\OR\ifentrytype{inproceedings}%
      \OR\ifentrytype{inreference}}
      {\printtext[title]{%
        \printfield[sentencecase]{title}%
        \setunit{\subtitlepunct}%
        \printfield[sentencecase]{subtitle}}}%
      {\printtext[title]{%
        \printfield[titlecase]{title}%
        \setunit{\subtitlepunct}%
        \printfield[titlecase]{subtitle}}}%
     \newunit}%
  \printfield{titleaddon}}

\usepackage{color}
\usepackage{tabularx}
\usepackage{footnote}
\usepackage{array}
\usepackage{multirow}
\usepackage{arydshln}
\usetikzlibrary{patterns,calc}
\usepackage{scalerel}

\usepackage{amsmath,amsthm,amssymb,mathtools}
\usepackage[capitalise,noabbrev]{cleveref}
\crefname{Corollary}{corollary}{corollaries}
\setcounter{MaxMatrixCols}{20} %
\newtheorem{problem}[theorem]{Problem}
\newtheorem{construction}{Construction}[section]

\pgfplotsset{compat=1.8}
\pgfplotsset{
mystyle/.style={
    scale only axis,
    width=0.7\columnwidth,
    height=0.55\columnwidth,
    label style={inner sep=0, font=\normalsize},
    tick label style={font=\scriptsize},
    legend style={font=\scriptsize},
    mark size=3,
    major grid style={dashed},
    line width=0.8pt,
    axis line style = thin}
}
\tikzset{
	>=stealth',
	mycircle/.style={circle, draw=gray, very thick, text width=.1em, minimum height=1.5em, text centered},
	mycircle_small/.style={circle,draw=awgray_dark,fill = awgray_dark, inner sep=0,minimum size=.6em},
	mycircle_small_black/.style={circle,draw=black,fill = black, inner sep=0,minimum size=.6em},
	mybox/.style={rectangle,rounded corners,draw=black, thick,text width=1em,minimum height=4em,minimum width=4em,text centered},
	mybox_small/.style={rectangle,rounded corners,draw=black, thick,text width=1em,minimum height=2em,minimum width=2em,text centered},
	mybox_vec/.style={rectangle,rounded corners,draw=black, thick,text width=1em,minimum height=0.7em, minimum width=4em,text centered},
	mybox_vec_short/.style={rectangle,rounded corners,draw=black, thick,text width=1em,minimum height=0.7em, minimum width=2em,text centered},
	pil/.style={->, thick, shorten <=2pt, shorten >=2pt,},
}
\usepackage[linesnumbered,ruled,vlined,titlenumbered]{algorithm2e}

\newcommand{\printalgoIEEE}[1]
{{\centering
\scalebox{0.97}{
\begin{algorithm}[H]
 \begin{small}
 #1
 \end{small}
\end{algorithm}
}
}
}

\newcommand{\Mooremat}[3]{\ensuremath{\mathcal{M}_{#1,#2}\left( #3 \right)}}
\newcommand{\MoormatExplicit}[3]{
	\begin{pmatrix}
		#1_{1} & #1_{2} & \dots& #1_{#3}\\
		#1_{1}^{q} & #1_{2}^{q} & \dots& #1_{#3}^{q}\\
		\vdots &\vdots&\ddots& \vdots\\
		#1_{1}^{q^{#2-1}} & #1_{2}^{q^{#2-1}} & \dots& #1_{#3}^{q^{#2-1}}\\
	\end{pmatrix}}

\newcommand{\F}{\ensuremath{\mathbb{F}}}
\newcommand{\Fq}{\ensuremath{\mathbb{F}_q}}
\newcommand{\Fqm}{\ensuremath{\mathbb{F}_{q^m}}}
\newcommand{\Fqmu}{\ensuremath{\mathbb{F}_{q^{mu}}}}
\newcommand{\Gal}{\ensuremath{\mathrm{Gal}}}

\DeclareMathOperator{\extsmallfield}{ext}
\DeclareMathOperator{\rank}{rk}
\DeclareMathOperator{\Tr}{Tr}
\DeclareMathOperator{\supp}{supp}
\newcommand{\nkhalffrac}{\left\lfloor \frac{n-k}{2}\right\rfloor}

\newcommand{\myspace}[1]{\ensuremath{\mathcal{#1}}}

\newcommand{\dimZ}{\ensuremath{\zeta}}

\providecommand{\Mat}[1]{\ensuremath{\mathbf{#1}}}

\newcommand{\ceil}[1]{\ensuremath{\left\lceil{#1}\right\rceil}}

\newcommand{\myspan}[1]{\left\langle #1 \right\rangle}
\newcommand{\dhalffrac}{\left\lfloor \frac{d-1}{2}\right\rfloor}

\newcommand{\Gaussbinom}[2]{\ensuremath{
{#1
\brack
#2}_q
}}
\newcommand{\eps}{\varepsilon}

\newcommand{\mycode}[1]{\ensuremath{\mathcal{#1}}}
\newcommand{\fontmetric}[1]{\mathsf{#1}}
\newcommand{\codeRank}[1]{\ensuremath{(#1)_{q^m}^\fontmetric{R}}}
\newcommand{\codelinearRank}[1]{\ensuremath{[#1]_{q^m}^\fontmetric{R}}}
\newcommand{\idealCode}[1]{\mathcal{IM}(#1)}

\newcommand{\Gabcode}[2]{\ensuremath{\mathcal{G}(#1,#2)}}
\newcommand{\IntGabcode}[1]{\ensuremath{\mathcal{IG}(#1)}}
\newcommand{\IntGab}{\ensuremath{\mathcal{IG}}}

\newcommand{\Gabmat}{\vec{G}_{\mathsf{G}}}
\newcommand{\Gabvec}{\vec{g}_{\mathsf{G}}}

\newcommand{\Randcode}{\mathcal{C}_{\mathsf{R}}}
\newcommand{\Randmat}{\vec{H}_{\mathsf{R}}}
\newcommand{\Randvec}{\vec{h}_{\mathsf{R}}}

\newcommand{\wR}{w_{\mathsf{R}}}

\newcommand{\numbRowErasures}{\varrho}
\newcommand{\numbColErasures}{\gamma}

\newcommand{\y}{\vec{y}}
\renewcommand{\u}{\vec{u}}
\renewcommand{\a}{\vec{a}}
\renewcommand{\v}{\vec{v}}
\renewcommand{\r}{\vec{r}}
\newcommand{\w}{\vec{w}}

\newcommand{\h}{\vec{h}}
\newcommand{\m}{\vec{m}}
\renewcommand{\c}{\vec{c}}
\newcommand{\e}{\vec{e}}

\newcommand{\I}{\vec{I}}
\newcommand{\G}{\vec{G}}
\renewcommand{\P}{\vec{P}}
\newcommand{\s}{\vec{s}}
\newcommand{\z}{\vec{z}}
\newcommand{\x}{\vec{x}}
\newcommand{\ZZ}{\mathbb{Z}}
\newcommand{\A}{\vec{A}}
\newcommand{\B}{\vec{B}}
\renewcommand{\H}{\vec{H}}
\newcommand{\X}{\vec{X}}
\newcommand{\Y}{\vec{Y}}

\newcommand{\cN}{\mathcal{N}}
\newcommand{\cP}{\mathcal{P}}
\newcommand{\cB}{\mathcal{B}}

\newcommand{\wtR}{\rank_{q}}

\newcommand{\setup}[1]{\ensuremath{\textsf{Setup}\left(#1\right)}}
\newcommand{\param}{\ensuremath{\textsf{param}}}
\newcommand{\keyGen}[1]{\ensuremath{\textsf{KeyGen}\left(#1\right)}}
\newcommand{\enc}[1]{\ensuremath{\textsf{Enc}\left(#1\right)}}
\newcommand{\dec}[1]{\ensuremath{\textsf{Dec}\left(#1\right)}}
\newcommand{\pk}{\ensuremath{\textsf{pk}}}
\newcommand{\secLevel}{\ensuremath{\textsf{SL}}}
\newcommand{\sk}{\ensuremath{\textsf{sk}}}
\newcommand{\ct}{\ensuremath{\textsf{ct}}}
\newcommand{\plainText}{\ensuremath{\vec{m}}}
\newcommand{\cipherText}{\ensuremath{\vec{c}}}
\newcommand{\distTrans}[1]{\ensuremath{\mathfrak{D}\left(#1\right)}}
\newcommand{\GL}[2]{\ensuremath{\mathsf{GL}_{#1}\left(#2\right)}}
\newcommand{\tdist}{\ensuremath{t_{X}}}

\newcommand{\decode}[2]{\ensuremath{\mathsf{Decode}_{#1}\left(#2\right)}}

\newcommand{\spannedBy}[1]{\ensuremath{\langle #1\rangle_q}}
\newcommand{\assignDet}{\ensuremath{\leftarrow}}
\newcommand{\assignRand}{\ensuremath{\xleftarrow{\$}}}

\newcommand{\qImg}[2]{\ensuremath{\text{Im}_{#1}(#2)}}

\newcommand{\evpolys}{\mathcal{P}^{n,k}_{\tVec,\hVec,\etaVec}}
\newcommand{\TRS}[3]{\mathcal{TG}_{#1}^{#3}[#2]}
\newcommand{\NN}{\mathbb{N}}
\newcommand{\alphaVec}{\boldsymbol{\alpha}}

\newcommand{\tVec}{{\boldsymbol{t}}}
\newcommand{\hVec}{\ensuremath{\boldsymbol{h}}}
\newcommand{\etaVec}{\ensuremath{\boldsymbol{\eta}}}

\newcommand{\qpow}[1]{^{[#1]}}
\newcommand{\twisted}{$[k,\tVec,\hVec,\etaVec]$-twisted }
\newcommand{\twistedC}{$[\alphaVec,\tVec,\hVec,\etaVec]$-twisted }
\newcommand{\Cmult}{\TRS{\alphaVec,\tVec,\hVec,\etaVec}{n,k}{}}
\newcommand{\numTwists}{\ell}

\newcommand{\code}{\ensuremath{\mathcal{C}}}

\newcommand{\tpub}{\ensuremath{t_{\mathsf{pub}}}}
\newcommand{\kpub}{\ensuremath{\mathbf{k}_{\mathsf{pub}}}}
\newcommand{\tikznode}[2]{%
	\ifmmode%
	\tikz[remember picture,baseline=(#1.base),inner sep=0pt] \node (#1) {$#2$};
	\else
	\tikz[remember picture,baseline=(#1.base),inner sep=0pt] \node (#1) {#2};%
	\fi}
\tikzset{%
	mybox_block/.style={rectangle,rounded corners,draw=black, thick,text width=1em,minimum height=2em,minimum width=4.75em,text centered},
	[highlight/.style={rectangle,rounded corners,fill=#1!15,draw,fill opacity=0.5,thick,inner sep=0pt},
	highlight/.default=gray],
	plot1/.style = {thick,
		dotted,
		mark=+}
}
\newif\ifcomment
\commentfalse

\newcommand{\lh}[1]{\ifcomment {\color{teal}[lh: #1]} \fi}

\newcommand{\set}[1]{\ensuremath{\mathcal{#1}}}
\renewcommand{\vec}[1]{\ensuremath{\mathbf{#1}}}
\newcommand{\ve}[1]{\vec{#1}}
\newcommand{\0}{\ensuremath{\ve{0}}}
\newcommand{\1}{\ensuremath{\ve{1}}}

\newcommand{\cA}{\mathcal{A}}

\newcommand{\cE}{\mathcal{E}}

\newcommand{\cI}{\mathcal{I}}
\newcommand{\cW}{\mathcal{W}}

\newcommand{\intervallincl}[2]{\ensuremath{[#1,#2]}}
\newcommand{\dimAmbSpace}{\ensuremath{N}}
\newcommand{\myref}[2]{\hyperref[#2]{#1~\ref{#2}}}
\DeclareMathOperator{\rk}{rk}

\newcommand{\ProjspaceAny}[1]{\mathbb{P}_q(#1)}

\definecolor{TUMdgray}{RGB}{088,088,090}
\definecolor{TUMmgray}{RGB}{156,157,159}
\definecolor{TUMlgray}{RGB}{215,217,218}
\definecolor{TUMyellow}{RGB}{255,180,000}
\definecolor{TUMorange}{RGB}{255,128,000}
\definecolor{TUMblue}{RGB}{000,101,189}
\definecolor{TUMgreen}{RGB}{0,124,48}
\definecolor{TUMred}{RGB}{196,7,27}
\definecolor{TUMgold}{RGB}{255,200,0}

\definecolor{blueskydeep}{RGB}{0,191,255}
\definecolor{lightcoral}{RGB}{240,128,128}
\definecolor{limegreen}{RGB}{50,205,50}

\definecolor{mycyan}{rgb}{0.30100,0.74500,0.93300}%
\definecolor{mygreen}{rgb}{0.46600,0.67400,0.18800}%

\newcommand{\myspaceDual}[1]{\ensuremath{#1}^{\perp}}
\newcommand{\Rowspace}[1]{\ensuremath{\left\langle #1\right\rangle_{\!q}}}

\newcommand{\RowspaceHuge}[1]{\scaleleftright[3ex]{\Biggl\langle}{#1}{\Biggr\rangle}\!\!\!\raisebox{-27pt}{$q$}} %

\newcommand{\Colspace}[1]{\Rowspace{#1}^c}%
\newcommand{\Grassm}[1]{\mathbb{G}_q(#1)}

\newcommand{\nTransmit}{\ensuremath{n_t}}
\newcommand{\nReceive}{\ensuremath{n_r}}
\newcommand{\delOp}[1]{\ensuremath{\mathfrak{H}_{#1}}}
\newcommand{\insertions}{\ensuremath{\gamma}}
\newcommand{\deletions}{\ensuremath{\delta}}

\newcommand{\Rankdist}[1]{d_r(#1)}
\newcommand{\RankdistNoInput}{d_r}
\newcommand{\Subspacedist}[1]{d_{\mathsf{S}}(#1)}
\newcommand{\Injectiondist}[1]{d_I(#1)}
\newcommand{\SubspacedistNoInput}{d_s}

\newcommand{\normSubspacedist}{\ensuremath{\eta}}

\newcommand{\volSubBall}[1]{\ensuremath{V_S(#1)}}
\newcommand{\volSubBallVardy}[1]{\ensuremath{\tilde{V}_S(#1)}}

\newcommand{\List}{\set{L}}

\newcommand{\avgListSizeCS}[1]{\ensuremath{\bar{\List}(#1)}}
\newcommand{\avgListSizeCSstar}[1]{\ensuremath{\bar{\List}'(#1)}}
\newcommand{\avgListSizeIS}[1]{\ensuremath{\bar{\List}_\text{I}(#1)}}

\newcommand{\avgListSizeFS}[1]{\ensuremath{\bar{\List}_\text{F}(#1)}}

\newcommand{\colVec}[2]{\ensuremath{#1^{#2\times1}}}

\newcommand{\Gab}[1]{\ensuremath{\mathcal{G}(#1)}}
\newcommand{\IntSub}[1]{\ensuremath{\mathcal{I}\mathcal{S}[#1]}}
\newcommand{\FGab}[1]{\ensuremath{\mathcal{F}\mathcal{G}[#1]}}
\newcommand{\FSub}[1]{\ensuremath{\mathcal{F}\mathcal{S}[#1]}}
\newcommand{\CSub}{\ensuremath{\mathcal{C}_s}}
\newcommand{\CSubComp}{\ensuremath{\mathcal{C}_s^{\perp}}}
\newcommand{\CRank}{\ensuremath{\mathcal{C}_r}}

\newcommand{\Basis}{\ensuremath{\myset{B}}}

\newcommand{\NormbasisOrdered}{\boldsymbol{\Normelement}}

\newcommand{\Normelement}{\beta}

\newcommand{\mymap}[1]{\textup{{#1}}}

\newcommand{\extsmallfieldinput}[1]{\ensuremath{\mymap{ext}_{\NormbasisOrdered}\left(#1\right)}}
\newcommand{\extsmallfieldinputInverse}[1]{\ensuremath{\mymap{ext}^{-1}_{\NormbasisOrdered}\left(#1\right)}}

\newcommand{\lifting}{\Pi}
\newcommand{\liftingMap}[2]{\ensuremath{\lifting_{#1}\!\left(#2\right)}}

\newcommand{\vecFq}[1]{\ensuremath{\underline{#1}}}
\newcommand{\vecalpha}{\ensuremath{\boldsymbol{\alpha}}}

\newcommand{\foldedVec}{\ensuremath{\vecalpha_{F}}}
\newcommand{\intOrder}{\ensuremath{L}}

\newcommand{\foldPar}{\ensuremath{h}}
\newcommand{\intDim}{\ensuremath{s}}

\newcommand{\ambSpace}{\ensuremath{W_s}}
\newcommand{\pe}{\ensuremath{\alpha}}
\newcommand{\lenFG}{\ensuremath{g}}

\newcommand{\Linpolyring}{\mathbb{L}_{q^m}[x]}

\newcommand{\LinpolyringK}{\mathbb{L}_{q^m}[x]_{<k}}

\newcommand{\quadbinom}[2]{\left[\genfrac{}{}{0pt}{}{#1\vphantom{N_N}}{#2\vphantom{N}}\right]}
\newcommand{\AntiCode}[1]{\ensuremath{\mathcal{A}\left(#1\right)}}

\newcommand{\myset}[1]{\mathcal{#1}}
\newcommand{\vecelements}[1]{\ensuremath{(#1_0 \ #1_1 \ \dots \ #1_{n-1})}}
\newcommand{\vecelementsm}[1]{\ensuremath{(#1_0 \ #1_1 \ \dots \ #1_{m-1})}}

\newcommand{\vecelementsArb}[2]{\ensuremath{(#1_0 \ #1_1 \ \dots \ #1_{#2-1})}}
\newcommand{\MoorematNoInput}{\mymap{qvan}}
\newcommand{\MinSubspacePoly}[1]{M_{#1}(x)}

\newcommand{\Fqs}{\ensuremath{\mathbb F_{q^s}}}

\newcommand{\Ball}[2]{\mathcal{B}_\sfR^{(#1)}(#2)}
\newcommand{\Sphere}[2]{\mathcal{S}_\sfR^{(#1)}(#2)}

\newcommand{\maxCardinalityRank}[1]{\mycode{A}_{q^m}^\sfR\left(#1\right)}
\newcommand{\Uniquecorrcap}{\ensuremath{\tau_0}}
\newcommand{\MRD}[1]{\ensuremath{\mycode{MRD}(#1)}}
\newcommand{\MRDlinear}[1]{\ensuremath{\mycode{MRD}[#1]}}
\newcommand{\veceval}[2]{\ensuremath{\left(#1(#2_0) \ #1(#2_1) \ \dots \ #1(#2_{n-1})\right)}}
\newcommand{\vecevalm}[2]{\ensuremath{\left(#1(#2_0) \ #1(#2_1) \ \dots \ #1(#2_{m-1})\right)}}
\newcommand{\intervallexcl}[2]{\ensuremath{[#1,#2-1]}}
\newcommand{\setelements}[2]{\ensuremath{\{#1_0,#1_1,\dots, #1_{#2-1}\}}}

\DeclareMathOperator{\wt}{wt}

\newcommand{\sfH}{\mathsf{H}}
\newcommand{\sfR}{\mathsf{R}}
\newcommand{\sfS}{\mathsf{S}}

\newcommand{\dminHam}{d^{(\sfH)}}
\newcommand{\dminRank}{d^{(\sfR)}}

\newcommand{\wtHam}{\wt_{\sfH}}

\newcommand{\Hlocal}{\H^{\mathsf{(local)}}}
\newcommand{\Hglobal}{\H^{\mathsf{(global)}}}

\newcommand{\eglob}{s} %

\newcommand{\bbE}{\mathbb{E}}
\newcommand{\bbElocal}{\mathbb{E}^{\mathsf{(local)}}}

\title{Rank-Metric Codes and Their Applications}

\maintitleauthorlist{
  Hannes Bartz\\
  German Aerospace Center (DLR) \\
  hannes.bartz@dlr.de
  \and
  Lukas Holzbaur \\
  Technical University of Munich \\
  lukas.holzbaur@tum.de
  \and
  Hedongliang Liu \\
  Technical University of Munich \\
  lia.liu@tum.de
  \and
  Sven Puchinger\\
  Hensoldt Sensors GmbH\\
  mail@svenpuchinger.de
  \and
  Julian Renner\\
  Technical University of Munich \\
  julian.renner@tum.de
  \and
  Antonia Wachter-Zeh\\
  Technical University of Munich \\
  antonia.wachter-zeh@tum.de
}

\issuesetup
{%
 copyrightowner={A.~Heezemans and M.~Casey},
 volume        = xx,
 issue         = xx,
 pubyear       = 2021,
 isbn          = xxx-x-xxxxx-xxx-x,
 eisbn         = xxx-x-xxxxx-xxx-x,
 doi           = 10.1561/XXXXXXXXX,
 firstpage     = 1, %
 lastpage      = 18
 }

\addbibresource{main.bib}

\author[1]{Hannes Bartz}
\author[2]{Lukas Holzbaur}
\author[2]{Hedongliang Liu}
\author[2]{Sven Puchinger}
\author[2]{Julian Renner}
\author[2]{Antonia Wachter-Zeh}

\affil[1]{German Aerospace Center (DLR); hannes.bartz@dlr.de}
\affil[2]{Technical University of Munich; lukas.holzbaur@tum.de}
\affil[3]{Technical University of Munich; lia.liu@tum.de}
\affil[4]{Hensoldt Sensors GmbH; mail@svenpuchinger.de}
\affil[5]{Technical University of Munich; julian.renner@tum.de}
\affil[6]{Technical University of Munich; antonia.wachter-zeh@tum.de}

\articledatabox{\nowfntstandardcitation}

\begin{document}

\makeabstracttitle
\begin{abstract}
  The rank metric measures the distance between two matrices by the rank of their difference.
  Codes designed for the rank metric have attracted considerable attention in recent years, reinforced by network coding and further motivated by a variety of applications.
  In code-based cryptography, the hardness of the corresponding generic decoding problem can lead to systems with reduced public-key size. In distributed data storage, codes in the rank metric have been used repeatedly to construct codes with locality, and in coded caching, they have been employed for the placement of coded symbols.
  This survey gives a general introduction to rank-metric codes, explains their most important applications, and highlights their relevance to these areas of research.
\end{abstract}

\chapter{Introduction}
Codes composed of matrices are a natural generalization of codes composed of vectors.
Codes in the rank metric of length $n \leq m$ can be considered as a set of $m \times n$ matrices over a finite field $\Fq$ or equivalently as a set of vectors of length $n$ over the extension field $\Fqm$.
The rank weight of each codeword vector is the rank of its matrix representation and the rank distance between two matrices is the rank of their difference. These definitions rely on the fact that the rank distance is indeed a metric.
Several code constructions and basic properties of the rank metric show strong similarities to codes in the Hamming metric. However, there are also notable differences, e.g., in the list decoding properties.

Error-correcting codes in the rank metric were first considered by \citet{Delsarte_1978}, who proved a Singleton-like upper bound on the cardinality of rank-metric codes and constructed a class of codes achieving this bound\footnote{In analogy to MDS codes, such codes are called Maximum Rank Distance (MRD) codes.}.
This class of codes was reintroduced by \citet{Gabidulin_TheoryOfCodes_1985} in his fundamental paper \emph{``Theory of Codes with Maximum Rank Distance''}. Further, in his paper several properties of codes in the rank metric and an efficient decoding algorithm based on an equivalent of the Euclidean algorithm were shown. Since Gabidulin's publication contributed significantly to the development of error-correcting codes in the rank metric,
the most famous class of codes in the rank metric --- the equivalents of Reed--Solomon codes --- are nowadays called \emph{Gabidulin codes}. These codes can be defined by evaluating non-commutative {linearized polynomials}, proposed by Ore \citep{Ore_OnASpecialClassOfPolynomials_1933,Ore_TheoryOfNonCommutativePolynomials_1933}.
Independently of the previous work, \citet{Roth_RankCodes_1991} discovered in 1991 codes in the rank metric and applied them for correcting crisscross error patterns.

The goal of this survey is to provide an overview of the known properties of rank-metric codes and their application to problems in different areas of coding theory and cryptography.

Chapter~\ref{ch:introRankMetric} provides a brief introduction to rank-metric codes, their properties and their decoding. After providing basic notations for finite fields and linearized polynomials, we consider codes in the rank metric. We first define the rank metric and give basic properties and bounds on the cardinality of codes in the rank metric (namely, equivalents of the Singleton, sphere-packing, and the Gilbert–Varshamov bounds). Then, we define Gabidulin codes, show that they attain the Singleton-like upper bound on the cardinality and give their generator and parity-check matrices.
We describe their decoding up to half the minimum rank-distance by syndrome-based decoding. A summary of how to accomplish this efficiently is given and the problem of error-erasure correction is considered. We also give an overview on list decoding of Gabidulin codes and consider interleaved and folded Gabidulin codes. Finally, further classes of rank-metric codes such as twisted Gabidulin codes are briefly discussed. %

Rank-metric codes have several applications in communications and security, including public-key code-based cryptography.
In 1978, Rivest, Shamir and Adleman (RSA)~\citep{rivest1978method} proposed the first public-key cryptosystem in order to guarantee secure communication in an asymmetric manner. Since then, public-key cryptography is essential to protect data via encryption, to enable secure key exchange for symmetric encryption, and to protect the authenticity and integrity of data via digital signature schemes.
Only one year after the RSA cryptosystem {was introduced}, whose security relies on the hardness of the integer factorization problem, \citet{mceliece1978public} proposed the first public key cryptosystem based on error-correcting codes.
In his pioneering work McEliece showed that hard problems in coding theory can be used to derive public-key cryptosystems.
A crucial drawback of the McEliece cryptosystem compared to other public-key cryptosystems, such as RSA or elliptic curve cryptosystems (ECC), is its large public-key size.
The recent developments in quantum computing rendered all of the currently used public-key cryptosystems whose security relies on the integer factorization or the discrete logarithm problem insecure.
In particular, Shor's algorithm~\citep{shor1999polynomial} allows to solve both, the integer factorization problem and the discrete logarithm problem in polynomial time, which in turn allows to break the corresponding public-key cryptosystems completely, given a sufficiently large quantum computer.
Since code-based public-key cryptosystems are resilient against all known attacks on quantum computers, including Shor's algorithm, they are considered to be \emph{quantum-resistant} (or post-quantum secure) cryptosystems.
Quantum-resistant cryptography is an important research area to ensure the long-term security of transmitted and stored data.
Therefore, the National Institute of Standards and Technology (NIST) opened a standardization call, which meanwhile has reached its final round~\citep{NIST2017post}.
In order to reduce the public-key size, many new McEliece variants based on several codes were proposed, both before and independent of the NIST competition and also as submissions to the NIST competition. This includes a long history of variants based on codes in the rank metric.
The first McEliece variant in the rank metric was proposed by \citet{gabidulin1991ideals} and is therefore known as the GPT cryptosystem.
Although no rank-metric based schemes are among the finalists, rank-metric based schemes are considered as potential candidates for future standards~\citep{nist2020status}.

Chapter~\ref{chap:crypto} gives an overview of rank-metric code-based quantum-resistant encryption and authentication schemes.
First, hard problems which can be used to design rank-metric code-based cryptosystems are considered.
Then, a general framework to define most GPT variants is given, and the particular variants are described.
Finally, an overview on non-GPT-like cryptosystems, including the NIST submission Rank Quasi Cyclic (RQC), and rank-metric code-based signature schemes is given.

Rank-metric codes find applications not only in the cryptographic protection of data, but also in ensuring its integrity.
The increase in the amount of data that is stored by distributed storage systems has motivated a transition from replication of the data to the use of more involved storage codes, most commonly Maximum Distance Separable (MDS) codes. By storing one symbol of a codeword on each node, a node failure then corresponds to a symbol erasure and the Hamming distance of the storage code provides a guarantee on the number of failures the system can tolerate before data loss occurs. However, as the number of nodes in these systems grows, not only the number of tolerable node failures, but also the efficiency of the node repair process becomes a concern. Codes with locality \citep{Huang2007, chen2007maximally,gopalan2012locality} address this issue by reducing the number of nodes required for repair in the more likely event of a single or small number of node failures. While these codes are designed for the Hamming metric, codes for the rank metric, in particular, Gabidulin codes have repeatedly been used to construct these codes, especially for the stronger notion of maximally recoverable (MR) locally repairable/recoverable codes (LRCs)\footnote{MR LRCs are also referred to as partial MDS (PMDS) codes.}.
Further, rank-metric codes have also been used in another area related to distributed storage, referred to as coded caching.  Caching is a commonly used strategy to reduce the traffic rate during the peak hours. The communication procedure consists of two phases: placement and delivery. The seminal work by~\citet{MAN14} has shown that applying coding merely in the delivery phase can reduce the traffic rate. As a further improved scheme~\citep{YMA18} has been shown to be order-optimal under uncoded placement, schemes with coded placement~\citep{CFL16,GV18} become of interest in order to further reduce the traffic rate during the delivery phase. Rank-metric codes have been utilized in the scheme with coded placement by \citet{TC18}, which has been shown to outperform the optimal scheme with uncoded placement~\citep{YMA18} in the regime of small cache size.

In Chapter~\ref{chap:storage}, the application of rank-metric codes to distributed data storage is outlined.
First, we explore the connection between codes with locality and rank-metric codes by providing a high-level description of the property exploited by many constructions of (MR) LRCs.
Second, we present the application of Maximum Rank Distance (MRD) %
codes in the coded caching scheme by~\citet{TC18}. %

Network coding has been attracting attention since the fundamental works by \citet{Ahlswede_NetworkInformationFlow_2000} and \citet{LiYeungCai-LinearNetworkCoding_2003} showed that the capacity of multicast networks can be achieved by performing linear combinations of packets instead of just forwarding them. Rank-metric codes have been used in network coding solutions~\citep{etzion2018vector} and error correction in coherent networks~\citep{SilvaKschischang-MetricsErrorCorrectionNetworkCoding_2009}. For random networks, rank-metric codes are used to correct errors by the lifting construction~\citep{silva_rank_metric_approach}. In addition, the subspace metric~\citep{koetter_kschischang} was introduced for error control, as this metric perfectly captures the type of errors that occur in (random) linear network coding. Due to the close relation between the rank metric and the subspace metric, rank-metric codes are a natural choice to construct subspace codes for error control in random network coding.

Chapter~\ref{chap:network_coding} introduces constructions of network codes based on MRD codes, constructions of subspace codes by lifting rank-metric codes, bounds on the cardinality, and the list decoding capability of subspace codes.
We first present constructions based on MRD codes for a class of
deterministic multicast networks, which guarantee that all the receivers decode all the messages. Two error models commonly considered in networks are described. We introduce subspace codes, with a focus on constructions based on lifting rank-metric codes and provide upper bounds on the size of subspace codes. Further, an analysis of list decoding subspace codes is provided.

Finally, Chapter~\ref{chap:concl} concludes this survey and shortly mentions further applications of rank-metric codes.

\chapter{Basics on Rank-Metric Codes}\label{ch:introRankMetric}

Codes in the rank metric have drawn increasing interest in recent years due to their application to cryptography, distributed storage and network coding. The survey by~\citet{gorla2018codes} mainly focuses on combinatorial properties of rank-metric codes, while the one by~\citet{sheekey201913} considers MRD codes and their properties. In particular,~\citet{gorla2018codes, gorla2019rankmetric} treat (amongst others): isometries, anticodes, duality, MacWilliams identities, generalized weights in the rank metric. We will therefore not considers these topics in this survey.  An English version of the textbook by~\citet{GabidulinVladmir2021} has been published recently which contains a collection of Gabidulin's results in the area of rank-metric codes. Our survey deals shortly with properties of rank-metric codes, but the main focus is on their decoding and their applications to code-based cryptography, storage, and network coding.

In this chapter, we give an introduction to rank-metric codes, their properties and decoding.
In Section~\ref{sec:notation}, we provide the basic notation used in this survey.
Section~\ref{sec:lin-poly} defines linearized polynomials and recalls some basic properties.
In Section~\ref{sec:rank-metric}, we define the rank metric and state bounds on the cardinality of rank-metric codes, i.e., sphere-packing, Gilbert--Varshamov, and Singleton-like bounds.
Section~\ref{sec:weightDist}, \ref{sec:constantRank} and \ref{sec:coveringProperty} provide the weight distribution of rank metric codes, constant-rank codes and covering property of rank-metric codes, respectively.
Section~\ref{sec:gabidulin} defines Gabidulin codes, proves their minimum rank distance and gives generator and parity-check matrices.
In Section~\ref{sec:decoding_gabidulin_codes}, we describe syndrome-based decoding of Gabidulin codes, i.e., we prove the key equation, show how to solve it and how to reconstruct the final error. We also outline error-erasure decoding and summarize known fast decoders. In Section~\ref{sec:list-dec-gab}, we discuss results on list decoding of Gabidulin codes. Further, we introduce interleaved Gabidulin codes (Section~\ref{sec:interleaved}), folded Gabidulin codes (Section~\ref{subsec:FoldedGabidulin}) and summarize further classes of MRD codes (Section~\ref{sec:furtherclasses}).

\section{Notation}\label{sec:notation}
This section introduces notation that is used throughout the survey.

We denote a finite field of size $q$ by $\F_q$ %
and the set of all row vectors and matrices over this field by $\F_q^{n}$ and $\F_q^{m\times n}$, respectively. The ring of integers is given by $\mathbb{Z}$ and the non-negative integers by $\mathbb{Z}_{\geq 0}$. The set of integers $\{i \ | \ a \leq i \leq b\}$ is denoted by $[a,b]$ or by $[b]$ if $a=1$. %
To simplify the notation, we denote $\alpha^{q^i} = \alpha^{[i]}$.

For a linear code of length $n$, dimension $k$ over $\Fqm$, and $\Fq$-rank distance $d$ we write $\codelinearRank{n,k,d}$ and for a non-linear code of cardinality $M$ we write $\codeRank{n,M,d}$. Similarly, a code of Hamming distance $d$ over $\Fqm$ is denoted $[n,k,d]_{q^m}^\sfH$. The rank and Hamming distance of a code $\code$ are given by $d_\sfR(\code)$ and $d_\sfH(\code)$, respectively.

The entries of a matrix $\A \in \F_q^{m\times n}$ are given by $A_{i,j}$ for $i\in [1,m],j\in [1,n]$ and the entries of a vector $\a\in \F^{n}$ by $a_i$ for $i\in [1,n]$. Denote by $\spannedBy{\A}$ the \emph{row-space} of $\A$, i.e., the $\Fq$-linear vector space spanned by the rows of ${\A}$. Similarly, $\Colspace{\A}$ denotes its column space.
We define a mapping from $\Fqm^n$ to $\F_q^{m \times n}$ by
   \begin{align*}
\extsmallfield_{\NormbasisOrdered}:\Fqm^{n} &\mapsto \Fq^{m \times n}\label{eq:mapping_smallfield}\\
  \vec{a} = (a_1,\hdots,a_n) &\mapsto \vec{A} =
                               \begin{pmatrix}
                                 A_{1,1} & \hdots & A_{1,n} \\
                                 \vdots & \ddots & \vdots \\
                                 A_{m,1} &  \hdots & A_{m,n} \\
                               \end{pmatrix},
\end{align*}
where $\NormbasisOrdered = (\beta_1,\beta_2,\dots,\beta_{m})$ is a basis of $\Fqm$ over $\Fq$ and
\begin{equation*}
\quad a_j = \sum_{i=1}^{m} A_{i,j} \beta_i, \quad \forall j \in [1,n].
\end{equation*}
The weight of a vector $\a \in \Fqm^{n}$ in the rank metric is its $\F_q$-rank, which is defined as $\rank_q(\a) = \rank(\extsmallfield_{\NormbasisOrdered}(\a))$. %

The trace operator is given by
\begin{align*}
  \Tr_{q^m/q}:\Fqm^{n} &\rightarrow \Fq^{n}\\
\vec{a} = (a_1,\hdots,a_n) &\mapsto \Bigg( \sum_{i=0}^{m-1} a_1^{[i]}, \hdots, \sum_{i=0}^{m-1} a_n^{[i]} \Bigg).
\end{align*}

The Gaussian binomial coefficient, i.e., the number of $r$-dimensional subspaces of the vectorspace $\F_q^s$, is given by
\begin{equation*}
  \Gaussbinom{s}{r}\coloneqq
  \begin{cases}
    \frac{ (1-q^s)(1-q^{s-1})\hdots(1-q^{s-r+1})}{(1-q)(1-q^2)\hdots(1-q^{r})} &\text{ for $r \leq s$} \\
    0 &\text{ for $r> s$},
    \end{cases}
\end{equation*}
where $s$ and $r$ are non-negative integers.
The collection of these subspaces of dimension $\dimAmbSpace$ of the ambient space is denoted $\ProjspaceAny{\dimAmbSpace}$ and
$\Grassm{\dimAmbSpace,i}$ denotes the set of subspaces of dimension $i$ in $\ProjspaceAny{\dimAmbSpace}$.
Hence, $\ProjspaceAny{\dimAmbSpace} = \cup_{i=1}^N \Grassm{\dimAmbSpace,i}$.

\section{Linearized Polynomials}\label{sec:lin-poly}
Linearized polynomials constitute a \emph{non-commutative} ring and will later provide the definition of \emph{Gabidulin codes}. %
Apart from their application to coding theory, linearized polynomials are used, e.g., in root-finding of \emph{usual} polynomials and as permutation polynomials in cryptography.

They are also called $q$-polynomials and were introduced in 1933 by
\citet{Ore_OnASpecialClassOfPolynomials_1933} as a special case of \emph{skew polynomials} \citep{Ore_TheoryOfNonCommutativePolynomials_1933}.
The theory of skew polynomials is quite rich and widely investigated \citep{Jacobson-TheTheoryOfRings_1943,Giesbrecht-FactoringSkewPolynomials,Jacobson-FiniteDimensionalDivisonAlgebrasOverFields_2010}. It is possible
to construct error-correcting codes based on skew polynomials \citep{BoucherGeiselmannUlmer-SkewCyclicCodes_2007,BoucherUlmer-CodingWithSkewPolynomialRings_2009,BoucherUlmer-CodesAsModulesoverSkewPolynomialRings,ChaussadeLoidreauUlmer-SkewCodesPrescribedDistanceRank_2009,BoucherUlmer-LinearCodesSkewPolynomialsAutomorphisms}.

Linearized polynomials are defined as follows.

\begin{definition}[Linearized Polynomial]\label{def:linearized_polynomial}
	A polynomial $a(x) $ is a linearized polynomial
	if it has the form
	\begin{equation*}
	a(x) = \sum_{i=0}^{d_a} a_i x^{[i]}, \quad a_i \in \Fqm \ \forall i \in \intervallincl{0}{d_a}.
	\end{equation*}
	The non-commutative univariate linearized polynomial ring with indeterminate $x$, consisting of all such polynomials over $\Fqm$, is denoted by $\Linpolyring$.
\end{definition}

	The $q$-degree of $a(x)$ is defined to be the largest $i \in [0,d_a]$ such that $a_i \neq 0$.

\begin{remark}
Linearized polynomials are a special case of skew polynomials\footnote{Skew polynomials become linearized polynomials when the derivation is zero and the Frobenius automorphism is used, i.e., when we consider only $\Fq$-linear maps. }, which were also introduced by \citet{Ore_TheoryOfNonCommutativePolynomials_1933}.
Since a lot of literature on rank-metric codes defines rank-metric codes as evaluation codes of the more general class of skew polynomials, we briefly outline the connection here.

Let $\Fqm$ be a field extension of $\Fq$. The Galois group of the field extension is denoted by $\Gal(\Fqm/\Fq)$, and consists of all automorphisms $\sigma$ of $\Fqm$ that fix the small field $\Fq$, i.e., $\sigma(a) = a$ for all $a \in \Fq$. For finite fields, the Galois group consists of all powers of the Frobenius automorphism $\phi_q \, : \, \Fqm \to \Fqm, \, a \mapsto a^q$, i.e.,
\begin{align*}
\Gal(\Fqm/\Fq) = \left\{ \phi_q^i \, : \, 0 \leq i < m \right\}.
\end{align*}
The skew polynomial ring w.r.t.~$\Fqm$ and $\sigma \in \Gal(\Fqm/\Fq)$ is the set of polynomials of the form
\begin{align*}
a = \sum_{i=0}^{d_a} a_i x^i,
\end{align*}
where $a_i \in \Fqm$ and $d_a \in \ZZ_{\geq 0}$. Addition is defined as usual, i.e., component-wise, but multiplication is defined using the multiplication rule $x \cdot a \coloneqq \sigma(a) x$ for all $a \in \Fqm$, and extended to arbitrary degree polynomials by associativity and distributivity.
Hence, the closed-form expression for the multiplication of two polynomials $a,b \in \Fqm[x;\sigma]$ is
\begin{equation}
a \cdot b = \sum_{i} \left( \sum\limits_{j=0}^{i} a_j \sigma^j (b_{i-j}) \right) x^i.
\end{equation}
This multiplication rule is in general non-commutative.
The degree of a skew polynomial is defined by $\deg a \coloneqq \max\{i \, : \, a_i \neq 0\}$ for $a \neq 0$, and $\deg a \coloneqq -\infty$ for $a =0$.
The operator evaluation of a skew polynomial $a$ is the map
\begin{align*}
a(\cdot) \, : \, \Fqm &\to \Fqm, \\
\alpha &\mapsto \sum_i a_i \sigma^i(\alpha).
\end{align*}
Note that there are several ways to define evaluation of skew polynomials, see, e.g., \citep{BoucherUlmer-LinearCodesSkewPolynomialsAutomorphisms}.
Other types of evaluations, such as the remainder evaluation, also have applications in coding theory.

The ring of linearized polynomials is isomorphic to the ring of skew polynomials with Frobenius automorphism $\sigma = \phi_q$ through the obvious isomorphism
\begin{align*}
\varphi \, : \, \Linpolyring &\to \Fqm[x;\sigma], \\
\sum_{i=0}^{d_a} a_i x^{q^i} &\mapsto \sum_{i=0}^{d_a} a_i x^{i}.
\end{align*}
This can be easily seen by replacing $\sigma$ by the Frobenius automorphism in the multiplication rule formula.
Furthermore, the degree of a skew polynomial equals the $q$-degree of its corresponding linearized polynomial, and the operator evaluation of a skew polynomial equals the ordinary evaluation of its corresponding linearized polynomial. \hfill $\diamond$
\end{remark}

Recall that for any $B \in \Fq$, $B^{[i]} = B$ holds for any integer $i$.
This provides the following lemma about evaluating linearized polynomials.

\begin{lemma}[Evaluation of a Linearized Polynomial {\citep[Theorem~11.12]{Berlekamp1984Algebraic}}]\label{lem:evaluate_linearizedpoly}
	Let $\Basis=\{\beta_0,\beta_1,\dots,\beta_{m-1}\}$ be a basis of $\;\Fqm$ over $\Fq$,
	let $a(x)$ be a linearized polynomial as in Definition~\ref{def:linearized_polynomial} and let $b \in \Fqm$.
	Denote $\extsmallfieldinput{b} = \vecelementsm{B}^\top \in \Fq^{m \times 1}$. %
	Then,
	\begin{equation*}
	a(b) = \sum_{i=0}^{m-1} B_i a\big(\beta_i\big).
	\end{equation*}
\end{lemma}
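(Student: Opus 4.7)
The plan is to establish the identity by a direct calculation that exploits the two defining features of linearized polynomials: the additivity of the $q$-th power map in characteristic $p$ and the fact that elements of the base field $\Fq$ are fixed by every power of the Frobenius automorphism.

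First, I would substitute the expansion $b = \sum_{i=0}^{m-1} B_i \beta_i$ with $B_i \in \Fq$ into $a(b) = \sum_{j=0}^{d_a} a_j b^{[j]}$. The crucial step is to push the operator $(\cdot)^{[j]} = (\cdot)^{q^j}$ through the sum. Because $\Fqm$ has characteristic $p$ (with $q$ a power of $p$), the map $x \mapsto x^{[j]}$ is $\Fq$-linear, so
\begin{equation*}
  b^{[j]} = \Bigl(\sum_{i=0}^{m-1} B_i \beta_i\Bigr)^{[j]} = \sum_{i=0}^{m-1} B_i^{[j]} \beta_i^{[j]}.
\end{equation*}
Next, I would invoke the observation recalled just before the lemma, namely $B_i^{[j]} = B_i$ for every $B_i \in \Fq$ and every integer $j$, which reduces the inner summand to $B_i \beta_i^{[j]}$.

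Substituting back and swapping the order of the two finite sums gives
\begin{equation*}
  a(b) = \sum_{j=0}^{d_a} a_j \sum_{i=0}^{m-1} B_i \beta_i^{[j]}
       = \sum_{i=0}^{m-1} B_i \sum_{j=0}^{d_a} a_j \beta_i^{[j]}
       = \sum_{i=0}^{m-1} B_i\, a(\beta_i),
\end{equation*}
which is the claimed identity. There is no real obstacle in this proof; the only thing to be careful about is the justification that the $q^j$-th power map distributes over the sum, which relies on the Frobenius being a ring homomorphism on a characteristic-$p$ field, and the commutation of the two finite sums, which is trivial. In essence, the lemma is a restatement of the fact that the evaluation map $b \mapsto a(b)$ is $\Fq$-linear on $\Fqm$, a property that ultimately motivates the name \emph{linearized} polynomial.
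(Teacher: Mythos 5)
Your argument is correct: expanding $b=\sum_{i}B_i\beta_i$, applying each Frobenius power $(\cdot)^{[j]}$ term by term (it is a ring homomorphism in characteristic $p$), using $B_i^{[j]}=B_i$ for $B_i\in\Fq$, and swapping the two finite sums is exactly the standard proof of this fact. The paper itself gives no proof of this lemma---it simply cites Berlekamp---so there is nothing to compare against beyond noting that your derivation is the expected one and is complete.
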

Lemma~\ref{lem:evaluate_linearizedpoly} establishes the origin of the name \emph{linearized} polynomials:
for all $A_1,A_2 \in \Fq$, all $b_1,b_2 \in \Fqm$, and $a(x) \in \Linpolyring$ it holds that
\begin{equation*}
a\big(A_1 b_1+A_2 b_2\big) = A_1 a\big(b_1\big)+ A_2 a\big(b_2\big).
\end{equation*}
Hence, any $\Fq$-linear combination of roots of a linearized polynomial $a(x)$ is also a root of $a(x)$.

\begin{theorem}[Roots of a Linearized Polynomial {\citep[Theorem~11.31]{Berlekamp1984Algebraic}}]
	Let $a(x)\in \Linpolyring$ be a linearized polynomial and let the extension field $\Fqs$ of $\Fqm$ contain all roots of $a(x)$. Then, its roots form a linear space over $\Fq$ and each root has the same multiplicity, which is a power of $q$.
\end{theorem}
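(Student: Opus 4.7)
The plan is to handle the two assertions separately, tackling the $\Fq$-subspace property via linearity and the multiplicity claim via a $q$-valuation argument. For the subspace property, I would simply invoke the $\Fq$-linearity of the evaluation map stated immediately before the theorem: if $b_1, b_2 \in \Fqs$ are roots of $a(x)$ and $A_1, A_2 \in \Fq$, then $a(A_1 b_1 + A_2 b_2) = A_1\, a(b_1) + A_2\, a(b_2) = 0$, so the set of roots is closed under $\Fq$-linear combinations and hence forms an $\Fq$-subspace of $\Fqs$.

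For the multiplicity claim, assume $a \not\equiv 0$ and let $j = \min\{i : a_i \neq 0\}$ be the $q$-valuation of $a$. The first step is to factor out $x^{[j]}$ by setting
\[
\tilde a(y) \;=\; \sum_{i=0}^{d_a - j} a_{i+j}\, y^{[i]},
\qquad\text{so that}\qquad
a(x) \;=\; \tilde a\!\bigl(x^{[j]}\bigr).
\]
Then $\tilde a$ is again a linearized polynomial, now with nonzero constant coefficient $\tilde a_0 = a_j$. I would next observe that its formal derivative equals the constant $a_j \neq 0$, because every term $y^{[i]}$ with $i \geq 1$ is annihilated by differentiation in characteristic $p$ (the exponent $q^i$ being divisible by $p$). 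Since $\gcd(\tilde a, \tilde a') = 1$, all roots of $\tilde a$ are simple.

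It remains to transfer this information back to $a$. For each root $\gamma$ of $\tilde a$ in $\Fqs$, I pick $\alpha \in \Fqs$ with $\alpha^{[j]} = \gamma$; additivity of the $q^j$-th power map in characteristic $p$ then yields the ordinary polynomial identity $x^{[j]} - \gamma = (x-\alpha)^{q^j}$, so $\alpha$ is a root of $a$ of multiplicity exactly $q^j$. Distinct $\gamma$'s yield distinct $\alpha$'s since $\alpha \mapsto \alpha^{[j]}$ is injective, and every root of $a$ arises this way (from $\gamma = \alpha^{[j]}$), so the full root set of $a$ consists of exactly these $\alpha$'s, each of multiplicity $q^j$.

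The main obstacle I anticipate is keeping track of the ambient field: \emph{a priori} the identity $x^{[j]} - \gamma = (x-\alpha)^{q^j}$ and the multiplicity it encodes are statements in $\overline{\Fq}[x]$, whereas the theorem is phrased over $\Fqs$. The hypothesis that $\Fqs$ already contains every root of $a(x)$ is exactly what one needs: it forces each auxiliary $\alpha$ to lie in $\Fqs$, so the multiplicity computation is internally consistent and no root is omitted by the correspondence $\gamma \leftrightarrow \alpha$. Everything else is routine bookkeeping with formal derivatives and the Frobenius.
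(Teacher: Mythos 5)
The paper does not actually prove this theorem; it simply cites \citep[Theorem~11.31]{Berlekamp1984Algebraic}, so there is no in-paper argument to compare against. Your proof is correct and is the standard one: $\Fq$-linearity of the root set follows from the additivity/$\Fq$-homogeneity of the evaluation map, and the multiplicity claim follows from writing $a(x)=\tilde a\bigl(x^{[j]}\bigr)$ with $j$ the $q$-valuation, noting $\tilde a'=a_j\neq 0$ forces simple roots of $\tilde a$, and using $x^{[j]}-\alpha^{[j]}=(x-\alpha)^{q^j}$ together with injectivity of the Frobenius to conclude every root of $a$ has multiplicity exactly $q^j$; your handling of the ambient field $\Fqs$ (closed under Frobenius, so the roots $\gamma$ of $\tilde a$ and their preimages $\alpha$ all lie in $\Fqs$) is also sound.
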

The roots of $a(x)$ form a linear space of dimension $d_r \leq d_a$.
Let $\{\beta_0,\beta_1,\dots,\beta_{d_r-1}\}$ be a basis of this $d_r$-dimensional root space. Then, each distinct root $r \in \Fqs$ of $a(x)$ can be expressed uniquely as $r = \sum_{i=0}^{d_r-1} R_i \beta_i$, where $R_i \in \Fq \forall i$.
Conversely, the following lemma shows that the unique minimal subspace polynomial %
is always a linearized polynomial.

\begin{lemma}[Minimal Subspace Polynomial {\citep[Theorem 3.52]{Lidl-Niederreiter:FF1996}}]\label{lem:min_subspace_polynomial}
	Let $\myspace{U}$ be a linear subspace of $\,\Fq^m$, considered over $\Fqm$. %
	Let $u_0,u_1,\dots,u_{\dim(\myspace{U})-1} \in \Fqm$ be a basis of this subspace. %
	Then, the minimal subspace polynomial
	\begin{equation*}
	\MinSubspacePoly{u_0,u_1,\dots,u_{\dim(\myspace{U})-1}} \coloneqq \prod_{\vec{u} \in \myspace{U}}\big(x-\extsmallfieldinputInverse{\vec{u}}\big),
	\end{equation*}
	is a linearized polynomial over $\Fqm$ of $q$-degree $\dim(\myspace{U})$.
\end{lemma}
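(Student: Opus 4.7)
The plan is to prove this by induction on $k=\dim(\myspace{U})$, after rewriting the product using the identification $\extsmallfieldinputInverse{\cdot}$ so that $\myspace{U}$ may be viewed as an $\Fq$-linear subspace of $\Fqm$ (or of a suitable extension) and the product reads $M_\myspace{U}(x) = \prod_{u \in \myspace{U}}(x-u)$. Since the $q^k$ elements of $\myspace{U}$ are distinct, $M_\myspace{U}(x)$ has ordinary degree $q^k$, which is already the correct degree for a linearized polynomial of $q$-degree $k$; the entire content is to prove linearity of the coefficient support.

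The base case $k=0$ gives $\myspace{U}=\{0\}$ and $M_\myspace{U}(x)=x$, which is linearized of $q$-degree $0$. For the inductive step, I would pick any codimension-one $\Fq$-subspace $\myspace{U}'\subset \myspace{U}$ and any $v\in \myspace{U}\setminus \myspace{U}'$, so that $\myspace{U}$ decomposes as the disjoint union of the cosets $\myspace{U}'+av$ for $a\in \Fq$. This yields
\begin{equation*}
M_\myspace{U}(x) \;=\; \prod_{a \in \Fq} \prod_{u' \in \myspace{U}'}\bigl(x - u' - av\bigr) \;=\; \prod_{a\in \Fq} M_{\myspace{U}'}(x - av).
\end{equation*}
By the induction hypothesis, $M_{\myspace{U}'}$ is linearized, so $M_{\myspace{U}'}(x-av)=M_{\myspace{U}'}(x)-a\,M_{\myspace{U}'}(v)$. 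Setting $c\coloneqq M_{\myspace{U}'}(v)$ and noting that $c\neq 0$ (since $\myspace{U}'$ is exactly the root set of $M_{\myspace{U}'}$ and $v\notin\myspace{U}'$), I substitute $y\coloneqq M_{\myspace{U}'}(x)$ to obtain
\begin{equation*}
M_\myspace{U}(x) \;=\; \prod_{a\in \Fq} \bigl(y - ac\bigr).
\end{equation*}

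The crucial identity is $\prod_{a\in\Fq}(y-ac) = y^q - c^{q-1}y$, which follows from $\prod_{a\in \Fq}(z-a) = z^q-z$ applied to $z=y/c$ and multiplying through by $c^q$. Substituting back gives
\begin{equation*}
M_\myspace{U}(x) \;=\; M_{\myspace{U}'}(x)^q \;-\; c^{\,q-1}\, M_{\myspace{U}'}(x).
\end{equation*}
Since $M_{\myspace{U}'}(x)=\sum_{i=0}^{k-1} m'_i\,x^{[i]}$ with $m'_{k-1}\neq 0$ by induction, Frobenius-raising gives $M_{\myspace{U}'}(x)^q = \sum_{i=0}^{k-1} (m'_i)^q x^{[i+1]}$, and the right-hand side is visibly a linearized polynomial with leading coefficient $(m'_{k-1})^q\neq 0$ at $x^{[k]}$. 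This completes the induction and shows both that $M_\myspace{U}(x)$ is linearized and that its $q$-degree equals $\dim(\myspace{U})$.

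The main obstacle — and really the only nontrivial algebraic step — is recognising the product $\prod_{a\in\Fq}(y-ac)=y^q-c^{q-1}y$; once this is in hand, the inductive structure and the bookkeeping of the leading coefficient are routine. An alternative, coordinate-free route would be to prove directly that the map $x\mapsto M_\myspace{U}(x)$ is $\Fq$-linear on an algebraic closure by observing that its kernel is exactly $\myspace{U}$, and then invoke a characterisation of $\Fq$-linearised polynomials as precisely those whose root set (with multiplicities) is an $\Fq$-subspace; but the inductive argument above is self-contained and avoids appealing to an unstated characterisation.
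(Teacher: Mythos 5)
The paper does not give its own proof of this lemma; it simply cites \citep[Theorem 3.52]{Lidl-Niederreiter:FF1996}, so there is no in-paper argument to compare against. Your proof is correct and self-contained, and it is essentially the standard inductive argument one finds in the literature (and, in particular, the one used in Lidl--Niederreiter): factor the product over the $q$ cosets $\myspace{U}'+av$ of a codimension-one subspace $\myspace{U}'$, use the inductive hypothesis to rewrite each factor as $M_{\myspace{U}'}(x)-aM_{\myspace{U}'}(v)$, collapse the product via $\prod_{a\in\Fq}(y-ac)=y^{q}-c^{q-1}y$, and observe that Frobenius-raising a linearized polynomial shifts $q$-degrees by one. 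All intermediate steps check out: the characteristic-$p$ expansion $M_{\myspace{U}'}(x-av)=M_{\myspace{U}'}(x)-aM_{\myspace{U}'}(v)$ is valid because $a^{q^i}=a$ for $a\in\Fq$, the nonvanishing of $c=M_{\myspace{U}'}(v)$ follows since $\myspace{U}'$ is exactly the root set of $M_{\myspace{U}'}$, and the leading-coefficient bookkeeping correctly yields $q$-degree exactly $\dim(\myspace{U})$.
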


The \emph{$q$-Vandermonde matrix} was introduced by \citet{Moore-TwoFoldGeneralizaton_1986} and plays an important role in linearized interpolation, evaluation and the $q$-transform.
For a vector $\mathbf a = \vecelements{a}\in\Fqm^n$,
we obtain the $s \times n$ $q$-Vandermonde matrix by the following map:
\begin{align}
\MoorematNoInput_s:\quad\Fqm^n &\rightarrow \Fqm^{s\times n}\nonumber\\
\a = \vecelements{a} & \mapsto \Mooremat{s}{q}{\vec{a}}
\coloneqq \MoormatExplicit{a}{s}{n}.\label{eq:def_Moorematrix}
\end{align}

\begin{lemma}[Determinant of Moore Matrix {\citep[Lemma 3.15]{Lidl-Niederreiter:FF1996}}]\label{lem:determinant_qvandermonde}
	Let $\a = \vecelements{a} \in \Fqm^n$. The determinant of the square $n \times n$ Moore matrix, defined as in \eqref{eq:def_Moorematrix}, is
	\begin{equation*}
	\det \big(\Mooremat{s}{q}{\vec{a}} \big)  = a_0 \prod_{j=0}^{n-2} \ \prod_{B_0, \dots, B_{j} \in \Fq} \bigg(a_{j+1}-\sum_{h=0}^{j}B_h a_h\bigg).
	\end{equation*}
\end{lemma}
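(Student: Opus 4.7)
The plan is to proceed by induction on $n$. The base case $n=1$ is immediate: the $1\times 1$ Moore matrix is $(a_0)$, the outer product $\prod_{j=0}^{-1}$ is empty, and both sides equal $a_0$.

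For the inductive step, I would treat $a_0,\dots,a_{n-1}$ as formal indeterminates over $\Fq$ so that the claimed identity becomes a polynomial identity (if it holds in the universal setting it holds under every specialisation into $\Fqm$). Regard $\det(\Mooremat{n}{q}{\vec{a}})$ as a polynomial in $a_{n-1}$ with coefficients in $\Fq[a_0,\dots,a_{n-2}]$. Because $a_{n-1}$ enters only the last column, through its successive Frobenius powers $a_{n-1}^{[0]},a_{n-1}^{[1]},\dots,a_{n-1}^{[n-1]}$, cofactor expansion along that column yields a polynomial in $a_{n-1}$ of degree exactly $q^{n-1}$, whose leading coefficient is the $(n-1,n-1)$-minor, namely $\det(\Mooremat{n-1}{q}{(a_0,\dots,a_{n-2})})$, entering with sign $(-1)^{2(n-1)}=+1$.

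Next I would pin down $q^{n-1}$ roots of this polynomial. For every $(B_0,\dots,B_{n-2})\in\Fq^{n-1}$, specialise $a_{n-1}\mapsto \sum_{h=0}^{n-2} B_h a_h$. Since each $B_h\in\Fq$ is fixed by every power of the Frobenius, the $i$-th entry of the last column becomes $\sum_h B_h a_h^{[i]}$, so after specialisation the last column is the \emph{same} $\Fq$-linear combination of the preceding columns in every row, forcing the determinant to vanish. Over the polynomial ring, the $q^{n-1}$ affine linear forms $a_{n-1}-\sum_h B_h a_h$ are pairwise distinct in $a_{n-1}$, so they exhaust the roots. Matching leading coefficients (each factor is monic in $a_{n-1}$, so the product contributes leading coefficient $1$) gives
\begin{equation*}
\det\!\big(\Mooremat{n}{q}{\vec{a}}\big) \;=\; \det\!\big(\Mooremat{n-1}{q}{(a_0,\dots,a_{n-2})}\big)\cdot\prod_{B_0,\dots,B_{n-2}\in\Fq}\!\left(a_{n-1}-\sum_{h=0}^{n-2}B_h a_h\right),
\end{equation*}
and invoking the induction hypothesis on the leading factor yields the claimed product.

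The main obstacle I anticipate is the root-counting step: one must argue that the $q^{n-1}$ substitutions are genuinely distinct, which is why I work in the universal polynomial ring rather than in $\Fqm$ directly — in $\Fqm$ the specialised $a_h$ can be $\Fq$-linearly dependent, so several of the linear forms could coincide and the degree-vs.-root count would fail. Once the identity is established generically, specialisation to arbitrary $\vec{a}\in\Fqm^n$ recovers the lemma.
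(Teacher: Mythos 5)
Your proof is correct, and it is the classical Moore/Dickson argument. The paper does not prove this lemma itself but cites Lidl and Niederreiter directly, so there is no in-text argument to compare against; the reference, however, proceeds exactly as you do: induction on $n$, cofactor expansion along the last column to show that $\det\Mooremat{n}{q}{\vec{a}}$ has $a_{n-1}$-degree $q^{n-1}$ with leading coefficient $\det\Mooremat{n-1}{q}{(a_0,\dots,a_{n-2})}$, identifying the $q^{n-1}$ vanishing substitutions $a_{n-1}\mapsto\sum_h B_h a_h$ (which work because $B_h\in\Fq$ commutes with every Frobenius power), and factoring. Your anticipation of the main pitfall is also correct: distinctness of the $q^{n-1}$ linear factors must be established over the universal ring $\Fq[a_0,\dots,a_{n-1}]$ (equivalently in $\Fq(a_0,\dots,a_{n-2})[a_{n-1}]$, where the root count is formalised), not after specialisation into $\Fqm$, where degenerate choices of $a_0,\dots,a_{n-2}$ would collapse several forms onto each other. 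A minor alternative you could mention: since every entry of the last column is a $q$-power of $a_{n-1}$, the cofactor expansion already exhibits the determinant as a $q$-linearized polynomial in $a_{n-1}$, so its root set is automatically an $\Fq$-subspace; once one knows $a_0,\dots,a_{n-2}$ lie in the kernel and the $q$-degree is $n-1$, the factorisation follows without explicitly listing the $q^{n-1}$ roots. Both routes are equally valid, and yours is faithful to the cited source.
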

Hence, $\det \left(\Mooremat{s}{q}{\vec{a}}\right) \neq 0 $ if and only if $a_0,a_1,\dots, a_{n-1}$ are linearly independent over $\Fq$.
If $a_0$, $a_1, \dots$, $a_{n-1}$ are linearly independent over $\Fq$, then $\Mooremat{s}{q}{\vec{a}}$ has rank $\min\{s,n\}$.

\section{Rank-Metric Codes}\label{sec:rank-metric}
The rank distance between $\vec{a}$ and $\vec{b}$ is the rank of the difference of the two matrix representations, i.e.,
\begin{equation*}
d_{\sfR}(\vec{a},\vec{b}) \coloneqq \rank_q(\vec{a}-\vec{b}) = \rank_q(\vec{A}-\vec{B}).
\end{equation*}
The minimum distance of a rank-metric code $\mycode{C}\codeRank{n,M}$ over $\Fqm$ of length $n$ and cardinality $M$ is defined as
  \begin{align*}
    d_{\sfR} (\code)\coloneqq \min_{\substack{{\vec{a},\vec{b}} \in \code\\ \vec{a} \neq \vec{b}}}
 d_{\fontmetric{R}}(\vec{a},\vec{b}).
  \end{align*}
A linear rank-metric code \mycode{C} over $\Fqm$, denoted by $\codelinearRank{n,k,d}$, is a linear subspace of $\Fqm^n$ of dimension $k$ and minimum rank distance
\begin{equation*}
  d_{\sfR} (\code) = \min_{\substack{{\vec{a},\vec{b}} \in \code\\ \vec{a} \neq \vec{b}}}
   d_{\fontmetric{R}}(\vec{a},\vec{b})
  = \min_{\substack{{\vec{a}} \in \code\\ \vec{a} \neq \0}} \rank_q(\vec{a}),
\end{equation*}
where the second equality holds because in a linear code any codeword can be represented as a linear combination of other codewords, thus, $d_{\fontmetric{R}}(\vec{a},\vec{b})=d_{\fontmetric{R}}(\vec{a}-\vec{b},\0)$.

A \emph{sphere} in the rank metric of radius $\tau$ around a word $\a\in \Fqm^n$ is the set of all words in rank distance exactly $\tau$ from $\a$ and a \emph{ball} is the set of all words in rank distance at most $\tau$ from $\a$.
Such a sphere will be denoted by $\Sphere{\tau}{\a} = \Sphere{\tau}{\A}$ and such a ball by $\Ball{\tau}{\a}=\Ball{\tau}{\A}$.
The cardinality of $\Ball{\tau}{\a}$ can obviously be obtained by summing up the cardinalities of the spheres around $\a$ of radius from zero up to $\tau$. The number of matrices of a certain rank is given, e.g., in \citep{Migler2004Weight}. Therefore, we have %
\begin{align*}
|\Sphere{\tau}{\a}| &=\quadbinom{m}{\tau}_q \prod\limits_{j=0}^{\tau-1} (q^n-q^j),\\
|\Ball{\tau}{\a}| &= \sum\limits_{i=0}^{\tau} |\Sphere{i}{\a}| = \sum\limits_{i=0}^{\tau} \quadbinom{m}{i}_q \prod\limits_{j=0}^{i-1} (q^n-q^j).
\end{align*}
Note that the cardinalities $|\Ball{\tau}{\a}|$ and $|\Sphere{\tau}{\a}|$ are independent of the choice of their center. The following lemma gives upper and lower bounds on the cardinality of balls with rank radius $\tau$.
\begin{lemma}[Bounds on Ball Size {\citep[Lemma 5]{Gadouleau2008Packing}}]
  For $0\leq\tau\leq \min\{n,m\}$, $$q^{\tau(m+n-\tau)}\leq |\Ball{\tau}{\a}|< K_q^{-1}\cdot q^{\tau(m+n-\tau)},$$ where $K_q\coloneqq\prod_{j=1}^{\infty}(1-q^{-j})$.
\end{lemma}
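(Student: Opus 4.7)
The plan is to work from the exact expression
\begin{equation*}
|\Ball{\tau}{\a}| = \sum_{i=0}^{\tau}\quadbinom{m}{i}_q\prod_{j=0}^{i-1}(q^n-q^j)
\end{equation*}
stated just above the lemma, and to factor each summand as $q^{i(m+n-i)}$ times an Euler-product correction.

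For the upper bound I would use a subspace double-counting argument. Every matrix of rank at most $\tau$ has its row space contained in some $\tau$-dimensional subspace $V \subseteq \F_q^n$, and for any fixed such $V$ there are exactly $q^{m\tau}$ matrices with row space in $V$, so $|\Ball{\tau}{\a}| \leq \quadbinom{n}{\tau}_q \cdot q^{m\tau}$. Expanding $\quadbinom{n}{\tau}_q = q^{\tau(n-\tau)} \prod_{j=0}^{\tau-1}(1-q^{-(n-j)}) / \prod_{k=1}^{\tau}(1-q^{-k})$, the numerator product is strictly less than $1$ and the denominator product strictly exceeds $K_q$, so $\quadbinom{n}{\tau}_q < K_q^{-1} q^{\tau(n-\tau)}$ and the stated strict upper bound follows.

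For the lower bound I would split into two cases. If $\tau = \min\{m,n\}$, the ambient space $\F_q^{m\times n}$ is itself the ball, so $|\Ball{\tau}{\a}| = q^{mn} = q^{\tau(m+n-\tau)}$ and the bound is tight. Otherwise $\tau < \min\{m,n\}$, and I would retain only the topmost sphere,
\begin{equation*}
|\Ball{\tau}{\a}| \geq |\Sphere{\tau}{\a}| = q^{\tau(m+n-\tau)}\cdot F_\tau, \quad F_\tau := \frac{\prod_{a=m-\tau+1}^{m}(1-q^{-a})\prod_{b=n-\tau+1}^{n}(1-q^{-b})}{\prod_{k=1}^{\tau}(1-q^{-k})},
\end{equation*}
reducing the task to showing $F_\tau \geq 1$.

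The main obstacle is this final inequality. The naive pairing of the $k$-th numerator factor with the $k$-th denominator factor only yields that each numerator product is at least the denominator, giving the too-weak $F_\tau \geq \prod_{k=1}^{\tau}(1-q^{-k}) < 1$. I would repair this using the \emph{shifted} pairing available because $m,n \geq \tau+1$: the exponent $m-\tau+k$ is at least $k+1$, so each paired ratio is at least $(1-q^{-(k+1)})/(1-q^{-k})$ and the telescoping product contributes an extra factor of $(1-q^{-(\tau+1)})/(1-q^{-1})$ from each numerator product. Combining with the elementary bound $\prod_{k=2}^\infty(1-q^{-k}) \geq 1-q^{-1}$ (a consequence of $\sum_{k=2}^\infty q^{-k} = q^{-1}/(q-1) \leq q^{-1}$ for $q\geq 2$) then upgrades $F_\tau$ above $1$ and closes the proof.
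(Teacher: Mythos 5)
Your upper bound via $|\Ball{\tau}{\a}| \leq \quadbinom{n}{\tau}_q q^{m\tau}$ and the factorization of the Gaussian binomial is sound, and the plan for the lower bound---retain the dominant sphere, write $|\Sphere{\tau}{\a}| = q^{\tau(m+n-\tau)} F_\tau$, handle $\tau=\min\{m,n\}$ separately, and reduce to showing $F_\tau \geq 1$---is also correct. The gap is in the final step. After the shifted pairing you arrive at $F_\tau \geq \frac{1-q^{-(\tau+1)}}{1-q^{-1}} \prod_{k=2}^{\tau+1}(1-q^{-k})$, and feeding this into your ``elementary bound'' $\prod_{k\geq 2}(1-q^{-k}) \geq 1-q^{-1}$ yields only $F_\tau \geq 1-q^{-(\tau+1)}$, which is strictly less than $1$ for every finite $\tau$; the same defect appears under any bookkeeping in which each numerator product is matched against the full denominator and the leftover $\prod_{k=1}^{\tau}(1-q^{-k})$ is then bounded using the infinite product (one gets $F_\tau \geq (1-q^{-(\tau+1)})^2$, still $<1$). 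The telescoping does gain you the factor $(1-q^{-(\tau+1)})/(1-q^{-1})$, but the infinite-product estimate immediately gives it back and more, so the argument does not actually push $F_\tau$ above $1$.

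The clean repair is to compare the three products factor by factor rather than telescoping. Since $m,n \geq \tau+1$ gives $m-\tau+k \geq k+1$ and $n-\tau+k \geq k+1$, one has for each $k\in[1,\tau]$
\begin{equation*}
\bigl(1-q^{-(m-\tau+k)}\bigr)\bigl(1-q^{-(n-\tau+k)}\bigr) \;\geq\; 1 - q^{-(m-\tau+k)} - q^{-(n-\tau+k)} \;\geq\; 1 - 2q^{-(k+1)} \;\geq\; 1-q^{-k},
\end{equation*}
where the last inequality is exactly $q \geq 2$. Hence every factor $\frac{(1-q^{-(m-\tau+k)})(1-q^{-(n-\tau+k)})}{1-q^{-k}}$ of $F_\tau$ is individually $\geq 1$, so $F_\tau \geq 1$ and the lower bound follows. (Your telescoped quantity $\frac{1-q^{-(\tau+1)}}{1-q^{-1}}\prod_{k=2}^{\tau+1}(1-q^{-k})$ is in fact $\geq 1$, but proving that requires more than the infinite-product bound---e.g.\ a short induction: at $\tau=1$ it reduces to $q^2-q-1\geq 0$, and the ratio between consecutive $\tau$ is $(1-q^{-(\tau+2)})^2/(1-q^{-(\tau+1)}) \geq 1$, again by $q\geq 2$.)
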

The asymptotic behavior of $|\Ball{\tau}{\a}|$ as $n\to\infty$, while $\lim_{n\to\infty}\frac{n}{m}$ is a constant, can be found in \citep[Lemma 11]{Gadouleau2008Packing}.

The cardinality of the intersection of two balls rank of given rank radius can be found in \citep[Proposition 4,5]{Gadouleau2008Packing}, while the size of the union of any $K$ balls can be found in \citep[Lemma 2]{gadouleau2009bounds}.

The following theorem states analogs of the {sphere-packing} (Hamming) and {Gilbert--Varshamov bound} in the rank metric,
which can be proven similar to the Hamming metric \citep{GadouleauYan-PropertiesOfCodesWiththeRankmetric_2006,Loidreau2008PropertiesRankMetric,Gadouleau2008Packing,Loidreau-AsymptoticBehaviorOfRankMetric-2012}.
\begin{theorem}[Sphere Packing and Gilbert--Varshamov Bound in the Rank Metric \citep{GadouleauYan-PropertiesOfCodesWiththeRankmetric_2006}]\label{theo:sphere-packing-gv-bounds} %
	Let $\maxCardinalityRank{n,d}$ denote the maximum cardinality of a block
	code over $\Fqm$
	of length $n$ and minimum rank distance $d$ and let
	$\Uniquecorrcap =\dhalffrac$.
	Then,
	\begin{equation}\label{eq:Hamming_bound_rank}
	\frac{q^{mn}}{|\Ball{d-1}{\0}|} \leq \maxCardinalityRank{n,d}\leq \frac{q^{mn}}{|\Ball{\Uniquecorrcap}{\0}|}.
	\end{equation}
\end{theorem}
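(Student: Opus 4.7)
The plan is to mirror the classical proofs of the sphere-packing and Gilbert--Varshamov bounds from the Hamming setting, exploiting that the rank distance $d_\sfR$ is a genuine metric (so the triangle inequality holds) and that the ball cardinality $|\Ball{\tau}{\a}|$ is independent of its center $\a$, as noted right before the theorem. Both bounds then reduce to volume-counting arguments in the ambient space $\Fqm^n$, which has cardinality $q^{mn}$.

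For the upper bound, I would take a code $\code$ achieving $\maxCardinalityRank{n,d}$ and set $\Uniquecorrcap = \dhalffrac$. The key observation is that any two balls $\Ball{\Uniquecorrcap}{\vec{a}}$ and $\Ball{\Uniquecorrcap}{\vec{b}}$ centered at distinct codewords $\vec{a},\vec{b} \in \code$ are disjoint: if $\vec{y}$ lay in both, the triangle inequality would yield $d_\sfR(\vec{a},\vec{b}) \leq 2\Uniquecorrcap \leq d-1$, contradicting $d_\sfR(\code)=d$. Summing the cardinalities of these $|\code|$ disjoint balls (each of size $|\Ball{\Uniquecorrcap}{\0}|$) gives $|\code| \cdot |\Ball{\Uniquecorrcap}{\0}| \leq q^{mn}$, which rearranges to the right-hand inequality.

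For the lower bound, I would use a greedy construction: initialize $\code = \emptyset$ and, as long as there exists $\vec{v} \in \Fqm^n$ with $d_\sfR(\vec{v},\vec{c}) \geq d$ for every $\vec{c} \in \code$, adjoin $\vec{v}$ to $\code$. When the process terminates, every $\vec{v} \in \Fqm^n$ lies within rank distance $d-1$ of some codeword, so $\Fqm^n = \bigcup_{\vec{c} \in \code} \Ball{d-1}{\vec{c}}$. Taking cardinalities and using the union bound together with $|\Ball{d-1}{\vec{c}}| = |\Ball{d-1}{\0}|$ yields $q^{mn} \leq |\code|\cdot |\Ball{d-1}{\0}|$. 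The resulting code has minimum rank distance at least $d$ by construction, so $\maxCardinalityRank{n,d} \geq |\code| \geq q^{mn}/|\Ball{d-1}{\0}|$.

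There is no real obstacle: the only nontrivial ingredient is the translation-invariance of ball sizes in the rank metric, which is already recorded in the paragraph preceding the theorem, and the triangle inequality, which is built into the definition of a metric. The entire argument is thus a routine adaptation of the Hamming case, and no metric-specific complication (such as a covering/packing mismatch due to non-invariant ball sizes) arises.
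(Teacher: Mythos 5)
Your proof is correct and is exactly the standard adaptation of the Hamming-metric arguments that the paper has in mind; the paper itself does not spell out a proof but merely remarks that the bounds ``can be proven similar to the Hamming metric'' and cites the relevant references. Both your packing argument (disjoint balls of radius $\Uniquecorrcap$ via the triangle inequality) and your greedy covering argument (balls of radius $d-1$ must cover $\Fqm^n$ once the code is maximal) rely only on the two ingredients you identify --- that $d_\sfR$ is a metric and that $|\Ball{\tau}{\a}|$ is independent of the center $\a$ --- so there is nothing to fill in.
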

The LHS of \eqref{eq:Hamming_bound_rank} is the Gilbert--Varshamov bound in the rank metric and the RHS of \eqref{eq:Hamming_bound_rank} is the sphere packing bound in the rank metric.

A code is called \emph{perfect} in the rank metric if it fulfills the RHS of~\eqref{eq:Hamming_bound_rank} with equality.
For a perfect code, the balls of radius $\Uniquecorrcap=\dhalffrac$ around all codewords cover the whole space.
However, in contrast to the Hamming metric, perfect codes do not exist in the rank metric \citep[Proposition~2]{Loidreau2008PropertiesRankMetric}.

The \emph{Singleton bound} in the rank metric is given in the following theorem.

\begin{theorem}[Singleton Bound in the Rank Metric {\citep[Theorem~5.4]{Delsarte_1978}}]\label{theo:Singleton-like-bound}
	Let \mycode{C} be a code over $\Fqm$ of length $n$, cardinality $M$, and minimum rank distance $d$.
	The cardinality $M$ of \mycode{C} is restricted by
	\begin{equation}\label{eq:Singleton_like_rank}
	M \leq q^{\min\{n(m-d+1),\; m(n-d+1)\}} =q^{\max\{n,m \}(\min\{n,m \}-d+1)}.
	\end{equation}
\end{theorem}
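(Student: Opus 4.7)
The plan is to reduce the rank-metric Singleton bound to two applications of the classical Hamming-metric Singleton bound, by viewing each codeword as an $m\times n$ matrix over $\Fq$ and then, after transposition, as an $n\times m$ matrix.

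First, I would establish the key inequality $\rank_q(\A) \le \wtHam(\vec{a})$ for every $\vec{a}\in\Fqm^n$ with matrix representation $\A = \extsmallfield_{\NormbasisOrdered}(\vec{a}) \in \Fq^{m\times n}$, where $\wtHam(\vec{a})$ denotes the number of nonzero entries of $\vec{a}$ over $\Fqm$ (equivalently, the number of nonzero columns of $\A$). This holds simply because the column rank cannot exceed the number of nonzero columns. Consequently, for any two codewords $\vec{a},\vec{b}\in\mycode{C}$,
\begin{equation*}
\distHam(\vec{a},\vec{b}) \;\ge\; d_{\sfR}(\vec{a},\vec{b}) \;\ge\; d,
\end{equation*}
so $\mycode{C}$, viewed as a (not necessarily linear) code of length $n$ over the alphabet $\Fqm$, has minimum Hamming distance at least $d$. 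Applying the classical (non-linear) Singleton bound over the alphabet of size $q^m$ yields
\begin{equation*}
M \;\le\; (q^m)^{\,n-d+1} \;=\; q^{m(n-d+1)}.
\end{equation*}

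Next, I would obtain the second bound $M\le q^{n(m-d+1)}$ by symmetry. Since $\rank_q(\A)=\rank_q(\A^\top)$, the map $\vec{a}\mapsto \A^\top$ is a bijection from $\mycode{C}$ onto a set $\mycode{C}^\top$ of $n\times m$ matrices over $\Fq$ of cardinality $M$ in which any two distinct elements differ in rank by at least $d$. Choosing a basis of $\F_{q^n}$ over $\Fq$ and identifying rows of these transposed matrices with elements of $\F_{q^n}$, we may regard $\mycode{C}^\top$ as a code of length $m$ over the alphabet $\F_{q^n}$ with rank distance $\ge d$. Repeating the argument of the first step with the roles of $n$ and $m$ interchanged gives
\begin{equation*}
M \;\le\; (q^n)^{\,m-d+1} \;=\; q^{n(m-d+1)}.
\end{equation*}

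Taking the minimum of the two bounds yields the claim, and the equivalent form $q^{\max\{n,m\}(\min\{n,m\}-d+1)}$ is immediate from the identity $\min\{n(m-d+1),m(n-d+1)\}=\max\{n,m\}(\min\{n,m\}-d+1)$, which follows by checking the two cases $n\le m$ and $n>m$ separately. The only delicate point is the symmetry argument: one must be careful that the rank is invariant under transposition and that cardinality is preserved by the bijection $\A\mapsto\A^\top$, so that the second Hamming-Singleton application is legitimate; neither requires linearity of $\mycode{C}$, which is why the bound holds for arbitrary (possibly non-linear) rank-metric codes.
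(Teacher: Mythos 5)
Your proof is correct. The survey states this theorem without an in-text proof (it only cites Delsarte), so there is no proof in the paper to compare against; your argument is a valid and complete derivation. The route you take---bounding $\rank_q$ by the number of nonzero columns, so that rank distance is dominated by Hamming distance over $\Fqm$, and then invoking the classical nonlinear Singleton bound---is essentially equivalent to the classical puncturing argument of Delsarte and Gabidulin: the projection onto $n-d+1$ coordinates used to prove the Hamming-metric Singleton bound is exactly the deletion of $d-1$ columns of the matrix representation, and its injectivity is the statement that removing $d-1$ columns cannot annihilate a difference of codewords of rank at least $d$. Your handling of the second bound via transposition is also sound: rank is invariant under transposition, the map is a bijection, and both applications of the Singleton bound are legitimate because $d\leq\min\{n,m\}$ ensures $d\leq n$ and $d\leq m$; the closed-form identity $\min\{n(m-d+1),\,m(n-d+1)\}=\max\{n,m\}(\min\{n,m\}-d+1)$ indeed follows from the sign of $(m-n)(d-1)$ in the two cases.
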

If the cardinality of a code fulfills \eqref{eq:Singleton_like_rank} with equality, the code is called \emph{maximum rank distance} (MRD) code.
An  MRD (not necessarily linear) code over $\Fqm$ of length $n$, cardinality $M = q^{\max\{n,m \}(\min\{n,m \}-d+1)}$, and minimum rank distance $d$ is denoted by \MRD{n,M}.

For linear codes of length $n \leq m$ and dimension $k$, Theorem~\ref{theo:Singleton-like-bound} implies that $d \leq n-k+1$, cf.~\citep[Corollary, p.~2]{Gabidulin_TheoryOfCodes_1985}.
A \emph{linear} MRD code over $\Fqm$ of length $n \leq m$, dimension $k$, and minimum rank distance $d=n-k+1$ is therefore denoted by \MRDlinear{n,k} and has cardinality $M = q^{mk}$. If $n > m$, we simply transpose all matrices and apply the previous considerations.

\section{Weight Distribution of MRD Codes}
\label{sec:weightDist}
In \citep[Theorem~5.6]{Delsarte_1978} and \citep[Section~3]{Gabidulin_TheoryOfCodes_1985}, the weight distribution of linear MRD codes was derived.
Let $A_s(n,d)$ denote the number of codewords of an $\MRDlinear{n,k}$ code of rank $s$.
This number can be calculated by separating the code into subspaces of dimension $s$ and determining the number of words of the code in this subspace.
By this, we obtain the recursive equation:
\begin{equation*}
A_s(n,d) = \quadbinom{n}{s} A_s(s,d), \quad d \leq s \leq n,
\end{equation*}
where $A_s(s,d)$ denotes the number of codewords in each $s$-dimensional subspace. 

The rank weight distribution of \MRDlinear{n,k} codes can be given by~\citep{Gabidulin_TheoryOfCodes_1985}
\begin{equation*}
A_{d+s}=\quadbinom{n}{d+s}\sum\limits_{j=0}^{s} (-1)^{j+s} \underbrace{\quadbinom{d+s}{d+j}q^{(s-j)(s-j-1)/2}(q^{m(j+1)}-1)}_{=: B_j},
\end{equation*}
for $s=0,1,\dots,n-d$ and $n \leq m$.

Note that for the special case of $s=0$ (i.e., the number of codewords of weight exactly $d$), we obtain:
\begin{equation*}
A_{d}(n,d) = \quadbinom{n}{d} (q^m-1).
\end{equation*}

In the following, we provide upper and lower bounds as well as an approximation of the MRD weight distribution, which might be helpful to quickly estimate the number of rank-$s$ codewords.
The relation between $B_j$ and $B_{j+1}$ can be estimated as
\begin{align*}
\frac{B_j}{B_{j+1}} &= \frac{\quadbinom{d+s}{d+j}q^{(s-j)(s-j-1)/2}(q^{m(j+1)}-1)}{\quadbinom{d+s}{d+j+1}q^{(s-j-1)(s-j-2)/2}(q^{m(j+2)}-1)}\\
&\approx \frac{\quadbinom{d+s}{d+j}q^{(s-j-1)}}{\quadbinom{d+s}{d+j+1}q^m}\\
&\approx q^{(d+j)(s-j)-(d+j+1)(s-j-1)} \cdot q^{s-j-1-m}\\
&= q^{(d+j)-(s-j)+1}\cdot q^{s-j-1-m} = q^{d+j-m}.
\end{align*}

For an upper bound on $A_{d+s}$, we only consider the highest term, i.e., $j=s$, and obtain
\begin{equation*}
A_{d+s} \leq \quadbinom{n}{d+s} \quadbinom{d+s}{d+s} q^0 (q^{m(s+1)}-1)
\leq \quadbinom{n}{d+s}q^{m(s+1)}.
\end{equation*}

To obtain a lower bound, we consider the difference between the two highest terms:
\begin{align}
A_{d+s} &\geq \quadbinom{n}{d+s} \left( (q^{m(s+1)}-1) - \quadbinom{d+s}{d+s-1}q^0(q^{ms}-1)\right)\nonumber\\
&= \quadbinom{n}{d+s} \left( (q^{m(s+1)}-1) -(q^{d+s}-1)(q^{ms}-1)\right)\nonumber\\
& \approx \quadbinom{n}{d+s} \left( q^{m(s+1)} -q^{d+s}q^{ms}\right) = \quadbinom{n}{d+s}  q^{m(s+1)}\left(1 -q^{d+s-m}\right).\nonumber
\end{align}

The following expression provides a rough estimate of the weight distribution:
\begin{equation*}
A_{d+s} \approx \quadbinom{n}{d+s}q^{m(s+1)} \approx q^m q^{nd - d^2 - 2ds-s^2}.
\end{equation*}

\section{Constant-Rank Codes}\label{sec:constantRank}

A concept closely related to the rank weight distribution of a code are \emph{constant-rank codes}. Commonly, codes are designed to guarantee a lower bound on the minimum rank-distance of any two codewords. For linear codes, this implies a lower bound on the rank of each codeword, i.e., every codeword in a linear rank-metric code of minimum rank-distance $d$ has rank at least $d$. In most cases, an $[n,k,d]_{q^m}^{\sfR}$ rank-metric code designed to have minimum distance $d$ will contain words of any rank $w$ with $d\leq w\leq n$.
 In contrast, constant-rank codes only contain words of a given rank. Their equivalent in the Hamming metric, so-called constant-weight codes, play an important role in list decoding \citep{Johnson1962New,Bassalygo1965New} and a variety of other applications (see, e.g., \citep{agrell2000upper} and the references within).

Constant-rank codes were first considered by \citet{Gadouleau2010ConstantRank}, where they are used to solve problems related to constant-dimension codes (a class of subspace codes), which have application in noncoherent network coding (see \cref{subsec:list-decoding-subspace-codes}). Similar to the Hamming metric, they also have implications for list-decoding of rank-metric codes, as discussed in \cref{sec:list-dec-gab}.%

\begin{definition}[Constant-Rank Code {\citep{Gadouleau2010ConstantRank}}]
  A $\codeRank{n,M,d}$ code $\code \subset \F_{q^m}^{n}$ is said to be of \emph{constant-rank} $w$ if
  \begin{equation*}
    \rk_q(\c) = w ,\ \forall \ \c \in \code .
  \end{equation*}
\end{definition}

Note that any non-trivial ($w>0$) constant-rank code is necessarily non-linear, as it does not contain the all-zero codeword.

In \citep{Gadouleau2010ConstantRank} it was shown that a constant-rank code of a certain cardinality can be constructed from a pair of constant-dimension subspace codes (for more details on subspace codes, see \cref{sec:subspace-codes}) of the same cardinality. This result was later generalized to arbitrary cardinalities by \citet[Proposition~2]{wa13a}.

\begin{lemma}[{\citep[Prop.~3]{Gadouleau2010ConstantRank},\citep[Prop.~2]{wa13a}}]
  Let $\code_1$ and $\code_2$ be $(n_1,M_1)$ and $(n_2,M_2)$ constant-dimension $r$ codes of subspace distance $d_{\sfS,1}$ and $d_{\sfS,2}$, respectively, where $r\leq\min\{n_1,n_2\}$. Then, there exists an $\codeRank{n,M,d_\sfR}$ constant-rank $r$ code of cardinality $M=\min\{M_1,M_2\}$. Furthermore, the minimum rank distance $d_\sfR$ is
  \begin{align*}
    d_\sfR \geq \frac{1}{2} d_{\sfS,1} + \frac{1}{2} d_{\sfS,2} \ ,
  \end{align*}
  and, if $M_1=M_2$, then
  \begin{align*}
    d_\sfR \leq \frac{1}{2} \min\{d_{\sfS,1},d_{\sfS,2}\} +r \ .
  \end{align*}
\end{lemma}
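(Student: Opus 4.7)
The plan is to realize each codeword of the constant-rank code as a rank factorization $\mathbf{A}_i = \mathbf{B}_i \mathbf{C}_i$ in which the columns of $\mathbf{B}_i \in \F_q^{n_1 \times r}$ form a basis of a codeword $U_i \in \code_1$ and the rows of $\mathbf{C}_i \in \F_q^{r \times n_2}$ form a basis of a codeword $V_i \in \code_2$. Fix an injection pairing $M = \min\{M_1, M_2\}$ elements of $\code_1$ to $M$ elements of $\code_2$; this produces $M$ matrices in $\F_q^{n_1 \times n_2}$, each of rank exactly $r$, with column space $U_i$ and row space $V_i$. Since the pairing separates the column (and row) spaces, the matrices are pairwise distinct, and viewing them as length-$n_2$ vectors over $\F_{q^{n_1}}$ via the basis map of Section~\ref{sec:notation} yields the claimed constant-rank-$r$ code of cardinality $M$.

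For the lower bound, fix distinct codewords and factor
$$\mathbf{A}_i - \mathbf{A}_j \;=\; \bigl[\,\mathbf{B}_i \mid \mathbf{B}_j\,\bigr] \begin{bmatrix} \mathbf{C}_i \\ -\mathbf{C}_j \end{bmatrix}.$$
The columns of the left factor span $U_i+U_j$, so it has rank $2r - \dim(U_i \cap U_j)$; the rows of the right factor span $V_i+V_j$, so it has rank $2r - \dim(V_i \cap V_j)$. Applying Sylvester's rank inequality ($\rk(XY) \geq \rk(X) + \rk(Y) - $ inner dimension, here $2r$) yields
$$\rk(\mathbf{A}_i - \mathbf{A}_j) \;\geq\; 2r - \dim(U_i \cap U_j) - \dim(V_i \cap V_j).$$
Combining this with the identity $\dim(U \cap U') = r - \tfrac{1}{2}d_\sfS(U,U')$ for constant-dimension-$r$ subspaces and the bounds $d_\sfS(U_i,U_j) \geq d_{\sfS,1}$, $d_\sfS(V_i,V_j) \geq d_{\sfS,2}$ gives the claimed $d_\sfR \geq \tfrac{1}{2}d_{\sfS,1} + \tfrac{1}{2}d_{\sfS,2}$.

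For the upper bound, assume $M_1 = M_2$, so the pairing is a bijection in both directions. The column space of $\mathbf{A}_i - \mathbf{A}_j$ is contained in $U_i + U_j$, hence $\rk(\mathbf{A}_i - \mathbf{A}_j) \leq 2r - \dim(U_i \cap U_j)$, and symmetrically $\leq 2r - \dim(V_i \cap V_j)$. Choosing a pair in $\code_1$ (respectively $\code_2$) that attains the minimum subspace distance and transporting through the bijection to the associated paired subspaces gives $d_\sfR \leq r + \tfrac{1}{2}d_{\sfS,1}$ and $d_\sfR \leq r + \tfrac{1}{2}d_{\sfS,2}$; taking the minimum yields $r + \tfrac{1}{2}\min\{d_{\sfS,1}, d_{\sfS,2}\}$. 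The only nontrivial step is the application of Sylvester's rank inequality for the lower bound; the rest is the standard dictionary between subspace distance and intersection dimension together with a witnessing-pair argument that crucially uses the bijectivity granted by $M_1 = M_2$.
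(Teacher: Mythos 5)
Your proof is correct. The paper states this lemma with citations to \citep{Gadouleau2010ConstantRank} and \citep{wa13a} but does not reproduce a proof; your argument---realizing each codeword via the rank factorization $\mathbf{A}_i=\mathbf{B}_i\mathbf{C}_i$, applying Sylvester's rank inequality to $[\mathbf{B}_i\mid\mathbf{B}_j]\begin{bmatrix}\mathbf{C}_i\\-\mathbf{C}_j\end{bmatrix}$ for the lower bound, using column/row-space containment for the upper bound, and correctly noting that $M_1=M_2$ is what guarantees the minimum-distance pairs of both $\code_1$ and $\code_2$ appear among the selected codewords---is the standard one used in those references.
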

Similarly, optimal constant-dimension codes can be constructed from optimal constant-rank codes \citep[Theorem~2]{Gadouleau2010ConstantRank}.

Further, \citet{Gadouleau2010ConstantRank} presents rank-metric analogs of several bounds on the cardinality of codes in the Hamming metric. Specifically, bounds resembling the Johnson bound, the Singleton bound, and the Bassalygo-Elias are introduced.

In \citep{wa13a} a close inspection of the achievable size of constant-rank codes results in upper and lower bounds on their cardinality. Interestingly, these results show that, unlike for codes in the Hamming metric, there does not exist a generic list decoding radius beyond the unique decoding radius that is guaranteed to be achievable for any code of a given length and minimum rank distance, independent of its structure (for more details, see \cref{sec:list-dec-gab}).

A classification of constant weight codes was developed in \citep{randrianarisoa2019geometric} based on a geometric approach.

\section{Covering Property}
\label{sec:coveringProperty}
The \emph{covering radius} of a code $\code\subseteq\Fqm^n$ is the smallest integer $\rho$ such that all vectors in the space $\Fqm^n$ are within distance $\rho$ to some codeword of $\code$, i.e.,
\begin{align*}
  \rho(\code)\coloneqq\max_{\x\in\Fqm^n}\min_{\c\in\code} d(\x,\c).
\end{align*}
The covering radius of is a fundamental property of a code, which is generally harder to compute than the minimum distance. It measures the maximum weight of a correctable error vector. It also characterizes the \textit{maximality} property of a code. A code $\code$ is said to be \textit{maximal} if there does not exist any code $\code'$ of the same length and minimum distance such that $\code\subset \code'$. A maximal code has covering distance less than its minimum distance \citep{cohen1985covering,byrne2017covering}, i.e., $\rho(\code)\leq d(\code)-1$. 

The \emph{covering problem} is to find the minimum cardinality of a code $\code\subseteq\Fqm^n$ with covering radius $\rho$.
Denote by $M_\mathsf{R}(q^m,n,\rho)$ the minimum cardinality of such a code.
This quantity for codes in the Hamming metric has been studied extensively (see \citep{bartoli2014covering,cohen1997covering} and the references therein).  
In the rank-metric, the covering property has been studied by \citet{Gadouleau2008Packing}, \citep{gadouleau2009bounds} for codes which are $\Fqm$-linear.
Several lower and upper bounds on $M_\mathsf{R}(q^m,n,\rho)$ can be found in \citep[Section V]{Gadouleau2008Packing} and \citep{gadouleau2009bounds}. The numerical results and comparisons of the bounds for small parameters ($\rho\leq 6, m\leq 7$) can be found in the tables provided by both references. From \citep[Table I]{gadouleau2009bounds} one can observe that there is gap between the best upper and lower bounds for these parameters. Finding tighter bounds for non-asymptotic parameters is still an open problem. The existing bounds are computationally expensive for larger code parameters. The asymptotic behavior of $M_\mathsf{R}(q^m,n,\rho)$ is stated in \citep[Theorem 1]{Gadouleau2008Packing}.

More results on the covering property of $\Fq$-linear codes endowed with the rank metric can be found in~\citep{byrne2017covering}.

\section{Gabidulin Codes}\label{sec:gabidulin}
Gabidulin codes are a special class of rank-metric codes and can be defined by their generator matrices.
\begin{definition}[Gabidulin Code \citep{Gabidulin_TheoryOfCodes_1985}] \label{def:GabCode}
	A linear $\Gabcode{n}{k}$ Gabidulin code over $\Fqm$ of length $n \leq m$
	and dimension $k$, denoted by $\Gabcode{n}{k}$, is defined by its $k \times n$ generator matrix
	\begin{equation*}
	\mathbf G = \Mooremat{k}{q}{\vecelements{g}},
	\end{equation*}
	where $\vec{g}=\vecelements{g} \in \Fqm^n$ and $\rank_q(\vec{g}) = n$.
\end{definition}
It was shown by~\citet{Gabidulin_TheoryOfCodes_1985} that Gabidulin codes are MRD codes, i.e., they are of minimum rank-distance $d=n-k+1$.

Equivalently, we can define Gabidulin codes by evaluating $q$-degree restricted linearized polynomials:
\begin{align*}
&\Gabcode{n}{k} \coloneqq\Big\lbrace \veceval{f}{g} = f(\vec{g})  : f(x) \in \Linpolyring_{<k} \Big\rbrace,
\end{align*}
where the fixed elements $g_0,\dots, g_{n-1} \in \Fqm$ are linearly independent over $\Fq$ and $\Linpolyring_{<k}$ is the set of all linearized polynomials with $q$-degree less than $k$.

\begin{theorem}[Minimum Rank Distance of a Gabidulin Code]
	The minimum rank distance of a $\Gab{n,k}$ Gabidulin code over $\Fqm$ with $n \leq m$ is $d = n-k+1$.
\end{theorem}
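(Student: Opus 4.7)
The plan is to establish the claim by sandwiching the minimum distance between matching upper and lower bounds. The upper bound $d \leq n - k + 1$ is immediate from the Singleton-like bound (Theorem~\ref{theo:Singleton-like-bound}): the code has cardinality $q^{mk}$, dimension $k$, length $n \leq m$, so substituting into \eqref{eq:Singleton_like_rank} forces $d \leq n - k + 1$. Hence the substantive work is the lower bound $\rank_q(\c) \geq n-k+1$ for every nonzero codeword $\c$.

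For the lower bound, I would switch to the evaluation description of the code: write any nonzero codeword as $\c = \veceval{f}{g}$ for some nonzero $f(x) \in \Linpolyring_{<k}$. Because $f$ is $\Fq$-linear (Lemma~\ref{lem:evaluate_linearizedpoly} and the subsequent linearity remark), the $\Fq$-span of its evaluations satisfies
\begin{equation*}
\mathrm{span}_{\Fq}\{f(g_0),\dots,f(g_{n-1})\} \;=\; f\bigl(\mathrm{span}_{\Fq}\{g_0,\dots,g_{n-1}\}\bigr).
\end{equation*}
Let $V \coloneqq \mathrm{span}_{\Fq}\{g_0,\dots,g_{n-1}\}$. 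Since $\rank_q(\g) = n$, $V$ has $\Fq$-dimension $n$, and the rank weight of $\c$ equals $\dim_{\Fq} f(V)$.

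The key step is then a rank-nullity argument applied to $f$ restricted to $V$:
\begin{equation*}
\rank_q(\c) \;=\; \dim_{\Fq} f(V) \;=\; \dim_{\Fq} V - \dim_{\Fq}\bigl(V \cap \ker f\bigr) \;=\; n - \dim_{\Fq}\bigl(V \cap \ker f\bigr).
\end{equation*}
To finish, I need to bound $\dim_{\Fq}(V \cap \ker f) \leq k-1$. This follows from the root structure of linearized polynomials: a nonzero $f \in \Linpolyring_{<k}$ has ordinary degree at most $q^{k-1}$, so it has at most $q^{k-1}$ roots in any extension field, and by the root theorem these roots form an $\Fq$-subspace whose dimension is therefore at most $k-1$. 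Combining this bound with the displayed equation yields $\rank_q(\c) \geq n - k + 1$, which together with the Singleton-like upper bound gives $d = n - k + 1$.

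The main obstacle is conceptual rather than technical: ensuring that the $\Fq$-linearity of $f$ is invoked cleanly enough to identify the image of $V$ under $f$ with the span of the codeword entries, and that the bound on $\dim \ker f$ is the correct translation of the $q$-degree restriction into a dimension statement. No step requires heavy computation; the subtlety lies in not confusing the polynomial degree ($\leq q^{k-1}$) with the $q$-degree ($< k$) when bounding the number of roots.
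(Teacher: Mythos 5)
Your proposal is correct and follows essentially the same route as the paper: both use the evaluation description, bound $\dim_{\Fq}\ker f \leq k-1$ from the $q$-degree restriction, apply rank--nullity, and close with the Singleton-like bound for the matching upper bound. The only presentational difference is that you apply rank--nullity to $f|_V$ directly and obtain $\rank_q(\c) \geq n-k+1$ uniformly for every nonzero codeword, whereas the paper first records $\dim\ker(\c) \leq k-1$ for all $\c$ and then specializes to a codeword achieving rank $d$ — the same argument, dualized through the right kernel of the matrix representation.
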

\begin{proof}
	The evaluation polynomials $f(x)$ have $q$-degree less than $k$ and therefore
	the dimension of their root spaces over $\Fqm$ is at most $k-1$.\\
	Let $\Mat{C}=\extsmallfieldinput{\c} \in \Fq^{m \times n }$ denote the representation of $\c \in\Gab{n,k}$.
	Since the evaluation of a linearized polynomial at a basis is an $\Fq$-linear map,
	the dimension of the right kernel of $\Mat{C} \in \Fq^{m \times n }$ is equal to the
	dimension of the root space of the corresponding evaluation polynomial $f(x)$.
	Therefore,
	\begin{equation*}
	\dim \ker(\c) \leq k-1, \quad \forall \c \in \Gab{n,k}.
	\end{equation*}
	By linearity, there is a codeword $\c$ in $\Gab{n,k}$ of rank $d$ and due to the rank nullity theorem, for this codeword $\dim \ker(\c) = n-d$ holds.
	Hence,
	\begin{equation*}
	\dim \ker(\c)=n-d \leq k-1 \quad \Longleftrightarrow \quad d \geq n-k+1.
	\end{equation*}
	However, the Singleton-like bound \eqref{eq:Singleton_like_rank} implies that $d \leq n-k+1$ and hence, $d=n-k+1$.
\end{proof}
Gabdulin codes achieve the Singleton bound~\eqref{eq:Singleton_like_rank} with equality, thus they are MRD codes.

\begin{lemma}[Parity-Check Matrix of Gabidulin Code]\label{lem:paritycheck_gabidulin}
	Let $\G$ be a generator matrix of a \Gab{n,k} code, where $g_0,g_1,\dots,g_{n-1} \in \Fqm$ are linearly independent over $\Fq$.
	Let $h_0,h_1,\dots,h_{n-1}$ be a non-zero solution for the following $n-1$ linear equations:
	\begin{equation}\label{eq:gab_find_hmat}
	\sum\limits_{i=0}^{n-1} g_i^{[j]} h_i =0, \quad \forall j \in \intervallexcl{-n+k+1}{k}.
	\end{equation}
	Then, %
	the $(n-k)\times n$ matrix
	\begin{equation*}\label{eq:gab_paritycheckmatrix}
	\H\coloneqq
	\Mooremat{n-k}{q}{\vecelements{h}}=
	\begin{pmatrix}
	h_0^{[0]} & h_1^{[0]} & \dots & h_{n-1}^{[0]}\\
	h_0^{[1]} & h_1^{[1]} & \dots & h_{n-1}^{[1]}\\
	\vdots &\vdots&\ddots& \vdots\\
	h_0^{[n-k-1]} & h_1^{[n-k-1]} & \dots & h_{n-1}^{[n-k-1]}\\
	\end{pmatrix},
	\end{equation*}
	is a parity-check matrix of the \Gab{n,k} code.
\end{lemma}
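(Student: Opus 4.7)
The plan is to verify two things about $\H$: first, that $\H \G^\top = \0$, so the rows of $\H$ lie in $\Gab{n,k}^\perp$; and second, that $\rank(\H) = n-k$, so these rows actually span the dual code. The first is a direct Frobenius manipulation, while the second reduces to showing that the coordinates $h_0, \ldots, h_{n-1}$ are $\Fq$-linearly independent, so that Lemma~\ref{lem:determinant_qvandermonde} applies.

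For the orthogonality, compute the $(a,b)$-entry of $\H \G^\top$ for $a \in [0, n-k-1]$ and $b \in [0, k-1]$. Using that $x \mapsto x^{[a]}$ is a field homomorphism of $\Fqm$ and that $(g_i^{[b-a]})^{[a]} = g_i^{[b]}$,
\begin{equation*}
(\H \G^\top)_{a,b} = \sum_{i=0}^{n-1} h_i^{[a]} g_i^{[b]} = \left( \sum_{i=0}^{n-1} h_i\, g_i^{[b-a]} \right)^{[a]}.
\end{equation*}
As $a,b$ vary, $j \coloneqq b - a$ ranges over exactly $[k-n+1, k-1]$, which is the index set on which the defining relations \eqref{eq:gab_find_hmat} force the inner sum to vanish. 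Hence $\H \G^\top = \0$, and in particular $\rank(\H) \leq n-k$ since $\G$ has row-rank $k$.

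To upgrade this to $\rank(\H) = n-k$, by the Moore-determinant formula (Lemma~\ref{lem:determinant_qvandermonde} and the remark following it), it suffices to show that $h_0, \ldots, h_{n-1}$ are $\Fq$-linearly independent. Suppose instead that $\sum_i B_i h_i = 0$ for some $B_i \in \Fq$ not all zero. After relabelling, assume $B_{n-1} \neq 0$, so $h_{n-1} = \sum_{i<n-1} c_i h_i$ with $c_i \in \Fq$. Substituting into \eqref{eq:gab_find_hmat} and using $c_i^{[j]} = c_i$ yields
\begin{equation*}
\sum_{i=0}^{n-2} (g_i + c_i g_{n-1})^{[j]}\, h_i = 0, \qquad j \in [k-n+1,\, k-1].
\end{equation*}
The elements $g'_i \coloneqq g_i + c_i g_{n-1}$ are $\Fq$-linearly independent (any dependence among them would give one among $g_0, \ldots, g_{n-1}$), so after raising each equation to the $q^{n-k-1}$-th power the coefficient matrix becomes an $(n-1)\times(n-1)$ square Moore matrix on $\Fq$-independent elements, hence invertible by Lemma~\ref{lem:determinant_qvandermonde}. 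This forces $h_0 = \cdots = h_{n-2} = 0$, and then $h_{n-1} = 0$, contradicting $\vec{h} \neq \0$.

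The routine part is the Frobenius computation showing orthogonality; the main obstacle is this $\Fq$-linear independence step. The hypotheses only guarantee the existence of a non-zero solution $\vec{h}$, and without the reduction-to-an-invertible-Moore-system argument one cannot rule out that the $h_i$ span an $\Fq$-space of dimension strictly less than $n$, which would leave open the possibility $\rank(\H) < n-k$.
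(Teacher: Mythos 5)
Your proof is correct, and the orthogonality half is essentially the paper's own computation (reindexing $j = b-a$ to reduce $\G\H^\top = \0$ to the defining relations \eqref{eq:gab_find_hmat}). Where you diverge is the harder half, namely showing $\rk_q(\vec{h}) = n$ so that $\H$ has full rank $n-k$. The paper's proof is structural: it rewrites \eqref{eq:gab_find_hmat} as $\Mooremat{n-1}{q}{\widetilde{\vec{g}}}\,\vec{h}^\top = \0$ with $\widetilde{g}_i = g_i^{[-n+k+1]}$, observes that this says $\vec{h}$ lies in the dual of a $\Gab{n,n-1}$ code, then invokes the duality theorem for Gabidulin codes (the dual is a $\Gab{n,1}$ code) together with the MRD property (minimum rank distance $n$) to conclude that every non-zero $\vec{h}$ in that space has $\rk_q(\vec{h}) = n$. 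You instead argue by contradiction entirely at the level of the linear system: assuming an $\Fq$-dependence $h_{n-1} = \sum_{i<n-1} c_i h_i$, you fold $g_{n-1}$ into the other evaluation points via $g'_i = g_i + c_i g_{n-1}$ (which remain $\Fq$-independent), obtain a shifted square Moore system in the $n-1$ unknowns $h_0,\dots,h_{n-2}$, and kill them all by the Moore determinant criterion of Lemma~\ref{lem:determinant_qvandermonde}. Your route is more elementary and self-contained — it uses nothing beyond the Moore determinant formula, so it does not presuppose the Gabidulin duality theorem (which the paper cites from Gabidulin 1985) — at the cost of a somewhat longer reduction with relabelling and Frobenius-shifting of exponents. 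The paper's route is shorter and highlights the same structural fact (kernel of a tall Moore matrix is a $\Gab{n,1}$ code) that underlies essentially every parity-check derivation for evaluation codes. One minor stylistic note: when you "raise each equation to the $q^{n-k-1}$-th power," the unknowns $h_i$ are raised too, so formally you solve for $h_i^{[n-k-1]}$; this causes no harm since the Frobenius is injective, but it is worth stating.
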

\begin{proof}
	Since the dual of a \Gab{n,k} code is a \Gab{n,n-k} code \citep[Theorem~3]{Gabidulin_TheoryOfCodes_1985}, we have to prove that $\Mat{H}$ is a generator matrix of this dual code, i.e.,
	$\G \cdot \H^\top = \0$ has to hold, which is equivalent to the following $n-1$ linear equations:
	\begin{align*}
	\sum\limits_{i=0}^{n-1} g_i^{[l]} h_i^{[j]} &=0, \qquad  \forall l \in \intervallexcl{0}{k}, j \in \intervallexcl{0}{n-k},\nonumber\\
	\Longleftrightarrow \qquad  \sum\limits_{i=0}^{n-1} g_i^{[j]} h_i &=0, \qquad \forall j \in \intervallexcl{-n+k+1}{k}.
	\end{align*}
	Therefore, if $h_0,h_1,\dots,h_{n-1}$ are linearly independent over $\Fq$, $\H$ is a generator matrix of the dual code \Gab{n,n-k}.
	To prove this, denote $\widetilde{\vec{g}} = \vecelements{g^{[-n+k+1]}}$. Then, \eqref{eq:gab_find_hmat} is equivalent to
	\begin{equation}\label{eq:proof_paritycheckmatrix}
	\Mooremat{n-1}{q}{\widetilde{\vec{g}}} \cdot \vecelements{h}^\top = \0.
	\end{equation}
	The matrix $\Mooremat{n-1}{q}{\widetilde{\vec{g}}}$ is a generator matrix of a \Gab{n,n-1} code,
	since $g_0^{[-n+k+1]}$, $g_1^{[-n+k+1]}$, $\dots, g_{n-1}^{[-n+k+1]} \in \Fqm$ are linearly independent over $\Fq$.
	Due to \eqref{eq:proof_paritycheckmatrix}, the vector
	$\vecelements{h}$ is a codeword of a \Gab{n,1} code, i.e., of the \emph{dual code} of the \Gab{n,n-1} code.
	This \Gab{n,1} code has minimum rank distance $d=n-1+1 = n$ and therefore $\rk(\vecelements{h})=n$.
	Thus, $\H$ is a generator matrix of the dual \Gab{n,n-k} code and therefore a parity-check matrix of the \Gab{n,k} code.
\end{proof}

\section{Decoding of Gabidulin Codes}\label{sec:decoding_gabidulin_codes}

We now recall some well-known results on the decoding of Gabidulin codes.

\subsection{Decoding of Errors}
Following the descriptions of \citep{Gabidulin_TheoryOfCodes_1985,Roth_RankCodes_1991,Gabidulin1992Fast}, we explain the idea of syndrome-based bounded minimum distance (BMD) decoding, without going into detail about the different algorithmic possibilities.

Let $\r = \c +\e \in \Fqm^n$ be the received word, where $\c \in \Gab{n,k}$.
The goal of \emph{decoding} is to reconstruct $\c$, given only the received word $\r$.
Clearly, this is possible only if the rank of the error $\e$ is not too big.
Syndrome-based BMD decoding of Gabidulin codes follows similar steps as syndrome-based BMD decoding of Reed--Solomon codes.
For Reed--Solomon codes, the two main steps are determining the ``error locations'' and finding the ``error values'', where the second step is considered to be much easier.
Algebraic BMD decoding of Gabidulin codes also consists of two steps; however, the second one is not necessarily the easier one.
The starting point of decoding Gabidulin codes is to decompose the error, based on the well-known rank decomposition of a matrix.

\begin{lemma} [Rank Decomposition {\citep[Theorem~1]{MatsagliaStyan-EqualitiesAndInequalitiesForRanksOfMatrices}}]\label{lem:rank_decomp}  %
	For any matrix $\X \in \Fq^{m\times n}$ of rank $r$ there exist full rank matrices $\Y \in \Fq^{m\times r}$ and $\Mat{Z} \in \Fq^{r\times n}$  such that $\X=\Y\Mat{Z}$.
	Moreover, the column space of $\X$ is $\Colspace{\X}= \Colspace{\Y}\in \Grassm{m,r}$
and the row space is $\Rowspace{\X}=\Rowspace{\Mat{Z}} \in \Grassm{n,r}$.
\end{lemma}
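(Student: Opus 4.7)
The plan is to build $\Y$ directly from a basis of the column space of $\X$ and then let $\Z$ be the coordinate matrix that expresses the columns of $\X$ in this basis. Everything else is then a routine verification.

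Concretely, I would start by using $\rk(\X)=r$ to fix a basis $\y_1,\dots,\y_r \in \Fq^m$ of $\Colspace{\X}$ and collect them as the columns of a matrix $\Y \in \Fq^{m\times r}$. By construction $\Y$ has full column rank $r$, and $\Colspace{\Y}=\Colspace{\X}$ holds by definition. Next, since every column $\x_j$ of $\X$ lies in $\Colspace{\Y}$, it can be written uniquely as $\x_j = \Y \z_j$ for some $\z_j \in \Fq^{r}$. Assembling these coordinate vectors as the columns of $\Z \in \Fq^{r\times n}$ gives the identity $\X = \Y\Z$.

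It then remains to check that $\Z$ has full row rank $r$ and that $\Rowspace{\Z}=\Rowspace{\X}$. For the rank, I would use the elementary inequality $r = \rk(\X) = \rk(\Y\Z) \leq \min\{\rk(\Y),\rk(\Z)\} \leq r$, which forces $\rk(\Z)=r$, so $\Z \in \Fq^{r \times n}$ is full rank and $\Rowspace{\Z} \in \Grassm{n,r}$. For the row-space identity, each row of $\X = \Y\Z$ is an $\Fq$-linear combination of the rows of $\Z$, hence $\Rowspace{\X} \subseteq \Rowspace{\Z}$; since both subspaces have dimension $r$, they coincide. Analogously, $\Colspace{\Y} \in \Grassm{m,r}$.

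There is no real obstacle here: the result is essentially the statement that a rank-$r$ matrix factors through $\Fq^r$, and the only mildly non-trivial step is noting that the rank inequality $\rk(\Y\Z) \leq \rk(\Z)$ combined with $\rk(\X)=r$ pins down $\rk(\Z)=r$ without having to construct $\Z$ from a row-space basis separately. One could equally start from a basis of $\Rowspace{\X}$ to build $\Z$ first and derive $\Y$ symmetrically, but the column-first version is slightly more compact.
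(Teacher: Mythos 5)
Your proof is correct and is the standard rank-factorization argument; the paper itself does not prove this lemma but simply cites \citep[Theorem~1]{MatsagliaStyan-EqualitiesAndInequalitiesForRanksOfMatrices}, so there is no in-paper proof to diverge from. The column-basis construction of $\Y$, the coordinate matrix $\Z$, the sandwich $r = \rk(\Y\Z) \leq \min\{\rk(\Y),\rk(\Z)\} \leq r$, and the dimension-count argument for $\Rowspace{\X}=\Rowspace{\Z}$ are all sound, so you have supplied a complete proof of a result the paper only references.
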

Therefore, we can rewrite the matrix representation of $\e$ with $\rk_q(\e) = t$ by:
\begin{equation*}
\Mat{E} = \extsmallfieldinput{\e} = \A \cdot \Mat{B}, \quad \text{with} \ \A \in \Fq^{m\times t}, \; \Mat{B} \in \Fq^{t \times n},
\end{equation*}
and if we define $\a \coloneqq \extsmallfieldinputInverse{\A} \in \Fqm^{t}$:
\begin{equation}\label{eq:decoding_decompose_error}
\e = \extsmallfieldinputInverse{\Mat{E}} = \extsmallfieldinputInverse{\A}\cdot \Mat{B}  = \a \cdot \Mat{B} = \vecelementsArb{a}{t} \cdot \Mat{B}.
\end{equation}
This decomposition is clearly not unique, but any of them is good for decoding.
The two main steps of decoding Gabidulin codes are therefore: first, determine ``a basis of the column space'' of the error, i.e., find the vector $\a$ of a possible decomposition, and second, find the corresponding matrix $\Mat{B}$, which determines the row space\footnote{Note that it is possible to change the order of these two steps and search for a basis of the row space first and then find a corresponding matrix $\A$. This is a difference to Reed--Solomon codes, where we cannot interchange the two main steps.}.
Both steps are based on the \emph{syndrome}, which can be calculated out of the received word by
\begin{equation}\label{eq:decodegabi_defsyndrome}
\s = \vecelementsArb{s}{n-k}=\r \cdot \H^\top=\e \cdot \H^\top,
\end{equation}
where $\H$ is a parity-check matrix of the $\Gab{n,k}$ code (see Lemma~\ref{eq:gab_paritycheckmatrix}). We denote the associated syndrome polynomial by $s(x) = \sum_{i=0}^{n-k-1}s_i x^{[i]} \in \Linpolyring$. Its coefficients are calculated by
\begin{equation}\label{eq:proof_keyequation_syndromex}
s_i = \sum_{j=0}^{n-1} e_j h_j^{[i]} = \sum_{j=0}^{n-1}\sum_{l=0}^{t-1}a_l B_{l,j}h_j^{[i]} \eqqcolon \sum_{l=0}^{t-1} a_l d_l^{[i]}, \quad \forall i \in \intervallexcl{0}{n-k},
\end{equation}
with
\begin{equation}\label{eq:definition_di}
d_l \coloneqq  \sum_{j=0}^{n-1}B_{l,j}h_j.
\end{equation}

We define the \emph{error span polynomial} as the minimal subspace polynomial of the vector $\a$ to be %
\begin{equation}\label{eq:decoding_error_span_poly}
\Lambda(x) \coloneqq  \MinSubspacePoly{a_0,a_1,\dots,a_{t-1}}
= \prod_{B_0=0}^{q-1}\cdots \prod_{B_{t-1}=0}^{q-1} \Big(x-\sum_{i=0}^{t-1}B_ia_i\Big).
\end{equation}
Hence, due to Lemma~\ref{lem:min_subspace_polynomial}, the error span polynomial $\Lambda(x)$ is a linearized polynomial of $q$-degree $t$ and any $\Fq$-linear combination of roots of $\Lambda(x)$ is also a root of $\Lambda(x)$.

The first part of the decoding process is to determine $\Lambda(x)$, given the syndrome polynomial $s(x)$, and it is strongly based on the following theorem, the \emph{key equation} for decoding Gabidulin codes.
\begin{theorem}[Key Equation for Decoding Gabidulin Codes {\citep[Lemma~4]{Gabidulin_TheoryOfCodes_1985}}]\label{theo:gabidulin_key_equation}
	Let $\r = \c +\e \in \Fqm^n$ be given, where $\c \in \Gab{n,k}$ over $\Fqm$ and $\rk(\e) = t<n-k$. %
	Denote by $\s = \vecelementsArb{s}{n-k}=\r \cdot \H^\top \in \Fqm^{n-k}$ the syndrome as in \eqref{eq:decodegabi_defsyndrome} and by $s(x) = \sum_{i=0}^{n-k-1}s_i x^{[i]}$ its associated polynomial.

	Let the error span polynomial $\Lambda(x)$ with $\deg_q \Lambda(x)=t$ be defined as in \eqref{eq:decoding_error_span_poly}, where $\a = \vecelementsArb{a}{t}$
	is a basis of the column space of $\e$.
	Then,
	\begin{equation}\label{eq:key_equation_gabidulin}
	\Omega(x) \equiv \Lambda(s(x)) \mod x^{[n-k]},
	\end{equation}
	for some $\Omega(x) \in \Linpolyring$ with $\deg_q \Omega(x) <t$.
\end{theorem}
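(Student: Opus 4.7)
The plan is to expand $\Lambda(s(x))$ as a linearized polynomial in $x$, identify the coefficient of each $x^{[k]}$ for $k < n-k$, and show these coefficients vanish whenever $k \geq t$. The identification $\Omega(x) := \Lambda(s(x)) \bmod x^{[n-k]}$ will then automatically satisfy $\deg_q \Omega(x) < t$.

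First I would set up the algebraic manipulation. Using $\Lambda(x) = \sum_{j=0}^{t} \Lambda_j x^{[j]}$ and the linearized structure (so that $(\alpha + \beta)^{[j]} = \alpha^{[j]} + \beta^{[j]}$ and $(x^{[i]})^{[j]} = x^{[i+j]}$), composition gives
\begin{equation*}
\Lambda(s(x)) = \sum_{j=0}^{t} \Lambda_j \bigl(s(x)\bigr)^{[j]} = \sum_{j=0}^{t} \sum_{i=0}^{n-k-1} \Lambda_j\, s_i^{[j]}\, x^{[i+j]}.
\end{equation*}
Collecting the coefficient of $x^{[k]}$ yields $\sum_{j} \Lambda_j\, s_{k-j}^{[j]}$, where the sum is over $j$ such that $0 \leq k-j \leq n-k-1$ and $0 \leq j \leq t$.

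Next I would plug in the explicit form of the syndrome coefficients from \eqref{eq:proof_keyequation_syndromex}, namely $s_i = \sum_{l=0}^{t-1} a_l\, d_l^{[i]}$ with $d_l$ as in \eqref{eq:definition_di}. Raising to the $q^j$-power and using $\bigl(d_l^{[i]}\bigr)^{[j]} = d_l^{[i+j]}$ gives
\begin{equation*}
s_{k-j}^{[j]} = \sum_{l=0}^{t-1} a_l^{[j]}\, d_l^{[k]}.
\end{equation*}
For any index $k$ in the range $t \leq k \leq n-k-1$, the constraint $0 \leq k-j \leq n-k-1$ is automatically satisfied for all $j \in [0,t]$, so the coefficient of $x^{[k]}$ becomes
\begin{equation*}
\sum_{j=0}^{t} \Lambda_j \sum_{l=0}^{t-1} a_l^{[j]}\, d_l^{[k]} = \sum_{l=0}^{t-1} d_l^{[k]} \left( \sum_{j=0}^{t} \Lambda_j\, a_l^{[j]} \right) = \sum_{l=0}^{t-1} d_l^{[k]}\, \Lambda(a_l).
\end{equation*}

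The final step is to invoke the defining property of $\Lambda(x)$. By \eqref{eq:decoding_error_span_poly} and \cref{lem:min_subspace_polynomial}, $\Lambda(x)$ is the minimal subspace polynomial of $\{a_0,\dots,a_{t-1}\}$, so $\Lambda(a_l) = 0$ for every $l \in [0, t-1]$. Hence the coefficient of $x^{[k]}$ in $\Lambda(s(x))$ vanishes for every $k$ with $t \leq k \leq n-k-1$. Reducing modulo $x^{[n-k]}$ discards all terms of $q$-degree $\geq n-k$, and what remains has $q$-degree strictly less than $t$. Setting $\Omega(x)$ equal to this remainder gives \eqref{eq:key_equation_gabidulin} with $\deg_q \Omega(x) < t$, as required.

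The only technical subtlety is the bookkeeping of index ranges, in particular verifying that for every $k$ in $[t, n-k-1]$ the inner sum over $j$ runs over the full support $[0,t]$ of $\Lambda$ so that the factorization as $\sum_l d_l^{[k]} \Lambda(a_l)$ is valid; this uses precisely the assumption $t < n-k$. Everything else is routine manipulation of linearized polynomials.
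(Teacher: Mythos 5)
Your proof is correct and follows essentially the same route as the paper's: expand $\Lambda(s(x))$, extract the coefficient of $x^{[k]}$, substitute the decomposition $s_i = \sum_l a_l d_l^{[i]}$, factor out $d_l^{[k]}$, and invoke $\Lambda(a_l)=0$ to kill every coefficient with $k\geq t$. The additional index-range bookkeeping you supply (checking that $0 \leq k-j \leq n-k-1$ holds automatically for $j\in[0,t]$ when $t\leq k\leq n-k-1$) is a small refinement that the paper glosses over but does not change the argument.
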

\begin{proof}
	With \eqref{eq:proof_keyequation_syndromex}, the $i$-th coefficient of $\Lambda(s(x))$ can be calculated by
	\begin{equation}\label{eq:proof_keyequation}
	\Omega_i \coloneqq  \sum_{j=0}^{i} \Lambda_j s_{i-j}^{[j]} = \sum_{j=0}^{i} \Lambda_j \Bigg(\sum_{l=0}^{t-1} a_l d_l^{[i-j]} \Bigg)^{[j]}
	= \sum_{l=0}^{t-1}d_l^{[i]}  \sum_{j=0}^{i} \Lambda_j \cdot a_l^{[j]}.
      \end{equation}
      Note that $\Omega_i$ is the coefficient of $x^i$ in $\Lambda(s(x))$.
	For $i \geq t$ this gives
	\begin{equation}\label{eq:proof_keyequation_2}
	\Omega_i =  \sum_{l=0}^{t-1}d_l^{[i]} \Lambda\big(a_l\big) = 0, \quad \forall i \geq t,
	\end{equation}
	since $\Lambda(x)$ has $a_i$, $\forall i \in \intervallexcl{0}{t}$, as roots, see \eqref{eq:decoding_error_span_poly}, and therefore $\deg_q \Omega(x) < \deg_q \Lambda(x) = t$.
\end{proof}
Alternatively, we can derive a key equation for the \emph{row} space of the error word.
\begin{theorem}[Row Space Key Equation for Decoding Gabidulin Codes {\citep{SilvaKschischang-FastEncodingDecodingGabidulin-2009}}]\label{theo:gabidulin_key_equation_rowspace}
	Let $\r = \c +\e \in \Fqm^n$ be given, where $\c \in \Gab{n,k}$ over $\Fqm$ and $\rk_q(\e) = t<n-k$. %
	Denote by $\s = \vecelementsArb{s}{n-k}=\r \cdot \H^\top \in \Fqm^{n-k}$ the syndrome as in \eqref{eq:decodegabi_defsyndrome} and by $s(x) = \sum_{i=0}^{n-k-1}s_i x^{[i]}$ its associated polynomial.

	Let the row error span polynomial be $\Gamma(x)= \MinSubspacePoly{d_0,d_1,\dots,d_{t-1}}$ with $\deg_q \Gamma(x)=t$,
	where $d_i$ is defined as in \eqref{eq:definition_di} for $i \in \intervallexcl{0}{t}$.
	Further, let
	\begin{equation}\label{eq:rowspace_ke_mod_syndrome}
	\widetilde{s}_i = s_{n-k-1-i}^{[i-n+k+1]}, \quad \forall i \in \intervallexcl{0}{n-k}
	\end{equation}
	and $\widetilde{s}(x) = \sum_{i=0}^{n-k-1}\widetilde{s}_i x^{[i]}$.
	Then,
	\begin{equation}\label{eq:key_equation_gabidulin_row}
	\Phi(x) \equiv \Gamma(\widetilde{s}(x)) \mod x^{[n-k]},
	\end{equation}
	for some $\Phi(x) \in \Linpolyring$ with $\deg_q \Phi(x) <t$.
\end{theorem}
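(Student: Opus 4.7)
The plan is to mirror the proof of Theorem~\ref{theo:gabidulin_key_equation}, but with the roles of the column-space basis $\a = (a_0,\dots,a_{t-1})$ and the row-space basis $(d_0,\dots,d_{t-1})$ interchanged. The whole point of the reindexing~\eqref{eq:rowspace_ke_mod_syndrome} is to move the $d_l$ out from under the Frobenius power in the expression~\eqref{eq:proof_keyequation_syndromex} for $s_i$, so that later $\Gamma$ can annihilate them. Concretely, substituting $s_{n-k-1-i} = \sum_l a_l d_l^{[n-k-1-i]}$ into the definition of $\widetilde{s}_i$ and collapsing the iterated Frobenius exponents, one obtains
\[
\widetilde{s}_i \;=\; s_{n-k-1-i}^{[i-n+k+1]} \;=\; \sum_{l=0}^{t-1} a_l^{[i-n+k+1]}\, d_l,
\]
which is the precise dual of~\eqref{eq:proof_keyequation_syndromex} with $\a$ and $(d_0,\dots,d_{t-1})$ swapped.

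Next I would extract the coefficient of $x^{[i]}$ in $\Gamma(\widetilde{s}(x))$, namely $\Phi_i = \sum_{j=0}^{i} \Gamma_j\, \widetilde{s}_{i-j}^{[j]}$, and substitute the expression above. A short calculation (applying the outer Frobenius $[j]$ to the already-shifted summand) gives
\[
\widetilde{s}_{i-j}^{[j]} \;=\; \sum_{l=0}^{t-1} a_l^{[i-n+k+1]}\, d_l^{[j]},
\]
so that after interchanging the two sums
\[
\Phi_i \;=\; \sum_{l=0}^{t-1} a_l^{[i-n+k+1]} \sum_{j=0}^{i} \Gamma_j\, d_l^{[j]}.
\]

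For $i \geq t$, since $\Gamma(x)$ has $q$-degree exactly $t$, the inner sum equals $\Gamma(d_l)$, which vanishes because $d_l$ lies in the root space of the minimal subspace polynomial $\Gamma(x) = \MinSubspacePoly{d_0,d_1,\dots,d_{t-1}}$ by definition. Hence $\Phi_i = 0$ for all $i \geq t$, which is exactly the statement $\deg_q \Phi(x) < t$ together with the congruence~\eqref{eq:key_equation_gabidulin_row}. Conceptually this is a direct dual of the column-space key equation; the only mildly delicate step is the Frobenius-exponent bookkeeping in the reindexing $s \mapsto \widetilde{s}$, where one has to check that the shifts $[n-k-1-i]$, $[i-n+k+1]$, and $[j]$ combine in just the right way so that the $d_l$ emerge with exponent $[j]$ (and the $a_l$ with a common exponent independent of $j$), which is precisely what allows the sum over $j$ to assemble $\Gamma(d_l)$.
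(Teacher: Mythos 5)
Your proof is correct and follows essentially the same route as the paper's: substitute $s_{n-k-1-i} = \sum_l a_l d_l^{[n-k-1-i]}$ into $\widetilde{s}_i$ to obtain $\widetilde{s}_i = \sum_l a_l^{[i-n+k+1]} d_l$, compute the coefficients of $\Gamma(\widetilde{s}(x))$, swap the order of summation, and observe that the inner sum assembles $\Gamma(d_l)$, which vanishes for $i \geq t$ since each $d_l$ is a root of $\Gamma$. The Frobenius bookkeeping you flag as the delicate point is exactly what the paper's derivation checks, and your collapse of the exponents is right.
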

\begin{proof}
	From \eqref{eq:definition_di}, we obtain
	\begin{equation}
	\widetilde{s}_i =\sum_{l=0}^{t-1}a_l^{[i-n+k+1]}d_l.
	\end{equation}
	The $i$-th coefficient of the linearized composition $\Gamma(\widetilde{s}(x))$ can then be calculated by
	\begin{align*}
	\Phi_i \coloneqq \big[\Gamma(\widetilde{s}(x))\big]_i &= \sum_{j=0}^{i} \Gamma_j \widetilde{s}_{i-j}^{[j]}\\
	&= \sum_{j=0}^{i} \Gamma_j \Bigg(\sum_{l=0}^{t-1} a_l^{[i-j-n+k+1]} d_l \Bigg)^{[j]}\\
	&= \sum_{l=0}^{t-1}a_l^{[i-n+k+1]}  \sum_{j=0}^{i} \Gamma_j \cdot d_l^{[j]}.
	\end{align*}
	For $i \geq t$ this gives
	\begin{equation*}
	\Phi_i =  \sum_{l=0}^{t-1}a_l^{[i-n+k+1]} \Gamma\big(d_l\big) = 0, \quad \forall i \geq t,
	\end{equation*}
	since $\Gamma(x)$ has all $d_i$, $\forall i \in \intervallexcl{0}{t}$, as roots and therefore $\deg_q \Phi(x) < \deg_q \Gamma(x) = t$.
\end{proof}

Based on the key equation from Theorem~\ref{theo:gabidulin_key_equation}, we explain the different steps of the standard decoding process of Gabidulin codes in the following and summarize them in Algorithm~\ref{algo:standard_decoding_gabidulin}. %
Similar steps have to be accomplished when we solve the row space key equation instead of the column space key equation. %

\subsubsection{Syndrome Calculation}
As mentioned before, the first step of decoding Gabidulin codes is calculating the syndrome based on a parity-check matrix $\H \in \Fqm^{(n-k)\times n}$ and the received word $\r \in \Fqm^n$ by
\begin{equation*}
\s = \vecelementsArb{s}{n-k}=\r \cdot \H^\top=\e \cdot \H^\top \in \Fqm^{n-k}.
\end{equation*}

\subsubsection{Solving the Key Equation}
The direct way to find $\Lambda(x)$ is to solve a linear system of equations based on the key equation \eqref{eq:key_equation_gabidulin}.
Due to \eqref{eq:proof_keyequation} and \eqref{eq:proof_keyequation_2} we have %
\begin{equation*}
\Omega_i= \sum_{j=0}^{i} \Lambda_j s_{i-j}^{[j]} = \sum_{j=0}^{t} \Lambda_j s_{i-j}^{[j]} = 0, \quad \forall i \geq t.
\end{equation*}
This is equivalent to the following homogeneous linear system of equations:
\begin{equation}\label{eq:key_equation_matrix_homogeneous}
\begin{pmatrix}
\Omega_{t}\\
\Omega_{t+1}\\
\vdots\\
\Omega_{n-k-1}\\
\end{pmatrix}
=
\begin{pmatrix}
s_t^{[0]}& s_{t-1}^{[1]} & \dots &s_0^{[t]} \\
s_{t+1}^{[0]}& s_{t}^{[1]} & \dots &s_1^{[t]} \\
\vdots &\vdots&\ddots& \vdots\\
s_{n-k-1}^{[0]}& s_{n-k-2}^{[1]} & \dots &s_{n-k-1-t}^{[t]} \\
\end{pmatrix} %
\cdot
\begin{pmatrix}
\Lambda_0 \\
\Lambda_1\\
\vdots\\
\Lambda_{t}
\end{pmatrix} %
=
\0.
\end{equation}
If the dimension of the solution space of \eqref{eq:key_equation_matrix_homogeneous} is one, then any solution of \eqref{eq:key_equation_matrix_homogeneous} provides the coefficients of the error span polynomial $\Lambda(x)$, defined as in \eqref{eq:decoding_error_span_poly}, except for a scalar factor. This scalar factor does not pose a problem, since it does not change the root space.
The following lemma provides a criterion to obtain the actual number of errors out of the syndrome matrix.

\begin{lemma}[Rank of Syndrome Matrix {\citep[Lemma, p. 132]{Gabidulin1992Fast}}]\label{lem:rank_syndromematrix}
	Let $\r = \c +\e \in \Fqm^n$, where $\c \in \Gab{n,k}$ and $\rk(\e) = t \leq \nkhalffrac$ and %
	let %
	$\vecelementsArb{s}{n-k}\in \Fqm^{n-k}$ denote the corresponding syndrome. %

	Then, for any $u \geq t$, the $u \times (u+1)$ matrix %
	\begin{equation}\label{eq:decoding_matrix_Su}
	\Mat{S}^{(u)} \coloneqq
	\begin{pmatrix}
	s_{u}^{[0]} & s_{u-1}^{[1]} &\dots &s_0^{[u]} \\
	s_{u+1}^{[0]} & s_{u}^{[1]} &\dots &s_1^{[u]} \\
	\vdots &\vdots&\ddots& \vdots\\
	s_{2u-1}^{[0]} & s_{2u-2}^{[1]} &\dots &s_{u-1}^{[u]} \\
	\end{pmatrix}
	\end{equation}
	has full rank $u$ if and only if $u=t$,
	where the $i$-th row of $\Mat{S}^{(u)} $ is defined to be all-zero if $i+u> n-k-1$, $\forall i= \intervallexcl{0}{u}$.
\end{lemma}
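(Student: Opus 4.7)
The plan is to exhibit a rank decomposition of $\Mat{S}^{(u)}$ into two Moore-type factors whose ranks are easy to read off. Starting from $s_p = \sum_{l=0}^{t-1} a_l \, d_l^{[p]}$ (see \eqref{eq:proof_keyequation_syndromex}), the $(i,j)$-entry of $\Mat{S}^{(u)}$ (for rows that are not defined to be zero) equals
\begin{equation*}
s_{u+i-j}^{[j]} \;=\; \Bigg(\sum_{l=0}^{t-1} a_l \, d_l^{[u+i-j]}\Bigg)^{[j]} \;=\; \sum_{l=0}^{t-1} a_l^{[j]} \, d_l^{[u+i]}.
\end{equation*}
This immediately gives the factorization $\Mat{S}^{(u)} = \Mat{D}^{(u)} \cdot \Mat{A}^{(u)}$, where $\Mat{D}^{(u)} \in \Fqm^{u \times t}$ has entries $(\Mat{D}^{(u)})_{i,l}=d_l^{[u+i]}$ and $\Mat{A}^{(u)} \in \Fqm^{t \times (u+1)}$ has entries $(\Mat{A}^{(u)})_{l,j}=a_l^{[j]}$; rows of $\Mat{S}^{(u)}$ forced to be zero by the convention $i+u>n-k-1$ correspond to formally dropping the same rows of $\Mat{D}^{(u)}$, which only decreases rank and does not affect the argument below.

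Next I would show that both factors have the largest possible rank, namely $t$. The matrix $\Mat{A}^{(u)}$ is (the transpose of) a Moore matrix generated by $a_0,\dots,a_{t-1}$, and these elements are $\Fq$-linearly independent because $\a$ was chosen as a basis of the column space of $\e$ in the rank decomposition \eqref{eq:decoding_decompose_error}; hence by \cref{lem:determinant_qvandermonde} its rank equals $\min\{t,u+1\}=t$ whenever $u\geq t$. Likewise $\Mat{D}^{(u)}$ is (a Frobenius shift of) a Moore matrix in $d_0,\dots,d_{t-1}$, so I need the $d_l$ to be $\Fq$-linearly independent. Here I would invoke the definition \eqref{eq:definition_di}, $d_l = \sum_{j=0}^{n-1}B_{l,j}h_j$: the matrix $\Mat{B}\in\Fq^{t\times n}$ is part of a rank decomposition of $\E$ and therefore has $\Fq$-rank $t$, and the $h_j$ are $\Fq$-linearly independent by the construction of the parity-check matrix in \cref{lem:paritycheck_gabidulin}. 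So any $\Fq$-linear dependency among the $d_l$ would descend to an $\Fq$-linear dependency among the rows of $\Mat{B}$, contradicting $\rank_q(\Mat{B})=t$. Applying \cref{lem:determinant_qvandermonde} once more, $\rk(\Mat{D}^{(u)}) = \min\{u,t\}=t$ when $u\geq t$.

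Combining these facts, for $u\geq t$ the product $\Mat{S}^{(u)} = \Mat{D}^{(u)}\,\Mat{A}^{(u)}$ has rank at most $t$ and, when $u=t$, the two factors are both of rank $t$ with compatible dimensions $t \times t$ and $t \times (t+1)$, so $\rk(\Mat{S}^{(t)}) = t = u$, i.e.\ $\Mat{S}^{(t)}$ has full row rank. Conversely, for any $u>t$ the factorization forces $\rk(\Mat{S}^{(u)}) \leq t < u$, so $\Mat{S}^{(u)}$ cannot have full row rank $u$. This proves the ``if and only if''. The main obstacle I anticipate is the careful bookkeeping to rewrite the Frobenius powers $s_{u+i-j}^{[j]}$ as the single bilinear form $\sum_l a_l^{[j]}d_l^{[u+i]}$ and to verify the $\Fq$-linear independence of the $d_l$ from the rank decomposition together with the structure of $\H$; once the factorization is in place, the rank count is routine.
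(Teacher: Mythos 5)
Your proof is correct and follows essentially the same route as the paper: both factor $\Mat{S}^{(u)}$ into a product of Moore matrices generated by the $d_l$ and the $a_l$ and then invoke \cref{lem:determinant_qvandermonde}; the paper pads with $a_i=d_i=0$ for $i\geq t$ to get a $u\times u$ times $u\times(u+1)$ factorization, whereas you use the thin $u\times t$ times $t\times(u+1)$ version, which bounds the rank by $t$ directly and also lets you fold the paper's separate counting argument for $u>\nkhalffrac$ into the general $u>t$ case. Your explicit verification that the $d_l$ are $\Fq$-linearly independent (via $\rank_q(\Mat{B})=t$ and the independence of the $h_j$) is a detail the paper glosses over; the only point worth tightening is the remark that zeroed rows ``do not affect the argument'' --- for $u=t$ you need \emph{no} rows to be dropped, which holds precisely because $t\leq\nkhalffrac$ implies $2t-1\leq n-k-1$.
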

\begin{proof}
	Since there are $n-k$ non-zero syndrome coefficients,
	we can provide only $n-k-u$ non-zero rows of $\Mat{S}^{(u)}$. %
	Therefore, for $u > \nkhalffrac$, the matrix $\Mat{S}^{(u)}$ has
	only $n-k-u < u$ non-zero rows and therefore rank less than $u$.\\
	Let $a_i, d_i =0$ for $i \geq t$.
	For $u \leq\nkhalffrac$, we can decompose $\Mat{S}^{(u)}$ with \eqref{eq:proof_keyequation_syndromex} as
	\begin{equation*}
	\Mat{S}^{(u)} =
	\begin{pmatrix}
	d_0^{[u]} & d_1^{[u]} & \dots & d_{u-1}^{[u]}\\
	d_0^{[u+1]} & d_1^{[u+1]} & \dots & d_{u-1}^{[u+1]}\\
	\vdots &\vdots&\ddots& \vdots\\
	d_0^{[2u-1]} & d_1^{[2u-1]} & \dots & d_{u-1}^{[2u-1]}\\
	\end{pmatrix}
	\cdot
	\begin{pmatrix}
	a_0^{[0]} & a_0^{[1]} & \dots &  a_0^{[u]}\\
	a_1^{[0]} & a_1^{[1]} & \dots &  a_1^{[u]}\\
	\vdots &\vdots&\ddots& \vdots\\
	a_{u-1}^{[0]} & a_{u-1}^{[1]} & \dots &  a_{u-1}^{[u]}\\
	\end{pmatrix}.
	\end{equation*}
	Both matrices are Moore matrices and due to Lemma~\ref{lem:determinant_qvandermonde}, they have both full rank if and only if $d_0,d_1,\dots,d_{u-1}$ and $a_0, a_1,\dots, a_{u-1}$ are sets of elements which are linearly independent over $\Fq$. If $u>t$, this is not true, since $a_i, d_i =0$ for $i \geq t$. If $u = t$ this is true and the left matrix is a square matrix of rank $u$ and the right is a $u \times (u+1)$ matrix of rank $u$.
Since the first $u$ columns of the right matrix constitute a matrix of rank $u$, the statement follows.
\end{proof}

Thus, Lemma~\ref{lem:rank_syndromematrix} proves that for $t \leq \dhalffrac = \nkhalffrac$, $\Mat{S}^{(t)}$ has full rank and the dimension of the solution space of \eqref{eq:key_equation_matrix_homogeneous} is one.
For the algorithmic realization, we can set up $\Mat{S}^{(u)}$ for $u = \dhalffrac$ and check its rank. If the rank is not full, we decrease $u$ by one, control the rank, and so on, until we find $u$ such that the rank is full. Since we have to solve several linear systems of equations over $\Fqm$, the complexity of this step is in the order of at least $O(t^3) \leq O(n^3)$ operations in $\Fqm$ with Gaussian elimination (cf. \citep{Roth_RankCodes_1991,Gabidulin1992Fast}). See Subsection~\ref{sec:otherDecAlgo} for a list of asymptotically faster algorithms.

\subsubsection{Finding the Root Space of $\boldsymbol{\Lambda(x)}$}
After solving the key equation \eqref{eq:key_equation_matrix_homogeneous} for the coefficients of $\Lambda(x)$, we have to find a basis of the root space of $\Lambda(x)$.
This basis corresponds to one possible $\a = \vecelementsArb{a}{t}$ in the decomposition of \eqref{eq:decoding_decompose_error}.
Finding a basis of the root space of a linearized polynomial is relatively easy due to the structure of their roots. We can find the root space of $\Lambda(x)$ by finding the right kernel of its associated evaluated matrix, i.e.,
for some basis $\Basis = \setelements{\beta}{m}$ of $\Fqm$ over $\Fq$, we have to determine
\begin{equation*}
\ker \Big(\extsmallfield_\beta\big(\vecevalm{\Lambda}{\beta}\big)\Big). %
\end{equation*}
The kernel of this matrix %
is equivalent to $\extsmallfieldinput{\a}$ of one possible $\a$.
Thus, finding the root space of $\Lambda(x)$ involves solving a linear system of equations of size $m$ over $\Fq$, which has complexity at most $O(m^3)$ over $\Fq$.
This root-finding procedure was explained in detail in \citep{Lidl-Niederreiter:FF1996,Berlekamp1984Algebraic}.

\subsubsection{Determining the Error}
Knowing a possible vector $\a \in \Fqm^t$, we have to find the corresponding matrix $\Mat{B} \in \Fq^{t \times n}$ such that $\e = \a \cdot \Mat{B}$ as in \eqref{eq:decoding_decompose_error}.
This is basically done in two substeps. Based on \eqref{eq:proof_keyequation_syndromex}, we can set up the following system of equations which we have to solve for $\vec{d} = \vecelementsArb{d}{t}$\footnote{Notice that this system of equations from \eqref{eq:decoding_find_the_dis} can be used to do row-erasure-only correction, i.e., when $\a$ is known in advance due to the channel. For the concept of row and column erasures, see also Figure~\ref{fig:error_erasure}.}:
\begin{equation}\label{eq:decoding_find_the_dis}
\begin{pmatrix}
a_0^{[0]} & a_1^{[0]} & \dots &  a_{t-1}^{[0]}\\
a_0^{[-1]} & a_1^{[-1]} & \dots &  a_{t-1}^{[-1]}\\
\vdots &\vdots&\ddots& \vdots\\
a_0^{[-(n-k-1)]} & a_1^{[-(n-k-1)]} & \dots &  a_{t-1}^{[-(n-k-1)]}\\
\end{pmatrix}
\cdot
\begin{pmatrix}
d_0\\
d_1\\
\vdots\\
d_{t-1}
\end{pmatrix}
=
\begin{pmatrix}
s_0^{[0]}\\
s_1^{[-1]}\\
\vdots\\
s_{n-k-1}^{[-(n-k-1)]}
\end{pmatrix}.
\end{equation}
Solving this system of equations with Gaussian elimination requires complexity $O(n^3)$ operations over $\Fqm$, whereas the algorithm from \citep{Gabidulin_TheoryOfCodes_1985} requires complexity $O(n^2)$ over $\Fqm$ by using the Moore structure of the involved matrix.

After having found $\vec{d}$, we determine the matrix $\Mat{B}$ out of $d_l = \sum_{i=0}^{n-1} B_{l,i} h_i$ for all $l\in \intervallexcl{0}{t}$. The complexity of this calculation is negligible, since $\vecelements{h}$ has rank $n$ and we are looking for the representation of $\vec{d}$ over $\Fq$ using these linearly independent elements. %

Finally, we calculate $\e  = \a \cdot \Mat{B}$ and can reconstruct $\c = \r - \e$. A summary of this decoding procedure is given in Algorithm~\ref{algo:standard_decoding_gabidulin}.
Notice that the algorithm will most likely output a \texttt{decoding failure} when $t > \dhalffrac$.
In this case, with high probability, there is more than one solution of the linear system of equations that solves for the coefficients of the error span polynomial.%

\printalgoIEEE{
\DontPrintSemicolon
\caption{\newline$\c$ \textbf{or} \texttt{decoding failure} $\leftarrow$\textsc{DecodeGabidulin}$\big(\r; \H \big)$}
\label{algo:standard_decoding_gabidulin}
\KwIn{$\r = \vecelements{r}\in \Fqm^n$ with $n\leq m$, \newline
Parity-check matrix $\Mat{H} = \Mooremat{n-k}{q}{\vecelements{h}}$ of $\Gab{n,k}$}
\KwOut{Estimated codeword $\Mat{c} \in \Fqm^n$ or \normalfont{\texttt{decoding failure}}}
Syndrome calculation: $\Mat{s} \leftarrow \Mat{r} \cdot \Mat{H}^\top \in \Fqm^{n-k}$\;
\If{$\Mat{s} = \0$}{
Estimated codeword: $\Mat{c} \leftarrow \Mat{r}$\;
\Return{$\c$}}

\Else{
Set up $\Mat{S}^{(t)}$ as in \eqref{eq:decoding_matrix_Su} for $t = \lfloor\frac{n-k}{2}\rfloor$\newline %
\While{$\rk(\Mat{S}^{(t)}) < t$}
{$ t \leftarrow t -1$\;
Set up $\Mat{S}^{(t)}$ as in \eqref{eq:decoding_matrix_Su}}
Solve $\Mat{S}^{(t)}\cdot \boldsymbol{\Lambda}^\top = \0$ for $\boldsymbol{\Lambda} = (\Lambda_0 \ \Lambda_1 \ \dots \ \Lambda_{t}) \in \Fqm^{t+1}$\;
Find basis $\vecelementsArb{a}{\varepsilon}\in \Fqm^\varepsilon$ of the root space of $\Lambda(x) = \sum_{i=0}^{t}\Lambda_ix^{[i]}$ over $\Fqm$\;
\If{$\varepsilon = t$}{
Find $\vec{d}= \vecelementsArb{d}{t}\in \Fqm^t$ by solving \eqref{eq:decoding_find_the_dis}\;
Find $\Mat{B} = \big(B_{i,j}\big)^{i \in \intervallexcl{0}{t}}_{j \in \intervallexcl{0}{n}} \in \Fq^{t \times n}$ such that $d_i = \sum_{j=0}^{n-1} B_{i,j} h_j$\;
Estimated codeword: $\c \leftarrow \Mat{r}- \Mat{a} \cdot \Mat{B}$\;
\Return{$\c$}}
\Else{
\Return{\normalfont{\texttt{decoding failure}}}}}

}

\subsection{Error-Erasure Decoding}\label{subsec:error-erasure}
For a short description on error-erasure decoding of Gabidulin codes, denote by $\vec{C}\in \Fq^{m \times n}$  the transmitted codeword (i.e., the matrix representation of $\vec{c}\in \Fqm^n$) of a $\Gabcode{n}{k}$ code that is corrupted by an additive error $\mathbf E \in \Fq^{m \times n}$. At the receiver side, only the received matrix $\mathbf R \in \Fq^{m \times n}$, where $\mathbf R = \vec{C}+ \mathbf E$, is known.
The channel might provide additional side information in the form of erasures:
\begin{itemize}
	\item $\numbRowErasures$ {row erasures} (in \citep{silva_rank_metric_approach} called ``deviations'') and
	\item $\numbColErasures$ {column erasures} (in \citep{silva_rank_metric_approach} called ``erasures''),
\end{itemize}
such that the received matrix can be decomposed into
\begin{equation}\label{eq:decomp_errrorerasures}
\vec{R}
=\vec{C}+ \underbrace{\vec{A}^{(R)} \vec{B}^{(R)} + \vec{A}^{(C)} \vec{B}^{(C)} + \vec{E}^{(E)}}_{= \,\vec{E}_\mathrm{total}},
\end{equation}
where $\vec{A}^{(R)} \in \Fq^{m \times \numbRowErasures}$, $\vec{B}^{(R)} \in \Fq^{\numbRowErasures \times n}$,
$\vec{A}^{(C)} \in \Fq^{m \times \numbColErasures}$, $\vec{B}^{(C)} \in \Fq^{\numbColErasures\times n}$ are full-rank matrices, respectively, and $\vec{E}^{(E)}$ $\in \Fq^{m \times n}$ is a matrix of rank $t$. For an illustration see Figure~\ref{fig:error_erasure}.
The decoder knows $\vec{R}$ and additionally $\vec{A}^{(R)}$ and $\vec{B}^{(C)}$.
Further, $t$ denotes the number of errors without side information.
The rank-metric error-erasure decoding algorithms from \citep{GabidulinPilipchuck_ErrorErasureRankCodes_2008,silva_rank_metric_approach,wachter2014list}
can then reconstruct $\vec{c}_\mathcal{G} \in \Gabcode{n}{k}$ with asymptotic complexity $O(n^2)$ operations over $\Fqm$, or in sub-quadratic complexity using the fast operations described in \citep{PuchingerWachterzeh-ISIT2016,puchinger2018fast}, if
\begin{equation}\label{eq:errorerasurecond}
2t + \numbRowErasures + \numbColErasures \leq d-1 = n-k
\end{equation}
is fulfilled.

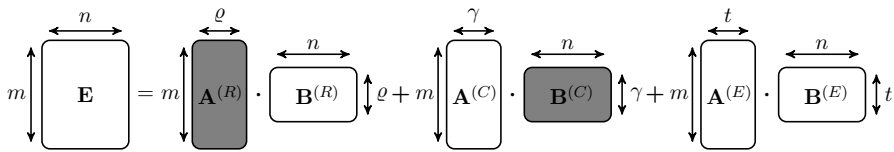
\begin{figure}[htb]
	\centering
	\scalebox{0.75}{
	\begin{tikzpicture}[scale = 1]
  
\node[mybox,minimum height=5em,minimum width=4em,label=left:$m\ $] (errortotal) {$\mathbf{E}$} ;
\node (leftarrownorth) [left=5ex of errortotal.north]{};
\node (leftarrowsouth) [left=5ex of errortotal.south]{};
\node (leftarrowwest) [above=6ex of errortotal.west]{};
\node (leftarroweast) [above=6ex of errortotal.east]{};
\draw[black,<->,thick] (leftarrownorth) -- (leftarrowsouth);
\draw[black,<->,thick] (leftarrowwest) -- (leftarroweast);
\node[rectangle,above=5pt of errortotal](lengthcodeword){$ n$}; 
\node[rectangle,right=0pt of errortotal](equal){$ = $}; 

\node[mybox_small,right=15pt of equal,minimum height=5em, minimum width = 2.5em,label=left:$m\, $,fill=gray,opacity=1] (matrixAR) {$\!\!\!\mathbf{A}^{(R)}$} ;
\node (leftarrownorth2) [left=3ex of matrixAR.north]{};
\node (leftarrowsouth2) [left=3ex of matrixAR.south]{};
\node (leftarrowwest2) [above=6ex of matrixAR.west]{};
\node (leftarroweast2) [above=6ex of matrixAR.east]{};
\draw[black,<->,thick] (leftarrownorth2) -- (leftarrowsouth2);
\draw[black,<->,thick] (leftarrowwest2) -- (leftarroweast2);
\node[rectangle,above=5pt of matrixAR](lengthmatrixAR){$ \numbRowErasures$}; 
\node[rectangle,right=0pt of matrixAR](cdot1){$\boldsymbol{\cdot} $};

 \node[mybox_small,right=11pt of matrixAR,minimum width=4em, minimum height = 2.5em,label=right:$\ \, \numbRowErasures$] (matrixBR) {$\!\!\mathbf{B}^{(R)}$} ;
 \node (leftarrownorth3) [right=5ex of matrixBR.north]{};
 \node (leftarrowsouth3) [right=5ex of matrixBR.south]{};
 \node (leftarrowwest3) [above=3.5ex of matrixBR.west]{};
 \node (leftarroweast3) [above=3.5ex of matrixBR.east]{};
 \draw[black,<->,thick] (leftarrownorth3) -- (leftarrowsouth3);
 \draw[black,<->,thick] (leftarrowwest3) -- (leftarroweast3);
 \node[rectangle,above=5pt of matrixBR](lengthmatrixBR){$n$}; 
 \node[rectangle,right=13pt of matrixBR](plus1){$\boldsymbol{ +} $};

\node[mybox_small,right=14pt of plus1,minimum height=5em, minimum width = 2.5em,label=left:$m\, $] (matrixAR) {$\!\!\!\mathbf{A}^{(C)}$} ;
\node (leftarrownorth2) [left=3ex of matrixAR.north]{};
\node (leftarrowsouth2) [left=3ex of matrixAR.south]{};
\node (leftarrowwest2) [above=6ex of matrixAR.west]{};
\node (leftarroweast2) [above=6ex of matrixAR.east]{};
\draw[black,<->,thick] (leftarrownorth2) -- (leftarrowsouth2);
\draw[black,<->,thick] (leftarrowwest2) -- (leftarroweast2);
\node[rectangle,above=5pt of matrixAR](lengthmatrixAR){$\numbColErasures$}; 
\node[rectangle,right=0pt of matrixAR](cdot1){$\boldsymbol{\cdot} $};

 \node[mybox_small,right=11pt of matrixAR,minimum width=4em, minimum height = 2.5em,label=right:$\ \, \numbColErasures$,fill=gray,opacity=1] (matrixBR) {$\!\!\mathbf{B}^{(C)}$} ;
 \node (leftarrownorth3) [right=5ex of matrixBR.north]{};
 \node (leftarrowsouth3) [right=5ex of matrixBR.south]{};
 \node (leftarrowwest3) [above=3.5ex of matrixBR.west]{};
 \node (leftarroweast3) [above=3.5ex of matrixBR.east]{};
 \draw[black,<->,thick] (leftarrownorth3) -- (leftarrowsouth3);
 \draw[black,<->,thick] (leftarrowwest3) -- (leftarroweast3);
 \node[rectangle,above=5pt of matrixBR](lengthmatrixBR){$n$}; 
 \node[rectangle,right=13pt of matrixBR](plus2){$\boldsymbol{ +} $};

\node[mybox_small,right=14pt of plus2,minimum height=5em, minimum width = 2.5em,label=left:$m\, $] (matrixAR) {$\!\!\!\mathbf{A}^{(E)}$} ;
\node (leftarrownorth2) [left=3ex of matrixAR.north]{};
\node (leftarrowsouth2) [left=3ex of matrixAR.south]{};
\node (leftarrowwest2) [above=6ex of matrixAR.west]{};
\node (leftarroweast2) [above=6ex of matrixAR.east]{};
\draw[black,<->,thick] (leftarrownorth2) -- (leftarrowsouth2);
\draw[black,<->,thick] (leftarrowwest2) -- (leftarroweast2);
\node[rectangle,above=5pt of matrixAR](lengthmatrixAR){$t$}; 
\node[rectangle,right=0pt of matrixAR](cdot1){$\boldsymbol{\cdot} $};

 \node[mybox_small,right=11pt of matrixAR,minimum width=4em, minimum height = 2.5em,label=right:$\ \, t$] (matrixBR) {$\!\!\mathbf{B}^{(E)}$} ;
 \node (leftarrownorth3) [right=5ex of matrixBR.north]{};
 \node (leftarrowsouth3) [right=5ex of matrixBR.south]{};
 \node (leftarrowwest3) [above=3.5ex of matrixBR.west]{};
 \node (leftarroweast3) [above=3.5ex of matrixBR.east]{};
 \draw[black,<->,thick] (leftarrownorth3) -- (leftarrowsouth3);
 \draw[black,<->,thick] (leftarrowwest3) -- (leftarroweast3);
 \node[rectangle,above=5pt of matrixBR](lengthmatrixBR){$n$}; 
\end{tikzpicture}}
	\caption{Illustration of row erasures, column erasures and (full) errors in the rank metric. The known matrices (given by the channel) are filled with gray.}
	\label{fig:error_erasure}
\end{figure}

\subsection{Other Decoding Algorithms}
\label{sec:otherDecAlgo}
The syndrome-based decoder that we presented above is based on the algorithms presented in \citep{Gabidulin_TheoryOfCodes_1985,Gabidulin1992Fast}. However, instead of solving a linear system of equations to find the error span polynomial (we call this the \emph{first step} below), Gabidulin suggested to use the analog of the extended Euclidean algorithm (EEA) for linearized polynomials. This gives a quadratic complexity in the code length for this step of the decoder.
Roth independently proposed an alternative decoder that also solves the first step using a linear system of equations in \citep{Roth_RankCodes_1991}.
Furthermore, there are several adaptations of the Berlekamp--Massey algorithm \citep{Paramonov_Tretjakov_BMA_1991,RichterPlass_DecodingRankCodes_2004,Richter_RankCodes_2004,HassanSidorenko_FastLinShift_2010,Sidorenko2011Linearized,sidorenko2014fast}, which allow to recover the error span polynomial in $O(n^2)$ or $O^\sim(n^{1.69})$ (divide-and-conquer variant) %
operations over $\Fqm$, where~$n$ is the code length. Note that in the latter case, the second step of the decoder becomes the bottleneck.

All of the mentioned algorithms are rank-metric counterparts of classical decoding algorithms for Reed--Solomon codes, which first find an error locator polynomial and then determine the error values.
For Reed--Solomon codes, the latter step has negligible complexity compared to the first part of the algorithm.
In the Gabidulin code case, the second step is rather heavy. Determining the error from the known $\a \in \Fqm^t$ is as fast as an algorithm with (soft-)quadratic cost over $\Fqm$: computing a basis of the root space of the error span polynomial costs $O(n^2 m)$ operations in $\Fq$, which asymptotically costs---up to logarithmic factors---as much as $O(n^2)$ operations over $\Fqm$ using the bases in \citep{couveignes2009elliptic}.

There are also some decoding algorithms that directly output the message polynomial and thus avoid the second step.
\citet{loidreau2006welch} proposed a Welch--Berlekamp-type decoder, which directly returns the message polynomial by first finding two linearized polynomials that fulfill certain evaluation conditions and degree constraints, followed by a division of one polynomial by the other. The resulting decoder has cubic complexity in the code length over $\Fqm$.

The decoder in \citep{WachterAfanSido-FastDecGabidulin_DCC_journ} is based on a key equation that contains the message polynomial, which is sometimes referred to as ``Gao-like key equation'' since it can be seen as the rank-metric analog of \citep{Gao_ANewDecodingAlgorithm_2002}.
The key equation can be solved using only multiplication, division, EEA, multi-point evaluation, minimal subspace polynomial computation, and interpolation of linearized polynomials.
Fast operations for these operations with linearized polynomials in \citep{PuchingerWachterzeh-ISIT2016,puchinger2018fast} led to the first sub-quadratic decoding algorithm for Gabidulin codes.

Module minimization \citep{puchinger2017row}, also called row reduction, of linearized polynomial matrices gives a general framework for decoding Gabidulin codes and related codes. For the case of Gabidulin codes, it is equivalent to the EEA, but for other classes, such as interleaved codes, it is more flexible. A divide-and-conquer %
version, Alekhnovich's algorithm, gives a sub-quadratic algorithm in $n$.
Most recently, minimal approximant bases for linearized polynomials have been studied in \citep{bartz2021fast}.
These bases enable more flexible and faster decoding algorithms for variants of Gabidulin codes, e.g., (lifted) interleaved and folded Gabidulin codes.

\section{Considerations on List Decoding Gabidulin Codes}\label{sec:list-dec-gab}

Given a word $\vec{r}\in\F_{q^m}^n$ (or alternatively, a matrix ${\Mat{R}\in\F_q^{m\times n}}$), a \textit{list decoding} algorithm outputs all codewords that are inside a ball of radius $\tau$, centered at $\vec{r}$, where $\tau$ is possibly larger than the unique decoding radius of the code. For a given code, a natural question to ask is: for which values of $\tau$ can list decoding be done efficiently?

Although Gabidulin codes can be seen as the rank-metric analog of Reed--Solomon codes, there are a few remarkable differences.

List decoding of rank-metric codes and Gabidulin codes was recently studied in \citep{wa13a,DingListDecRandomRankMetric2015,GuruswamiExplicit2016,RavivWachterzeh_GabidulinBounds_journal,ShuXingYuanListDecGab2017,XingYuan-NewRankListDec2018,TrombettZullo-ListDecRM2020}. In \citep{wa13a}, it was shown that Gabidulin codes cannot be list-decoded beyond the Johnson radius.
Hence, for $\tau \geq n-\sqrt{n(n-d)}$, general list decoding has exponential (in $n$) worst-case complexity since the list size grows exponential in $n$, see \citep[Theorem~1]{wa13a}.
This result was generalized to any rank-metric code by~\citet{DingListDecRandomRankMetric2015}. {When $m$ is sufficiently large,~\citet{DingListDecRandomRankMetric2015} also showed that with high probability a random rank-metric code can be efficiently list decoded}.
Further, it was shown in~\citep{wa13a} that there is no Johnson-like polynomial upper bound on the list size since there exists a non-linear rank-metric code with exponentially growing list size for any radius greater than the unique decoding radius.
In~\citep{GuruswamiExplicit2016}, an explicit subcode of a Gabidulin code was shown to be efficiently list-decodable.
The remarkable difference to Reed--Solomon codes is the following: There are families of Gabidulin codes of rate $R \geq 1/6$ for which some received words have exponential-sized lists even for decoding only one error beyond $\nkhalffrac$, see \citep{RavivWachterzeh_GabidulinBounds_journal}.
This result was recently generalized in \citep{TrombettZullo-ListDecRM2020} to more general classes of MRD codes.

Despite their analogy to Reed--Solomon codes and many unsuccessful trials by researchers in the last $20$ years, nobody has found a polynomial-time list decoding algorithm for Gabidulin codes analog to the Guruswami--Sudan decoder for RS codes. In fact, the above mentioned result in the list size of some Gabidulin codes proves that an equally general algorithm cannot exist.

Figures~\ref{fig:rank_decoding_region} and \ref{fig:gabidulin_decoding_region} illustrates the different decoding regions of rank-metric and Gabidulin codes.
\begin{figure}[htb!]
\begin{subfigure}[b]{0.49\textwidth}
	\centering
		{\begin{tikzpicture}[scale = 0.7]
\begin{axis}
  [
legend style={font=\footnotesize, at={(0.46,0.97)}},
xlabel = {$\delta = {d}/{n} $},
xlabel style = {font=\normalsize},
ylabel = {$\dfrac{\tau}{n}$},
ylabel style = {font=\normalsize,rotate=270}, %
tick label style= {font=\footnotesize},
xmin = 0,
xmax = 1,
ymin = 0,
ymax = 1,
line width = 0.8,
]
\addplot[black,fill = black, fill opacity = 0.1]{1}\closedcycle;
\addplot[black, fill=white]{0.5*x} \closedcycle;
\addplot[black, fill=white, fill opacity = 0.2]{0.5*x} \closedcycle;
\draw[black,-] (axis cs:1,0) -- (axis cs:1,1);
\node at (axis cs:0.2,0.65) [anchor=north west] {There is a code with};
\node at (axis cs:0.2,0.57) [anchor=north west]  {\textbf{exponential} list size};
\node at (axis cs:0.6,0.2) [anchor=north west] {unique};
\legend{BMD Radius}
\end{axis}
\end{tikzpicture}}
\caption{General codes in the rank metric}\label{fig:rank_decoding_region}
\end{subfigure}%
\hfill%
\begin{subfigure}[b]{0.49\textwidth}
		\centering
			{\begin{tikzpicture}[scale = 0.7]
\begin{axis}
[font=\normalsize,
legend style={font=\footnotesize, at={(0.5,0.97)}},
xlabel = {$\delta = {d}/{n} \approx 1-R$},
xlabel style = {font=\normalsize},
ylabel = {$\dfrac{\tau}{n}$},
ylabel style = {font=\normalsize,rotate=270}, %
tick label style= {font=\footnotesize},
xmin = 0,
xmax = 1,
ymin = 0,
ymax = 1,
line width = 0.8,
]
\addplot[black,fill = black, fill opacity = 0.1]{1}\closedcycle;
\addplot[black, dashed, samples=8000,fill=white]{1-sqrt(1-x)} \closedcycle;
\addplot[black, fill=white, fill opacity = 0.2]{0.5*x} \closedcycle;
\draw[black,-] (axis cs:1,0) -- (axis cs:1,1);
\node at (axis cs:0.55,0.63) [anchor=north east] { exponential for all};
\node at (axis cs:0.92,0.55) [anchor=north east,font=\scriptsize] {exponential};
\node[anchor=north east,font=\scriptsize](aaa) at (axis cs:0.9,0.5){ for many};
\node at (axis cs:0.6,0.2) [anchor=north west] {unique};
\legend{BMD Radius, Johnson Radius}
\end{axis}
\end{tikzpicture}}
		\caption{Gabidulin codes}\label{fig:gabidulin_decoding_region}
\end{subfigure}
	\caption{List size of codes in the rank metric, depending on normalized Bounded Minimum Distance (BMD) decoding radius $\tau_{BMD}/n = (d-1)/2n$ and normalized Johnson radius $\tau_J/n $
		and on the normalized minimum distance $\delta = {d}/{n}$. \label{fig:decoding_regions}}
\end{figure}
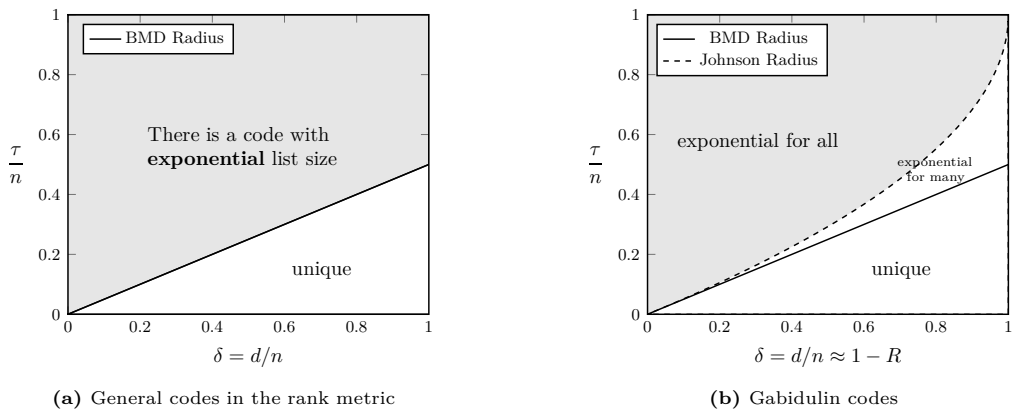

\section{Interleaved Gabidulin Codes}\label{sec:interleaved}

Interleaved Gabidulin Codes are a code class for which efficient decoders are known that are able to correct errors of rank larger than $\lfloor\frac{d-1}{2}\rfloor$.
\begin{definition} [Interleaved Gabidulin Codes~\citep{Loidreau_Overbeck_Interleaved_2006}]
  \label{def:IntGabCode}
A linear (vertically, heterogeneous) interleaved Gabidulin code $\IntGabcode{u;n,k^{(1)},\dots,k^{(u)}}$ over $\Fqm$ of length $n \leq m$, dimensions $k^{(i)}\leq n$, $\forall i\in[1,u]$, and interleaving order $u$ is defined by
\begin{equation*}
\IntGabcode{u;n,k^{(1)},\dots,k^{(u)}} \coloneqq
\left\lbrace
\begin{pmatrix}
\vec{c}_{\mycode{G}}^{(1)}\\
\vec{c}_{\mycode{G}}^{(2)}\\
\vdots\\
\vec{c}_{\mycode{G}}^{(u)}
\end{pmatrix}
: \vec{c}_{\mycode{G}}^{(i)} \in \Gabcode{n}{k^{(i)}} , \forall i \in [1,u]
\right\rbrace.
\end{equation*}
If $k{(i)}=k, \forall i\in[1,u]$, we call the code a \emph{homogeneous} interleaved Gabidulin code and denote it by $\IntGabcode{u;n,k}$.
\end{definition}
When considering random errors of rank weight $t$, the code $\IntGabcode{u;n,k}$ can be decoded uniquely with high probability up to
$t \leq \lfloor \frac{u}{u+1}(n-k)\rfloor$ errors\footnote{In this setting, an error of weight $t$ is a $u \times n$ matrix of $\Fq$-rank $t$. Note that this means that the tall $(u m) \times n$-matrix obtained by expanding the matrix component-wise over $\Fq$ has rank $t$.}, cf.~\citep{Loidreau_Overbeck_Interleaved_2006,Sidorenko2011SkewFeedback,wachter2014list}.
However, it is well-known that there are error patterns for which the known efficient decoders fail.

There are several decoding algorithms for interleaved Gabidulin codes.
The most common one is the syndrome-based decoder, cf. \citep{Sidorenko2011SkewFeedback}. For interleaved codes, we consider errors whose column spaces are the same. Therefore, we can stack the syndrome matrices~\eqref{eq:decoding_find_the_dis} from each row of the received word and solve the corresponding linear system of equations for the error span polynomial to obtain the error span polynomial which is the same for all rows. Counting the number of unknowns and equations results in the restriction $t \leq \lfloor \frac{u}{u+1}(n-k)\rfloor$. This maximum radius can only be achieved if the rank of the stacked syndrome matrix is full, else the linear system of equations does not have a unique solution and a syndrome-based decoder declares a decoding failure.

In the following, we shortly summarize the interpolation-based decoder by \citet{wachter2014list}.
The algorithm consists of two steps: the \emph{interpolation step} computes non-zero vectors of linearized polynomials
\begin{equation*}
\vec{Q}^{(i)}=[Q_0^{(i)},Q_1^{(i)},\dots,Q_u^{(i)}] \in \Linpolyring^{u+1} \setminus \{\0\}, \ \forall i=1,\dots,\intOrder
\end{equation*}
such that they fulfill certain degree and evaluation conditions with respect to the received matrix $\vec{C}+\vec{E}$.
The \emph{root-finding step} finds all message polynomial vectors $[f_1,\dots,f_u]$ of degrees $\deg f_j < k_j$ such that
\begin{equation*}
Q_0^{(i)} + \sum_{j=1}^{u} Q_j^{(i)} f_j = 0 \quad \forall \, i=1,\dots,\intOrder.
\end{equation*}
If the rank of the error matrix $\vec{E}$ is at most $\tfrac{u}{u+1}(n-\bar{k})$ with $\bar{k} \coloneqq \tfrac{1}{u}\sum_{j=1}^u k_j$, then at least one satisfactory interpolation vector $\vec{Q}^{(i)}$ exists~\citep{wachter2014list}. In this case, the output list contains the transmitted message polynomial vector.
The algorithm can be considered as a partial unique or list decoder~\citep{wachter2014list}.
Notice that the list decoder returns a \emph{basis} of the list of all codewords on the decoding list. The number of all words on the list can become large.

\section{Folded Gabidulin Codes}\label{subsec:FoldedGabidulin}

Variants of \emph{folded} Gabidulin codes were proposed independently in~\citep{Mahdavifar2012Listdecoding} and \citep{GuruswamiInsertionsDeletions}.
In~\citep{GuruswamiInsertionsDeletions} the coefficients of the message polynomial are restricted to belong to the subfield $\Fq$ of $\Fqm$.
In the following we consider folded Gabidulin codes as defined in~\citep{Mahdavifar2012Listdecoding}.

\begin{definition}[$h$-Folded Gabidulin Code]\label{def:hFoldedGab}
 Let $\pe$ be a primitive element of $\Fqm$ and let $n\leq m$.
 Let $\foldPar$ be a positive integer that divides $n$ and let $\lenFG=n/\foldPar$.
 An $\foldPar$-folded Gabidulin code \FGab{\foldPar,\pe;\lenFG,k} of length $\lenFG$ and dimension $k$ is defined as
 \begin{equation}
  \left\{%
  \begin{pmatrix}
   f(\pe^0) & f(\pe^1) & \dots & f(\pe^{\foldPar-1})
   \\[5pt]
   f(\pe^{\foldPar}) & f(\pe^{\foldPar+1}) & \dots & f(\pe^{2\foldPar-1})
   \\
   \vdots & \vdots & \ddots & \vdots
   \\
   f(\pe^{n-\foldPar}) & f(\pe^{n-\foldPar+1}) & \dots & f(\pe^{n-1})
  \end{pmatrix}
  :
  f(x)\in\LinpolyringK
  \right\}.
 \end{equation}
\end{definition}
Defining $\foldedVec=\left(\alpha^0 \ \pe^{\foldPar} \ \dots \ \pe^{n-\foldPar}\right)^\top$, we can write each codeword of \FGab{\foldPar,\pe;\lenFG,k} as
\begin{equation}\label{eq:defFGabShort}
  \left(f(\foldedVec) \ f(\pe\foldedVec) \ \dots \ f(\pe^{\foldPar-1}\foldedVec)\right)
\end{equation}
where $f(x)\in\LinpolyringK$.
For a fixed basis of $\mathbb{F}_{q^{\foldPar m}}$ over $\Fqm$ a codeword of an $h$-folded Gabidulin code \FGab{\foldPar,\pe;\lenFG,k} can be represented as a column vector $\vec{c}\in\colVec{\mathbb{F}_{q^{\foldPar m}}}{\lenFG}$, a matrix $\vec{C}\in\Fqm^{\lenFG\times\foldPar}$ or a matrix $\vecFq{\Mat{C}}\in\Fq^{\lenFG\times \foldPar m}$.
The $j$-th row of $\Mat{C}$ for $j\in\intervallincl{0}{\lenFG-1}$ is
\begin{equation*}
 \vec{c}_j=\left(f(\pe^{j\foldPar}) \ f(\pe^{j\foldPar+1}) \ \dots \ f(\pe^{(j+1)\foldPar-1})\right)
\end{equation*}
and can be seen as an element of the field $\mathbb{F}_{q^{\foldPar m}}$.
Folded Gabidulin codes are codes of length $\lenFG$ over a large field $\mathbb{F}_{q^{\foldPar m}}$ that can be decoded over the small field $\Fqm$.
An $\foldPar$-folded Gabidulin code considered over $\mathbb{F}_{q^{\foldPar m}}$ is a \emph{nonlinear} code over $\mathbb{F}_{q^{\foldPar m}}$ since $\mathbb{F}_{q^{\foldPar m}}$-linear combinations of codewords do not necessarily give codewords.

A folded Gabidulin code is $\Fqm$-linear since the unfolded code is an $\Fqm$-linear subspace of $\Fqm^n$.
This also implies that the code is \emph{linear} over $\Fq$.

The number of codewords in \FGab{\foldPar,\pe;\lenFG,k} is $|\FGab{\foldPar,\pe;\lenFG,k}|=q^{mk}$.
The code rate of a folded Gabidulin code is the same as the code rate of the unfolded code~\citep{Mahdavifar2012Listdecoding}, i.e.,
\begin{align}\label{eq:codeRateFGab}
 R=\frac{\log_{q^{\foldPar m}}\left(|\FGab{\foldPar,\pe;\lenFG,k}|\right)}{\lenFG}=\frac{k}{n}.
\end{align}

The following theorem shows that folded Gabidulin codes are MRD codes if and only if $\foldPar$ divides $k$.
\begin{theorem}[Minimum Distance of $\foldPar$-Folded Gabidulin Codes~{\citep[Theorem~1]{bartz2017folded}}]\label{thm:minimumDistance}
 \hskip5pt
 The min\-imum rank distance of an $\foldPar$-folded Gabidulin code $\FGab{\foldPar,\pe;\lenFG,k}$ of length $\lenFG=\frac{n}{\foldPar}$ is $\RankdistNoInput=\lenFG-\lceil\frac{k}{\foldPar}\rceil+1$.
\end{theorem}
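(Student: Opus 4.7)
The plan is to establish matching upper and lower bounds on $\RankdistNoInput$. The upper bound comes from applying the Singleton-like bound (Theorem~\ref{theo:Singleton-like-bound}) to the folded code, which has length $\lenFG$, cardinality $q^{mk}$, and alphabet $\F_{q^{\foldPar m}}$. Since $n\le m$ yields $\lenFG=n/\foldPar\le m\le\foldPar m$, the bound specialises to $q^{mk}\le q^{\foldPar m(\lenFG-\RankdistNoInput+1)}$, hence $\RankdistNoInput\le\lenFG+1-k/\foldPar$; integrality then gives $\RankdistNoInput\le\lenFG-\lceil k/\foldPar\rceil+1$.

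For the matching lower bound I would use $\Fqm$-linearity to reduce to showing that every non-zero codeword has $\Fq$-rank weight at least $\lenFG-\lceil k/\foldPar\rceil+1$. Fix a non-zero $f\in\LinpolyringK$ and let $\vec c_0,\dots,\vec c_{\lenFG-1}\in\Fqm^\foldPar\cong\F_{q^{\foldPar m}}$ be the blocks of the associated codeword. The rank weight equals $\lenFG-\dim_{\Fq}\mathcal K$, where $\mathcal K\coloneqq\{\vec a\in\Fq^\lenFG:\sum_j a_j\vec c_j=\vec 0\}$. Using that $f$ is $\Fq$-linear, the kernel condition on each of the $\foldPar$ coordinates of $\Fqm^\foldPar$ rewrites as $f(\pe^i\beta)=0$ for every $i\in[0,\foldPar-1]$, where $\beta\coloneqq\sum_j a_j\pe^{j\foldPar}$. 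Writing $R\subseteq\Fqm$ for the $\Fq$-root-space of $f$ (of dimension at most $k-1$), this becomes $\pe^i\beta\in R$ for all such $i$. Since $n\le m$ and $\pe$ is primitive in $\Fqm$, the powers $\pe^0,\pe^\foldPar,\dots,\pe^{(\lenFG-1)\foldPar}$ are $\Fq$-linearly independent, so $\vec a\mapsto\beta$ is an $\Fq$-isomorphism onto $U\coloneqq\mathrm{span}_{\Fq}\{\pe^{j\foldPar}:0\le j<\lenFG\}$, and therefore $\dim_{\Fq}\mathcal K=\dim_{\Fq}(U\cap V)$ with $V\coloneqq\{\beta\in\Fqm:\pe^i\beta\in R\text{ for all }i\in[0,\foldPar-1]\}$.

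The crux is then the bound $\dim_{\Fq}(U\cap V)\le\lceil k/\foldPar\rceil-1$, which I would prove by a shift-amplification argument. Let $\beta_1,\dots,\beta_v$ be an $\Fq$-basis of $U\cap V$; the plan is to show that the $\foldPar v$ products $\{\pe^i\beta_j\}_{0\le i<\foldPar,\,1\le j\le v}$ are $\Fq$-linearly independent elements of $R$. Each product lies in $R$ by the definition of $V$. Independence rests on the direct-sum decomposition $\pe^0 U\oplus\pe^1 U\oplus\cdots\oplus\pe^{\foldPar-1}U=\mathrm{span}_{\Fq}\{\pe^\ell:0\le\ell<n\}$: the subspaces $\pe^i U=\mathrm{span}_{\Fq}\{\pe^{j\foldPar+i}:0\le j<\lenFG\}$ partition the index set $\{0,\dots,n-1\}$, their dimensions sum to $\foldPar\lenFG=n$, and $\pe^0,\dots,\pe^{n-1}$ are $\Fq$-linearly independent because $n\le m$. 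Any relation $\sum_{i,j}c_{ij}\pe^i\beta_j=0$ therefore collapses to $\sum_i\pe^i\gamma_i=0$ with $\gamma_i\coloneqq\sum_j c_{ij}\beta_j\in U$, forcing each $\gamma_i=0$ and then, by independence of the $\beta_j$, each $c_{ij}=0$. Consequently $\foldPar v\le\dim_{\Fq}R\le k-1$, so $v\le\lfloor(k-1)/\foldPar\rfloor=\lceil k/\foldPar\rceil-1$, which delivers the desired rank lower bound and, combined with Singleton, the claimed equality.

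I expect the main obstacle to be precisely this shift-amplification step: one must recognise that an $\Fq$-independent family inside $U\cap V$ can be dilated by the $\foldPar$ multipliers $\pe^0,\dots,\pe^{\foldPar-1}$ into an $\foldPar$-times larger $\Fq$-independent family inside $R$, and the direct-sum identity $\bigoplus_i\pe^i U=\mathrm{span}\{\pe^\ell:0\le\ell<n\}$, which is exactly where the hypothesis $n\le m$ enters, is what makes the amplification work and matches the Singleton bound.
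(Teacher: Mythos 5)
Your proof is correct and takes essentially the same approach as the paper's: both bound the $\Fq$-dimension of the left kernel of the expanded codeword matrix by showing that each kernel vector yields $\foldPar$ $\Fq$-linearly independent roots of $f$, capping the kernel dimension at $\lfloor (k-1)/\foldPar \rfloor = \lceil k/\foldPar \rceil - 1$, and then combining with the Singleton-like bound. The paper phrases this via $\Fq$-row reduction of $\Mat{C}$ and asserts (without spelling it out) that the block-wise transformed code locators remain $\Fq$-linearly independent, so that each zero row supplies $\foldPar$ new independent roots; your explicit direct-sum decomposition $\bigoplus_{i=0}^{\foldPar-1}\pe^i U$ of the locator span is the same observation made rigorous.
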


\begin{proof}
 An $\foldPar$-folded Gabidulin code forms a group under addition since $\LinpolyringK$ forms an additive group over $\Fqm$.
 Thus the minimum distance of the code $\CRank=\FGab{\foldPar,\pe;\lenFG,k}$ is given by the minimum rank of a nonzero codeword, i.e.,
 \begin{equation*}
  \Rankdist{\CRank} = \min_{\Mat{C} \in \CRank^*} \rk_q(\vecFq{\Mat{C}})
 \end{equation*}
 where $\CRank^*\coloneqq\CRank\setminus\{\vec{0}\}$.
 Let $\Mat{C}\in\CRank^*$ be a codeword generated by the evaluation of $f(x)\in\LinpolyringK$ at the code locators $\{\pe^0,\pe^1,\dots,\pe^{n-1}\}$.
 Since $\lenFG\leq\foldPar m$ we have $\rk_q(\vecFq{\Mat{C}})\leq\lenFG$.
 If the row rank of $\Mat{C}$ is
 \begin{equation*}
  \rk_q(\vecFq{\Mat{C}})=\lenFG-z
 \end{equation*}
 then by $\Fq$-elementary row operations (Gaussian elimination) we get $\Mat{C}'$ with $z$ zero rows and $\lenFG-z$ linearly independent rows.
It follows that $\Mat{C}'$ is generated by the evaluation of $f(x)$ at the new code locators $\pe_i'$ that are obtained from $\pe_i$ by $\Fq$-elementary operations for all $i\in\intervallincl{0}{n-1}$.
 Thus the new code locators $\pe_0',\pe_1'\dots,\pe_{n-1}'$ are $\Fq$-linearly independent and we have $f(\pe_i')=0$ for all $i\in\intervallincl{0}{n-1}$ at most $k-1$ times.
 Hence the number of zero rows $z$ in $\Mat{C}'$ satisfies
 \begin{equation}
  z\leq\left\lfloor\frac{k-1}{\foldPar}\right\rfloor=\left\lceil\frac{k}{\foldPar}\right\rceil-1
 \end{equation}
 and
 \begin{equation}\label{eq:minRankCodewords}
  \Rankdist{\CRank} = \min_{\Mat{C} \in \CRank^*} \rk_q(\vecFq{\Mat{C}})\geq\lenFG-\left\lceil\frac{k}{\foldPar}\right\rceil+1.
 \end{equation}
 From the Singleton-like bound we have
 \begin{equation}\label{eq:SBfolded}
  \log_{q^{\foldPar m}} q^{mk} \leq \lenFG-\RankdistNoInput+1
 \quad \Longleftrightarrow \quad
 \RankdistNoInput \leq \lenFG-\frac{k}{h}+1.
 \end{equation}
 Combining~\eqref{eq:minRankCodewords} and~\eqref{eq:SBfolded} we get
 \begin{equation}
  \lenFG-\left\lceil\frac{k}{\foldPar}\right\rceil+1\leq \RankdistNoInput \leq \lenFG-\frac{k}{h}+1
 \end{equation}
 and the statement of the theorem follows, since $\RankdistNoInput$ is an integer.
\end{proof}

Thus folded Gabidulin codes fulfill the Singleton bound in the rank metric with equality, i.e., they are MRD codes, if and only if $\foldPar$ divides $k$.
If $h$ does \emph{not} divide $k$ then Theorem~\ref{thm:minimumDistance} shows that the code still has the best minimum distance for the given parameters $\lenFG,k$ and $\foldPar$ but the size of the code could be larger in this case.

\subsection{Decoding of Folded Gabidulin Codes}

Motivated by the results of Guruswami and Rudra~\citep{Guruswami2008Explicit} and Vadhan~\citep{Vadhan2011}, there is an interpolation-based decoding algorithm for folded Gabidulin codes~\citep{Mahdavifar2012Listdecoding}, which will be summarized in the following.

The algorithm consists of two steps: the \emph{interpolation step} computes $r\leq\intDim$ non-zero and (left) $\Linpolyring$-independent vectors of linearized polynomials
\begin{equation*}
\vec{Q}^{(i)}\!=\![Q_0^{(i)},Q_1^{(i)},\dots,Q_\intDim^{(i)}] \in \Linpolyring^{\intDim+1} \setminus \{\0\}, \ \forall i=1,\dots,r
\end{equation*}
such that they fulfill certain degree and evaluation conditions with respect to interpolation points obtained in a sliding-window manner from the received matrix $\vec{C}+\vec{E}$.
The \emph{root-finding step} finds all message polynomials $f$ of degrees $\deg f < k$ such that
\begin{equation*}
Q_0^{(i)} + \sum_{j=1}^{u} Q_j^{(i)} f(\pe^{j-1}x) = 0 \quad \forall \, i=1,\dots,r.
\end{equation*}
If the rank of the error matrix $\vec{E}$ is less than $\tfrac{\intDim}{\intDim+1}\left(\tfrac{\lenFG(\foldPar-\intDim+1)-k+1}{\foldPar-\intDim+1}\right)$, then at least one satisfactory interpolation vector $\vec{Q}^{(i)}$ exists, see~\citep{Mahdavifar2012Listdecoding}. The output list contains the transmitted message polynomial vector.
The algorithm can be considered as a partial unique or list decoder~\citep{bartz2017algebraic,bartz2017folded}.
A decoding scheme for folded Gabidulin codes with an improved decoding performance for high-rate codes was presented in~\citep{bartz2017algebraic,bartz2017folded}.

\section{Decoding of Symmetric Errors}
A \emph{symmetric error} is an error matrix where the transpose of the matrix coincides with the  matrix itself. When error matrices are symmetric, the correctable rank of the error can be increased by using special code constructions.

In~\citep{Gab05ic,GabidulinPilipchuk_SymmetricRankErrors_2004,GabidulinPilipchuck_SymmMatricesCorrectingErrors_2006}, it was shown that for Gabidulin codes that contain a linear subcode of symmetric matrices can correct symmetric error matrices  of rank  up to $(n -1)/2$.

In \citep{SpaceSymm-ISIT2021}, the condition of symmetric errors was relaxed and \emph{space-symmetric error matrices}, which have the property that their column and row spaces coincide, were considered. It is possible to use a Gabidulin code with the same property as in~\citep{Gab05ic,GabidulinPilipchuk_SymmetricRankErrors_2004,GabidulinPilipchuck_SymmMatricesCorrectingErrors_2006} to decode such space-symmetric errors of rank up to $2(n-k)/3$ with high probability.

\section{Further Classes of MRD Codes}\label{sec:furtherclasses}

In the last five years, there has been a growing interest in finding new MRD codes that are not equivalent\footnote{By equivalent, we mean that one code can be obtained by a semi-linear isometry of the rank metric: that is, multiplication of an invertible $n \times n$ matrix over $\Fq$ from the right, multiplication with a constant in $\Fqm^\ast$, and taking an automorphism (which fixes $\Fq$) of each codeword entry.} %
to a Gabidulin code. The first such family was discovered independently by \citet{sheekey2016new} (twisted Gabidulin codes) and \citet{otal2016explicit} (special case of twisted Gabidulin codes). These seminal works started a fruitful line of work, which resulted in various constructions of linear and non-linear MRD codes, for instance:
\begin{itemize}
	\item Generalizations of twisted Gabidulin codes in \citep[Remark~9]{sheekey2016new} and \citep{lunardon2018generalized, ot17, sheekey2019new, puchinger2017further}.
	\item Some further constructions: linear MRD codes of dimension 2 which are not generalized Gabidulin codes \citep{ho16}, new MRD codes from projective geometry \citep{csajbok2018new, csajbok2018newbis,bartoli2019new}, MRD codes with maximum idealizers, which are not generalized Gabidulin codes, for $n=7, q$ odd and $n=8, q\equiv 1\ (\text{mod } 3)$ \citep{csajbok2018maximum}, $\Fq$-linear MRD codes of $\Fq^{6\times 6}$ of dimension 12, minimum distance 5 \citep{marino2019mrd}.
        \end{itemize}

A recent survey by \citet{sheekey201913} gives a comprehensive overview of these code constructions. In the following, we briefly outline the construction of linear twisted Gabidulin codes.

A twisted Gabidulin code is defined by evaluating linearized polynomials, similar to Gabidulin codes. However, in contrast to Gabidulin codes, these polynomials do not have $q$-degree at most $k-1$, but have non-zero monomials of higher degree. The coefficients of these monomials are chosen in a special way such that the resulting code is an MRD code.
Figure~\ref{fig:illustration_twisted_Gabidulin_codes} illustrates twisted Gabidulin codes in different levels of generality.
Note that the codes in \citep{sheekey2016new} can be more general if we allow the codes to be non-linear over $\Fqm$, but we restrict ourselves to $\Fqm$-linear codes here.

\begin{definition}[{Twisted Gabidulin Code, \citep{sheekey2016new,puchinger2017further}}]\label{def:tgab_definition} %
  Let $n,k,\numTwists \in \NN$ with $k < n$ and $\ell \leq n-k$. Choose a
  \begin{itemize}
  \item \emph{hook vector}\footnote{For didactic reasons, this definition slightly differs from the one in \citep{puchinger2017further}, i.e., this is a special case.} $\hVec \in \{0,\dots,k-1\}^\numTwists$ and a
  \item \emph{twist vector} $\tVec \in \{1,\dots,n-k\}^\numTwists$ with distinct entries $t_i$, and let
  \item $\etaVec \in (\Fqm \setminus \{0\})^\numTwists$.
  \end{itemize}
  The set of \emph{\twisted linearized polynomials} over $\Fqm$ is defined by
  \begin{equation*}
    \evpolys = \left\{ f = \sum_{i=0}^{k-1} f_i x\qpow{i} + \sum_{j=1}^{\numTwists} \eta_j f_{h_j} x\qpow{k-1+t_j} : f_i \in \Fqm \right\}.
  \end{equation*}
  Let $\alpha_1,\dots,\alpha_n \in \Fqm$ be linearly independent over $\Fq$ and write $\alphaVec = [\alpha_1,\dots,\alpha_n]$.
  The \emph{\twistedC Gabidulin code} of length $n$ and dimension $k$ is given by
  \begin{equation*}
    \Cmult = \left\{ \big[f(\alpha_1),f(\alpha_2), \dots, f(\alpha_n)] \, : \, f \in \evpolys \right\}.
  \end{equation*}
\end{definition}

In Sheekey's case \citep{sheekey2016new} ($n=m$, $\numTwists=1$, $\hVec=(0)$, $\tVec=(1)$), a twisted Gabidulin code is MRD if $\eta = \eta_1$ has field norm $N_{\Fqm/\Fq}(\eta) \coloneqq \tfrac{\eta^{q^n-1}}{\eta^q-1}\neq (-1)^{nk}$. A non-zero $\eta$ with this property exists for any field size $q > 2$. \citet{puchinger2017further} also gave a necessary condition for the given more general class of twisted Gabidulin codes to be MRD: if we choose a chain of proper subfields $\Fq \subsetneq \mathbb{F}_{q^{s_0}} \subsetneq \mathbb{F}_{q^{s_1}} \subsetneq \dots \subsetneq \mathbb{F}_{q^{s_\numTwists}}=\mathbb{F}_{q^{m}}$, take the evaluation points $\alpha_i$ from $\mathbb{F}_{q^{s_0}}$ (this requires $n \leq s_0$), and choose $\eta_i \in \mathbb{F}_{q^{s_i}}\setminus \mathbb{F}_{q^{s_{i-1}}}$, then the resulting code is MRD. Note that we have $m \geq 2^\numTwists n$, so the codes are defined over relatively large fields for large $\numTwists$.

A finite semifield \citep{dickson1905finite,dickson1906commutative,albert1961generalized,knuth1963finite} is a finite division algebra with multiplicative identity, in which multiplication is not necessarily associative.
It is well-known (see, e.g., \citep{delaCruz2016algebraic}) that a certain class of finite semifields is in one-to-one correspondence to the set of (not necessarily linear) MRD codes of length $n$ and minimum rank distance $n$ over $\mathbb{F}_{q^n}$. Note that in this case, all non-zero codewords are invertible matrices, which corresponds to the fact that division is possible in a division ring. This correspondence gave rise to various constructions of MRD codes inequivalent to Gabidulin codes (see \citep{sheekey201913} for an overview), and also inspired some new MRD constructions for $d<n$. For instance, Sheekey's twisted Gabidulin codes \citep{sheekey2016new} coincide with generalized twisted fields \citep{albert1961generalized} for $k=1$, but still provide MRD codes for $k>1$.

When constructing new rank-metric codes, it is always of importance to show that the new codes are not equivalent to an existing code construction (see \citep{wan1996geometry, berger2003isometries, morrison2014equivalence} for the formal definition of equivalence in the rank metric). A practical tool that can show inequivalence for linear rank-metric codes in many cases was introduced by \citet{neri2020equivalence}. The method works well for several code constructions that are evaluation codes of linearized polynomials.

Decoding of Sheekey's twisted Gabidulin codes and its additive variants was studied in \citep{rosenthal2017decoding,randrianarisoa2017decoding,li2019interpolation,li2019decoding,kadir2021interpolation,kadir2020decoding}. The general decoding principle of \citep{rosenthal2017decoding,randrianarisoa2017decoding} is to solve a linear system of equations with one fewer equation than usual (cf.~Section~\ref{sec:decoding_gabidulin_codes}). The solutions of these equations form a one-dimensional space. In addition, we have one non-linear equation that arises from the special structure of the evaluation polynomials. It is shown in \citep{rosenthal2017decoding,randrianarisoa2017decoding} that we can efficiently find the unique solution of the decoder by solving this additional non-linear equation. So far, there is no polynomial-time decoder for twisted Gabidulin codes with multiple twists, or one twist $t_1>1$.

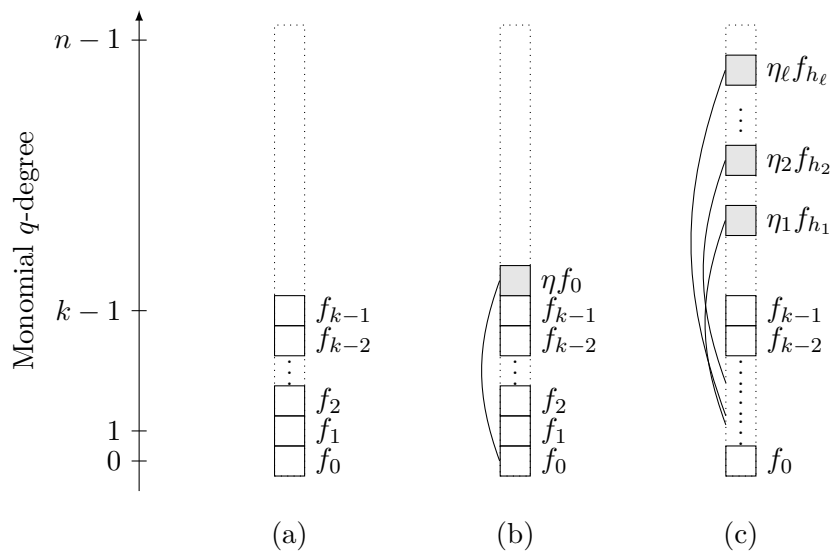
\begin{figure}[ht!]

	\begin{center}
		\begin{tikzpicture}
		\def\ylabelpos{-1cm}
		\def\xlevelminusone{-2cm}
		\def\xlevelzero{0cm}
		\def\xlevelone{3cm}
		\def\xleveltwo{6cm}
		\def\xlevelthree{6cm}
		\def\xwidth{0.4cm}
		\def\ywidth{0.4cm}
		\def\ydist{1cm}
		\def\nval{15}
		\def\kval{5}
		\def\twistval{6}
		\def\twistvaltwo{8}
		\def\twistvalthree{12}
		\def\myeps{0.1cm}
		\tikzstyle{coeffNodesPure}=[draw, rectangle, color=blue, minimum width=\xwidth, minimum height=\ywidth, inner sep=0pt]
		\tikzstyle{coeffNodes}=[draw, rectangle, minimum width=\xwidth, minimum height=\ywidth, inner sep=0pt]
		\tikzstyle{twistNodes}=[draw, rectangle, fill=black!10, minimum width=\xwidth, minimum height=\ywidth, inner sep=0pt]
		\tikzstyle{coeffNodesLevels}=[draw, rectangle, color=blue, minimum width=\xwidth, minimum height=\ywidth, inner sep=0pt]

		\draw[->,>=latex] (\xlevelminusone,-\ywidth) to node[above,rotate=90,yshift=1.2cm] {Monomial $q$-degree} (\xlevelminusone,\nval*\ywidth);
		\draw (\xlevelminusone+\myeps,0) -- (\xlevelminusone-\myeps,0) node[left] {$0$};
		\draw (\xlevelminusone+\myeps,\ywidth) -- (\xlevelminusone-\myeps,\ywidth) node[left] {$1$};
		\draw (\xlevelminusone+\myeps,5*\ywidth) -- (\xlevelminusone-\myeps,5*\ywidth) node[left] {$k-1$};
		\draw (\xlevelminusone+\myeps,\nval*\ywidth-\ywidth) -- (\xlevelminusone-\myeps,\nval*\ywidth-\ywidth) node[left] {$n-1$};

		\draw[dotted] (\xlevelzero-0.5*\xwidth,-0.5*\ywidth) rectangle (\xlevelzero+0.5*\xwidth,\nval*\ywidth-0.5*\ywidth);
		\node[coeffNodes,label=right:{$f_0$}]  (levelZerocoeff0) at (\xlevelzero+0,0*\ywidth) {};
		\node[coeffNodes,label=right:{$f_1$}]  (levelZerocoeff1) at (\xlevelzero+0,1*\ywidth) {};
		\node[coeffNodes,label=right:{$f_2$}]  (levelZerocoeff2) at (\xlevelzero+0,2*\ywidth) {};
		\node 								   (levelZerocoeffdots) at (\xlevelzero+0,3.2*\ywidth) {\footnotesize $\vdots$};
		\node[coeffNodes,label=right:{$f_{k-2}$}]  (levelZerocoeffk2) at (\xlevelzero+0,4*\ywidth) {};
		\node[coeffNodes,label=right:{$f_{k-1}$}]  (levelZerocoeffk1) at (\xlevelzero+0,5*\ywidth) {};

		\draw[dotted] (\xlevelone-0.5*\xwidth,-0.5*\ywidth) rectangle (\xlevelone+0.5*\xwidth,\nval*\ywidth-0.5*\ywidth);
		\node[coeffNodes,label=right:{$f_0$}]  (levelOnecoeff0) at (\xlevelone+0,0*\ywidth) {};
		\node[coeffNodes,label=right:{$f_1$}]  (levelOnecoeff1) at (\xlevelone+0,1*\ywidth) {};
		\node[coeffNodes,label=right:{$f_2$}]  (levelOnecoeff2) at (\xlevelone+0,2*\ywidth) {};
		\node 								   (levelOnecoeffdots) at (\xlevelone+0,3.2*\ywidth) {\footnotesize $\vdots$};
		\node[coeffNodes,label=right:{$f_{k-2}$}]  (levelOnecoeffk2) at (\xlevelone+0,4*\ywidth) {};
		\node[coeffNodes,label=right:{$f_{k-1}$}]  (levelOnecoeffk1) at (\xlevelone+0,5*\ywidth) {};
		\node[twistNodes,label=right:{$\eta f_0$}] (levelOnetwist1) at (\xlevelone+0,\twistval*\ywidth) {};
		\draw[bend angle=20, bend left] (levelOnecoeff0.west) to (levelOnetwist1.west);

		\draw[dotted] (\xlevelthree-0.5*\xwidth,-0.5*\ywidth) rectangle (\xlevelthree+0.5*\xwidth,\nval*\ywidth-0.5*\ywidth);
		\node[coeffNodes,label=right:{$f_0$}]  (levelThreecoeff0) at (\xlevelthree+0,0*\ywidth) {};
		\node 								   (levelThreecoeff1) at (\xlevelthree+0,1.2*\ywidth) {\footnotesize $\vdots$};
		\node 								   (levelThreecoeff2) at (\xlevelthree+0,2.2*\ywidth) {\footnotesize $\vdots$};
		\node 								   (levelThreecoeffdots) at (\xlevelthree+0,3.2*\ywidth) {\footnotesize $\vdots$};
		\node[coeffNodes,label=right:{$f_{k-2}$}]  (levelThreecoeffk2) at (\xlevelthree+0,4*\ywidth) {};
		\node[coeffNodes,label=right:{$f_{k-1}$}]  (levelThreecoeffk1) at (\xlevelthree+0,5*\ywidth) {};
		\node[twistNodes,label=right:{$\eta_1 f_{h_1}$}] (levelThreetwist1) at (\xlevelthree+0,8*\ywidth) {};
		\node[twistNodes,label=right:{$\eta_2 f_{h_2}$}] (levelThreetwist2) at (\xlevelthree+0,10*\ywidth) {};
		\node (levelThreetwistdots) at (\xlevelthree+0,11.6*\ywidth) {\footnotesize $\vdots$};
		\node[twistNodes,label=right:{$\eta_\ell f_{h_\ell}$}] (levelThreetwist3) at (\xlevelthree+0,13*\ywidth) {};
		\draw[bend angle=20, bend left] (\xlevelthree-0.5*\xwidth,1.2*\ywidth) to (levelThreetwist1.west);
		\draw[bend angle=20, bend left] (\xlevelthree-0.5*\xwidth,2.6*\ywidth) to (levelThreetwist2.west);
		\draw[bend angle=20, bend left] (\xlevelthree-0.5*\xwidth,1.5*\ywidth) to (levelThreetwist3.west);

		\node at (\xlevelzero, \ylabelpos) {$\mathrm{(a)}$};
		\node at (\xlevelone,  \ylabelpos) {$\mathrm{(b)}$};
		\node at (\xlevelthree,\ylabelpos) {$\mathrm{(c)}$};

		\end{tikzpicture}
	\end{center}
	\vspace{-0.5cm}
	\caption{Illustration of evaluation polynomials of twisted Gabidulin codes. Boxes $\Box$ correspond to possibly non-zero coefficients. Arcs connect the corresponding hook and twist (gray background) coefficients (cf.~\cref{def:tgab_definition}). (a) shows the evaluation polynomials of a Gabidulin code \citep{Delsarte_1978,Gabidulin_TheoryOfCodes_1985,Roth_RankCodes_1991}: $f_0,\dots,f_{k-1}$ can be chosen arbitrarily, and all higher-degree coefficients are zero. (b) shows the evaluation polynomials of Sheekey's \citep{sheekey2016new} twisted Gabidulin codes, where the $k$-th coefficient is a multiple of the zeroth coefficient. (c) shows the evaluation polynomials of the generalization of twisted Gabidulin codes in \cref{def:tgab_definition}: there are several non-zero coefficients of degree $\geq k$, and each depends on one of the coefficients $f_0,\dots,f_{k-1}$. \lh{Let float?}}
	\label{fig:illustration_twisted_Gabidulin_codes}
\end{figure}

\chapter{Applications to Code-Based Cryptosystems}\label{chap:crypto}

Most currently-used public-key cryptosystems are based on number-theoretic hard problems.
As soon as large-scale quantum computers exist, it will be possible to solve these seemingly hard problems in polynomial time using Shor’s algorithm \citep{shor1999polynomial}.
As it is likely that capable quantum computers will exist in the near future, there is a strong need to develop, analyze and standardize post-quantum cryptosystems.
The importance is also reflected in the currently running post-quantum cryptography standardization competition by the National Institute of Standards and Technology (NIST).

Code-based cryptosystems are, besides lattice-based systems, among the very few candidates for post-quantum-secure public key encryption (PKE) schemes and key encapsulation mechanisms (KEM). Their security is based on hard computational problems in coding theory and en-/decryption often corresponds to en-/decoding of a code. One of the biggest challenges in designing code-based cryptosystems is to reduce the size of the public key. The most prominent example of a code-based cryptosystem is the McEliece system \citep{mceliece1978public}, which is based on binary Goppa codes (in the Hamming metric).

Rank-metric code-based cryptosystems have been studied since 1991. Their main advantage is that the generic decoding problem in the rank metric, which is often the underlying hard problem of a code-based cryptosystem, is seemingly harder to solve than the corresponding problem in the Hamming metric. This results in significantly reduced key sizes compared to similar Hamming-metric schemes. Especially in the last 10 years, there have been several important developments related to rank-metric code-based cryptosystems. Two of the known systems, RQC and ROLLO, have been considered in the NIST standardization. According to the \textit{Status Report on the Second Round of the NIST Post-Quantum Cryptography Standardization Process}~\citep[Section 3.14, 3.16]{nist2020status}, although neither of them advanced on in the PQC standardization process due to that the security analysis of them needs more time to mature, NIST encouraged further research on rank-based cryptosystems, as their key and ciphertext sizes remain competitive compared to traditional Hamming-metric codes.

In this chapter, we present several known rank-metric code-based cryptosystems.
We start with a discussion on hard problems in the rank metric (Section~\ref{ssec:hard_problems}).
In Section~\ref{ssec:rank_crypto_in_McEliece}, we present rank-metric cryptosystems that can be seen as variants of the McEliece cryptosystem in the rank metric.
Most prominently, we discuss the GPT system (the first known rank-metric code-based cryptosystem which basically adapts the McEliece principle) and its variants, including the---as of today---unbroken one by Loidrau, the one based on Gabidulin matrix codes, and the one based on quasi-cyclic low-rank-parity-check (LRPC) codes (which is used in the NIST submission ROLLO).
We outline two schemes that are based on the hardness of list decoding rank-metric codes in Section~\ref{ssec:rank_crypto_list_decoding}.
In Section~\ref{ssec:rank_crypto_rqc}, we discuss the NIST submission RQC, which is based on rank quasi-cyclic (or, more generally, ideal rank) codes.
We briefly discuss signature schemes in Section~\ref{ssec:rank_crypto_signatures} and conclude the chapter with a parameter comparison of unbroken rank-metric code-based cryptosystems in Section~\ref{ssec:parameters}.

\section{The Hardness of Problems in the Rank Metric}\label{ssec:hard_problems}

\subsection{Random Codes}
In analogy to the Hamming metric, several problems in the rank metric can be defined and their hardness can be analyzed.
We start with defining problems for \emph{random} rank-metric codes.
For these problems, we need the definition of a (random code) \emph{rank syndrome decoding} (RSD) distribution as follows.
\begin{definition}[Random Code Rank Syndrome Decoding (RSD) Distribution]
   \phantom{v}
  \begin{itemize}
  \item
    Input: $q,n,k,w,m$
  \item
    Choose uniformly at random
    \begin{itemize}
    \item $\H \assignRand \{\A \in \Fqm^{(n-k)\times n}: \rank_{q^m}(\A) = n-k \}$
    \item $\x \assignRand  \{\a \in \Fqm^{n} : \wtR(\a) = w\} $
    \end{itemize}
  \item Output: $(\H,\H\x^{\top})$
  \end{itemize}
  \end{definition}
  Hence, the output is a random parity-check matrix and a random syndrome which stems from a random error of weight $w$.

The first problem that we are considering is the \emph{search RSD problem}.
\begin{problem}[Search RSD Problem]\label{prob:search-rsd}
  \phantom{v}
  \begin{itemize}
  \item
    Input: $(\H,\y)$ from the RSD Distribution.
  \item
    Goal: Find \textbf{one} $\x \in \{\a \in \Fqm^{n} : \wtR(\a) = w\}$ such that $\H\x^{\top} = \y^{\top}$.
  \end{itemize}
\end{problem}
  This problem is therefore equivalent to decoding of the random rank-metric code defined by $\H$.
If $w \leq \dhalffrac$, where $d$ denotes the minimum rank distance of the code defined by $\H$, the search RSD problem returns the unique decoding result. For larger radii, the search RSD problem returns only \emph{one} error of suitable weight.

The hardness of Problem~\ref{prob:search-rsd} has been investigated by \citet{GaboritZemor_HardnessRankMetric_2016}. Their main result shows that if there is a probabilistic polynomial-time algorithm that solves the search RSD problem, then every problem in the complexity class NP can be solved by a probabilistic polynomial-time algorithm.

In \citep{Chabaud1996,Ourivski2002,Gaborit_DecodingAttack_2016,aragon2018new,bardet2020algebraic,BBC20}, algorithms that solve the Search RSD problem are proposed. The recent work by \citet{BBC20} results in the smallest computational complexity for many sets of parameters.
This recent algorithmic breakthrough in the search for low-rank codewords in linear codes significantly reduced the complexity of solving the Search RSD problem. Due to this, the parameters of several cryptographic schemes had to be increased to ensure the same security level.
This algorithm consists of rewriting the problem in terms of a bivariate polynomial system, solving it
with the use of a Gröbner basis
\citep{bardet2020algebraic,BBC20}. This improvement had a major impact on the parameters of all cryptographic schemes whose security relies on the
difficulty of the Search RSD problem.

If $w>\dhalffrac$, there might be several codewords in the list and it makes sense to modify the problem.
That means, consider the case that multiple $\x$ with $\wtR(\x) = w$ such that $\H\x^{\top} = \y^{\top}$ exist, but only one of them corresponds to a codeword that is the message. If we decide on a ``wrong'' $\x$, we cannot retrieve any information about the message. Hence, we need to consider the following \emph{list} search RSD problem.

\begin{problem}[{List} Search RSD Problem]\label{pro:list-search-rsd}\phantom{v}
  \begin{itemize}
  \item
    Input: $(\H,\y)$ from the RSD Distribution.
  \item
    Goal: Find \textbf{all} $\x \in \{\a \in \Fqm^{n} : \wtR(\a) = w\}$ such that $\H\x^{\top} = \y^{\top}$.
  \end{itemize}
\end{problem}
Note that the \emph{list} search RSD problem is at least as hard as the search RSD problem as we have to find \emph{all} vectors $\vec{x}$ with this property.

We can also state the decisional version of the RSD problem as follows.
\begin{problem}[Decision RSD Problem]\label{prob:decision_RSD}
  \phantom{v}
  \begin{itemize}
  \item
    Input: $(\H,\y)$ from the RSD Distribution.
  \item
    Goal: Decide with non-negligible advantage whether $(\H,\y)$ came from the RSD distribution or the uniform distribution over $ \Fqm^{(n-k)\times n} \times \Fqm^{n-k}$.
  \end{itemize}
\end{problem}

\subsection{Gabidulin Codes}
As a second step, we consider the hardness of problems when decoding \emph{Gabidulin} codes.
Similar to the problems in the previous section, we first define a Gabidulin code RSD distribution.
\begin{definition}[Gabidulin Code Rank Syndrome Decoding (RSD) Distribution]\phantom{v}
  \begin{itemize}
  \item
    Input: $q,n,k,w,m$
  \item
    Choose uniformly at random
    \begin{itemize}
    \item $\H \xleftarrow{\$} \mathcal{H}$, where $\mathcal{H}$ is the set of all parity check matrices of Gabidulin codes $\Gab{n,k}$ over $\Fqm$
    \item $\x \xleftarrow{\$} \{\a \in \Fqm^{n} : \wtR(\a) = w\} $
    \end{itemize}
  \item
    Output: $(\H,\H\x^{\top})$.
  \end{itemize}
\end{definition}

\begin{problem}[Gabidulin Code Search RSD Problem]\label{pro:gab-search-rsd}\phantom{v}
  \begin{itemize}
  \item
    Input: $(\H,\y)$ from the Gabidulin Code RSD Distribution.
  \item
    Goal: Find \textbf{one} $\x \in \{\a \in \Fqm^{n} : \wtR(\a) = w\}$ such that $\H\x^{\top} = \y^{\top}$.
  \end{itemize}
\end{problem}
If $w \leq \dhalffrac$ where $d = n-k+1$ is the minimum distance of the Gabidulin code, this problem is efficiently solvable by any Gabidulin decoder, see e.g., Section~\ref{sec:decoding_gabidulin_codes}. The fastest known decoder from \citet{PuchingerWachterzeh-ISIT2016} has sub-quadratic decoding complexity over $\Fqm$ (see~\ref{sec:otherDecAlgo}).

For larger $w$, there is no known polynomial-time decoder, but the syndrome-based decoder can be adapted by simply searching through the solution space of the syndrome key equation. That means that all solutions of a linear homogeneous system of equations over $\Fqm$ with $n-k-w$ equations and $w+1$ unknowns have to be considered. Thus, the work factor of this search is $q^{m(2w-n-k)}$.
In \citep{ML-Gab-PQCrypto}, solving Problem~\ref{pro:gab-search-rsd} was accelerated compared to searching through the solution space of the key equation.
The algorithm by~\citet{ML-Gab-PQCrypto} consists of repeatedly guessing a
subspace that should have a large intersection with the error row and/or column
space. The guessed space is then used as erasures in an Gabidulin error-erasure
decoder. The algorithm terminates when the intersection of the
guessed space and the error row and/or column space is large enough such that
the decoder outputs a codeword that is close enough to the received word.
The expected work factor of this randomized decoding approach is a polynomial term times $q^{m(n-k)-w(n+m)+w^2+\min\{2\xi(\frac{n+k}{2}-\xi),wk\} }$, where $n$ is the code length, $q$ the size of the base field, $m$ the extension degree of the field, $k$ the code dimension, $w$ the number of errors, and $\xi := w-\tfrac{n-k}{2}$.

Further notice that the hardness of Problem~\ref{pro:gab-search-rsd} is upper bounded by the hardness of Problem~\ref{pro:gab-search-list-rsd} below where \emph{all} $\vec{x}$ with this property have to be found.

\begin{problem}[Gabidulin Code {List} Search RSD Problem]\label{pro:gab-search-list-rsd}
  \phantom{v}
  \begin{itemize}
  \item
    Input: $(\H,\y)$ from the Gabidulin Code RSD Distribution.
  \item
    Goal: Find \textbf{all} $\x \in \{\a \in \Fqm^{n} : \wtR(\a) = w\}$ such that $\H\x^{\top} = \y^{\top}$.
  \end{itemize}
\end{problem}

When $w \leq \nkhalffrac$, the decoding result is unique and the \emph{Gabidulin Code {List} Search RSD Problem} (Problem~\ref{pro:gab-search-list-rsd}) is equivalent to the \emph{Gabidulin Code Search RSD Problem} (Problem~\ref{pro:gab-search-rsd}) and efficiently solvable.
However, for $w > \nkhalffrac$, there are cases in which the two problems differ substantially. For instance, the problem of recovering the message from the ciphertext in the Faure--Loidreau (FL) system or in the RQC system can be reduced to the \emph{Gabidulin Code {List} Search RSD Problem} since by finding all errors of weight $w$, the actual error will be contained in the output list. The \emph{Gabidulin Code Search RSD Problem} is not necessarily useful when $w > \nkhalffrac$ as it find only one word of the list.

The \emph{Gabidulin Code {List} Search RSD Problem} (Problem~\ref{pro:gab-search-list-rsd}) when $w > \nkhalffrac$ is equal to list decoding a corrupted codeword of a Gabidulin code, where the error has rank weight $w$. The exact complexity of this problem is unknown, but there are some partial answers\footnote{See also Section~\ref{sec:list-dec-gab} for a summary on list decoding of rank-metric codes.}.
There are families of Gabidulin codes of rate $R \geq 1/6$ for which some received words have lists of exponential size even for decoding only one error beyond $\nkhalffrac$, see \citep{RavivWachterzeh_GabidulinBounds_journal}.
This result was recently generalized to more general classes of MRD codes by \citet{TrombettZullo-ListDecRM2020}.
These bounds on the list size of some Gabidulin codes proves that an equally general algorithm as the Guruswami--Sudan list decoding algorithm for Reed--Solomon codes cannot exist for Gabidulin codes.
For $w \geq n-\sqrt{n(n-d)}$, the \emph{Gabidulin Code {List} Search RSD Problem} (Problem~\ref{pro:gab-search-list-rsd}) has exponential worst-case complexity since the list size grows exponential in $n$, see \citep[Theorem~1]{wa13a}. 
The worst-case complexity can be considered as an indication for the average complexity which is usually considered in cryptography to assess the security level.

Notice that in \citep{GuruswamiWang2013}, subcodes of Gabidulin codes were efficiently list-decoded, but this result does not apply to Gabidulin codes themselves.
Based on the previous partial answers on the hardness of list decoding Gabidulin codes, it is widely conjectured that Problem~\ref{pro:gab-search-list-rsd} is hard.

For completeness, the decisional version of Problem~\ref{pro:gab-search-rsd} is shown below.

\begin{problem}[Gabidulin Code Decision RSD Problem]\phantom{v}
  \begin{itemize}
  \item
    Input: $(\H,\y)$ from the Gabidulin Code RSD Distribution.
  \item
    Goal: Decide with non-negligible advantage whether $(\H,\y)$ came from the Gabidulin Code RSD distribution or the uniform distribution over $ \Fqm^{(n-k)\times n} \times \Fqm^{n-k}$
  \end{itemize}
\end{problem}

To solve the correspoding decisional problem, no faster approach than trying to solve the associated search problems is known. This is usually done for all decoding-based problems. A similar decisional problem can be given for list decoding.

\subsection{Problems Related to Decoding Interleaved Rank-Metric Codes}
In this subsection, we consider problems related to random interleaved rank-metric codes and later also related to interleaved Gabidulin codes.
\begin{definition}[Interleaved Random Code RSD Distribution]
  \phantom{v}
  \begin{itemize}
  \item
    Input: $q,n,k,w,m$.
  \item
    Choose uniformly at random
    \begin{itemize}
    \item $\H \xleftarrow{\$} \{\A \in \Fqm^{(n-k)\times n}: \rank_{q^m}(\A) = n-k \}$
    \item $\X \xleftarrow{\$} \{\X \in \Fqm^{u\times n} : \wtR(\X) = w\} $
    \end{itemize}
  \item
    Output: $(\H,\H\X^{\top})$.
  \end{itemize}
\end{definition}

\begin{problem}[Interleaved Random Code Search RSD Problem]\label{pro:int-rand-search-rsd}
  \phantom{v}
  \begin{itemize}
  \item
    Input: $(\H,\Y)$ from the Interleaved Random Code RSD Distribution.
  \item
    Goal: Find \textbf{one} $\X \in \{\X \in \Fqm^{u\times n} : \wtR(\X) = w\}$ such that $\H\X^{\top} = \Y^{\top}$.
  \end{itemize}
  \end{problem}

This problem is clearly as least at hard as solving the same problem for a specific code, in particular, the \emph{Interleaved Gabidulin Code Search RSD Problem} as introduced in the following.

The hardness of both problems is discussed afterwards.

\begin{definition}[Interleaved Gabidulin Code RSD Distribution]
  \phantom{v}
  \begin{itemize}
  \item
    Input: $q,n,k,w,m$.
  \item
    Choose uniformly at random
    \begin{itemize}
    \item $\H \xleftarrow{\$} \mathcal{H}$, where $\mathcal{H}$ is the set of all parity check matrices of Gabidulin codes $\Gab{n,k}$ over $\Fqm$.
    \item $\X \xleftarrow{\$} \{\X \in \Fqm^{u\times n} : \wtR(\X) = w\} $.
    \end{itemize}
  \item
    Output: $(\H,\H\X^{\top})$.
  \end{itemize}
\end{definition}

\begin{problem}[Interleaved Gabidulin Code Search RSD Problem]\label{pro:int-search-rsd}
  \phantom{v}
  \begin{itemize}
  \item
    Input: $(\H,\Y)$ from the Interleaved Gabidulin Code RSD Distribution.
  \item
    Goal: Find \textbf{one} $\X \in \{\A \in \Fqm^{u \times n} : \wtR(\A) = w\}$ such that $\H\X^{\top} = \Y^{\top}$.
  \end{itemize}
\end{problem}

The \emph{Interleaved Gabidulin Code Search RSD Problem} is equal to decoding a corrupted codeword of an interleaved Gabidulin code of interleaving order $u$, where the error has rank weight $w$ over $\Fq$. This problem has been extensively studied and it turns out that the hardness depends on $w$ and the $\Fqm$-rank of $\X$.
In the following, we discuss the hardness of Problem~\ref{pro:int-rand-search-rsd} (interleaved random code) and Problem~\ref{pro:int-search-rsd} (interleaved Gabidulin code). Notice that Problem~\ref{pro:int-rand-search-rsd} is clearly at least as hard as Problem~\ref{pro:int-search-rsd}.
\begin{itemize}
\item For $w\leq\lfloor \frac{n-k}{2}\rfloor$, we can decode each of the $u$ rows separately as the decoding result is guaranteed to be unique due to the minimum distance of the Gabidulin code. Therefore, Problem~\ref{pro:int-rand-search-rsd} reduces to $u$ instances of Problem~\ref{prob:search-rsd} with the corresponding hardness considerations. Similarly, Problem~\ref{pro:int-search-rsd} reduces to $u$ instances of Problem~\ref{pro:gab-search-rsd} which is efficiently solvable by any {Bounded Minimum Distance} (BMD) decoder for Gabidulin codes.
\item For the special case of high interleaving order $u\geq w$ where $w \leq n-k-1$ and $\rank_{q^m}(\X)=w$, Problem~\ref{pro:int-rand-search-rsd} and Problem~\ref{pro:int-search-rsd} both have a polynomial-time solution (without failure probability) given by the rank-metric Metzner--Kapturowski decoder by~\citet{MK-ISIT2021}. Thus, decoding (any) linear interleaved rank-metric code of high interleaving order is an easy problem.
\item For $w\leq \lfloor\frac{u}{u+1} (n-k)\rfloor$, the \emph{Interleaved Gabidulin Code Search RSD Problem} (Problem~\ref{pro:int-search-rsd}) can be solved in $O(un^2)$ operations in $\Fqm$, cf.~\citep{Sidorenko2011SkewFeedback}, with high probability ($> 1-\frac{4}{q^m}$).

For the unlikely cases (with probability less than $\frac{4}{q^m}$) of decoding failure, it is not proven to be a hard problem. Although there is no such general complexity result, decoding explicitly the errors for which all known decoders fail
is considered hard by the community and has been subject to intensive research for more than 13 years (since the decoder by~\citet{Loidreau_Overbeck_Interleaved_2006} was proposed).
In the Hamming metric, the equivalent problem for RS codes has been studied since the work by~\citet{Krachkovsky1997Decoding} and more than a dozen of papers have dealt with the decoding algorithms for these codes since then. None of these papers was able to give a polynomial-time decoding algorithm for the case when Krachkovsy--Lee's algorithm fails.
Furthermore, the same results on the list size as for Gabidulin codes apply for \emph{interleaved} Gabidulin codes since it consists of $u$ parallel Gabidulin codewords and at the same time can be seen as a single codeword of a Gabidulin code with larger field size.
\item For $w \geq d$, the complexity of both interleaved problems grows exponentially in $n$.
\item For \emph{horizontal} interleaved random codes, the corresponding problem is called \emph{Rank Support Learning} (RSL) and was introduced by~\citet[Definition 7]{Gaborit2017IBE}.
The recent paper by \citet{BardetBriaud-RSL2021} proposed an algebraic attack on RSL which clearly outperforms the previous attacks.
\item The \emph{Decoding one Out of Many} approach by \citet{Sendrier-DOOM} in the Hamming metric provides a way to recover one of the $u$ error vectors (and therefore the span of the whole matrix $\vec{X}$). However, this work has not yet been adapted to the rank metric. If it will be adapted, it should reduce the exponent of the work factor of brute-force decoding for Problem~\ref{pro:int-rand-search-rsd} (of course this also applies for Problem~\ref{pro:int-search-rsd}, but this problem can efficiently be solved up to $w\leq \lfloor\frac{u}{u+1} (n-k)\rfloor$ in any case).
\end{itemize}

\vspace{3ex}
The following table provides an overview of the previously considered problems and the rank-metric code-based cryptosystems that rely on these problems.

\begin{table}[htb]
	\centering
	\begin{tabular}{l|p{4cm}|p{5cm}}
		Problem No. & Problem & Cryptosystems\\\hline
		Problem~\ref{prob:search-rsd}&Search RSD Problem & All GPT variants (including Loidreau's new system)\\
		Problem~\ref{pro:list-search-rsd} & List Search RSD Problem & RQC (ciphertext),\newline Faure--Loidreau (ciphertext),\newline RAMESSES (public key, ciphertext) \\
		Problem~\ref{pro:gab-search-rsd}&Gabidulin Code Search RSD Problem& Faure--Loidreau\\
		Problem~\ref{pro:gab-search-list-rsd}&Gabidulin Code List Search RSD Problem& RQC, Faure--Loidreau\\
		Problem~\ref{pro:int-search-rsd}&Interleaved Gabidulin Code Search RSD Problem&Original Faure--Loidreau\\
	\end{tabular}
	\caption{Overview of problems in the rank metric and the corresponding cryptosystems.}
\end{table}

\section{McEliece-like Systems}\label{ssec:rank_crypto_in_McEliece}

In 1978, \citet{mceliece1978public} proposed a public-key cryptosystem whose security is based on the generic decoding problem for linear error-correcting codes.

Compared to the widely-used Rivest, Shamir and Adleman (RSA) public-key cryptosystem~\citep{rivest1978method}, the McEliece cryptosystem suffers from larger public keys (when employing Goppa codes as originally suggested), which is a drawback for practical applications.

In general, to use a certain class of codes in the McEliece principle, the challenge is to find codes with a good error-correction capability and a weak code structure or a code structure that can be hidden effectively, in order to reduce the public key size.
Since the proposal of the original McEliece cryptosystem, which is based on binary Goppa codes,  several variants based on different code families were proposed.
Besides McEliece variants based on codes in the Hamming metric, there are McEliece-like cryptosystems that are based on codes in the rank metric.
One benefit of rank-metric code-based cryptosystems is, that the generic decoding problem in the rank metric is significantly harder than in the Hamming metric which allows for very small public keys~\cite{}.

In the following, we give a summary of existing McEliece-like cryptosystems based on rank-metric codes, including variants based on algebraic rank-metric codes (e.g., Gabidulin codes) as well as random rank-metric codes (e.g., LRPC codes).

\subsection{Description of the Cryptosystems}

For the description of the different rank-metric code-based cryptosystems we use the following terminology and notation.

\begin{itemize}
 \item Security level (bit): $\secLevel$
 \item Public key: $\pk$
 \item Secret key: $\sk$
 \item Plaintext: $\plainText$
 \item Ciphertext: $\cipherText$
 \item Ciphertext length $\ct$
 \item Set of global parameters: $\param$
 \item Deterministic assignment: $\assignDet$
 \item Random assignment: $\assignRand$
 \item Setup: $\param\gets\setup{1^\secLevel}$
 \item Key generation: $(\pk,\sk)\gets\keyGen{\param}$
 \item Encryption: $\cipherText\gets\enc{\pk, \plainText}$
 \item Decryption: $\plainText\gets\dec{\sk, \cipherText}$
 \item $\GL{n}{\Fq}$: the set of all full-rank matrices of $\Fq^{n\times n}$
\end{itemize}

Unless otherwise stated, the parameters $q,m,n$ and $k$ are chosen according to Table~\ref{tab:generalParameters}.
\begin{table}[h]
\renewcommand{\arraystretch}{1.6} %
\begin{center}
\begin{tabular}{c|l|l}
Parameter & Stands for & Restriction \\
\hline
$q$ & field order & prime power \\
$m$ & extension degree & $1 \leq m$ \\
$n$ & code length & $n \leq m$ \\
$k$ & code dimension & $k \leq n$ \\
\end{tabular}
\caption{Parameters of the GPT variants.}
\label{tab:generalParameters}
\end{center}
\end{table}

\subsection{McEliece-type Rank-Metric Cryptosystems based on Gabidulin Codes}

Gabidulin, Paramonov, Tretjakov (GPT) proposed a Gabidulin code-based cryptosystem~\citep{gabidulin1991ideals} with very compact keys compared to the original McEliece cryptosystem~\citep{mceliece1978public}.

Compared to codes in the Hamming metric, rank metric codes have an even stronger structure that can be exploited by an attacker to break the corresponding cryptosystems.
As a result, the GPT cryptosystem has a long history of attacks and fixes, which is illustrated in Figure~\ref{fig:overviewGTPattacks}.

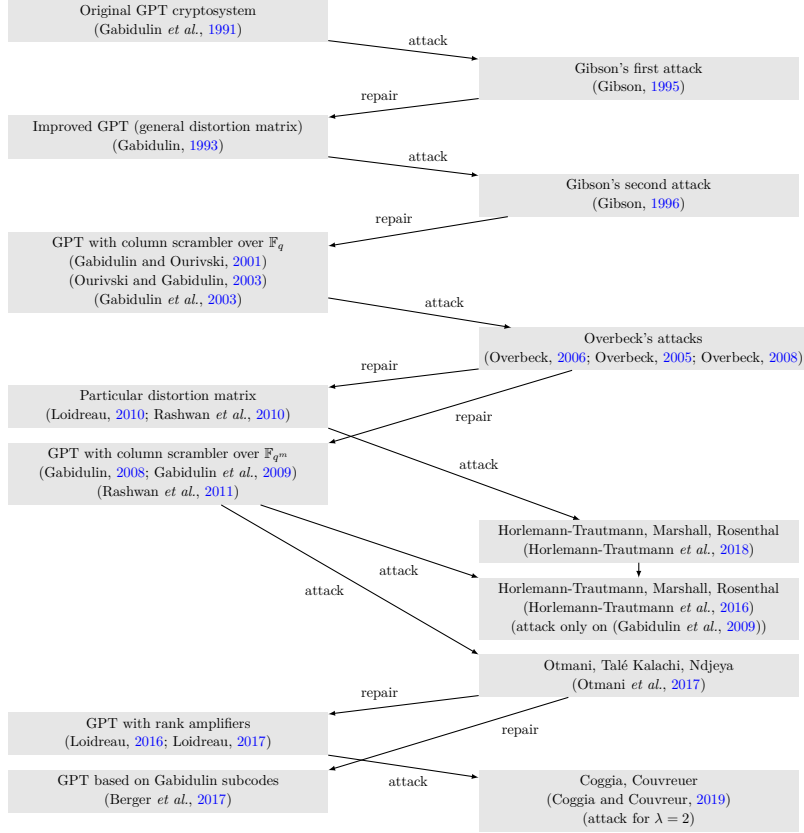
\begin{figure}[htb]
\begin{center}
\scalebox{0.5}{
\begin{tikzpicture}[node distance = 0.4cm and 4cm]
		\tikzset{systemnode/.style={rectangle, fill=black!10, minimum height=0.9cm, minimum width=8.5cm}}
		\tikzset{attacknode/.style={rectangle, fill=black!10, minimum height=0.9cm, minimum width=8.5cm}}
		\tikzset{attackarrow/.style={->, >= latex, thick}}
		\tikzset{repairarrow/.style={->, >= latex, thick}}

		\node[systemnode, align=center] (GPT91) at (0,0) {Original GPT cryptosystem \\ \citep{gabidulin1991ideals}};

		\node[attacknode, align=center, below right= of GPT91] (Gib95) {Gibson's first attack \\ \citep{gibson1995severely}};

		\node[systemnode, align=center,below left= of Gib95] (Gab93) {Improved GPT (general distortion matrix)\\ \citep{gabidulin1993linear}};

		\node[attacknode, align=center, below right= of Gab93] (Gib96) {Gibson's second attack \\ \citep{gibson1996security}};

		\node[systemnode, align=center, below left= of Gib96] (GOHA) {GPT with column scrambler over $\Fq$ \\ \citep{gabidulin2001modified}\\\citep{ourivski2003column}\\ \citep{gabidulin2003reducible}};

		\node[attacknode, align=center, below right= of GOHA] (Overbeck) {Overbeck's attacks \\ \citep{Overbeck-ExtendingGibsonAttack,Overbeck-StructuralAttackGPT,overbeck2008structural}};

		\node[systemnode, align=center, below left= of Overbeck] (LRGH) {Particular distortion matrix  \\ \citep{loidreau2010designing,rashwan2010smart}};

		\node[systemnode, align=center, below= of LRGH.south east, anchor=north east] (GRH) {GPT with column scrambler over $\Fqm$  \\ \citep{gabidulin2008attacks,gabidulin2009improving}\\ \citep{rashwan2011security}};

		\node[attacknode, align=center, below right= of GRH] (HMR18) {Horlemann-Trautmann, Marshall, Rosenthal \\ \citep{horlemann2018extension}};

		\node[attacknode, align=center, below = of HMR18.south west, anchor=north west] (HMR16) {Horlemann-Trautmann, Marshall, Rosenthal \\ \citep{horlemann2016considerations} \\ (attack only on~\citep{gabidulin2009improving})};

		\node[attacknode, align=center, below = of HMR16.south west, anchor=north west] (OKN) {Otmani, Tal\'e Kalachi, Ndjeya \\ \citep{Otmani-CryptanalysisRankMetric}};

		\node[systemnode, align=center, below left= of OKN] (Loi) {GPT with rank amplifiers \\ \citep{Loidreau-GPT-ACCT2016,Loidreau2017-NewRankMetricBased}};

		\node[attacknode, align=center, below right= of Loi] (CC) {Coggia, Couvreuer \\ \citep{coggia2019security} \\(attack for $\lambda=2$)};

		\node[systemnode, align=center, below= of Loi.south east, anchor=north east] (BGR) {GPT based on Gabidulin subcodes  \\ \citep{berger2017gabidulin}};

		\draw[attackarrow] (GPT91) to node[above right] {attack} (Gib95);
		\draw[repairarrow] (Gib95) to node[above left] {repair} (Gab93);
		\draw[attackarrow] (Gab93) to node[above right] {attack} (Gib96);
		\draw[repairarrow] (Gib96) to node[above left] {repair} (GOHA);
		\draw[attackarrow] (GOHA) to node[above right] {attack} (Overbeck);
		\draw[repairarrow] (Overbeck) to node[above left] {repair} (LRGH);
		\draw[repairarrow] (Overbeck) to node[below right] {repair} (GRH.north east);
		\draw[attackarrow] (LRGH.south east) to node[above right] {attack} (HMR18);
		\draw[attackarrow] (GRH) to node[below left, pos=0.75] {attack} (HMR16.north west);
		\draw[attackarrow] (HMR18) to node[right] {} (HMR16);
		\draw[attackarrow] (GRH) to node[below left] {attack} (OKN.north west);
		\draw[repairarrow] (OKN) to node[above left] {repair} (Loi);
		\draw[attackarrow] (Loi.south east) to node[below left, pos=0.7] {attack} (CC);
		\draw[repairarrow] (OKN) to node[below right, pos=0.3] {repair} (BGR.north east);
	\end{tikzpicture}}
	\caption{Overview of the history of GPT variants and attacks.}
	\label{fig:overviewGTPattacks}
\end{center}

\end{figure}

\subsubsection{Original GPT Cryptosystem}
The original GPT cryptosystem~\citep{gabidulin1991ideals} is a rank-metric variant of the McEliece cryptosystem that is based on Gabidulin codes.
In the GPT cryptosystem~\citep{gabidulin1991ideals}, the notion of a column scrambling matrix {(over $\Fq$)} does \emph{not} increase the security level of the system since the Moore-structure of the Gabidulin code generator matrix is preserved. 
Hence, a different distortion transformation using low-rank distortion matrix $\vec{X}$ is used to hide the inherent structure of the Gabidulin code in the public code described by $\pk$.
Detailed descriptions of the key generation, encryption and decryption of the original GPT cryptosystem are provided in Algorithms \ref{alg:KeyGenGPT91} -- \ref{alg:DecryptGPT91}.

\printalgoIEEE{
\DontPrintSemicolon
\caption{$\keyGen{\cdot}$~\citep{gabidulin1991ideals}}
\label{alg:KeyGenGPT91}
\KwIn{Parameters $\param = \left\{q,m,n,k,\tdist<\lfloor\frac{n-k}{2}\rfloor,\tpub=\lfloor\frac{n-k}{2}\rfloor-\tdist\right\}$}
\KwOut{Secret key $\sk$, public key $\pk$}
$\vec{g}\assignRand\Fqm^n:\rank_q(\vec{g})=n$ \\
$\vec{S}\assignRand\GL{k}{\Fqm}$ \\
$\vec{a}\assignRand\Fqm^k$, $\vec{b}\assignRand\Fqm^n:\rank_q(\vec{b})\leq\tdist$ \\
$\vec{G}_\mycode{C}=\Mooremat{k}{q}{\vec{g}}$\footnote{$\Mooremat{s}{q}{\vec{a}} \in \Fqm^{s \times n}$ the $s \times n$ Moore matrix for a vector $\vec{a} = (a_1,a_2,\dots,a_n) \in \Fqm^n$.} \\
$\vec{X}\assignDet\vec{a}^\top\vec{b}$ where $\rank_q(\vec{X})\leq \tdist$ \\
$\distTrans{\vec{G}_\mycode{C}}=\vec{S}\left(\vec{G}_{\mycode{C}}+\vec{X}\right)$ \\
\Return{$\sk\assignDet(\vec{S},\vec{G}_\mycode{C})$, $\pk\assignDet(\distTrans{\vec{G}_\mycode{C}},\tpub,\tdist)$}
}

\printalgoIEEE{
\DontPrintSemicolon
\caption{$\enc{\cdot}$~\citep{gabidulin1991ideals}}
\label{alg:EncryptGPT91}
\KwIn{Plaintext $\plainText\in\Fqm^k$, public key $\pk=(\distTrans{\vec{G}_\mycode{C}},\tpub,\tdist)$}
\KwOut{Ciphertext $\cipherText\in\Fqm^n$}
$\vec{e}\assignRand\Fqm^n:\rank_q(\vec{e})=\tpub=\lfloor\frac{n-k}{2}\rfloor-\tdist$ \\
$\cipherText\assignDet\plainText\distTrans{\vec{G}_\mycode{C}}+\vec{e}=\plainText\vec{S}\vec{G}_{\mycode{C}}+\plainText\vec{S}\vec{X}+\vec{e}$ \\
\Return{$\cipherText$}
}

\printalgoIEEE{
\DontPrintSemicolon
\caption{$\dec{\cdot}$~\citep{gabidulin1991ideals}}
\label{alg:DecryptGPT91}
\KwIn{Ciphertext $\vec{c}\in\Fqm^n$, secret key $\sk = (\vec{S},\vec{G}_\mycode{C})$}
\KwOut{Plaintext $\plainText\in\Fqm^k$}
$\cipherText$ is a codeword of the $(n,k)$ Gabidulin code $\mycode{C}$ described by $\vec{G}_\mycode{C}$ that is corrupted by an error $\tilde{\vec{e}}=\plainText\vec{S}\vec{X}+\vec{e}$ with $\rank_q(\tilde{\vec{e}})\leq \lfloor\frac{n-k}{2}\rfloor$ that can be corrected by a secret efficient decoder $\decode{\mycode{C}}{\cdot}$ for $\mycode{C}$ \\
$\tilde{\plainText}\assignDet\decode{\mycode{C}}{\cipherText}$ \\
\Return{$\plainText\assignDet\tilde{\plainText}\vec{S}^{-1}$}
}

 \subsubsection{Gibson's Attack on the Original GPT Cryptosystem}%
 \citet{gibson1995severely} proposed an attack on the original GPT cryptosystem that exploits the Moore structure of the generator matrix of Gabidulin codes to obtain an alternative secret key from the public key
 \begin{equation}\label{eq:pubKeyGPT}
	\pk=\vec{S}\left(\vec{G}+\vec{a}^{\top}\vec{b}\right)
 \end{equation}
 where $\vec{S}\in\Fqm^{k\times k}$ is a full-rank matrix, $\vec{a}\in\Fqm^{1\times k}$ and $\vec{b}\in\Fqm^{1\times n}$ with $\rank_q(\vec{b})\leq\tdist$.
 In particular, Gibson's attack computes matrices $\tilde{\vec{G}}$, $\tilde{\vec{S}}$ and $\tilde{\vec{X}}$ from~\eqref{eq:pubKeyGPT} such that $\tilde{\vec{G}}$ is a generator matrix of a Gabidulin code $\Gabcode{n}{k}$, $\pk=\tilde{\vec{S}}(\tilde{\vec{G}}+\tilde{\vec{X}})$ and $\tilde{\vec{X}}\in\Fqm^{n\times n}$ with $\rank_q(\tilde{\vec{X}})\leq\tdist$.
 In the following we give a brief description of the idea behind Gibson's attack.

 There exists a matrix $\vec{T}\in\Fqm^{k\times k}$ such that
 \begin{equation}\label{eq:matT}
	\vec{T}\left(\vec{I}_k \ | \ \vec{X}\right)=\vec{G}+\vec{a}^{\top}\vec{b}.
 \end{equation}
 Partition the matrix $\vec{G}$ and the vectors $\vec{g}$ and $\vec{b}$ into the first $k$ and remaining $n-k$ columns, i.~e.,
 \begin{equation*}
  \vec{G}=\left(\vec{P} \ | \ \vec{Q}\right), \quad \vec{g}=\left(\vec{p} \ | \ \vec{q}\right), \quad \vec{b}=\left(\vec{b}_1 \ | \ \vec{b}_2\right)
 \end{equation*}
 with $\vec{P}\in\Fqm^{k\times k}$, $\vec{Q}\in\Fqm^{k\times (n-k)}$, $\vec{p},\vec{b}_1\in\Fqm^{k}$ and $\vec{q},\vec{b}_2\in\Fqm^{n-k}$.
 Then we can write~\eqref{eq:matT} as
 \begin{align*}
  \vec{T}=\vec{P}+\vec{a}^{\top}\vec{b}_1
  \qquad\text{and}\quad
  \vec{T}\vec{X}=\vec{Q}+\vec{a}^{\top}\vec{b}_2.
 \end{align*}
 and get that
 \begin{equation}\label{eq:solvePX}
  \vec{P}\vec{X}=\vec{Q}+\vec{a}^{\top}\left(\vec{b}_2 -\vec{b}_1\vec{X}\right).
 \end{equation}
 Define the reversed vectors
 \begin{align*}
	\bar{\vec{p}}=\left(p_1^{[k-1]},p_2^{[k-1]},\dots,p_k^{[k-1]}\right),\\ \bar{\vec{q}}=\left(q_1^{[k-1]},q_2^{[k-1]},\dots,q_{n-k}^{[k-1]}\right),\\
	\bar{\vec{a}}=\left(a_k, a_{k-1}^{[1]}, a_{k-2}^{[2]},\dots, a_1^{[k-1]}\right),
 \end{align*}
 the all-one vector $\1\coloneqq\left(1,\dots,1\right)\in\Fqm^{k}$ and the matrices
 \begin{equation*}
	\vec{F}_j\coloneqq
	\begin{pmatrix}
	 f_j & & & &
	 \\
	  & f_j^{[1]}
	 \\
	  & & \ddots
	  \\
	  & & & f_j^{[k-1]}
	\end{pmatrix}
	\quad\text{and}\quad
	\vec{X}_j\coloneqq
	\begin{pmatrix}
	 \vec{x}_j & \vec{x}_j^{[1]} & \dots & \vec{x}_j^{[k-1]}
	\end{pmatrix}
 \end{equation*}
 where $\vec{x}_j$ denotes the $j$-th column of $\vec{X}$.
 Then we can rewrite~\eqref{eq:solvePX} as the reduced linear system of equations
 \begin{equation}\label{eq:GibsonAttackEq}
	\bar{\vec{p}}\vec{X}_j=\bar{q}_j\1+\bar{\vec{a}}\vec{F}_j,\quad\forall j=1,\dots, n-k.
 \end{equation}
 Gibson showed, that~\eqref{eq:GibsonAttackEq} can be solved in ${O}(m^3q^m)$ operations in $\Fqm$ and thus breaks the original GPT cryptosystem for the parameters proposed in~\citep{gabidulin1991ideals}.
 Although Gibson's attack breaks the original GPT cryptosystem for practical parameters ($n\leq 30$), the complexity of the attack is \emph{exponential} in the length of the code.

 \subsubsection{Modified GPT Variant by Gabidulin} %
 In order to prevent the attack by~\citet{gibson1995severely}, \citet{gabidulin1993linear} presented a modified variant of the GPT cryptosystem which uses a more general distortion matrix $\vec{X}$.
 The modified key generation is described in Algorithm~\ref{alg:KeyGenGPT93}.

 \printalgoIEEE{
\DontPrintSemicolon
\caption{$\keyGen{\cdot}$~\citep{gabidulin1993linear}}
\label{alg:KeyGenGPT93}
\KwIn{Parameters $\param = \left\{q,m,n,k,\tdist<\lfloor\frac{n-k}{2}\rfloor,\tpub=\lfloor\frac{n-k}{2}\rfloor-\tdist\right\}$}
\KwOut{Secret key $\sk$, public key $\pk$}
$\vec{g}\assignRand\Fqm^n:\rank_q(\vec{g})=n$ \\
$\vec{S}\assignRand\GL{k}{\Fqm}$ \\
$\vec{A}\assignRand\Fqm^{k\times\tdist}$, $\vec{B}\assignRand\Fqm^{\tdist\times n}:\rank_q(\vec{B})\leq\tdist$ \\
$\vec{G}_\mycode{C}=\Mooremat{k}{q}{\vec{g}}$ \\
$\vec{X}\assignDet\vec{A}\vec{B}$ where $\rank_q(\vec{X})\leq \tdist$ \\
$\distTrans{\vec{G}_\mycode{C}}=\vec{S}\left(\vec{G}_{\mycode{C}}+\vec{X}\right)$ \\
\Return{$\sk\assignDet(\vec{S},\vec{G}_\mycode{C})$, $\pk\assignDet(\distTrans{\vec{G}_\mycode{C}},\tpub,\tdist)$}
}

 \subsubsection{Gibson's Attack on the Modified GPT Cryptosystem} %
 \citep{gibson1996security} presented a second attack that breaks the modified GPT cryptosystem.
 Since the attack is based on the ideas of the initial attack~\citep{gibson1995severely}, a detailed description is omitted.

 \subsubsection{Modification with Right Scrambler over $\Fq$} %
 In order to prevent Gibson's attacks, new methods~\citep{gabidulin2001modified,ourivski2003column} were proposed to repair the GPT cryptosystem.
 In particular, GPT variants with a combination of a distortion matrix together with a column scrambling matrix $\vec{P}$ over $\Fq$ provided resilience against Gibson's attacks.
 The key generation, encryption and decryption are described in Algorithm~\ref{alg:KeyGenGO01} -- \ref{alg:DecryptGO01} respectively.

\printalgoIEEE{
\DontPrintSemicolon
\caption{$\keyGen{\cdot}$~\citep{gabidulin2001modified,ourivski2003column}}
\label{alg:KeyGenGO01}
\KwIn{Parameters $\param = \left\{q,m,n,k,r,\tdist<\lfloor\frac{n-k}{2}\rfloor,\tpub=\lfloor\frac{n-k}{2}\rfloor\tdist\right\}$}
\KwOut{Secret key $\sk$, public key $\pk$}
$\vec{g}\assignRand\Fqm^n:\rank_q(\vec{g})=n$ \\
$\vec{S}\assignRand\GL{k}{\Fqm}$ \\
$\vec{P}\assignRand\GL{n+r}{\Fq}$ \\
$\vec{X}_1\assignRand\Fqm^{k\times r}$, $\vec{X}_2\assignRand\Fqm^{k\times n}:\rank_q(\vec{X}_2)=\tdist$ \\
$\vec{P},\vec{X}_1,\vec{X}_2$ are chosen to have certain properties (see~\cite{gabidulin2001modified,ourivski2003column}) \\
$\vec{G}_\mycode{C}=\Mooremat{k}{q}{\vec{g}}$ \\
$\distTrans{\vec{G}_\mycode{C}}=\vec{S}\left(\left(\vec{0} \ | \ \vec{G}_{\mycode{C}}\right)+\left(\vec{X}_1 \ | \ \vec{X}_2\right)\right)\vec{P}=\vec{S}\left(\vec{X}_1 \ | \ \vec{G}_{\mycode{C}}+\vec{X}_2\right)\vec{P}$ with $\vec{0}\in\Fq^{k\times r}$ \\

\Return{$\sk\assignDet(\vec{S},\vec{P},\vec{G}_\mycode{C})$, $\pk\assignDet(\distTrans{\vec{G}_\mycode{C}},\tpub,\tdist,r)$}
}

\printalgoIEEE{
\DontPrintSemicolon
\caption{$\enc{\cdot}$~\citep{gabidulin2001modified,ourivski2003column}}
\label{alg:EncryptGO01}
\KwIn{Plaintext $\plainText\in\Fqm^k$, public key $\pk=(\distTrans{\vec{G}_\mycode{C}},\tpub,\tdist,r)$}
\KwOut{Ciphertext $\cipherText\in\Fqm^{n+r}$}
$\vec{e}\assignRand\Fqm^{n+1}:\rank_q(\vec{e})=\tpub=\lfloor\frac{n-k}{2}\rfloor-\tdist$ \\
$\cipherText\gets\plainText\distTrans{\vec{G}_\mycode{C}}+\vec{e}=\plainText\vec{S}\left(\left(\vec{0} \ | \ \vec{G}_{\mycode{C}}\right)+\left(\vec{X}_1 \ | \ \vec{X}_2\right)\right)\vec{P}+\vec{e}$\\
\Return{$\cipherText$}
}

\printalgoIEEE{
\DontPrintSemicolon
\caption{$\dec{\cdot}$~\citep{gabidulin2001modified,ourivski2003column}}
\label{alg:DecryptGO01}
\KwIn{Ciphertext $\vec{c}\in\Fqm^{n+r}$, secret key $\sk=(\vec{S},\vec{P},\vec{G}_\mycode{C})$}
\KwOut{Plaintext $\plainText\in\Fqm^k$}
Compute $\cipherText'\coloneqq\cipherText\vec{P}^{-1}=\plainText\vec{S}\left(\left(\vec{0} \ | \ \vec{G}_{\mycode{C}}\right)+\left(\vec{X}_1 \ | \ \vec{X}_2\right)\right)+\vec{e}\vec{P}^{-1}$ \\
Since $\vec{P}\in\Fq^{(r+n)\times(r+n)}$ we have that $\vec{e}'\coloneqq\vec{e}\vec{P}^{-1}$ satisfies $\rank_q(\vec{e}')\leq \tpub$ \\
From $\cipherText'$ extract the subvector $$\cipherText''=(c'_{r+1},c'_{r+2}\dots,c'_{r+n})=\plainText\vec{S}\vec{G}_{\mycode{C}}+\plainText\vec{X}_2+\vec{e}''$$ where $\vec{e}''$ corresponds to the last $n$ positions of $\vec{e}'=\vec{e}\vec{P}^{-1}$ \\
Since $\rank_q(\vec{e}'')\leq\rank_q(\vec{e}')\leq\tpub$ and $\rank_q(\plainText\vec{X}_2)\leq\tdist$ we have that $\vec{c}''$ is a codeword of the Gabidulin code $\mycode{C}$ that is corrupted by an error $\plainText\vec{X}_2+\vec{e}''$ of rank at most $\tdist+\tpub\leq\lfloor\frac{n-k}{2}\rfloor$ that can be corrected by a secret efficient decoder for $\mycode{C}$ \\
$\tilde{\plainText}\assignDet\decode{\mycode{C}}{\cipherText}$ \\
\Return{$\plainText\assignDet\tilde{\plainText}\vec{S}^{-1}$}
}

\subsubsection{Structural Attacks} %
After the series of Gibson's attacks and the subsequent repairs, \citep{Overbeck-StructuralAttackGPT} presented structural key-recovery attack on the original cryptosystem.
Unlike the attacks by Gibson, Overbeck's attacks run in polynomial time and thus break the GPT cryptosystem for all parameters.
The attack on the original GPT cryptosystem was extended in~\citep{Overbeck-ExtendingGibsonAttack} to break the GPT variants using a column scrambling matrix~\citep{ourivski2003column}.
Another variant of Overbeck's attacks~\citep{overbeck2008structural} combines methods from~\citep{Overbeck-StructuralAttackGPT,Overbeck-ExtendingGibsonAttack} in order to cryptanalyze~\citep{gabidulin2003reducible}.
In the following we describe the main idea of Overbeck's attacks.

For an $[n,k,d]_q^{\sfR}$ linear rank-metric code $\mycode{C}$
with generator matrix $\vec{G}$, we define $\mycode{C}^{q^i}$ to be the code obtained by taking each codeword in $\mycode{C}$ to the element-wise power $q^i$.
Note, that the code $\mycode{C}^{q^i}$ is generated by $\vec{G}^{q^i}$, where $\vec{G}^{q^i}$ is the matrix obtained by taking each element in $\vec{G}$ to the power of $q^i$.

\begin{definition}[$q$-Sum]
 Let $\mycode{C}$ be an $[n,k,d]_q^{\sfR}$ rank-metric code over $\Fqm$ and let $i\in\mathbb{N}_0$.
 Then the $i$-th $q$-sum of $\mycode{C}$ is defined as
 \begin{equation}
  \Lambda_i(\mycode{C})=\mycode{C}+\mycode{C}^{q}+\dots+\mycode{C}^{q^i}.
 \end{equation}
\end{definition}

For a random code $\mycode{C}$ we have $\dim{\Lambda_i(\mycode{C})}=\min\{n,ik\}$ with high probability.
However, if $\mycode{C}$ is a Gabidulin code $\Gab{n,k}$, we have that $\dim{\mycode{C}}=\min\{n,k+i\}$, which is significantly smaller than the dimension of the $q$-sum of a random code.
Hence, a Gabidulin code can be distinguished from a random code by checking the dimension of the $q$-sum.

Based on Overbeck's observation, one can distinguish a Gabidulin code from a random code by applying the $q$-sum to the public key and therefore derive an efficient decoder for the underlying decoder.

\subsubsection{Variants using Column Scramblers over $\Fqm$}%

The series of Overbeck's attacks broke all GPT variants for most practical parameters.
In order to prevent Overbeck's attack, GPT variants with column scrambling matrices over $\Fqm$ were proposed.
The first GPT variant using a column scrambler over $\Fqm$ was proposed by~\citet{gabidulin2008attacks}.
The GPT variants by~\citet{gabidulin2009improving,rashwan2010smart} are also based on this general idea.
In~\citep{Otmani-CryptanalysisRankMetric} it was shown that~\citep{gabidulin2009improving,rashwan2010smart} are special cases of~\citep{gabidulin2008attacks}.
Hence, we only describe the general system from~\citep{gabidulin2008attacks} in Algorithm~\ref{alg:KeyGenGab08} -- \ref{alg:DecryptGab08}.

\printalgoIEEE{
\DontPrintSemicolon
\caption{$\keyGen{\cdot}$~\citep{gabidulin2008attacks}}
\label{alg:KeyGenGab08}
\KwIn{Parameters $\param = \left\{q,m,n,k,r,\tpub,\tdist<\lfloor\frac{n-k}{2}\rfloor,t_2\right\}$}
\KwOut{Secret key $\sk$, public key $\pk$}
$\vec{g}\assignRand\Fqm^n:\rank_q(\vec{g})=n$ \\
$\vec{S}\assignRand\GL{k}{\Fqm}$ \\
$\vec{P}\assignRand\GL{n+\tdist}{\Fqm}$ such that
\begin{equation*}
      \vec{P}^{-1}=
      \begin{pmatrix}
       \vec{P}_{11} & \vec{P}_{12}
       \\
       \vec{P}_{21} & \vec{P}_{22}
      \end{pmatrix}
      \ \text{with}\
      \begin{array}{ll}
       \vec{P}_{11}\in\Fqm^{\tdist\times\tdist}, & \vec{P}_{12}\in\Fqm^{\tdist\times n}%
       \\
       \vec{P}_{21}\in\Fqm^{n\times\tdist}, & \vec{P}_{12}\in\Fq^{n\times n}
      \end{array}
     \end{equation*} and $\rank_q(\vec{P}_{12})<t_2$\\
$\vec{X}\assignRand\Fqm^{k\times \tdist}:\rank_q(\vec{X})=\tdist$\\
$\vec{G}_\mycode{C}=\Mooremat{k}{q}{\vec{g}}$ \\
$\distTrans{\vec{G}_\mycode{C}}=\vec{S}\left(\vec{X} \ | \ \vec{G}_{\mycode{C}}\right)\vec{P}$ \\

\Return{$\sk\assignDet(\vec{S},\vec{P},\vec{G}_\mycode{C})$, $\pk\assignDet(\distTrans{\vec{G}_\mycode{C}},\tpub,\tdist,t_2)$}
}

\printalgoIEEE{
\DontPrintSemicolon
\caption{$\enc{\cdot}$~\citep{gabidulin2008attacks}}
\label{alg:EncryptGab08}
\KwIn{Plaintext $\plainText\in\Fqm^k$, public key $\pk=(\distTrans{\vec{G}_\mycode{C}},\tpub,\tdist,t_2)$}
\KwOut{Ciphertext $\cipherText\in\Fqm^{n+\tdist}$}
$\vec{e}\assignRand\Fqm^n:\rank_q(\vec{e})=\tpub=\lfloor\frac{n-k}{2}\rfloor$ and $\vec{e}=\left(\vec{e}_1 \ | \ \vec{e}_2\right)$ where $\vec{e}_1\in\Fqm^{\tdist}$ and $\vec{e}_2\in\Fqm^n$ with $\rank_q(\vec{e}_2)=\tpub-t_2$  \\
$\cipherText\gets\plainText\distTrans{\vec{G}_\mycode{C}}+\vec{e}=\plainText\vec{S}\left(\vec{X} \ | \ \vec{G}_{\mycode{C}}\right)\vec{P}+\vec{e}$\\
\Return{$\cipherText$}
}

\printalgoIEEE{
\DontPrintSemicolon
\caption{$\dec{\cdot}$~\citep{gabidulin2008attacks}}
\label{alg:DecryptGab08}
\KwIn{Ciphertext $\vec{c}\in\Fqm^{n+\tdist}$, secret key $\sk=(\vec{S},\vec{P},\vec{G}_\mycode{C})$}
\KwOut{Plaintext $\plainText\in\Fqm^k$}

Compute $\cipherText'\coloneqq\cipherText\vec{P}^{-1}=\plainText\vec{S}\left(\vec{X} \ | \ \vec{G}_{\mycode{C}}\right)+\vec{e}\vec{P}^{-1}$

From $\cipherText'$ extract the subvector $$\cipherText''=(c'_{\tdist+1},c'_{\tdist+2}\dots,c'_{\tdist+n})=\plainText\vec{S}\vec{G}_{\mycode{C}}+\vec{e}''$$ where $\vec{e}''$ corresponds to the last $n$ positions of $\vec{e}\vec{P}^{-1}$ given by
    \begin{equation*}
     \vec{e}'' = \vec{e}_1\vec{P}_{12}+\vec{e}_2\vec{P}_{22}.
    \end{equation*}
    Since the rank of $\vec{P}_{12}$ satisfies $\rank_q(\vec{P}_{12})<t_2$ and $\vec{P}_{22}$ has its elements in $\Fq$ we have that
    \begin{equation*}
      \rank_q(\vec{e}'')\leq\rank_q(\vec{P}_{12})+\rank_q(\vec{e}_2)=t_2+\tpub-t_2=\tpub.
    \end{equation*}\\
Hence, $\vec{c}''$ is a codeword of the Gabidulin code $\mycode{C}$ that is corrupted by an error $\vec{e}''$ of rank at most $\tpub\leq\lfloor\frac{n-k}{2}\rfloor$ that can be corrected by a secret efficient decoder for $\mycode{C}$ \\
$\tilde{\plainText}=\plainText\vec{S}\gets\decode{\mycode{C}}{\cipherText}$ \\
\Return{$\plainText\gets\tilde{\plainText}\vec{S}^{-1}$} \\
}

\subsubsection{Variants using Particular Distortion Matrices}%

Different GPT variants that are resistant against Overbeck's attacks were proposed by~\citet{loidreau2010designing,rashwan2010smart}.
Rather than relying on a column scrambling matrix over $\Fqm$, the variants use particular distortion matrices to prevent Overbeck's structural attacks.
The variants can be described in Algorithm~\ref{alg:KeyGenLoi10} -- \ref{alg:DecryptLoi10}.

\printalgoIEEE{
\DontPrintSemicolon
\caption{$\keyGen{\cdot}$~\citep{loidreau2010designing,rashwan2010smart}}
\label{alg:KeyGenLoi10}
\KwIn{Parameters $\param = \left\{q,m,n,k,\tpub\leq\lfloor\frac{n-k}{2}\rfloor,\tdist\right\}$}
\KwOut{Secret key $\sk$, public key $\pk$}
$\vec{g}\assignRand\Fqm^n:\rank_q(\vec{g})=n$ \\
$\vec{S}\assignRand\GL{k}{\Fqm}$ \\
$\vec{P}\assignRand\GL{n+\tdist}{\Fq}$ \\
$\vec{X}\assignRand\Fqm^{k\times \tdist}:\rank_q(\vec{X})=\tdist$\\
$\vec{G}_\mycode{C}=\Mooremat{k}{q}{\vec{g}}$ \\
$\distTrans{\vec{G}_\mycode{C}}=\vec{S}\left(\vec{X} \ | \ \vec{G}_{\mycode{C}}\right)\vec{P}$ \\

\Return{$\sk\assignDet(\vec{S},\vec{G}_\mycode{C},\vec{P})$, $\pk\assignDet(\distTrans{\vec{G}_\mycode{C}},\tpub,\tdist)$}
}

If the right kernel of $\pk$ has dimension one, a decoder can be obtained from $\pk$ in polynomial time. This occurs with high probability if $\vec{X}$ is chosen uniformly at random. This implies the design criterion that $\vec{X}$ has to be chosen such that $\rank_q(\vec{X})\leq\frac{\tdist-\ell}{n-k}$ for some $\ell\geq1$ which implies $\tdist>(n-k)$ (see~\citep[Corollary~1]{loidreau2010designing}). However, this restriction increases the key size.

\printalgoIEEE{
\DontPrintSemicolon
\caption{$\enc{\cdot}$~\citep{loidreau2010designing,rashwan2010smart}}
\label{alg:EncryptLoi10}
\KwIn{Plaintext $\plainText\in\Fqm^k$, public key $\pk=(\distTrans{\vec{G}_\mycode{C}},\tpub,\tdist)$}
\KwOut{Ciphertext $\cipherText\in\Fqm^{n+\tdist}$}
$\vec{e}\assignRand\Fqm^{n+\tdist}:\rank_q(\vec{e})=\tpub=\lfloor\frac{n-k}{2}\rfloor$ \\
$\cipherText\gets\plainText\distTrans{\vec{G}_\mycode{C}}+\vec{e}=\plainText\vec{S}\left(\vec{X} \ | \ \vec{G}_{\mycode{C}}\right)\vec{P}+\vec{e}$\\
\Return{$\cipherText$}
}

\printalgoIEEE{
\DontPrintSemicolon
\caption{$\dec{\cdot}$~\citep{loidreau2010designing,rashwan2010smart}}
\label{alg:DecryptLoi10}
\KwIn{Ciphertext $\vec{c}\in\Fqm^{n+\tdist}$, secret key $\sk=(\vec{S},\vec{G}_\mycode{C},\vec{P})$}
\KwOut{Plaintext $\plainText\in\Fqm^k$}

Compute $\cipherText'\coloneqq\cipherText\vec{P}^{-1}=\plainText\vec{S}\left(\vec{X} \ | \ \vec{G}_{\mycode{C}}\right)+\vec{e}\vec{P}^{-1}$ \\
From $\cipherText'$ extract the subvector $$\cipherText''=(c'_{\tdist+1},c'_{\tdist+2}\dots,c'_{\tdist+n})=\plainText\vec{S}\vec{G}_{\mycode{C}}+\vec{e}''$$ where $\vec{e}''$ corresponds to the last $n$ positions of $\vec{e}\vec{P}^{-1}$.
Notice, that $\rank_q(\vec{e}')\leq\tpub$ since $\vec{P}$ is over $\Fq$. \\
Since $\rank_q(\vec{e}'')\leq\rank_q(\vec{e}')\leq\tpub$ we have that $\vec{c}''$ is a codeword of the Gabidulin code $\mycode{C}$ that is corrupted by an error $\vec{e}''$ of rank at most $\tpub\leq\lfloor\frac{n-k}{2}\rfloor$ that can be corrected by a secret efficient decoder for $\mycode{C}$ \\

$\tilde{\plainText}=\plainText\vec{S}\gets\decode{\mycode{C}}{\cipherText}$ \\

\Return{$\plainText\gets\tilde{\plainText}\vec{S}^{-1}$} \\
}

The approach by~\citet{rashwan2010smart} (also referred as the ``smart approach'') relies on the same structure as the system in~\citep{loidreau2010designing} with the difference that the increase of the key size in~\citep{loidreau2010designing} due to the rank restriction on $\vec{X}$ is avoided by imposing a structural restriction on $\vec{X}$.
In particular, $\vec{X}$ is constructed from a Moore matrix of rank $\ell$ and a non-Moore matrix of rank $\tdist-a$ to avoid Overbeck's attacks.

\subsubsection{Attack on Variant with Distortion Matrices}%

The above mentioned attempts to defend the GPT cryptosystem from Overbeck's attacks based on distortion matrices were broken by \citet{horlemann2018extension}. The main idea of the attack is based on recovering vectors of rank one from an extended public generator matrix that allow to recover the secret key.

\subsubsection{Attacks on Variant with Column Scramblers}%

The alternative GPT variants proposed to prevent Overbecks's attacks by using column scrambling matrices over $\Fqm$ were subject to several attacks.
First, the scheme by \citet{gabidulin2009improving} was attacked by \citet{horlemann2018extension} using ideas from~\citep{gaborit2015complexity}.
This attack was followed by a generalized Overbeck attack~\citep{Otmani-CryptanalysisRankMetric} that cryptanalyzes~\citep{gabidulin2008attacks,gabidulin2009improving,rashwan2010smart} which is more efficient than that in~\citep{horlemann2016considerations}.
The main result of~\citep{Otmani-CryptanalysisRankMetric} is that all GPT variants with a column scrambler over $\Fqm$ can be reduced to a cryptosystem with column scrambler over $\Fq$ with a slightly degraded Gabidulin code.
However, the error-correction capability of the degraded code is sufficient to recover the imposed errors for most parameters.

\subsubsection{GPT Variant with Rank Amplifiers}%

The GPT variants~\citep{Loidreau-GPT-ACCT2016, Loidreau2017-NewRankMetricBased} use particular column scrambling matrices over $\Fqm$, called \emph{rank amplifiers}. The variants are described in Algorithm~\ref{alg:KeyGenLoi16}--\ref{alg:DecryptLoi16}.
In particular, the coefficients of the inverse of the right scrambler are taken from some fixed-dimensional $\Fq$-linear subspace of $\Fqm$.
This idea is motivated by LRPC codes~\citep{Gaborit2013-LRPC} and the analogue cryptosystem in the Hamming metric proposed by \citet{Baldi2016enhanced}.

\printalgoIEEE{
\DontPrintSemicolon
\caption{$\keyGen{\cdot}$~\citep{Loidreau-GPT-ACCT2016, Loidreau2017-NewRankMetricBased}}
\label{alg:KeyGenLoi16}
\KwIn{Parameters $\param = \left\{q,m,n,k,\lambda,\tpub\leq\lfloor\frac{n-k}{2\lambda}\rfloor\right\}$}
\KwOut{Secret key $\sk$, public key $\pk$}
Random $\Fq$-linear subspace $\myspace{V}$ of dimension $\dim(\myspace{V})=\lambda$ \\
$\vec{g}\assignRand\Fqm^n:\rank_q(\vec{g})=n$ \\
$\vec{S}\assignRand\GL{k}{\Fqm}$ \\
$\vec{P}\assignRand\GL{n}{\Fqm}:p_{i,j}\in\myspace{V}$ for all $i,j\in[1,n]$ \\
$\vec{G}_\mycode{C}=\Mooremat{k}{q}{\vec{g}}$ \\
$\distTrans{\vec{G}_\mycode{C}}=\vec{S}\vec{G}_{\mycode{C}}\vec{P}^{-1}$ \\

\Return{$\sk\assignDet(\vec{S},\vec{G}_\mycode{C},\vec{P})$, $\pk\assignDet(\distTrans{\vec{G}_\mycode{C}},\tpub,\lambda)$}
}

\printalgoIEEE{
\DontPrintSemicolon
\caption{$\enc{\cdot}$~\citep{Loidreau-GPT-ACCT2016, Loidreau2017-NewRankMetricBased}}
\label{alg:EncryptLoi16}
\KwIn{Plaintext $\plainText\in\Fqm^k$, public key $\pk=(\distTrans{\vec{G}_\mycode{C}},\tpub,\lambda)$}
\KwOut{Ciphertext $\cipherText\in\Fqm^n$}
$\vec{e}\assignRand\Fqm^n:\rank_q(\vec{e})=\tpub=\tpub\leq\lfloor\frac{n-k}{2\lambda}\rfloor$ \\
$\cipherText\gets\plainText\distTrans{\vec{G}_\mycode{C}}+\vec{e}=\plainText\vec{S}\vec{G}_{\mycode{C}}\vec{P}^{-1}+\vec{e}$\\
\Return{$\cipherText$}
}

\printalgoIEEE{
\DontPrintSemicolon
\caption{$\dec{\cdot}$~\citep{Loidreau-GPT-ACCT2016, Loidreau2017-NewRankMetricBased}}
\label{alg:DecryptLoi16}
\KwIn{Ciphertext $\cipherText\in\Fqm^n$, secret key $\sk=(\vec{S},\vec{G}_\mycode{C},\vec{P})$}
\KwOut{Plaintext $\plainText\in\Fqm^k$}

Compute $\cipherText'\coloneqq\cipherText\vec{P}^{-1}=\plainText\vec{S}\vec{G}_{\mycode{C}}+\vec{e}\vec{P}$ \\
Since the entries of $\vec{P}$ span a $\lambda$-dimensional $\Fq$-linear subspace of $\Fqm$, we have that $\rank_q(\vec{e}\vec{P})\leq\lambda\rank_q(\vec{e})=\lambda\tpub$ (see~\citep[Proposition~1]{Loidreau-GPT-ACCT2016}). \\
Hence, $\vec{c}''$ is a codeword of the Gabidulin code $\mycode{C}$ that is corrupted by an error $\vec{e}\vec{P}$ of rank at most $\lambda\tpub\leq\lfloor\frac{n-k}{2}\rfloor$ that can be corrected by a secret efficient decoder for $\mycode{C}$ \\
$\tilde{\plainText}=\plainText\vec{S}\gets\decode{\mycode{C}}{\cipherText}$\\

\Return{$\plainText\gets\tilde{\plainText}\vec{S}^{-1}$} \\
}

\subsubsection{Attack for GPT with Rank Amplifiers ($\lambda=2$)}%

\citet{coggia2019security} proposed an attack on the rank-amplifier variant of the GPT cryptosystem~\citep{Loidreau-GPT-ACCT2016, Loidreau2017-NewRankMetricBased}.
The attack is successful for $\lambda=2$ and a public code with code rate $R_{\mathsf{pub}}\geq\frac{1}{2}$. In general, parameters where the public code has rate $R_{\mathsf{pub}}\geq 1-\frac{1}{\lambda}$ should be avoided.
However, there still are practical parameters for which the system~\citep{Loidreau-GPT-ACCT2016, Loidreau2017-NewRankMetricBased} cannot be broken by the attack by \citet{coggia2019security}.

\subsubsection{Variant using Gabidulin Matrix Codes}%

In~\citep{berger2017gabidulin} a cryptosystem based on matrix codes was proposed. These matrix codes are obtained from subcodes of binary images of Gabidulin codes to mask the structure of the underlying Gabidulin code in order to prevent Overbeck-like attacks.
This approach roughly gains a factor of 10 compared to the original McEliece cryptosystem and provides very compact private keys.
In Algorithms \ref{alg:KeyGenBGR17} -- \ref{alg:DecryptBGR17}, describe this cryptosystem.
  We give a high-level description of the $q$-ary image of a code, which is used in the key generation of this cryptosystem.
For any matrix $\A\in\Fq^{m \times n}$ let $\text{vec}(\A)\in\Fq^{mn}$ denote the row vector representation of $\A$ obtained by concatenating the transposed columns of $\A$.
Analogously, denote by $\text{mat}(\vec{a})$ the corresponding inverse mapping to obtain a matrix $\A\in\Fq^{m\times n}$ from a vector $\a\in\Fq^{mn}$.
Under a fixed basis of $\Fqm$ over $\Fq$, any code $\mycode{C}\subseteq{\Fqm^n}$ can be represented as a subset of $\Fq^{m \times n}$.
The $q$-ary image $\qImg{q}{\mycode{C}}\subseteq\Fq^{mn}$ of $\mycode{C}$ is then defined as the code obtained by applying the mapping $\text{vec}(\cdot)$ to each codeword of $\mycode{C}$ expanded over $\Fq$, which results in a code of length $mn$ and dimension $mk$ that can be represented by an $mk \times mn$ generator matrix over $\Fq$.

\printalgoIEEE{
\DontPrintSemicolon
\caption{$\keyGen{\cdot}$~\citep{berger2017gabidulin}}
\label{alg:KeyGenBGR17}
\KwIn{Parameters $\param = \left\{q,m,n,k,r,\tpub=\lfloor\frac{n-k}{2}\rfloor\right\}$}
\KwOut{Secret key $\sk$, public key $\pk$}
$\vec{g}\assignRand\Fqm^n:\rank_q(\vec{g})=n$ \\
$\vec{B}_0\assignRand\GL{m}{\Fq}$ \\
$\vec{B}_1\assignRand\GL{n}{\Fq}$ \\
$\vec{L}\assignRand\Fq^{s\times km-s}$ \\
$\vec{G}\assignDet$ generator matrix of $\qImg{q}{\Gabcode{\vec{g};n}{k}}$ (in $\Fq^{km\times mn}$)\\
$\widetilde{\vec{G}}=\vec{G}(\vec{B}_1\otimes\vec{B}_0^\top)$ with the corresponding generator matrix in systematic from $\widetilde{\vec{G}}_{\mathsf{sys}}=\left(\vec{I}_{km} \ | \ \vec{P}\right)$. Define $\widetilde{\vec{H}}=(-\vec{P}^\top \ | \ \vec{I}_{(n-k)m})$ \\
Define $\vec{U}=\left(\vec{L} \ | \ \vec{I}_s \ | \ \vec{0}_{s,(n-k)m}\right)$ \\
Define $\vec{H}_{\mathsf{pub}}= \left(\begin{array}{c}\vec{U} \\ \hline \widetilde{\vec{H}}\end{array}\right)$ and $\vec{G}_{\mathsf{pub}}\in\Fq^{(km-s)\times mn}$ s.t. $\vec{G}_{\mathsf{pub}}\vec{H}_{\mathsf{pub}}^\top=\vec{0}$.

\Return{$\sk\assignDet(\vec{B}_0,\vec{B}_1,\vec{g})$, $\pk\assignDet(\vec{G}_{\mathsf{pub}},\tpub)$}
}

\printalgoIEEE{
\DontPrintSemicolon
\caption{$\enc{\cdot}$~\citep{berger2017gabidulin}}
\label{alg:EncryptBGR17}
\KwIn{Plaintext $\plainText\in\Fq^{km-s}$, public key $\pk=(\vec{G}_{\mathsf{pub}},\tpub)$}
\KwOut{Ciphertext $\cipherText\in\Fq^{m\times n}$}
$\vec{E}\assignRand\Fq^{m\times n}:\rank_q(\vec{E})=\tpub$ \\
$\vec{C}\gets\text{mat}(\plainText\vec{G}_{\mathsf{pub}})+\vec{E}$\\
\Return{$\vec{C}$}
}

\printalgoIEEE{
\DontPrintSemicolon
\caption{$\dec{\cdot}$~\citep{berger2017gabidulin}}
\label{alg:DecryptBGR17}
\KwIn{Ciphertext $\cipherText$, secret key $\sk=(\vec{B}_0,\vec{B}_1)$}
\KwOut{Plaintext $\plainText$}
Compute $\tilde{\vec{C}}\assignDet\vec{B}_0^{-1}\vec{C}=\text{mat}(\vec{m}\vec{G}_{\mathsf{pub}})+\tilde{\vec{E}}$ where $\tilde{\vec{E}}=\vec{B}_0^{-1}\vec{E}$ with $\rank_q(\tilde{\vec{E}})=\tpub$ \\
Compute $\tilde{\vec{c}}$ from $\tilde{\vec{C}}$\\
Decode $\tilde{\vec{c}}$ and use $\vec{B}_1$ to obtain $\plainText$

\Return{$\plainText$}
}

\subsubsection{Twisted Gabidulin Codes in the GPT System}

\emph{Twisted Gabidulin codes} were applied to the GPT system by~\citet{puchinger2018twisted}. However, it turned out that there is a distinguisher when using such Twisted Gabidulin codes in the GPT system.
\begin{theorem}[$q$-Sum Dimension of Twisted Gabidulin Codes, \citep{puchinger2018twisted}]\label{thm:large_q-sum-dim_family}
	Let $n,k,\tVec,\hVec,\etaVec,\alphaVec$ be chosen as in Definition~\ref{def:tgab_definition} such that
	\begin{align*}
	&\Delta \coloneqq \tfrac{n-k-\ell}{\ell+1} \in \NN,\\
	&t_i \coloneqq (i+1)(\Delta+1), &&\forall \, i =1,\dots,\ell,\\
	&0 < h_1 < h_2 < \dots h_\ell < k-1 &&\text{s.t.} \notag\\
	&|h_{i+1}-h_i| > 1, &&\forall \, i = 1,\dots,\ell-1.
	\end{align*}
	For all $i \in \NN$, we then have
	\begin{equation*}
	\dim \Lambda_i(\Cmult) = \min\{k-1+(i+1)(\ell+1),n\}.
	\end{equation*}
\end{theorem}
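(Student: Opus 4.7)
The plan is to express $\Lambda_i(\Cmult)$ as the evaluation image of an explicit linearized-polynomial space $\mathcal{P}_i$ and then compute $\dim_{\Fqm}\mathcal{P}_i$ by identifying its monomial support. First, since $f(\alpha)^{q^j}=\tilde f(\alpha)$ for $\tilde f(x)=\sum f_u^{q^j} x^{[u+j]}$ when $f\in\LinpolyringK$, the code $\Cmult^{q^j}$ consists of the evaluations at $\alpha_1,\dots,\alpha_n$ of polynomials
\begin{equation*}
\tilde f^{(j)}(x)=\sum_{u=0}^{k-1}g_u^{(j)}x^{[u+j]}+\sum_{r=1}^{\ell}\eta_r^{q^j}g_{h_r}^{(j)}x^{[k-1+j+t_r]},\quad g_u^{(j)}\in\Fqm .
\end{equation*}
Consequently $\Lambda_i(\Cmult)$ is the evaluation image of $\mathcal{P}_i\coloneqq\sum_{j=0}^{i}\tilde{\mathcal{P}}^{(j)}$.

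Second, I would catalogue the monomial positions arising in $\mathcal{P}_i$: the \emph{first-bunch} positions $\{i'+j:0\le i'\le k-1,\,0\le j\le i\}=\{0,\dots,k-1+i\}$ and the \emph{twist} positions $T_{j,r}\coloneqq k-1+j+t_r$ for $(j,r)\in[0,i]\times[1,\ell]$. Using $t_r=(r+1)(\Delta+1)$, I would check that for $i\le\Delta$ all $T_{j,r}$ are pairwise distinct and disjoint from the first bunch: any coincidence $T_{j,r}=T_{j',r'}$ forces $j-j'=(r'-r)(\Delta+1)$, and $|j-j'|\le i\le\Delta<\Delta+1$ rules this out unless $(j,r)=(j',r')$; moreover $T_{0,1}=k-1+2(\Delta+1)>k-1+\Delta\ge k-1+i$.

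Third, I would show that for $i\ge 1$ the pure-monomial generators $x^{[i'+j]}$, coming from non-hook indices $i'\in\{0,\dots,k-1\}\setminus\{h_1,\dots,h_\ell\}$ and $0\le j\le i$, already span every monomial $x^{[\delta]}$ with $\delta\in[0,k-1+i]$. For each such $\delta$, the interval $[\max(0,\delta-i),\min(k-1,\delta)]$ contains at least two consecutive integers whenever $i\ge 1$ and $\delta\notin\{0,k-1+i\}$; the gap condition $|h_{r+1}-h_r|>1$ forbids two consecutive hooks, so a non-hook $i'$ exists. The endpoints $\delta=0$ and $\delta=k-1+i$ are handled by $h_1>0$ and $h_\ell<k-1$, respectively. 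Given this, the hook-twist generator $x^{[h_r+j]}+\eta_r^{q^j}x^{[T_{j,r}]}$ yields $x^{[T_{j,r}]}$ after subtracting the now-available $x^{[h_r+j]}$. Hence $\mathcal{P}_i$ coincides with the $\Fqm$-span of $\{x^{[\delta]}:\delta\in S\}$ for $S=\{0,\dots,k-1+i\}\cup\{T_{j,r}\}$, giving $\dim_{\Fqm}\mathcal{P}_i=|S|=k-1+(i+1)(\ell+1)$.

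Finally, I would pass to the code. The vectors $\vec{\alpha}^{[\delta]}$ for $\delta=0,\dots,n-1$ form a Moore basis of $\Fqm^n$, and higher-degree vectors reduce via the minimal subspace polynomial of $V=\Rowspace{\alpha_1,\dots,\alpha_n}$. The main obstacle is showing the rank of $\{\vec{\alpha}^{[\delta]}:\delta\in S\}$ equals $\min(|S|,n)$, since the positions $T_{j,\ell}\in\{n,\dots,n+i\}$ exceed the "safe" Moore range. The clean line of attack is: a nontrivial dependency would produce a nonzero linearized polynomial supported on $S$ vanishing on $V$, i.e., a left multiple of the minimal subspace polynomial of $V$; the support structure of $S$ (in particular the sparsity gap created by $t_r=(r+1)(\Delta+1)$) prevents any such relation for $i\le\Delta$, giving rank $|S|$. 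For $i>\Delta$, the span already includes the $i=\Delta$ span, which covers $\Fqm^n$, so the rank is $n$. Combining the two regimes yields $\dim\Lambda_i(\Cmult)=\min\{k-1+(i+1)(\ell+1),n\}$.
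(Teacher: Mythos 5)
Your reduction of $\Lambda_i(\Cmult)$ to the evaluation image of the explicit polynomial space $\mathcal{P}_i$, and the computation $\dim_{\Fqm}\mathcal{P}_i = k-1+(i+1)(\ell+1)$ for $1\le i\le\Delta$ (distinctness of the twist positions, plus the hook-gap condition $|h_{r+1}-h_r|>1$ to recover all pure monomials and hence isolate the twist monomials), are correct and complete. The survey only cites \citep{puchinger2018twisted} for this theorem and gives no proof, so there is nothing in-paper to compare against; I am judging your argument on its own.

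The genuine gap is the final step, which you yourself flag as ``the main obstacle'' and then only sketch. You need that the $|S|$ evaluation vectors $\bigl(\alpha_1^{[\delta]},\dots,\alpha_n^{[\delta]}\bigr)$, $\delta\in S$, have rank $\min\{|S|,n\}$; equivalently, that no nonzero left multiple $Q\circ M_V$ of the minimal subspace polynomial of the $\Fq$-span $V$ of $\alpha_1,\dots,\alpha_n$, with $\deg_q Q\le i$, is supported on $S$. Your assertion that ``the support structure of $S$ prevents any such relation'' cannot be established from $S$ alone. With the stated twist vector one has $t_\ell=(\ell+1)(\Delta+1)=n-k+1$, so the top twist positions are $T_{j,\ell}=n+j$ for $j=0,\dots,i$; hence $S\supseteq[0,i]\cup[n,n+i]$, which is precisely where a left multiple of $M_V$ of $q$-degree at most $n+i$ must place its terms. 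Whether such a multiple supported on $S$ exists therefore depends on the coefficients of $M_V$, i.e.\ on the choice of $\alphaVec$, not on $S$. For instance, if the $\alpha_s$ span a subfield (e.g.\ $n=m$), then $M_V(x)=x^{[n]}-x$ and every $Q\circ M_V$ with $\deg_q Q\le i$ is supported on $[0,i]\cup[n,n+i]\subseteq S$, so the evaluation map on $\mathcal{P}_i$ has a kernel of dimension at least $i+1$ and $\dim\Lambda_i(\Cmult)\le k-1+(i+1)\ell$, contradicting the claimed formula. So the step as written fails: you either need the additional hypotheses on $\alphaVec$ imposed in the source paper, or the (presumably intended) normalization $t_r=r(\Delta+1)$ --- note that $t_\ell=n-k+1$ as stated already violates $\tVec\in\{1,\dots,n-k\}^{\ell}$ from Definition~\ref{def:tgab_definition} --- under which all degrees in $S$ stay at most $n-1$ and the required independence is immediate from the invertibility of the $n\times n$ Moore matrix.
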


In particular, for a small number of twists $\ell$, the dimension of the $q$-sum of Twisted Gabidulin codes is rather low compared to random codes of the same dimension (though larger than that of Gabidulin codes), which constitutes a distinguisher.

\citet{lavauzelle2020cryptanalysis} proposed a polynomial-time key-recovery attack on the Hamming-metric variant of this scheme. Furthermore, they showed that the presented attack can straightforwardly be applied to the variant based on twisted Gabidulin codes for certain parameters. However, this na\"ive adaption can be avoided by a careful choice of the parameters.

\subsubsection{Interleaving Loidreau's GPT System}

The ideas of using interleaved codes in the McEliece system \citep{elleuch2018public,holzbaur2019decoding} were combined with Loidreau’s GPT variant \citep{Loidreau2017-NewRankMetricBased} in \citep{renner2019interleavedLoi}. That means, $u$ parallel ciphertexts are considered where each is a codeword from Loidreau's Gabidulin code (i.e., a Gabidulin code scrambled with a matrix that contains elements from a subspace, \citep[see][]{Loidreau2017-NewRankMetricBased}) with rank multipliers plus a rank burst error (i.e., all the errors lie in a common row space). The dimension of the total row space of the parallel errors is restricted by $\left\lfloor\frac{u}{u+1}\frac{n-k}{2}\right\rfloor$.

In \citep{renner2019interleavedLoi}, it was shown that in principle, Loidreau’s system can be interleaved using classical decoders for interleaved Gabidulin codes. Similar to \citep{holzbaur2019decoding}, an attack based on an error code can be prevented by choosing the error matrix in a suitable way. The construction of (in this sense) secure errors requires rank-metric codes whose minimum distances are close to the Singleton bound. Gabidulin codes yield potentially insecure error patterns since the resulting error matrix can be distinguished from a random one. However, if the error matrix is drawn in a random way, it fulfills the requirements with high probability. For this choice, upper bounds on the decryption failure and secure parameter sets with potential key size reductions by approximately 15\% compared to Loidreau's system were provided.

\subsection{McEliece-Type Rank-Metric Cryptosystems based on QC-LRPC Codes}%

A GPT variant based on a randomized construction of rank-metric codes was proposed by \citet{Gaborit2013-LRPC}.
This random rank-metric coding approach uses low-rank parity-check (LRPC) codes in the GPT system in order to prevent the series of structural attacks based on the inherent algebraic structure of Gabidulin codes.
The NIST submission ROLLO~\citep{melchor2019rollo}, which is the merge of the initial round-1 submissions Rank-Ouroboros (formerly known as Ouroboros-R), LAKE and LOCKER, is a GPT variant based on LRPC codes.

\begin{definition}[Low-Rank Parity-Check Code]\label{def:LRPC}
 A $[\lambda;n,k]$ LRPC code of length $n$, dimension $k$, and rank $\lambda$ over $\Fqm$ is defined as a code with a parity-check matrix $\vec{H}\in\Fqm^{(n-k)\times n}$, where the $\Fq$-linear vector space
 \begin{equation}\label{eq:Hspace}
   \myspace{H}\coloneqq\spannedBy{\{h_{i,j}:i\in [1,n-k],j\in[1,n]\}}
 \end{equation}
 has dimension at most $\lambda$.
\end{definition}

LRPC codes can be constructed to be quasi-cylic codes. In this case, the parity-check matrix as well as the generator matrix consist of blocks of circulant matrices, that allow for a very compact representation, which is favorable in terms of the key size.
In particular, variants consisting of two circulant matrices (also called double-circulant (DC)) are of particular interest for cryptographic applications \citep{Gaborit2013-LRPC}.

In the following we give a high-level description of the decoding algorithm for LRPC codes from~\citep{Gaborit2013-LRPC}, where we define the \emph{rank support} as follows:
\begin{definition}[Rank Support]
	Let $\vec{x} = (x_1,x_2,\dots,x_n) \in \Fqm^n$. The support $\supp(\vec{x})$ is the $\Fq$-subspace of $\Fqm$ generated by the coordinates of $\vec{x}$:
	\begin{equation*}
	\supp(\vec{x}) = \Rowspace{x_1,\dots,x_n}.
	\end{equation*}
\end{definition}
Let
\begin{equation}
  \Phi=\{\phi_1,\phi_2,\dots,\phi_\lambda\}
  \quad\text{and}\quad
  \Gamma=\{\gamma_1,\gamma_2,\dots,\gamma_t\}
\end{equation}
be bases for $\myspace{H}$ and the error support $\myspace{E}\coloneqq\supp(\vec{e})$, respectively.
Then each element of the syndrome $\vec{s}=\vec{e}\vec{H}^{\top}$ can be written as
\begin{equation}
  s_i=\sum_{r=1}^{t}\sum_{\ell=1}^{\lambda}s_{i,\ell,r}\phi_\ell\gamma_r ,
\end{equation}
where
\begin{equation}\label{eq:recoverErrLoc}
  s_{i,\ell,r}=\sum_{j=1}^{n}h_{i,j,\ell}e_{j,r},\qquad \ell\in[1,\lambda],r\in[1,t],i\in[1,n-k].
\end{equation}
and $h_{i,j,\ell}$ is the expansion of $h_{i,j}$ over $\Fq$ with respect to $\Phi$.
Hence, each syndrome entry $s_i$ is an $\Fq$-linear combination of the elements
 \begin{equation}
   \{\phi_1\gamma_1,\phi_1\gamma_2,\dots,\phi_\lambda\gamma_t\}.
 \end{equation}
With high probability the elements above span the product space $\myspace{H}\myspace{E}\coloneqq\{ab:a\in\myspace{H},b\in\myspace{E}\}$.
Defining the syndrome space $\myspace{S}\coloneqq\supp(\vec{s})$ 
and $\myspace{S}_{\ell}\coloneqq\{\phi_\ell^{-1}\myspace{S}:\ell\in[1,\lambda]\}$ we may recover $\myspace{E}$ as
\begin{equation}
  \myspace{E}\subseteq \myspace{S}_{1}\cap\myspace{S}_{2}\cap\dots\cap\myspace{S}_{\ell}
\end{equation}
with high probability.
Once a basis of the product space is known the error locations can be recovered by solving the $\Fq$-linear system of equations~\eqref{eq:recoverErrLoc} that has a unique solution with high probability.
Knowing the error locations and the support $\myspace{E}$, the error vector $\vec{e}$ can be recovered. The detailed description of the cryptosystem is provided in Algorithms~\ref{alg:KeyGenGMRZ13}--\ref{alg:DecryptGMRZ13}.

\printalgoIEEE{
\DontPrintSemicolon
\caption{$\keyGen{\cdot}$~\citep{Gaborit2013-LRPC}}
\label{alg:KeyGenGMRZ13}
\KwIn{Parameters $\param = \left\{q,m,n,k,\lambda,\tpub\right\}$}
\KwOut{Secret key $\sk$, public key $\pk$}
Random $\Fq$-linear subspace $\myspace{H}$ of dimension $\dim(\myspace{V})=\lambda$ \\
$\vec{H}\assignRand\Fqm^{(n-k)\times n}: \rank_{q^m}(\vec{H})=n-k$ and $h_{i,j}\in\myspace{H}$ for all $i\in[1,n-k]$ and $j\in[1,n]$ \\
$\vec{S}\in\GL{k}{\Fqm}$ \\
$\vec{G}_\mycode{C}\assignDet\Fqm^{k\times n}:\rank(\vec{G})=k$ and $\vec{G}\vec{H}^{\top}=\vec{0}$ \\
$\distTrans{\vec{G}_\mycode{C}}=\vec{S}\vec{G}_{\mycode{C}}$ \\
\Return{$\sk\assignDet(\vec{H},\vec{S})$, $\pk\assignDet(\distTrans{\vec{G}_\mycode{C}},\tpub)$}
}

\printalgoIEEE{
\DontPrintSemicolon
\caption{$\enc{\cdot}$~\citep{Gaborit2013-LRPC}}
\label{alg:EncryptGMRZ13}
\KwIn{Plaintext $\plainText\in\Fqm^k$, public key $\pk=(\distTrans{\vec{G}_\mycode{C}},\tpub)$}
\KwOut{Ciphertext $\cipherText\in\Fqm^n$}
$\vec{e}\assignRand\Fqm^n:\rank_q(\vec{e})=\tpub$ \\
$\cipherText\gets\plainText\distTrans{\vec{G}_\mycode{C}}+\vec{e}=\plainText\vec{S}\vec{G}_{\mycode{C}}+\vec{e}$\\
\Return{$\cipherText$}
}

\printalgoIEEE{
\DontPrintSemicolon
\caption{$\dec{\cdot}$~\citep{Gaborit2013-LRPC}}
\label{alg:DecryptGMRZ13}
\KwIn{Ciphertext $\cipherText\in\Fqm^n$, secret key $\sk=(\vec{H},\vec{S})$}
\KwOut{Plaintext $\plainText\in\Fqm^k$}

Decode $\cipherText$ using $\vec{H}$: $\tilde{\plainText}\assignDet\decode{\mycode{C}}{\cipherText}$ \\
$\tilde{\plainText}=\plainText\vec{S}\gets\decode{\mycode{C}}{\cipherText}$ \\

\Return{$\plainText\gets\tilde{\plainText}\vec{S}^{-1}$} \\
}

\subsection{Attack on the QC-LRPC GPT Variant}%

An attack on the QC-LRPC cryptosystem that exploits the block-circulant structure of the parity-check matrix was presented in~\citep{hauteville2015new}.
As a possible repair, the application of so-called \emph{ideal} LRPC codes instead of QC-LRPC codes was suggested by \citet{hauteville2015new,aragon2019low}.
Ideal LRPC codes allow for the same compact code representation as QC-LRPC codes but are not vulnerable to the attack in~\citep{hauteville2015new}.
The current version of the NIST submission ROLLO is based on ideal LRPC codes~\citep{melchor2019rollo}.

In addition to the structural attack by~\citet{hauteville2015new} there is a reaction-based attack on the QC-LRPC GPT cryptosystem that exploits decoding failures that may occur during the decryption process to recover the secret key~\citep{aragon2019key}.
Let $\vec{G}_{LRPC}$ be a generator matrix of the code generated by a parity-check matrix $\vec{H}_{LRPC}$ of a $[\lambda;n,k]$ QC-LRPC code over $\Fqm$ that can correct $\tpub$ errors with high probability.
The main idea of this attack is to challenge the decoder with particularly chosen error patterns and observe if the decoder can decode or not.
The observation of the corresponding decoding failure rate can reveal the structure of the parity-check matrix $\vec{H}_{LRPC}$.
In particular, the failure event that the syndrome does not span the whole product space (which dominates the decoding failure probability for most practical parameters) is exploited.
In this case, the linear system of equations in~\eqref{eq:recoverErrLoc} is rank deficient and thus the error locations cannot be recovered.
Compared to the reaction-based attacks on QC-LDPC/MDPC codes in the Hamming metric~\citep{guo2016key}, the effectiveness of the reaction-based attack in the rank metric is much higher since there exist much more equivalent keys of the form $\vec{W}\vec{H}_{LRPC}$ where $\vec{W}\in\GL{n-k}{\Fq}$ that allow decryption (i.e. efficient decoding).

The reaction-based attack by~\citet{aragon2019key} is evaded in ROLLO by using \emph{ephemeral} keys.

\section{Systems based on the Hardness of List Decoding}\label{ssec:rank_crypto_list_decoding}

\subsection{The Faure--Loidreau Cryptosystem}

The Faure--Loidreau (FL) cryptosystem was proposed by \citet{faure2006new} as the rank-metric analog of \citep{AugotFiniasz-PKC-PolyReconstruction_2003}.
While the Augot--Finiasz cryptosystem is closely connected to (list) decoding Reed--Solomon codes, the
FL cryptosystem is connected to (list) decoding Gabidulin codes.
The FL system was broken for all parameters with a structural attack by \citet{Gaborit_DecodingAttack_2016}, which implicitly used the fact that the public key is a corrupted codeword of an interleaved Gabidulin code.

The security of the original Faure--Loidreau (FL) cryptosystem \citep{faure2006new} is based on Problem~\ref{pro:int-search-rsd}.
However, the relevant parameter $w$ was chosen such that it is easy to attack the system using a decoder for interleaved Gabidulin codes, which is the underlying idea of the attack by \citet{Gaborit_DecodingAttack_2016}.
The repaired Faure--Loidreau system, LIGA \citep{renner2021liga}, relies on a similar problem as Problem~\ref{pro:int-search-rsd}, with an additional restriction on $\vec{X}$, which avoids the attack from \citep{Gaborit_DecodingAttack_2016}.

The system works as follows.
Let $q,m,n,k,u,w,\tpub, \dimZ$ be positive integers that fulfill $k < n\leq m$, $2\leq u < k$,
\begin{align*}
\max \left\{n-k-\frac{k-u}{u-1}, \left\lfloor\frac{n-k}{2} \right\rfloor+1 \right\} &\leq w < \frac{u}{u+2} (n-k),
\end{align*}
and $\tpub = \left\lfloor \tfrac{n-k-w}{2} \right\rfloor$.
We consider three finite fields, $\Fq$, $\Fqm$, and $\Fqmu$, which are extension fields of each other, respectively:
\begin{equation*}
\Fq \subseteq \Fqm \subseteq \Fqmu.
\end{equation*}

The FL key generation is shown in Algorithm~\ref{alg:KeyGenFL}.
The public key consists of the tuple
\begin{align*}
\pk = (\vec{g}, k, \kpub, \tpub),
\end{align*}
where $\vec{g} \in \Fqm^n$ is the defining vector of a Gabidulin code of dimension $k$. By representing the vector
\begin{align*}
\kpub = \vec{x} \cdot \mathbf G_{\mycode{G}} + \vec{z} \in \Fqmu^n
\end{align*}
as a $u \times n$ matrix over $\Fqm$, it is a corrupted codeword of a $u$-interleaved Gabidulin code over $\Fqm$, where the error $\vec{z} = (\vec{s} \ | \ \0) \cdot \vec{P}^{-1}$ is chosen such that the $\Fq$-rank weight of each row of the matrix representation of the error is beyond the unique error correction capability of the Gabidulin code.

The secret key is the tuple $\sk = (\x,\P)$, where $\x$ is the encoded message in $\kpub$ and $\P^{-1}$ is the right factor of the error vector $\z$.

The encryption procedure is described in Algorithm~\ref{alg:EncryptFL}. It consists of encoding a zero-padded version $\m' \in \Fqm^k$ of the secret message $\m \in \Fqm^{k-u}$ with the Gabidulin code w.r.t.\ $\vec{g}$, adding the trace of a scalar multiple of $\kpub$, and an additional error $\e$. The trace of a scalar multiple of $\kpub$ is a corrupted codeword of the Gabidulin code. Hence, the ciphertext is a codeword of the Gabidulin code corrupted by two errors. An attacker is assumed to have no knowledge of the two errors and hence faces the problem of decoding a corrupted codeword of a large error weight.

The legitimate receiver has partial information about the error added through the public key: The first $w$ rows of $\P^{-1}$ are a basis of the row space of the error, which can be seen as an erasure in the rank metric. Hence, the receiver is able to retrieve the message $\m$ by an error-erasure decoder (see, e.g., Section~\ref{sec:decoding_gabidulin_codes}). This decryption procedure is outlined in Algorithm~\ref{alg:DecryptFL}.

The problem of recovering a valid private key from the public key is equivalent to Problem~\ref{pro:int-search-rsd} on page \pageref{pro:int-search-rsd}. For the chosen values of $w$, this problem can be solved for most errors (i.e., with high probability for random errors) using decoders of interleaved Gabidulin codes. This was first (implicitly) realized in \citep{Gaborit_DecodingAttack_2016}, where a structural attack on the Faure--Loidreau system was proposed.

\printalgoIEEE{
	\DontPrintSemicolon
	\caption{$\keyGen{\cdot}$~\citep{faure2006new}}%
	\label{alg:KeyGenFL}
	\KwIn{Parameters $\param = (q,m,n,k,u,w)$} %
	\KwOut{Secret key $\sk$, public key $\pk$}
	$\vec{g} \assignRand \Fqm^n:$ $\rank_q(\vec{g}) = n$ \\
	$\vec{x} \assignRand \Fqmu^k:$ $\{x_{k-u+1},\dots,x_k\}$ is a basis of $\Fqmu$ over $\Fqm$ \\
	$\vec{s} \assignRand \{ \a \in \Fqmu^w \,:\,\rank_{q} (\a) = w \}$ \label{line:keygen_s} \\
	$\vec{P} \assignRand \Fq^{n\times n}:$ $\vec{P}$ invertible \\
	$\mathbf G_{\mycode{G}}\assignDet\Mooremat{k}{q}{\vec{g}}$ \\
	$\vec{z} \assignDet (\vec{s} \ | \ \0) \cdot \vec{P}^{-1}$ \\
	$\kpub \assignDet \vec{x} \cdot \mathbf G_{\mycode{G}} + \vec{z}$ \\
	$\tpub \assignDet \left\lfloor \frac{n-w-k}{2} \right \rfloor$\\
	\Return{$\sk\assignDet(\vec{x},\vec{P})$, $\pk\assignDet(\vec{g}, k, \kpub, \tpub)$}
}

\printalgoIEEE{
	\DontPrintSemicolon
	\caption{$\enc{\cdot}$~\citep{faure2006new}}
	\label{alg:EncryptFL}
	\KwIn{Plaintext $\plainText \in \Fqm^{k-u}$, public key $\pk = (\vec{g}, k, \kpub, \tpub)$}
	\KwOut{Ciphertext $\vec{c} \in \Fqmu^n$}
	$\alpha \assignRand \Fqmu \setminus \{0\}$ \\
	$\vec{e} \assignRand \Fqm^n:$ $\rank_q(\vec{e}) = \tpub$ \\
	$\m' \assignDet (m_1,\dots,m_{k-u},0,\dots,0) \in \Fqm^k$ \\
	$\mathbf G_{\mycode{G}} \assignDet \Mooremat{k}{q}{\vec{g}}$ \\
	$\vec{c} \assignDet \vec{m}\cdot\mathbf G_{\mycode{G}} + \Tr_{q^{mu}/q^m}(\alpha \kpub) +\vec{e}$ \\
	\Return{$\vec{c}$}
}

\printalgoIEEE{
	\DontPrintSemicolon
	\caption{$\dec{\cdot}$~\citep{faure2006new}}
	\label{alg:DecryptFL}
	\KwIn{Ciphertext $\vec{c}$, secret key $\sk = (\vec{x},\vec{P})$}
	\KwOut{Plaintext $\plainText \in \Fqmu^{k-u}$}
	$\c' \assignDet \vec{c}\vec{P}|_{[w+1,n]}$ \\
	$\mycode{\mathcal{G}}' \assignDet$ Gabidulin code generated by $\vec{G}_{\mycode{G}}\vec{P}|_{[w+1,n]}$ \\
	$\plainText^{\prime\prime} \gets$ decode $\vec{c}'$ in $\mycode{\mathcal{G}}'$ (decode ``to message'' using the encoding mapping induced by $\vec{G}_{\mycode{G}}\vec{P}|_{[w+1,n]}$) \label{line:decrypt_decoding} \\
	$\{x_{k-u+1}^\ast,\dots,x_k^\ast\} \assignDet$ dual basis of $\Fqmu$ over $\Fqm$ to $\{x_{k-u+1},\dots,x_k\}$ \\
	$\alpha \gets \sum_{i=k-u+1}^{k}m_i^{\prime\prime}x_i^*$ \label{line:decrypt_alpha} \\
	$\plainText^{\prime} \gets \plainText^{\prime\prime}-\Tr_{q^{mu}/q^m}(\alpha\vec{x})$ \\
	$\plainText \assignDet (m_1',\dots,m_{k-u}')$ \\
	\Return{$\plainText$} \label{line:decrypt_m}
}

\subsection{Repair of FL Cryptosystem: LIGA}

A repair of the FL cryptosystem was proposed in \citep{wachterzeh2018repairing,renner2021liga}.
The resulting system is called LIGA, since it is based on the hardness of \underline{l}ist decoding and \underline{i}nterleaved decoding of \underline{Ga}bidulin codes.
The underyling idea of the repair is the well-known fact that all known decoders for interleaved Gabidulin codes fail for a large enough\footnote{Compared to the number of total possible error matrices, the fraction of matrices for which the decoders fail is small (i.e., the decoders suceed with high probability). However, this fraction of matrices is large enough to make a brute-force search through all of these "failing error matrices" more complex than the work factor of the system.} class of error patterns. Hence, we can modify the FL key generation algorithm to choose only errors for which the decoders fail.
This ensures that the resulting public key is not vulnerable to the structural attack in \citep{Gaborit_DecodingAttack_2016}.
More precisely, for a parameter $\zeta$ with $\zeta< \frac{w}{n-k-w}$ and $\zeta q^{\zeta w-m} \leq \tfrac{1}{2}$, replace Line~\ref{line:keygen_s} of Algorithm~\ref{alg:KeyGenFL} ($\keyGen{\cdot}$) by
{\RestyleAlgo{plain}
\begin{algorithm}%
\DontPrintSemicolon
{
	\setcounter{AlgoLine}{2}
	$\mathcal{A} \assignRand \Big\{ \text{subspace } \mathcal{U} \subseteq \Fqm^w \, : \, \dim \mathcal{U} = \zeta, \, \mathcal{U} \text{ has a basis consisting}$ $\text{only of elements that are $\Fq$-linearly independent} \Big\}$ \\[2ex]
	\setcounter{AlgoLine}{2}
	\SetNlSty{textbf}{}{'}
	$\begin{pmatrix}
	\s_1 \\
	\vdots \\
	\s_u
	\end{pmatrix} \assignRand \left\{ \begin{pmatrix}
	\s_1' \\
	\vdots \\
	\s_u'
	\end{pmatrix} \, : \, \langle\s_1',\dots,\s_u'\rangle_{\Fqm} = \mathcal{A}, \ \rank_{q}(\s_i') = w, \, \forall \, i \right\}$
}
\label{alg:s_construction}
\end{algorithm}
}

It is shown in \citep{renner2021liga} that, under the assumption that two decisional problems, called \textsf{ResIG-Dec} and \textsf{ResG-Dec}, are hard, the public-key encryption version of LIGA is IND-CPA secure in the standard model, and the key encapsulation mechanisms version is IND-CCA2 secure in the random oracle model.

Recently, it was shown in \citep{bombar2021decoding}, by proposing a message recovery attack on the system, that \textsf{ResG-Dec} is in fact not hard.
It is an open problem whether the FL/LIGA system can be protected against this attack by a further modification.

\subsection{RAMESSES}

In \citep{Ramesses}, the rank-metric code-based cryptosystem RAMESSES was presented.
Similar to the system by \citet{faure2006new}, the applied code is public, so the structure of the code does not need to be hidden. However, similar to LIGA \citep{renner2021liga}, it was recently broken by \citep{bombar2021decoding}.

\section{A System based on Rank Quasi-Cyclic Codes}\label{ssec:rank_crypto_rqc}
In this section, the system RQC, an efficient rank-metric encryption scheme from random quasi-cyclic codes is considered \citep{melchor2019rqc}, which was submitted to the NIST standardization competition \citep{melchor2019rqc}.
\subsection{Definitions}
Each element of $\Fqm^{n}$ can be uniquely represented by a polynomial of the ring $\Fqm[X]/\langle P \rangle$, where $\langle P \rangle$ denotes the ideal of $\Fqm[X]$ generated by the polynomial $P \in \Fq[X]$ of degree $n$, i.e.,
\begin{align*}
  \varphi: \Fqm^n &\rightarrow \Fqm[X]/\langle P \rangle \\
  (u_1,\hdots,u_n) &\mapsto \sum_{i=1}^{n} u_i X^{i-1}.
\end{align*}
We define the product of two vectors $\u,\v \in \Fqm^n$ by
\begin{equation*}
  \w = \u\v \in \Fqm^{n} \Leftrightarrow \varphi(\w) = \varphi(\u) \varphi(\v) \mod P.
\end{equation*}
The ideal matrix generated by $\v$ and $P$ is denoted by
\begin{align*}
  \idealCode{\v} \coloneqq
\begin{pmatrix}
    \varphi^{-1} (\varphi(\v) \mod P)\\
    \varphi^{-1} (X \varphi(\v) \mod P) \\
    \vdots \\
    \varphi^{-1} (X^{n-1} \varphi(\v) \mod P)
    \end{pmatrix}.
\end{align*}
The product $\u\v$ can then be written as a vector--matrix multiplication
\begin{equation*}
  \u \v = \u \idealCode{\v} =   \idealCode{\u}^{\top} \v = \v \u.
\end{equation*}

Ideal codes are a family of codes with a systematic parity-check matrix formed by blocks of ideal matrices.
\begin{definition}[$s$-Ideal codes]
  \label{def:idealCode}
  Let $P \in \Fq[X]$ be a polynomial of degree $n$. A parity-check matrix (under the systematic form) of an $s$-ideal code $\mycode{C}_{\mathsf{ideal}}(ns,n)$ is given by
\begin{equation*}
\H = \begin{pmatrix}
& \idealCode{\h_1}^{\top} \\
\I_{n(s-1)} & \vdots \\
& \idealCode{\h_{s-1}}^{\top}
\end{pmatrix}\in \Fqm^{n(s-1)\times ns},
\end{equation*}
where $\h_{i} \in \Fqm^{n}$ for $i=1,\hdots,s-1$. We say that $\h_1,\dots,\h_{s-1}$ generate the systematic parity-check matrix of $\mycode{C}$.
\end{definition}
For an $s$-ideal code with a systematic parity-check matrix $\H$ generated by $\h_1,\dots,\h_{s-1}$, the syndrome $\s = (\s_1,\dots,\s_{s-1}) \in \Fqm^{n(s-1)}$ of an error $\e = (\e_{1},\dots,\e_{s})\in \Fqm^{ns}$ is equal to
\begin{align*}
  \s_{i} &= \e_{i} \H^{\top}  \\
  & =  \e_{i} + \h_{i} \e_{s},
\end{align*}
for $i = 1,\dots,s-1$.

We define
\begin{align*}
\mathcal{S}^{n}_{w} \coloneqq \{\x \in \Fqm^{n}: \rank_{q}(\x) = w  \},
\end{align*}
\begin{align*}
\mathcal{S}^{n}_{1,w} \coloneqq \{\x \in \Fqm^{n}: \rank_{q}(\x) = w, 1 \in \spannedBy{x_1,\hdots,x_n} \},
\end{align*}
and
\begin{align*}
\mathcal{S}^{3n}_{w_1,w_2} \coloneqq \{ \x = (\x_1,\x_2,\x_3) &\in \Fqm^{3n}: \rank_{q}((\x_1,\x_3)) = w_1, \\ &\rank_{q}((\x_2)) = w_1+w_2, \spannedBy{\x_1,\x_3} \subset \spannedBy{\x_2} \}.
\end{align*}

\subsection{Public Key Encryption}
RQC uses two types of codes. The first code is a publicly known Gabidulin code $\Gabcode{n}{k}$ with a generator matrix $\Gabmat \in \Fqm^{k\times n}$ which is generated by $\Gabvec \in \Fqm^{n}$. The second code is a random $2$-ideal code $\Randcode(2n,n)$
ideal $[2n,n]$ code $\Randcode$
with a parity check matrix
\begin{align*}
\Randmat = \begin{pmatrix}
\I_{n} & \idealCode{\Randvec}^{\top}
\end{pmatrix} \in \Fqm^{n\times 2n}.
\end{align*}

\begin{table}[h]
\renewcommand{\arraystretch}{1.6} %
\begin{center}
\begin{tabular}{c|l|l}
Parameter & Stands for & Restriction \\
\hline
$q$ & field order & prime power \\
$m$ & extension degree & $1 \leq m$ \\
$n$ & length of the Gabidulin code & $n \leq m$ \\
$k$ & dimension of the Gabidulin code & $k \leq n$ \\
$w$ & error weight of the public key syndrome & $w\leq \lfloor \frac{n-k}{2} \rfloor$ \\
$\wR$ & error weight of the ciphertext syndrome  & $w\wR \leq \lfloor \frac{n-k}{2} \rfloor$ \\
\end{tabular}
\end{center}
\caption{Parameters of the RQC System}
\label{tab:rqcParameters}
\end{table}

Algorithms \ref{alg:KeyGenRQC_NIST}--\ref{alg:DecryptRQC_NIST} constitute the public key encryption version of the RQC scheme, where a description of the parameters is given in Table~\ref{tab:rqcParameters}.

\printalgoIEEE{
\DontPrintSemicolon
\caption{$\keyGen{\cdot}$~\citep{melchor2019rqc, Aguilar_2018}}
\label{alg:KeyGenRQC_NIST}
\KwIn{Parameters $\param = \left\{n,k,w,\wR, P\in\Fq[X] \text{ is an irreducible polynomial of degree $n$}\right\}$}
\KwOut{Secret key $\sk$, public key $\pk$}
$\Randvec \assignRand \Fqm^{n}$ \\
$\Gabvec \assignRand \mathcal{S}^{n}_{n}$ \\
$(\x,\y) \assignRand \mathcal{S}^{2n}_{1,w}$ \\
$\s \assignDet \x + \y \Randvec \mod P$\\
\Return{$\sk\assignDet(\x,\y)$, $\pk\assignDet(\Gabvec,\Randvec,\s)$}
}

\printalgoIEEE{
\DontPrintSemicolon
\caption{$\enc{\cdot}$~\citep{melchor2019rqc, Aguilar_2018}}
\label{alg:EncryptRQC_NIST}
\KwIn{Plaintext $\plainText\in\Fqm^k$, public key $\pk=(\Gabvec,\Randvec,\s)$}
\KwOut{Ciphertext $\cipherText\in\Fqm^{2n}$}
$(\r_1,\e,\r_2) \assignRand \mathcal{S}^{3n}_{w_1,w_2} $ \\
$\u \assignDet \r_1 + \r_2 \Randvec \mod P$ \\
$\v \assignDet \m\Gabmat + \s \r_2 + \e \mod P$ \\
$\cipherText \assignDet (\u,\v)$ \\
\Return{$\cipherText$ }
}

\printalgoIEEE{
\DontPrintSemicolon
\caption{$\dec{\cdot}$~\citep{melchor2019rqc, Aguilar_2018}}
\label{alg:DecryptRQC_NIST}
\KwIn{Ciphertext $\cipherText = (\u,\v) \in\Fqm^{2n}$, secret key $\sk=(\x,\y)$}
\KwOut{Plaintext $\plainText\in\Fqm^k$}
$\cipherText' \assignDet \v - \u \y$ \\
$\plainText\assignDet\decode{\mycode{G}}{\cipherText'}$ \hfill \tcp{Decode in the Gabidulin code $\Gabcode{n}{k}$ generated by $\Gabmat$ (see Section~\ref{sec:decoding_gabidulin_codes})}

\Return{$\plainText$} \\
}

\subsection{Attacks on the RQC System}
There are two known types of attacks on the RQC system which both decode in a random $2$-ideal code $\Randcode(2n,n)$ for $(\x,\y) \in \mathcal{S}^{2n}_{1,w}$ or in a random $3$-ideal code $\Randcode(3n,n)$ for $(\e,\r_1,\r_2) \in \mathcal{S}^{3n}_{\wR}$.
There is no known attack that utilizes the ideal structure of the code.
Hence, ordinary attacks on the RSD problem give the best-known attacks.
These include combinatorial attacks, such as the one by \citet{aragon2018new}, as well as the algebraic attacks by \citet{bardet2020algebraic,BBC20}.

\section{Parameters of Public-Key Encryption Schemes}\label{ssec:parameters}

Table~\ref{tab:comparison} compares the key sizes for different security levels (pre- and post-quantum) of---as of today---unbroken rank-metric code-based public-key encryption schemes. %

\begin{table*}[!t]
  \begin{center}
    \begin{tabular}{l||c|c|c|c|c }
      System name & $\sk$ & $\pk$ & $\ct$ & Security & DFR \\
      \hline  \hline
      RQC-I & 40 & 1834 & 3652 & 128 bit & no \\
      ROLLO-I-128 & 40 & 696 & 696 & 128 bit & yes \\
      Loidreau-128 & {---} & 6720 & 464 & 128 bit & no \\
      BIKE-2 Level 1 & 249 & 1271 & 1271 &  128 bit & yes \\
      McEliece348864 & 6452 & 261120 & 128 & 128 bit & no \\

      \hline
      RQC-II & 40 & 2853 & 5690 & 192 bit & no \\
      ROLLO-I-192 & 40 & 958 & 958 & 192 bit & yes \\
      Loidreau-192 & {---} & 11520 & 744 & 192 bit & no \\
      BIKE-2 Level 2 & 387 & 2482 & 2482 & 192 bit & yes \\
      McEliece460896 & 13568 & 524160 & 188 & 192 bit & no \\

      \hline
      RQC-III & 40 & 4090 & 8164 & 256 bit & no \\
      ROLLO-I-256 & 40 & 1371 & 1371 & 256 bit & yes \\
      Loidreau-256 & {---} & 16128 & 1024 & 256 bit & no \\
      BIKE-2 Level 3 & 513 & 4094 &  4094 & 256 bit & yes\\
      McEliece6688128 & 13892 & 1044992 & 240 & 256 bit & no  \\
      \hline

    \end{tabular}
  \end{center}
  \caption{Comparison of memory costs of $\sk$, $\pk$ and the ciphertext $\ct$ in Byte of IND-CCA-secure Loidreau~\citep{Bellini2019indcca} and the NIST proposals RQC~\citep{melchor2019rqc}, ROLLO~\citep{melchor2019rollo}, BIKE~\citep{aragon2019bike} and Classic McEliece~\citep{bernstein2019mceliece}. The entry `yes' in the column DFR indicates that a scheme has a decryption failure rate larger than 0.}
  \label{tab:comparison}
\end{table*}

\section{Signature Schemes}\label{ssec:rank_crypto_signatures}
Signature schemes are used to guarantee authenticity and integrity.
Code-based cryptography is almost exclusively known for PKEs and Key Encapsulation Mechanisms (KEMs)
(e.g., Classic McEliece, BIKE, HQC), but not for signature schemes. Three code-based signature
schemes had been submitted to the first round of NIST's process for PQC standardization in
2017 (pqsigRM, RaCoSS, RankSign), but none of these advanced to the third or even the second round as they were all broken.
Since the start of the NIST PQC process, there was however further progress on signatures: on the one hand, from initially very diverse submissions, only lattice-based schemes remained as finalists. On the other hand, cryptographic research has led to new promising schemes, e.g., based on isogenies. %
Code-based signatures have evolved as well. Recent proposals for signature schemes include Wave by \citet{wave} and Durandal by \citet{durandal}. Wave is based on generalized $(U,U+V)$ codes, while Durandal is an adaption of the Schnorr--Lyubashevsky \citep{lyub} approach in the rank metric. While this proposal is considered secure, its Hamming-metric counterpart was attacked in~\citep{lyubattack}.

\chapter{Applications to Storage}\label{chap:storage}

Motivated by the popularity of cloud services, coding theoretic solutions for problems related to storage and distribution of content have surged in interest in recent years. In this chapter we introduce two such settings---locality in distributed storage and coded caching---and discuss the application of rank-metric codes to address these problems. In Section~\ref{sec:locality} we provide a high-level description of the property exploited by many constructions of (MR) LRCs and explore the connection between rank-metric codes and codes with locality, both in the Hamming and the rank metric. In Section~\ref{sec:coded-caching} we present the application of Maximum Rank Distance (MRD) codes in the coded caching scheme by~\citet{TC18}.

\section{Locality in Distributed Data Storage}\label{sec:locality}

The goal of a distributed storage system is to store data such that the failure of a number of nodes is guaranteed not to incur data loss. The simplest solution is replication, where identical copies of the data are stored at each node. While this has the advantage that a failed node is simply recovered by creating another replica, the downside is the significant storage overhead. To reduce this overhead, systems such as Facebook's f4 storage system~\citep{muralidhar2014f4} and the Google File System~\citep{google2010} have transitioned to employing maximum distance separable (MDS) codes. Given a number of node failures that has to be tolerated, the MDS property guarantees a minimal storage overhead. However, if even a single node fails in such a system, which is the most likely failure event, it also implies that a large number of nodes needs to be involved in the repair process. To mitigate this effect, storage codes with locality were introduced\footnote{This should be regarded as a specific notion of locality that caters to the requirements of distributed storage systems. The concept of locality in general has a long history in coding theory and is the underlying property for, e.g., the majority logic decoding algorithm for Reed--Muller (RM) codes~\citep[Ch. 13]{MacWilliamsSloane_TheTheoryOfErrorCorrecting_1988}.}, which enforce linear dependencies between smaller subsets of positions. This suggests a separation of the parity check equations into \emph{local parities}, which are prescribed to have a specific support, and \emph{global parities}, which are unrestricted. The seminal work by \citet{gopalan2012locality} introduced a bound on the minimum Hamming distance of a code with one local parity for every subset of a partition of the code positions and an arbitrary number of global parities. Codes for this setting are commonly referred to as \emph{locally recoverable codes} (LRCs). This class forms a special case of the general definition of \emph{codes for topologies}, formally introduced in \citep{gopalan2014explicit}. There, a topology is defined as a restriction on the support of the parity-check matrix of a code. Given such a topology, a subclass of particular interest are \emph{maximally recoverable} (MR) codes \citep{Huang2007,chen2007maximally}, which guarantee to correct any erasure pattern that is theoretically correctable, given the locality (support) constraints. The special case of MR LRCs \citep{Huang2007, chen2007maximally, blaum2013partial,gopalan2014explicit,balaji2015partial,blaum2016construction,calis2016general,hu2016new,gopalan2017maximally,Horlemann-Trautmann2017,gabrys2018constructions,martinez2019universal}, also referred to as partial MDS (PMDS) codes, has received considerable attention in recent years. Interestingly, many constructions of MR LRCs \citep{blaum2013partial,rawat2014optimal,calis2016general,hu2016new,gabrys2018constructions} and MR codes for other topologies \citep{gopalan2014explicit,holzbaur2021correctable} rely on some variation of MRD codes, specifically Gabidulin codes, as their global parities, even if this connection is often not made explicit.

The first part of this section is dedicated to highlighting this connection between MRD codes and codes with locality, which shows that codes originally designed for the rank metric, also have application in the Hamming metric. Furthermore, we briefly introduce the concept of codes with locality in the rank metric.

\subsection{Codes with Locality in the Hamming Metric}

In this section, we consider codes in the Hamming metric, constructed using MRD codes. For completeness, we briefly recall the definition of the Hamming distance here. A linear $[n,k,\dminHam]_q^{\sfH}$ code $\code$ is a $k$-dimensional subspace of $\F_q^{n}$ and its minimum Hamming distance is defined to be
\begin{align*}
  \dminHam %
  = \min_{\c \in \code \setminus\{\0\}} \wtHam(\c)\ ,
\end{align*}
where $\wtHam(\c)$ is the number of non-zero positions in $\c$. It is well-known \citep[Chapter~17]{MacWilliamsSloane_TheTheoryOfErrorCorrecting_1988}, that any code fulfills the Singleton bound, given by
\begin{equation} \label{eq:singletonBound}
  \dminHam \leq n-k+1
\end{equation}
and codes that meet the bound with equality are referred to as maximum distance separable (MDS) codes.

A code with locality does not only impose distance constraints on the full codewords, but also on subsets of their positions. There are several different notions of codes with locality which can be classified into three main classes. Informally, locally repairable codes (LRCs) require that every position can be recovered from a small subset of other positions. Maximally recoverable (MR) LRCs\footnote{These codes are also referred to as partial MDS codes in literature.} are additionally required to correct any erasure pattern that is possibly correctable given the locality constraints.
Finally, codes for topologies are simply defined by a restriction on the support of a parity-check matrix of the code. The latter were defined by \citet{gopalan2014explicit} and we slightly adapt the definition here to better reflect the common separation of parities in codes with locality into local and global parities.

\begin{definition}[Codes for topologies \citep{gopalan2014explicit}]\label{def:codesForTopologies}
  Let $T \subset ([n-k-\eglob] \times [n])$ be a subset of indices. We say an $[n,k]_q^{\sfH}$ code $\code$ is a \emph{code for the topology} $T$ if there exists a parity-check matrix of $\code$ given by
  \begin{align*}
    \H =
    \begin{pmatrix}
      \Hlocal\\
      \Hglobal
    \end{pmatrix}
  \end{align*}
  with $\Hlocal \in \F_{q}^{n-k-\eglob\times n}$ and $\Hglobal \in \F_q^{\eglob \times n}$ such that $\Hlocal_{i,j} = 0 \ \forall (i,j) \in T$.
\end{definition}

Observe that the restriction on the support of the parity-check matrix given by the topology $T$ is a prerequisite for a code to have locality---for any subset of positions that allows for the recovery of another position, there must exist a codeword in the dual code that is only supported on these positions. Definition~\ref{def:codesForTopologies} includes LRCs and MR LRCs as special cases, where the matrix $\Hlocal$ is given by a block diagonal matrix (up to permutation of the columns).

\begin{example}
  Informally, a code has $r$-locality if any codeword position can be recovered from at most $r$ other codeword positions. Commonly, these \emph{repair sets} are assumed to partition the set of codeword indices. Consider, for example, a code of length $n=12$ and locality $r=3$, with local repair sets given by the partition of $[n]$ into the three sets $\cW_1= [1,4]$, $\cW_2= [5,8]$, and $\cW_3= [9,12]$. A code where each position $\c_i$ for $i\in \cW_j, j\in[1,3]$ can be recovered from the remaining positions $\{\c_l \ | \ l \in \cW_j \setminus \{i\}\}$ in the set $\cW_j$ is obtained by choosing
  \begin{align*}
    \Hlocal =
    \begin{pmatrix}
     1&1&1&1&0&0&0&0&0&0&0&0 \\
     0&0&0&0&1&1&1&1&0&0&0&0 \\
     0&0&0&0&0&0&0&0&1&1&1&1 \\
    \end{pmatrix} \ .
  \end{align*}
  To see that position $i=1$ can be recovered as required, it suffices to rearrange the first parity-check equation given by
  \begin{align*}
    \sum_{l=1}^{4} \c_l = 0 \quad \Leftrightarrow \quad \c_1 = - \sum_{l=2}^{4} \c_l \ .
  \end{align*}
  Obviously, this matrix fulfills the constraints of the topology
  \begin{align*}
    T \! = \! \big\{ (1,l)  |  l \in [5,12]\big\}\! \cup \! \big\{ (2,l)  |  l \in ([1,4] \! \cup \! [9,12])\big\} \! \cup \! \big\{ (3,l) |  l \in [1,8]\big\} \ .
  \end{align*}
\end{example}

Depending on the specific subclass under consideration, the global parities given by $\Hglobal$, which are not restricted in their support, are then used to either maximize the minimum Hamming distance of the code (LRCs) or the number of erasures that can be corrected once the local correction capabilities are exhausted (MR LRCs), where the latter is a strictly stronger property.
The goal of this chapter is to show that MRD codes are a natural choice to provide these global parities, a fact that has been exploited (more or less explicitly) by many of the known constructions, in particular for the stronger notion of locality of MR codes.

To this end, we first establish a well-known connection between the correctability of a given set of erasures and the properties of the generator and parity-check matrix of a linear code.

\begin{proposition}\label{prop:rankErasureCorrection}
  Let $\code$ be an $[n,k,\dminHam]_q^{\sfH}$ code. Denote by $\G$ and $\H$ an arbitrary generator and parity-check matrix of $\code$. Then a set of erasures $\cE \subset [n]$ is correctable if and only if the following equivalent conditions hold:
  \begin{enumerate}
  \item The generator matrix restricted to the non-erased positions is of full rank, i.e.,
    \begin{align*}
      \rank(\G|_{[n]\setminus \cE}) = k \ .
    \end{align*}
  \item The parity-check matrix restricted to the erased positions is of full rank, i.e.,
    \begin{align*}
      \rank(\H|_{\cE}) = |\cE| \ .
    \end{align*}
  \end{enumerate}
\end{proposition}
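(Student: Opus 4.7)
The plan is to first unpack the definition of erasure correctability into a statement about codewords, and then separately show that each of the two matrix-rank conditions is equivalent to this statement. A set $\cE$ is correctable precisely when the projection map $\code \to \F_q^{[n]\setminus\cE}$, $\c \mapsto \c|_{[n]\setminus\cE}$, is injective: knowledge of the unerased coordinates must determine the codeword. By linearity of $\code$, injectivity of this map is equivalent to the condition that no non-zero codeword $\c \in \code$ has $\supp(\c) \subseteq \cE$, equivalently $\c|_{[n]\setminus\cE} = \0$.

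First I would prove the equivalence with (1). Every codeword is of the form $\vec{m}\G$ for a unique $\vec{m} \in \F_q^k$, and $(\vec{m}\G)|_{[n]\setminus\cE} = \vec{m}\big(\G|_{[n]\setminus\cE}\big)$. Hence a non-zero $\c \in \code$ with $\supp(\c) \subseteq \cE$ exists if and only if $\G|_{[n]\setminus\cE}$ has a non-trivial left kernel, i.e., its rows are linearly dependent. Since $\G|_{[n]\setminus\cE}$ has $k$ rows, this is equivalent to $\rank\big(\G|_{[n]\setminus\cE}\big) < k$, so correctability is equivalent to $\rank\big(\G|_{[n]\setminus\cE}\big) = k$.

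Next I would prove the equivalence with (2). A vector $\c \in \F_q^n$ with $\supp(\c) \subseteq \cE$ belongs to $\code$ if and only if $\H\c^\top = \0$, and this equation depends only on the entries of $\c$ indexed by $\cE$ and the corresponding columns $\H|_{\cE}$; explicitly, writing $\c'$ for the $|\cE|$-tuple of entries of $\c$ at positions in $\cE$, membership reads $\big(\H|_{\cE}\big)\c'^\top = \0$. A non-zero such codeword exists precisely when $\H|_{\cE}$ has a non-trivial right kernel, i.e., $\rank\big(\H|_{\cE}\big) < |\cE|$. Thus correctability is equivalent to $\rank\big(\H|_{\cE}\big) = |\cE|$.

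There is no real obstacle here; the argument is routine linear algebra once the definition of correctability is converted into the codeword-support condition. The only thing to be careful about is keeping the conventions consistent (rows of $\G$, columns of $\H$, left vs.\ right kernels) so that the two equivalences are established cleanly in parallel.
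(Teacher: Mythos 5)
Your proof is correct and is the standard argument. The paper itself states this proposition without proof, describing it as a ``well-known connection,'' so there is no proof in the paper to compare against; your reduction of correctability to the absence of a non-zero codeword supported on $\cE$, followed by the left-kernel argument for $\G$ and the right-kernel argument for $\H$, is exactly the textbook route one would expect to be left implicit.
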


\subsection{Global Parities via MRD Codes}

To establish the goal in terms of erasure correction capability when designing codes with locality, we first discuss the maximal improvement achievable by adding global parities. A similar analysis was carried out for binary locality restrictions by \citet{gopalan2014explicit} and for the subclass of grid-like topologies by \citet{holzbaur2021correctable}. Here, we focus on the general definition of locality given in Definition~\ref{def:codesForTopologies} because the advantage of using MRD codes lies in their generality. However, note that the following also directly applies to MR LRCs, for which the set of correctable patterns was derived by \citet{blaum2013partial}.

Consider a code $\code$ as in Definition~\ref{def:codesForTopologies}. Assume the matrix $\Hlocal \in \F_q^{n-k-s}$ is given and fulfills the restrictions imposed by some topology $T$. We denote by $\bbE$ the set of erasure patterns of weight $n-k-s$ that the code spanned by the local parity checks of $\Hlocal$ can correct. By Proposition~\ref{prop:rankErasureCorrection} this set is exactly given by
\begin{align*}
  \bbElocal = \{\cE \subset [n] \ | \ |\cE| = n-k-s, \rank(\Hlocal|_{\cE}) = n-k-s \}
\end{align*}
Note that this set uniquely defines the set of erasure patterns of arbitrary weight correctable in this code, given by all subsets of the elements of $\bbElocal$, i.e., the code can correct all patterns in
\begin{align*}
  \{\cE \subset [n] \ | \ \exists \cE' \in \bbElocal \ \text{s.t. } \cE \subseteq \cE' \} .
\end{align*}
As the correctability of an erasure pattern $\cE'$ directly implies that all its subsets can be corrected, we only consider these \emph{maximal} patterns here.

Similarly, the set of maximal erasure patterns $\bbE$ correctable by the code $\code$ is given by
\begin{align}
  \bbE = \{\cE \subset [n] \ | \ |\cE| = n-k, \rank(\H|_{\cE}) = n-k \} \ . \label{eq:correctableAll}
\end{align}
Observe that
\begin{align*}
  \rank(\H|_{\cE}) =
  \rank \left(\begin{pmatrix}
      \Hlocal|_\cE\\
      \Hglobal|_\cE
    \end{pmatrix} \right) = n-k \ ,
\end{align*}
and the dimensions of the matrices directly imply the necessary condition that $\rank(\Hlocal|_{\cE}) = n-k-s$ for any $\cE \in \bbE$. Trivially, it follows that there exists some $\cE' \subset \cE$ with $|\cE'| = n-k-s$ such that $\rank(\Hlocal|_{\cE'}) = n-k-s$, i.e., the pattern $\cE$ can be written as a union of two disjoint sets $\cE = \cE' \cup \cI$ with $\cE' \in \bbElocal$ and $|\cI| = s$. This implies that
\begin{align}
  \bbE \subseteq \{\cE' \cup \cI \ | \ \cE' \in \bbElocal, \cI \in [n]\setminus \cE', |\cI| = s  \} \ . \label{eq:correctableSubset}
\end{align}
In other words, every erasure pattern correctable by $\bbE$ can be written as the union of a pattern in $\bbElocal$ and $s$ additional positions.

To see that this is exactly the set of correctable patterns, we require the following lemma, which is based on employing MRD codes as the global parities. Note that similar methods have been used in literature to prove properties of codes with locality, see, e.g., \citep[Claim 3]{hu2016new}.

\begin{lemma}\label{lem:independenceMRDParities}
  Let $\A\in \F_q^{a\times n}$ with $\rk(\A)=a$ and $\B \in \F_{q^m}^{b\times n}$ be a parity-check matrix of an $[n,n-b]_{q^m}^{\sfR}$ MRD code, where $b \leq n-a$. Then
  \begin{equation*}
    \rk_{q^m}\left(
      \begin{bmatrix}
        \A \\ \B
      \end{bmatrix} \right) = a+b \ .
  \end{equation*}
\end{lemma}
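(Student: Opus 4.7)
The plan is to prove the statement by contradiction, showing that a non-trivial $\F_{q^m}$-linear dependence between the rows of $\A$ and $\B$ would force a codeword of small rank into the MRD code dual to the one defined by $\B$. First I would note that the statement is equivalent to showing the two row spaces (over $\F_{q^m}$) intersect trivially, since $\A$ has full rank $a$ (its rank over $\F_q$ and over $\F_{q^m}$ coincide because the entries lie in $\F_q$) and $\B$ has full rank $b$.

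Assume for contradiction that there exist $\boldsymbol{\lambda}\in\F_{q^m}^a$ and $\boldsymbol{\mu}\in\F_{q^m}^b$, not both zero, with $\boldsymbol{\lambda}\A+\boldsymbol{\mu}\B=\0$. If $\boldsymbol{\mu}=\0$, then $\boldsymbol{\lambda}\A=\0$ forces $\boldsymbol{\lambda}=\0$ by the full rank of $\A$, a contradiction. So $\boldsymbol{\mu}\neq\0$, and hence $\boldsymbol{\mu}\B\neq\0$ since $\B$ has full row rank. The rows of $\B$ span the dual of the MRD code defined by $\B$; since the dual of an MRD code is again MRD (a classical fact due to Delsarte, \citep{Delsarte_1978}), this dual is an $[n,b]_{q^m}^{\sfR}$ MRD code of minimum rank distance $n-b+1$. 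Therefore the nonzero vector $-\boldsymbol{\mu}\B$ has $\F_q$-rank at least $n-b+1$.

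The crux of the argument is then to bound the $\F_q$-rank of $\boldsymbol{\lambda}\A$ from above by $a$. Fix an $\F_q$-basis $\beta_1,\dots,\beta_m$ of $\F_{q^m}$ and write $\lambda_i=\sum_{j=1}^m\lambda_{i,j}\beta_j$ with $\lambda_{i,j}\in\F_q$. Then
\begin{equation*}
\boldsymbol{\lambda}\A=\sum_{i=1}^a\lambda_i\vec{a}_i=\sum_{j=1}^m\beta_j\Bigl(\sum_{i=1}^a\lambda_{i,j}\vec{a}_i\Bigr),
\end{equation*}
where $\vec{a}_i\in\F_q^n$ are the rows of $\A$. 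Thus the $m$ rows of the matrix representation $\extsmallfield_{\boldsymbol{\beta}}(\boldsymbol{\lambda}\A)\in\F_q^{m\times n}$ all lie in $\spannedBy{\vec{a}_1,\dots,\vec{a}_a}$, so its $\F_q$-rank is at most $a$. Combining the two bounds yields $n-b+1\leq\rk_q(-\boldsymbol{\mu}\B)=\rk_q(\boldsymbol{\lambda}\A)\leq a$, i.e., $a+b\geq n+1$, which contradicts the hypothesis $b\leq n-a$.

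The main (minor) obstacle is the rank bound of the third paragraph, which cleanly captures why $\F_q$-valued matrices behave as if they contributed only $a$ ``rank-degrees of freedom'' even when combined with coefficients from the larger field $\F_{q^m}$; everything else is routine, relying only on the MRD property of $\B$'s dual and on the full row ranks of $\A$ and $\B$.
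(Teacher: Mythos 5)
Your proof is correct and follows essentially the same route as the paper's: both arguments hinge on showing that every element of $\myspan{\A}_{\Fqm}$ has $\Fq$-rank at most $a$ while every nonzero element of $\myspan{\B}_{\Fqm}$ has $\Fq$-rank at least $n-b+1 \geq a+1$ (via the MRD property of the dual), forcing the row spans to intersect trivially. You phrase it as a contradiction and spell out the key $\Fq$-rank bound on $\boldsymbol{\lambda}\A$ in more detail than the paper does, but the underlying idea is identical.
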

\begin{proof}
  The ranks of $\A,\B$ sum up if their $\Fqm$-row spans intersect trivially. First, consider an element in the row space $\myspan{\A}_{\Fqm}$ given by
  \begin{equation*}
    \v =\u \cdot \A
  \end{equation*}
  with $\u \in \Fqm^{a}$. Clearly, $\rk_{\Fq}(\u) \leq a$ and since $\A$ is in $\Fq$, we have $\rk_{\Fq}(\v) \leq a$.

  Now consider $\myspan{\B}_{\Fqm}$. As $\B$ is the parity-check matrix of an $[n,n-b]_{q^m}^{\sfR}$ MRD code, its row span is an $[n,b]_{q^m}^{\sfR}$ MRD code. Therefore, every non-zero element $\w \in \myspan{\B}$ is of $\Fq$-rank
  \begin{align*}
\rk_{\Fq}(\w) \geq n-b+1 \geq n-(n-a)+1 = a+1 \ .
  \end{align*}
  Hence, the $\Fqm$-row spans of $\A$ and $\B$ intersect trivially and the lemma statement follows.
\end{proof}

This lemma directly implies a construction of a code for the topology $T$ that is optimal in terms of its ``global'' erasure correction capability, i.e., given the matrix $\Hlocal$ it corrects any pattern that is possibly correctable by adding $s$ global parities. %

\begin{construction} \label{con:constructionMRDTopology}
  Given a matrix $\Hlocal \in \F_q^{n-k-s\times n}$ that fulfills the constraints imposed by the topology $T$, define $\code$ to be the code spanned by
  \begin{align*}
    \H =
    \begin{pmatrix}
      \Hlocal\\
      \Hglobal
    \end{pmatrix} \ ,
  \end{align*}
  where $\Hglobal \in \F_{q^m}^{s \times n}$ is the parity-check matrix of an $[n,n-s]_{q^m}^{\sfR}$ MRD code.
\end{construction}

Now consider an erasure pattern
\begin{align*}
  \cE \in \{\cE' \cup \cI \ | \ \cE' \in \bbElocal, \cI \in [n]\setminus \cE', |\cI| = s  \}
\end{align*}
and a code $\code$ as in Construction~\ref{con:constructionMRDTopology}. It is easy to see that $\Hglobal|_{\cE}$ spans a $[k+s,k]_{q^m}^{\sfR}$ MRD code. Furthermore, by \eqref{eq:correctableAll} the matrix $\Hlocal|_{\cE}$ is of full-rank.
It follows directly from Lemma~\ref{lem:independenceMRDParities} that $\H|_{\cE}$ is of full rank and therefore that $\code$ can correct the pattern $\cE$. Hence, the set of correctable patterns is exactly the set give in \eqref{eq:correctableSubset}, i.e.,
\begin{align*}
  \bbE = \{\cE' \cup \cI \ | \ \cE' \in \bbElocal, \cI \in [n]\setminus \cE', |\cI| = s  \} \ .
\end{align*}
In particular, notice that if $\Hlocal$ spans a code that is MR for the respective topology \emph{without} global redundancy, i.e., a code that can correct any erasure pattern that is correctable given the support constraints on $\Hlocal$ imposed by the topology $T$, Construction~\ref{con:constructionMRDTopology} gives a code that is MR for the topology $T$ including the ``global'' unrestricted parities in $\Hglobal$.

The construction given in Construction~\ref{con:constructionMRDTopology} is intended as a proof of concept on how MRD codes can be applied to construct codes with locality. It is easy to see that requiring $\Hglobal$ to span an $[n,n-s]_{q^m}^{\sfR}$ MRD code is a stronger property than required for the application of Lemma~\ref{lem:independenceMRDParities}. Instead it suffices that for any $\cE \in \bbE$ the code spanned by $\Hglobal|_{\cE}$ is an $[k+s,k]_{q^m}^{\sfR}$ MRD code.

On a high level, the constructions by \citet{blaum2013partial,rawat2014optimal,calis2016general,hu2016new,gopalan2014explicit,holzbaur2021correctable} are based on ensuring this property (or a similar property on the generator matrix), either by explicitly making use of the structure of $\Hlocal$ in the case of MR LRCs or employing more generic methods that ensure that \emph{any} subset of $k+s$ positions spans an MRD code. While there are constructions of MR LRCs resulting in lower field size, e.g., based on linearized RS code \citep{martinez2019universal}, employing MRD codes can also allow for providing other desired properties, such as the possibility of accommodating array codes \citep{rawat2014optimal} or regeneration properties \citep{holzbaur2020partial}. However, the details of these applications are beyond the scope of this survey.
\subsection{Codes with Locality in the Rank Metric}
The previous sections were concerned with the construction of codes with locality in the Hamming metric, which is the metric best-motivated by storage applications. However, the concept of locality has also been considered in the rank metric by \citet{kadhe2019codes}.
\begin{definition}[Rank-locality {\citep[Definition~2]{kadhe2019codes}}]\label{def:ranklocality}
An $[n,k,d]^R_{q^m}$ code $\mathcal{C}$ is said to have $(r,\rho)$ rank-locality, if for every column $i \in[n]$, there exists a set $\Gamma(i)\subset[n]$ of indices such that
\begin{itemize}
    \item $i\in\Gamma(i)$,
    \item $|\Gamma(i)|\leq r+\rho -1$, and
    \item $\dminRank(\mathcal{C}|_{\Gamma(i)}) \geq \rho$.
\end{itemize}
\end{definition}
Similar to the Singleton-like bound in the Hamming metric derived by \citet{gopalan2012locality} and generalized by \citet{kamath2014codes}, \citet{kadhe2019codes} proves a bound on the minimum rank-distance of codes with locality in the rank metric independent of the field size.
\begin{theorem}[Bound on the rank-distance of codes with rank-locality~{\citep[Theorem~1]{kadhe2019codes}}]\label{thm:rankbound}
For any $[n,k,d]^\sfR_{q^m}$ code $\mathbb{C}$ with rank-locality $(r,\rho)$, the minimum rank-distance $d_R$ is bounded by
\begin{equation}\label{eq:rank_loc_bound}
    \dminRank(\mathbb{C})\leq n-k+1-\left(\ceil{\frac{k}{r}}-1\right) (\rho - 1).
\end{equation}
\end{theorem}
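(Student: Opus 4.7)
The plan is to adapt the proof technique of \citet{gopalan2012locality} and its extension to $(r,\rho)$-locality by \citet{kamath2014codes} from the Hamming-metric setting to the rank-metric. The key enabler is the elementary inequality $\rank_q(\vec{c}) \leq \wtHam(\vec{c})$ for any $\vec{c} \in \Fqm^n$, since the rank of a matrix is at most the number of its non-zero columns. Consequently, a nonzero codeword of small Hamming weight is automatically of small rank weight, so it suffices to exhibit a set $S \subseteq [n]$ with $\dim \mathcal{C}|_S \leq k-1$ and $|S| \geq k-1+(\lceil k/r \rceil - 1)(\rho-1)$: such an $S$ forces the existence of a nonzero codeword vanishing on $S$, of rank weight at most $n-|S|$, which yields \eqref{eq:rank_loc_bound}.

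I would construct $S$ iteratively. Put $S_0 = \emptyset$. For $\ell = 1,\dots,\lceil k/r \rceil - 1$, pick an index $j_\ell \in [n]\setminus S_{\ell-1}$ whose column in a generator matrix $\vec{G}$ is not spanned by the columns indexed by $S_{\ell-1}$; such a $j_\ell$ exists because the dimension bound shown below gives $\dim \mathcal{C}|_{S_{\ell-1}} \leq (\ell-1)r \leq k-1 < k$. Then set $S_\ell = S_{\ell-1} \cup \Gamma(j_\ell)$, using the local repair set $\Gamma(j_\ell)$ guaranteed by $(r,\rho)$ rank-locality. If $\dim \mathcal{C}|_{S_{\lceil k/r \rceil - 1}} < k-1$, I would append single positions (each raising both $|S|$ and $\dim \mathcal{C}|_S$ by one) until the dimension equals $k-1$, which preserves the inequality accumulated below.

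For the analysis, I would apply Theorem~\ref{theo:Singleton-like-bound} to the restricted code $\mathcal{C}|_{\Gamma(j_\ell)}$, which has length $|\Gamma(j_\ell)| \leq r+\rho-1$ and minimum rank-distance $\geq \rho$, yielding
\begin{equation*}
\dim \mathcal{C}|_{\Gamma(j_\ell)} \leq |\Gamma(j_\ell)| - \rho + 1 \leq r.
\end{equation*}
Combined with the submodularity of the matroid rank function $S \mapsto \dim \mathcal{C}|_S$ (the column matroid of $\vec{G}$), this yields the dimension bound $\dim \mathcal{C}|_{S_\ell} \leq \ell r$ by induction, and the per-step size-versus-dimension estimate
\begin{equation*}
(|S_\ell| - |S_{\ell-1}|) - (\dim \mathcal{C}|_{S_\ell} - \dim \mathcal{C}|_{S_{\ell-1}}) \geq \rho - 1.
\end{equation*}
Summing over $\ell = 1,\dots,\lceil k/r\rceil - 1$ and accounting for the final single-position extensions delivers $|S| \geq (k-1) + (\lceil k/r\rceil - 1)(\rho-1)$, completing the argument.

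The main obstacle is proving the per-step inequality above. The direct submodularity argument gives only
\begin{equation*}
(|S_\ell| - |S_{\ell-1}|) - (\dim \mathcal{C}|_{S_\ell} - \dim \mathcal{C}|_{S_{\ell-1}}) \geq (\rho-1) - (|W_\ell| - \dim \mathcal{C}|_{W_\ell}),
\end{equation*}
where $W_\ell = S_{\ell-1} \cap \Gamma(j_\ell)$, so the clean $\rho-1$ gap is attained only when the overlap $W_\ell$ consists of linearly independent columns. The fix, which is already present in the Hamming-metric proof, is to choose $j_\ell$ carefully (so that $\Gamma(j_\ell)$ meets $S_{\ell-1}$ only in an independent set), or equivalently, to track cumulative nullity $|S_\ell| - \dim \mathcal{C}|_{S_\ell}$ directly and exploit its superadditivity under disjoint unions together with a careful handling of overlaps. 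Because this obstacle is purely combinatorial---it never appeals to the Hamming structure---the resolution transfers verbatim to the rank-metric setting.
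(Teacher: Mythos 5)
Your overall plan matches the proof technique of \citet{kadhe2019codes} (which the survey cites but does not reproduce): reduce to the Hamming metric via $\rank_q(\c)\leq\wtHam(\c)$, iteratively build a large index set $S$ of small column rank, and then exhibit a low-Hamming-weight codeword vanishing on $S$. You also correctly pin down where the naive submodularity argument leaves a deficit. However, neither of the two fixes you offer closes that gap. The first fix --- choosing $j_\ell$ so that $\Gamma(j_\ell)\cap S_{\ell-1}$ is independent --- is simply not available: once $S_{\ell-1}$ contains a dependent subset of some repair group, there is no reason a suitable $j_\ell$ should exist, and the locality hypotheses do not supply one. The second fix (``track cumulative nullity and handle overlaps carefully'') is not an argument: supermodularity of $S\mapsto|S|-\dim\mathcal{C}|_S$ produces exactly the same deficit term $-\bigl(|W_\ell|-\dim\mathcal{C}|_{W_\ell}\bigr)$, so ``careful handling'' is precisely where the missing idea has to live.

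The missing idea is that you discarded most of the locality hypothesis by passing to its Singleton consequence. From $d_{\sfR}(\mathcal{C}|_{\Gamma(j_\ell)})\geq\rho$ you only retained $\dim\mathcal{C}|_{\Gamma(j_\ell)}\leq|\Gamma(j_\ell)|-\rho+1$, which says that \emph{some} $\rho-1$ positions of $\Gamma(j_\ell)$ are redundant. What the per-step estimate actually needs --- and what $d_{\sfR}\geq\rho$ gives for free via $d_{\sfH}\geq d_{\sfR}\geq\rho$ --- is that \emph{any} $\rho-1$ positions are redundant: $\dim\mathcal{C}|_{\Gamma(j_\ell)\setminus Y}=\dim\mathcal{C}|_{\Gamma(j_\ell)}$ for every $Y\subseteq\Gamma(j_\ell)$ with $|Y|\leq\rho-1$, since no nonzero codeword of $\mathcal{C}|_{\Gamma(j_\ell)}$ can be supported inside such a $Y$. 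With this, the per-step inequality follows with no assumption on the overlap. Set $T_\ell=\Gamma(j_\ell)\setminus S_{\ell-1}$. If $|T_\ell|\leq\rho-1$, taking $Y=T_\ell$ shows that the generator-matrix columns indexed by $\Gamma(j_\ell)\cap S_{\ell-1}\subseteq S_{\ell-1}$ already span the column at $j_\ell$, contradicting $j_\ell\notin\mathrm{span}(S_{\ell-1})$; hence $|T_\ell|\geq\rho$. Now pick any $Y\subseteq T_\ell$ with $|Y|=\rho-1$; the columns of $\Gamma(j_\ell)\setminus Y$ span the same space as those of $\Gamma(j_\ell)$, so
\begin{align*}
\dim\mathcal{C}|_{S_\ell}
=\dim\mathcal{C}|_{S_{\ell-1}\cup(\Gamma(j_\ell)\setminus Y)}
=\dim\mathcal{C}|_{S_{\ell-1}\cup(T_\ell\setminus Y)}
\leq\dim\mathcal{C}|_{S_{\ell-1}}+|T_\ell|-(\rho-1),
\end{align*}
which is exactly the estimate you wanted. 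The reduction to Hamming weight, the iteration count $\ceil{k/r}-1$, and the final padding to column rank $k-1$ are all sound; only this per-step lemma needed strengthening, and the strengthening comes from using the full force of $d_{\sfR}(\mathcal{C}|_{\Gamma(j_\ell)})\geq\rho$ rather than its Singleton shadow.
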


Along with this Singleton-like bound on the distance, \citep{kadhe2019codes} presented a code construction that achieves this bound with equality. This construction can be viewed as the skew-analog of the construction of Singleton-optimal codes with locality in the Hamming metric given in~\cite{tamo2014family} (see also \citep[Section~III.C]{kadhe2019codes}), replacing the use of RS codes with Gabidulin codes. %

\begin{definition}[Code with rank-locality {\citep[Construction~1]{kadhe2019codes}}]\label{cons:codeRankLocality}
  Let $m,n,k,r,$ and $\rho$ be positive integers such that $r|k$, $(r+\rho-1)|n$, and $n|m$. Define $\mu=n/(r+\rho-1)$. Let $\cA = \{\alpha_1,\ldots,\alpha_{r+\delta-1}\}$ be a basis of $\F_{q^{r+\delta-1}}$ over $\Fq$ and $\cB=\{\beta_1,\ldots,\beta_{\mu}\}$ be a basis of $\F_{q^n}$ over $\F_{q^{r+\rho-1}}$. For $1\leq j \leq \mu$ define $\cP_j = \{\alpha_i\beta_j\ | \ i\in [r+\rho-1]\}$ and $\cP \coloneqq \bigcup_{j=1}^\mu \cP_j$.

  Define the code
  \begin{align*}
    \mycode{C} = \left\{ \big( f_\m(\gamma)\big)_{\gamma\in \cP} | \m \in \F_{q^m}^k \right\}
  \end{align*}
  with
  \begin{align*}
    f_\m(x) = \sum_{i=0}^{r-1} \sum_{j=0}^{\frac{k}{r}-1} m_{i+jr} x^{[(r+\delta-1)j+i]} \ .
  \end{align*}
\end{definition}

\begin{theorem}[{\citep[Theorem~2]{kadhe2019codes}}]
  The $[n,k]_{q^m}^R$ code of Definition~\ref{cons:codeRankLocality} has $(r,\rho)$ rank-locality and fulfills the bound of Theorem~\ref{thm:rankbound} with equality.
\end{theorem}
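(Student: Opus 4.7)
The plan is to verify the two required properties separately: that the code has $(r,\rho)$ rank-locality, and that its minimum rank distance meets the bound of \Cref{thm:rankbound} with equality. For the locality requirement, I would take the repair set $\Gamma(\alpha_i\beta_j)\coloneqq\cP_j$ for the coordinate indexed by $\alpha_i\beta_j$; the sets $\cP_1,\dots,\cP_\mu$ partition the $n$ coordinates into blocks of size exactly $r+\rho-1$, so the size condition of \Cref{def:ranklocality} holds by construction.

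The key observation for the local distance is that $\alpha_i\in\F_{q^{r+\rho-1}}$ implies $\alpha_i^{[s]}=\alpha_i$, where $s\coloneqq r+\rho-1$, so Frobenius exponents of the $\alpha_i$ collapse modulo $s$. Exploiting this,
\begin{equation*}
f_\m(\alpha_i\beta_j)=\sum_{i'=0}^{r-1}\sum_{j'=0}^{k/r-1}m_{i'+j'r}\,\alpha_i^{[i']}\,\beta_j^{[sj'+i']}=\sum_{i'=0}^{r-1}\tilde m_{i',j}\,\alpha_i^{[i']},
\end{equation*}
with $\tilde m_{i',j}\coloneqq\sum_{j'=0}^{k/r-1}m_{i'+j'r}\beta_j^{[sj'+i']}\in\F_{q^m}$. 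Hence $\mathcal C|_{\cP_j}$ is contained in the Gabidulin code of length $s$ and dimension $r$ with $\F_q$-linearly independent evaluation points $\alpha_1,\dots,\alpha_s$, which has minimum rank distance $s-r+1=\rho$ by the MRD property.

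For the global distance, I would first check that $\cP$ is an $\F_q$-basis of $\F_{q^n}$, by combining that $\cB$ is an $\F_{q^s}$-basis of $\F_{q^n}$ with $\cA$ being an $\F_q$-basis of $\F_{q^s}$. Consequently, for any codeword $\c=(f_\m(\gamma))_{\gamma\in\cP}$ with $\m\neq\0$, the $\F_q$-linearity of $f_\m$ gives
\begin{equation*}
\rk_q(\c)=\dim_{\F_q}f_\m(\F_{q^n})=n-\dim_{\F_q}\bigl(\ker f_\m\cap\F_{q^n}\bigr).
\end{equation*}
Since $f_\m$ has $q$-degree at most $s(k/r-1)+(r-1)=s(k/r)-\rho$, its root space in the algebraic closure has $\F_q$-dimension bounded by the same quantity, so
\begin{equation*}
\rk_q(\c)\geq n-s(k/r)+\rho=n-k+1-(k/r-1)(\rho-1),
\end{equation*}
matching the upper bound of \Cref{thm:rankbound}. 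The same $q$-degree estimate shows the encoding map $\m\mapsto\c$ is injective, so $\dim_{\F_{q^m}}\mathcal C=k$ as claimed.

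The only nontrivial ingredient is the Frobenius-periodicity $\alpha_i^{[s]}=\alpha_i$, which collapses the apparently high $q$-degree of $f_\m$ on each block $\cP_j$ down to at most $r-1$ and thereby turns every local restriction into a genuine Gabidulin code; once this is in place, both the locality and the minimum-distance bound reduce to classical facts about linearized polynomials from \Cref{sec:lin-poly,sec:gabidulin}.
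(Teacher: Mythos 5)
Your proof is correct. The paper itself omits the argument, saying only that it mirrors the Tamo--Barg proof via a polynomial that is ``constant'' on each $\cP_j$; your use of the Frobenius periodicity $\alpha^{[r+\rho-1]}=\alpha$ for $\alpha\in\F_{q^{r+\rho-1}}$ --- which collapses each local restriction to a $q$-degree-$(r-1)$ Gabidulin evaluation on $\cA$, combined with the degree bound $\deg_q f_\m\leq (r+\rho-1)\frac{k}{r}-\rho$ for the global distance --- is exactly the rank-metric realization of that idea, and both computations check out (the only implicit assumption being that this degree is less than $n$, i.e.\ that the distance bound is positive, which is needed for injectivity of the encoding map and is inherent in the statement).
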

The idea of the proof of this statement is similar to the corresponding proof in the Hamming metric in \citep{tamo2014family} and based on determining a polynomial that is constant on all elements of $\cP_j$ for each $j\in[r+\rho-1]$. For sake of brevity, we omit the proof of this theorem and refer the interested reader to \citep{kadhe2019codes}.

In \cite{kadhe2019codes} it is further shown that lifting (see Definition~\ref{def:generalizedLifting} on Page~\pageref{def:generalizedLifting}) the codes of Definition~\ref{cons:codeRankLocality} results in codes with locality in the subspace metric.

\section{Coded Caching Scheme with MRD Codes}
\label{sec:coded-caching}

Caching is a commonly-used data management strategy to reduce the communication load during the peak-traffic time where the terminals of the communication system are equipped with local caches.

\subsection{System Description}
Consider a cache-aided broadcast system consists of a transmitting server which has access to a library of $N$ files $W_1,\dots,W_N$, and $K$ users, where each user has a cache that can store $M$ files (see Figure~\ref{fig:caching} for an illustration). The shared links between the server and users are error-free.
The communication is composed of two stages:
\begin{enumerate}
\item \emph{Placement Phase}: The users fill their caches $Z_1,\dots, Z_K$ with (coded) file segments from the library according to a placement protocol. The communication cost in this phase is negligible.
\item \emph{Delivery Phase}: The users reveal their demands $d_1,\dots,d_K$ and the server transmits a message $X_{d_1,\dots,d_K}$ so that each user $k\in[1,K]$ can recover its demanded file $W_{d_k}$ according to $X_{d_1,\dots,d_K}$ and its cache content $Z_k$, $\forall k \in[1,K]$.
\end{enumerate}
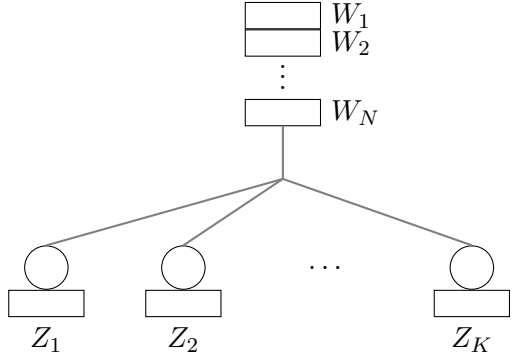
\begin{figure}[htb!]
  \centering
  \def\x{0.7}%

\begin{tikzpicture}
  [font=\normalsize,>=stealth',
  mycircle/.style={circle, draw=black, text width=.1em, minimum height=1.5em, text centered},
  mylink/.style={color=gray, thick},
  myblock/.style={rectangle, draw=black, minimum height=0.35cm, minimum width=1cm},]

  \coordinate (Source) at (0*\x,4*\x);
  \node[myblock, label=right:{$W_1$}] (W1) {};
  \node[myblock, below = \x*0pt of W1, label=right:{$W_2$}] (W2) {};
  \node[below = \x*5pt of W1] (dots) {$\vdots$};
  \node[myblock, below = \x*0pt of dots, label=right:{$W_N$}] (WN) {};  

  \node[below = \x*1cm of WN] (Mid) {};

  \node[mycircle,below left = \x*1cm and \x*4cm of Mid] (Z1) {};
  \node[myblock, below = \x*0pt of Z1, label=below:{$Z_1$}] (M1) {};
  \node[mycircle,right = \x*50pt of Z1] (Z2) {};
  \node[myblock, below = \x*0pt of Z2, label=below:{$Z_2$}] (M2) {};
  \node[right = \x*50pt of Z2] (Zdot) {$\dots$};
  \node[mycircle,right = \x*50pt of Zdot] (ZK) {};
  \node[myblock, below = \x*0pt of ZK, label=below:{$Z_K$}] (MK) {};

  \path[] (WN) edge[mylink] (Mid);

  \path[] (Mid.north) edge[mylink] (Z1.north)
  edge[mylink] (Z2.north)
  edge[mylink] (ZK.north)
  ;

\end{tikzpicture}
  \caption{Illustration of a Cache-Aided Broadcast System.}
  \label{fig:caching}
\end{figure}
\citet{MAN14}, proposed a \emph{coded caching} scheme that outperforms uncoded caching, where in the delivery phase the requested files (or segments of files) which are not cached are sent to each user individually. In the work by~\citet{MAN14} and further improved scheme by~\citet{YMA18}, binary coding is used in the communication phase while non-coded file segments are stored in the local caches.
Rank-metric codes are used in the coded caching scheme with coded placement~\citep{TC18}. This scheme is shown to outperform the optimal scheme with uncoded placement~\citep{YMA18} in the regime of small cache size.

For schemes with uncoded placement, the principle for designing the placement is that coded multicasting opportunities are created simultaneously for all possible requests in the delivery phase. When the placement applies coding, in addition to the principle above, the coding coefficients should be chosen such that a set of full rank conditions are satisfied, which guarantees that all the file segments that are part of the linear combinations of the cached symbols can be decoded after receiving sufficient coded symbols in the delivery phase~\citep[Section IV.A]{TC18}.
However, specifying the coding coefficients for generic parameters turns out to be difficult. \citet{TC18} resolved the issue by a combination of rank metric codes (used in placement phase) and MDS codes (used in delivery phase), which provides an explicit solution apart from an existence proof by~\citet{MedardKoetterKargerEffrosShiLeong-RLNCMulticast_2006}. %

\subsection{Coded Placement Scheme with Rank-Metric Codes}
In the following, we describe the scheme from \citep{TC18} with a focus on the placement phase as it uses rank-metric codes.

Fix an integer parameter $1\leq t\leq K$
and partition each file $W_n$ into $\binom{K}{t}$ subfiles $W_{n,  \mathcal{S}}$, where $\mathcal{S} \subseteq [1,K], |\mathcal{S} |=t$.

A Gabidulin code $\mycode{G}(P_0,P)$ is used to encode the cached symbols, where
$$P = \binom{K-1}{t-1}N$$
and
$$P_o = 2\binom{K-1}{t-1}N-\binom{K-2}{t-1}(N-1) .$$ The code locators are denoted by $g_1, g_2,\dots,g_{P_0}\in\mathbb{F}_{q^m}$ and are linearly independent in $\mathbb{F}_{q}$. %

The cache $Z_k$ of each user $k$ is filled as follows:
\begin{enumerate}
\item Collect $P$ subfiles $$ \{ W_{n,  \mathcal{S}}:  \text{ for all }n\in [1,N]\text{ and }\mathcal{S}\text{ such that }k\in \mathcal{S}\}.$$
  Interpret each subfile as an element in $\Fqm$. Denote by $v_i$ the $i$-th subfile in this set, for all $i\in[1,P]$. %
\item Encode the subfiles by evaluating the linearized polynomial
  $$f(x)=\sum_{i=1}^{P} v_i x^{q^{i-1}},  v_i\in \mathbb{F}_{q^m} $$
  at the last $P_0-P$ code locators $g_{P_0-P+1}, \dots,g_{P_0}\in\mathbb{F}_{q^m}$ and place these coded symbols into the cache. %

\end{enumerate}
It can be seen that the cached content are the linear combination of subfiles of different files. Once the demand $d_k$ of the user $k$ is known, the subfiles of $W_n, n\neq d_k$, which are components in the linear combination, will be seen as interference to the subfiles of $W_{d_k}$. The strategy of a delivery scheme is sending symbols to eliminate this interference as well as sending the remaining subfiles of $W_{d_k}$ which are not part of the linear combinations stored in the cache $Z_k$.
The delivery scheme by~\citet{TC18} utilizes MDS codes and consists of three main steps:
\begin{enumerate}
\item For each file $W_n$, send the uncoded subfiles that are \emph{only} cached by the users who do not request $W_n$.
\item For each file $W_n$, collect all subfiles cached by the users of whom \emph{some} do not request file $W_n$ (skipping the subfiles that are already sent in Step 1), encode them with a systematic MDS code over $\Fq$ and send the parity-check symbols. %
\item For each file $W_n$, collect all subfiles cached by the users of whom \emph{all} request $W_n$, encode them with a systematic MDS code over $\Fq$ and send the parity-check symbols. %
\end{enumerate}
The length of the MDS code in Step 2 is chosen such that each user receives $2P-P_0$ symbols which are linear combinations of the same subfiles as its cached symbols which are $\Fq$-linearly independent of the cached symbols. %
Therefore, after Step 1 and 2, each user can eliminate the interference and decode all the subfiles of the requested file which are part of the basis of its cached symbols, due to the fact it has collected $P$ $\Fq$-linear combinations of $P$ subfiles. The purpose of Step 3 is to guarantee that each user decodes the subfiles of its requested file which are not part of the linear combinations of its cached symbols. The length of the MDS in this step is therefore chosen to guarantee that all users receive sufficient linear independent symbols to achieve that.

The sketch above is a description of the delivery scheme
for the case where all files are requested. For details and a variation for the case where some files are not requested, we refer interested readers to the original work~\citep{TC18}.

\chapter{Applications to Network Coding}
\label{chap:network_coding}
Network coding is an elegant technique introduced to improve network throughput and performance. With its simple premise that intermediate nodes in the network can process incoming packets instead of only forwarding them, algebraic coding is a very straightforward approach to cope with this problem.
In this chapter we introduce the applications of rank-metric codes in finding solutions for deterministic networks and error correction in random networks.
In Section~\ref{sec:NC_intro} we present several classifications of network coding problems.
Section~\ref{sec:GeneralizedCombinationNetworks} presents a class of constructions based on MRD codes for a class of deterministic multicast networks, which guarantee that all the receivers decode all the messages. Two error models in networks that are often considered are described in Section~\ref{sec:error-model}. Section~\ref{sec:subspace-codes} introduces subspace codes, with the focus on the constructions by lifting rank-metric codes. Section~\ref{sec:upperbound-size} provides upper bounds on the size of subspace codes and an analysis of list decoding subspace codes. %
\section{Introduction}
\label{sec:NC_intro}
Consider a network consisting of one or multiple sources and one or multiple receivers with intermediate relay nodes as possible connections between the sources and receivers.
The \emph{network coding problem} can be formulated as follows: for each node in the network, find a function of its incoming messages to transmit on its outgoing links, such that each receiver can recover all (or a predefined subset of all) the messages.
The set of functions that solve the network coding problem is a \emph{solution} of the network. A network is said to be \emph{solvable} if such functions exists.
In the seminal paper by~\citet{Ahlswede_NetworkInformationFlow_2000} it was shown that network coding increases the throughput compared to simple routing. Further, it was shown that network coding
achieves the capacity of multicast networks (i.e., one-to-many scenarios).
We can distinguish between linear and non-linear network coding and between deterministic and random network coding.

\begin{itemize}
\item Linear vs.~Non-Linear Network Coding

  If the functions of a solution are restricted to be linear, the {network coding} is referred to as \emph{linear network coding}.
It was shown by~\citet{LiYeungCai-LinearNetworkCoding_2003} that forwarding \emph{linear} combinations of the incoming packets at intermediate nodes suffices to achieve the multicast capacity.
This observation is important since linear combinations are simple operations that can be performed efficiently at the intermediate nodes as well as at the transmitter and receiver.
The performance of network coding depends on the choice of functions (also known as \emph{coding coefficients}) at the intermediate nodes since an unfortunate choice of coding coefficients might cancel packets out. An algebraic formulation of the linear network coding problem and its solvability can be found in~\citep{KoetterMedard-AlgebraicApproachNetworkCoding_2003}. Some works investigated the performance of network coding for practical uses, such as coded TCP~\citep{CodedTCP2011} and forwarding coded packets in wireless mesh networks~\citep{COPE2008}.
If a non-linear function is employed in the network, then it is classified as \emph{non-linear network coding}.
Non-linear codes may employ a smaller alphabet than linear codes~\citep{LLjan2004}, however, they are rarely considered for practical usage due to the lack of efficient decoding algorithms.
\item Deterministic vs.~Random Network Coding

A network coding problem is called \emph{deterministic} if the coding functions of all nodes are fixed and known by the receivers.
A deterministic algorithm to compute the optimal coding coefficients that achieve the min-cut maximum flow capacity for the in-network linear combinations in polynomial time was proposed by~\citet{JaggiDeterministicLNC}.
In Section~\ref{sec:GeneralizedCombinationNetworks} we will discuss the application of rank-metric codes and subspace codes to \emph{(generalized) combination networks}.

It was shown by~\citet{MedardKoetterKargerEffrosShiLeong-RLNCMulticast_2006} that the multicast capacity can be reached with successful decoding probability approaching 1 as the field size of the \emph{randomly} chosen coefficients of the linear combinations goes to infinity.
In \emph{random linear network coding} (RLNC), each node can choose the coding functions randomly.
This RLNC approach has no need for central coordination of the network coding coefficients and thus can be used for dynamic networks.

\item Scalar vs. Vector Network Coding

If both, the coding coefficients and packets, are scalars, the solution is called a \emph{scalar network coding solution}.
\citet{KoetterMedard-AlgebraicApproachNetworkCoding_2003} provided
an algebraic formulation for the linear network coding problem
and its scalar solvability.
Vector network coding as part of \emph{fractional network coding} was mentioned in~\citep{CannonsDoughertyFreilingZeger-2006}.
A solution of the network is called an $(s,t)$ \emph{fractional vector network coding solution},
if the edges transmit vectors of length $t$, but the message vectors
are of length $s \leq t$. The case $s=t=1$ corresponds to a scalar solution.
\citet{EbrahimiFragouli-AlgebraicAlgosVectorNetworkCoding}
have extended the algebraic approach from~\citep{KoetterMedard-AlgebraicApproachNetworkCoding_2003} to
\emph{vector network coding}.
Here, all packets are vectors of length~$t$ and the coding coefficients are matrices.
A set of $t \times t$ coding matrices for which all receivers can recover their
requested information, is called a \emph{vector network coding solution} (henceforth, it will be called \emph{vector solution}).
Notice that vector operations imply linearity over vectors; therefore, a vector solution is always a (vector) \emph{linear} solution.
In terms of the achievable rate, vector network coding outperforms scalar
linear network coding~\citep{MedardKargerEffrosKargerHo_2003,DoughertyFreilingZeger-NetworksMatroidsNonShannon_2007,etzion2018vector}.
In \citep{etzion2018vector}, it was shown that for special networks (generalized combination networks), vector coding
solutions based on rank-metric and subspace codes significantly reduce the required \emph{alphabet size}.
In one subfamily of these networks,
the scalar linear solution requires a field size $q_s = q^{(h-2)t^2/h + o(t)}$, for even $h \geq 4$,
where $h$ denotes the number of messages,
while \citep{etzion2018vector} provides a vector solution of field size $q$ and dimension~$t$.
Such a vector solution has the same alphabet size as a scalar solution of field size $q^t$.
\end{itemize}

\section{Solutions of Generalized Combination Networks}
\label{sec:GeneralizedCombinationNetworks}

An $(\eps,\ell)$-$\cN_{h,r,\alpha\ell+\eps}$ generalized combination network is a class of multicast networks, illustrated in Figure~\ref{fig:Network} (see also~\citep{etzion2018vector}).
\begin{figure}[!htb]
  \centering
  \captionsetup{justification=centering}
  \def\x{0.55}%

\begin{tikzpicture}
  [font=\normalsize,>=stealth',
  mycircle/.style={circle, draw=TUMdgray, very thick, text width=.1em, minimum height=1.5em, text centered},
  mycircle_small/.style={circle,draw=TUMdgray!90,very thick, inner sep=0,minimum size=1em,text centered},
  mylink/.style={color=TUMblue!70, thick},
  mylink_dir/.style={color=TUMgreen!70, thick, dashed}]

  \coordinate (Source) at (0*\x,4*\x);
  {\node[mycircle,label=above right:{$\ve{x}={(\ve{x}_1,\ve{x}_2,\dots,\ve{x}_h)}$}] (Source) {};}

  \node[mycircle_small,below left = \x*60pt and \x*100pt of Source] (M0) {$\vec{y}_1$};
  \node[mycircle_small,right = \x*20pt of M0] (M1) {$\vec{y}_2$};
  \node[mycircle_small,right = \x*20pt of M1] (M2) {};
  \node[mycircle_small,right = \x*20pt of M2] (M3) {};
  \node[draw=none,right = \x*20pt of M3] (M4) {$\dots$};
  \node[mycircle_small,right = \x*20pt of M4] (M5) {};
  \node[mycircle_small,right = \x*20pt of M5] (M6) {$\vec{y}_r$};

  \node[mycircle,below left = \x*60pt and \x*20pt of M0] (R0) {};
  \node[mycircle,right = \x*50pt of R0] (R1) {};
  \node[mycircle,right = \x*50pt of R1] (R2) {};
  \node[draw=none,right = \x*50pt of R2] (R3) {$\dots$};
  \node[mycircle,right = \x*50pt of R3] (R4) {};

  \path[] (Source) edge[mylink,bend right=15] (M0.north)
  edge[mylink,bend right=15] (M1.north)
  edge[mylink,bend right=15] (M2.north)
  edge[mylink,bend right=15] (M3.north)
  edge[mylink,bend left=15] (M5.north)
  edge[mylink,bend left=15] (M6.north);

  \path[] (R0) edge[mylink,bend left=15] (M0.south)
  edge[mylink,bend left=15] (M1.south);
  \path[] (R1) edge[mylink,bend left=15] (M0.south)
  edge[mylink,bend left=15] (M2.south);
  \path[] (R2) edge[mylink,bend left=15] (M1.south)
  edge[mylink,bend left=15] (M3.south);
  \path[] (R4) edge[mylink,bend right=15] (M5.south)
  edge[mylink,bend right=15] (M6.south);

  \draw [mylink,solid] ($(M0)+(\x*18pt,\x*15pt)$) arc (30:80:\x*15pt); %
  \draw [mylink,-] ($(R0)+(\x*20pt,\x*10pt)$) arc (0:100:\x*15pt); %
  \node[draw=none,above right = \x*-8pt and \x*2pt of M0] (ell) {\textcolor{TUMblue}{$\ell$}};
  \node[draw=none,above right=0pt and -5pt of M6] (rs) {$r$ middle nodes};
  \node[draw=none,right=0pt of R4] (N) {$N=\binom{r}{\alpha}$};
  \node[draw=none,below right=\x*-5pt and \x*-75pt of N] (recNodes) {receivers};

  \path[] (Source) edge[mylink,bend right=5] (M0.north)
  edge[mylink,bend right=5] (M1.north)
  edge[mylink,bend right=5] (M2.north)
  edge[mylink,bend right=5] (M3.north)
  edge[mylink,bend left=5] (M5.north)
  edge[mylink,bend left=5] (M6.north);

  \path[] (R0) edge[mylink,bend left=5] (M0.south)
  edge[mylink,bend left=5] (M1.south);
  \path[] (R1) edge[mylink,bend left=5] (M0.south)
  edge[mylink,bend left=5] (M2.south);
  \path[] (R2) edge[mylink,bend left=5] (M1.south)
  edge[mylink,bend left=5] (M3.south);
  \path[] (R4) edge[mylink,bend right=5] (M5.south)
  edge[mylink,bend right=5] (M6.south);

  \path[] (Source.west) edge[mylink_dir, bend right=45] (R0.west)
  edge[mylink_dir, bend right=50] (R0.west)
  (Source) edge[mylink_dir,bend left=5] (R1.east)
  edge[mylink_dir,bend left=10] (R1.east)
  (Source) edge[mylink_dir,bend left=15] (R2.east)
  edge[mylink_dir,bend left=20] (R2.east)
  (Source) edge[mylink_dir,bend right=0] (R4.west)
  edge[mylink_dir,bend right=5] (R4.west);

  \draw [mylink_dir,solid] ($(R0)+(-\x*15pt,\x*35pt)$) arc (100:60:\x*15pt); %
  \node[draw=none,above left = \x*10pt and 3pt of R0] (alphal) {\textcolor{TUMgreen}{$\varepsilon$}};
  \node[draw=none,right = 0pt of R0] (alphal) {\textcolor{TUMblue}{$\alpha\ell$}};

\end{tikzpicture}
  \caption{Illustration of $(\eps,\ell)-\mathcal{N}_{h,r,\alpha\ell+\eps}$ networks}
  \label{fig:Network}
\end{figure}
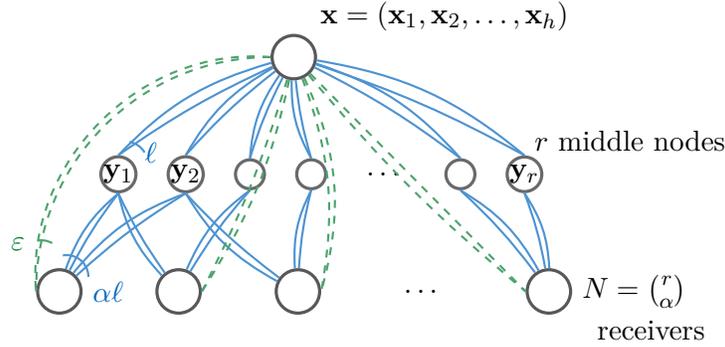
The network has three layers. The first layer consists of a source with $h$ source messages. The source transmits coded messages to $r$ middle nodes via $\ell$ parallel links (solid lines) between itself and each middle node. Any $\alpha$ middle nodes in the second layer are connected to a unique receiver (again, by $\ell$ parallel links each). Each receiver is also connected to the source via $\eps$ direct links (dashed lines). Each one of the $\binom{r}{\alpha}$ receivers demands all the $h$ messages.
It was shown by~\citet[Thm.~8]{etzion2018vector} that the $(\eps,\ell)$-$\cN_{h,r,\alpha\ell+\eps}$ network has a trivial solution if $h\leq\ell+\eps$ and it has no solution if $h>\alpha\ell+\eps$. Therefore we only consider the non-trivially solvable networks with $\ell+\eps< h \leq \alpha\ell+\eps$ in this section.

A linear solution of an $(\eps,\ell)$-$\cN_{h,r,\alpha\ell+\eps}$ network is a set of coding coefficients at all middle nodes, such that every receiver $j\in[N]$ can recover all the $h$ source messages $\ve{x}_1,\dots,\ve{x}_h$ from the received message $\ve{y}_{j_1}, \dots, \ve{y}_{j_\alpha}$.
If the messages $\ve{x}_1,\dots,\ve{x}_h$ are vectors in $\Fq^t$ and the coding coefficients for each middle node are matrices over $\F_q$, the corresponding solution is called a \emph{vector linear solution}.
If the messages $x_1,\dots,x_h$ are scalars in $\F_{q}$ (i.e., $t=1$) and the coding coefficient at each middle node is a vector over $\Fq$, then the corresponding solution is called a \emph{scalar linear solution}.

In the following, we provide two constructions for vector linear solutions of the $(\eps,\ell)$-$\cN_{h,r,\alpha\ell+\eps}$ networks using rank-metric codes in Section~\ref{sec:VecSol}. %

\subsection{Vector Solutions Using MRD Codes}
\label{sec:VecSol}

In this section, we present several constructions from~\citep{etzion2018vector} of vector solutions of $(\eps,\ell)$-$\cN_{h,r,\alpha\ell+\eps}$ networks.

\begin{theorem}[\citet{Roth_RankCodes_1991,Lusina2003Maximum}]\label{thm:code-square}
Let
$$\mycode{D}_t \coloneqq \{ \0_t , \I_t , \Mat{C} ,\Mat{C}^2 , \ldots , \Mat{C}^{q^t-2} \},$$
where $\Mat{C}$ is a companion matrix of a primitive polynomial $p(x) = p_0 + p_1 x + \dots + p_{t-2}x^{t-2} + p_{t-1}x^{t-1} + x^t \in \Fq[x]$, i.e.,
\begin{equation*}
\Mat{C} \coloneqq
\begin{pmatrix}
0 & 1 & 0 & \dots & 0 & 0\\
0 & 0 & 1 & \dots & 0 & 0\\
\vdots\\
0 & 0 & 0 & \dots & 0 & 1\\
-p_0 & -p_1 & -p_2 & \dots & -p_{t-2}& -p_{t-1},
\end{pmatrix}\in\Fq^{t\times t}.
\end{equation*}
and  is a primitive polynomial.
Then, $\mycode{D}_t$ is an $\MRD{t,q^{t}}$ code of~$q^t$ pairwise \emph{commutative} matrices.
\end{theorem}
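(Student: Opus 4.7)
The plan is to identify $\mycode{D}_t$ with $\Fq[\Mat{C}]$, the $\Fq$-algebra generated by $\Mat{C}$, and to show that this algebra is isomorphic to the field $\Fqt$, from which all three claimed properties (cardinality $q^t$, pairwise commutativity, MRD) follow almost immediately.

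First, I would argue that $\Fq[\Mat{C}] \cong \Fq[x]/(p(x)) \cong \Fqt$. By construction, $p(x)$ is the characteristic polynomial of the companion matrix $\Mat{C}$, so Cayley--Hamilton gives $p(\Mat{C}) = \0_t$. Since $p$ is primitive it is, in particular, irreducible over $\Fq$, so the quotient $\Fq[x]/(p(x))$ is a field with $q^t$ elements, and the evaluation map sending $x$ to $\Mat{C}$ yields the claimed isomorphism. Consequently $\Fq[\Mat{C}]$ is a $t$-dimensional $\Fq$-vector space whose $q^t - 1$ non-zero elements form a cyclic multiplicative group; primitivity of $p(x)$ means $\Mat{C}$ is a generator of this group, so $\{\I_t, \Mat{C}, \Mat{C}^2, \dots, \Mat{C}^{q^t-2}\}$ enumerates the nonzero elements of $\Fq[\Mat{C}]$ without repetition. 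Adjoining $\0_t$ gives $\mycode{D}_t = \Fq[\Mat{C}]$ and $|\mycode{D}_t| = q^t$.

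Commutativity is then immediate: $\Fq[\Mat{C}]$ is a commutative ring, so any two elements of $\mycode{D}_t$ commute (or, more elementarily, any two powers of a fixed matrix commute, and $\0_t, \I_t$ commute with everything). For the MRD property, note that $\mycode{D}_t$ is closed under subtraction (being an additive subgroup of $\Fq^{t \times t}$), so the minimum rank distance equals the minimum rank among non-zero codewords. Every non-zero element of $\mycode{D}_t$ lies in the field $\Fq[\Mat{C}]$, hence is invertible as a matrix, hence has rank $t$. Therefore $d_{\sfR}(\mycode{D}_t) = t$, and comparing with the Singleton-like bound from \cref{theo:Singleton-like-bound} for $n = m = t$ gives $|\mycode{D}_t| \leq q^{t(t - d + 1)} = q^t$, which is attained with equality. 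Thus $\mycode{D}_t$ is an $\MRD{t, q^t}$ code.

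The only mildly delicate step is verifying that the listed set of matrices has exactly $q^t$ elements and coincides with $\Fq[\Mat{C}]$; everything else reduces to standard field-theoretic facts about companion matrices of primitive polynomials. I do not expect any serious obstacles beyond correctly invoking Cayley--Hamilton to get $p(\Mat{C}) = \0_t$ and primitivity to rule out collisions among the powers $\Mat{C}^0, \dots, \Mat{C}^{q^t-2}$.
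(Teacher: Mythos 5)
Your proof is correct. The survey states this theorem with a citation to \citet{Roth_RankCodes_1991} and \citet{Lusina2003Maximum} but does not reproduce a proof, so there is no in-paper argument to compare against; your identification of $\mycode{D}_t$ with the field $\Fq[\Mat{C}] \cong \Fq[x]/(p(x)) \cong \F_{q^t}$ (via Cayley--Hamilton plus the standard fact that the companion matrix's minimal polynomial equals its characteristic polynomial, with primitivity guaranteeing $\Mat{C}$ generates the multiplicative group so that the listed powers are distinct) is precisely the classical route in the cited sources, and the deduction of all three properties --- cardinality $q^t$, commutativity, and full rank of every nonzero element hence $d_{\sfR} = t$ meeting the Singleton-like bound of \cref{theo:Singleton-like-bound} --- is complete and correct.
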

The following corollary considers block Vandermonde matrices which will be used for the vector solution.
Note that $\I_t = \Mat{C}^{q^t-1}\in \mycode{D}_t$. %
\begin{corollary}[\citet{etzion2018vector}]\label{cor:q-vand-block-matrix}
  Let $\mycode{D}_t$     %
  be the $\MRD{t,q^t}$ code defined by the companion matrix $\Mat{C}$
(Theorem~\ref{thm:code-square}).
Let $\Mat{C}_i$, $i=1,\dots,h$, be distinct codewords of~$\mycode{D}_t$.
Define the following $(ht) \times (ht)$ block matrix:
\begin{equation*}
\Mat{M} =
\begin{pmatrix}
\I_t & \Mat{C}_1 & \Mat{C}_1^2 & \dots & \Mat{C}_1^{{h-1}}\\
\I_t & \Mat{C}_2 & \Mat{C}_2^2 & \dots & \Mat{C}_2^{{h-1}}\\
\vdots & \vdots& \vdots & \ddots & \vdots\\
\I_t & \Mat{C}_{h} & \Mat{C}_{h}^2 & \dots & \Mat{C}_{h}^{{h-1}}\\
\end{pmatrix}.
\end{equation*}
Then, any  $(h t) \times (\ell t)$ submatrix consisting of $h \ell$ blocks of
\textbf{consecutive columns} has full rank $\ell t$, for any $\ell=1,\dots,h$.
\end{corollary}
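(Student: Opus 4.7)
The plan is to identify $\Mat{M}$ with an ordinary Vandermonde matrix over the extension field $\F_{q^t}$, and then reduce $\F_q$-linear independence of $\ell t$ consecutive block-columns to a bound on the number of roots of a polynomial of $\F_{q^t}$-degree $< \ell$. Since $\Mat{C}$ is the companion matrix of a primitive polynomial $p(x) \in \F_q[x]$ of degree $t$, the commutative subring $\F_q[\Mat{C}] \subset \F_q^{t \times t}$ is a field isomorphic to $\F_{q^t}$, with $\Mat{C}$ corresponding to a primitive element $\alpha \in \F_{q^t}$. Each codeword $\Mat{C}_i \in \mathcal{D}_t$ then corresponds to a field element $\alpha_i \in \F_{q^t}$, and by distinctness of the $\Mat{C}_i$ the $\alpha_i$ are distinct. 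Fix any non-zero $\mathbf{e} \in \F_q^t$. Since $p(x)$ is irreducible, $\mathbf{e}$ is cyclic for $\Mat{C}$, so the map $\phi : \F_q^t \to \F_{q^t}$ defined by $\phi(p(\Mat{C}) \mathbf{e}) = p(\alpha)$ is a well-defined $\F_q$-linear bijection satisfying $\phi(\Mat{C}\mathbf{w}) = \alpha \phi(\mathbf{w})$, which turns $\F_q^t$ into a one-dimensional $\F_{q^t}$-vector space with $\Mat{C}_i$ acting as multiplication by $\alpha_i$.

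Next, consider the submatrix $\Mat{M}'$ given by $\ell$ consecutive block-columns indexed $s, s+1, \dots, s+\ell-1$ with $0 \leq s \leq h-\ell$. An $\F_q$-linear dependence among its columns is encoded by vectors $\mathbf{v}_0, \dots, \mathbf{v}_{\ell-1} \in \F_q^t$ satisfying $\sum_{j=0}^{\ell-1} \Mat{C}_i^{s+j}\mathbf{v}_j = \mathbf{0}$ for every $i = 1, \dots, h$. Setting $\beta_j \coloneqq \phi(\mathbf{v}_j)$ and applying $\phi$, this system becomes
\begin{equation*}
\alpha_i^{s}\, P(\alpha_i) = 0, \qquad i=1,\dots,h, \quad \text{where} \quad P(X) \coloneqq \sum_{j=0}^{\ell-1}\beta_j X^j \in \F_{q^t}[X].
\end{equation*}
Thus it suffices to show $P \equiv 0$, as this forces each $\beta_j = 0$, hence each $\mathbf{v}_j = \mathbf{0}$, proving $\rk(\Mat{M}') = \ell t$.

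The conclusion follows from a root count, split into cases according to whether $\Mat{0}_t$ appears among the $\Mat{C}_i$. If all $\alpha_i \neq 0$, then $P$ has $h$ distinct roots while $\deg P \leq \ell - 1 \leq h - 1$, so $P \equiv 0$. If some $\alpha_{i_0} = 0$ (noting $\mathcal{D}_t$ contains the zero matrix only once), I separate on $s$: when $s > 0$, the row $i_0$ is trivially satisfied but the constraint $s \leq h - \ell$ forces $\ell \leq h-1$, so the $h-1$ remaining roots of $P$ exceed $\deg P \leq \ell -1 \leq h-2$, giving $P\equiv 0$; when $s = 0$, the row $i_0$ yields $P(0) = \beta_0 = 0$, so $P(X) = X\,Q(X)$ with $\deg Q \leq \ell - 2$, and $Q$ vanishes at the $h-1$ nonzero $\alpha_i$, which forces $Q \equiv 0$ whenever $\ell \leq h$.

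The main conceptual obstacle is precisely this zero-codeword subtlety: the argument is a slick block-Vandermonde reduction, but one must check that the potential presence of $\Mat{0}_t$ (which occurs exactly once in $\mathcal{D}_t$ by Theorem~\ref{thm:code-square}) does not destroy either the invertibility of the $\alpha_i^s$ factors or the applicability of the polynomial degree bound. Commutativity of $\mathcal{D}_t$ is what allows the factorization $\Mat{C}_i^s P(\Mat{C}_i)$ in the first place, and is the crux that makes the reduction to a scalar Vandermonde over $\F_{q^t}$ possible.
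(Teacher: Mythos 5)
The paper states Corollary~\ref{cor:q-vand-block-matrix} without proof, citing \citet{etzion2018vector}, so there is no in-paper argument to compare against. Your proof is correct. The identification $\F_q[\Mat{C}]\cong\F_{q^t}$ via a cyclic vector cleanly reduces an $\F_q$-linear dependence among $\ell$ consecutive block-columns to the equations $\alpha_i^s P(\alpha_i)=0$ for a polynomial $P\in\F_{q^t}[X]$ of degree at most $\ell-1$, and the case analysis is precisely where the corollary requires care: when $\Mat{0}_t$ is among the chosen $\Mat{C}_i$ and $s\geq 1$, that row contributes no constraint, but you correctly observe that $s\geq 1$ together with $s\leq h-\ell$ forces $\ell\leq h-1$, which rescues the root-versus-degree count on the remaining $h-1$ nonzero points; for $s=0$ the zero row directly kills $\beta_0$ and the factorization $P=XQ$ finishes it. An alternative route often used for block-Vandermonde matrices over a commutative MRD code is to pick $\ell$ of the $h$ block-rows and evaluate the determinant $\prod_k\det(\Mat{C}_{i_k})^s\prod_{k<l}\det(\Mat{C}_{i_l}-\Mat{C}_{i_k})$, which is nonzero because the MRD property makes every difference $\Mat{C}_{i_l}-\Mat{C}_{i_k}$ invertible, provided one excludes a potential zero row. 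Your root-counting argument is more elementary in that it avoids invoking the commuting block-Vandermonde determinant formula, but both approaches hinge on the same two facts: $\mathcal{D}_t$ is a field, and the zero codeword must be handled separately.
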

Note that the blocks of rows do not have to be consecutive, but a block has to be included with all its $t$ rows in the submatrix.

Based on this corollary, we can now provide several constructions for the generalized combination networks with different parameters. %

\begin{construction}[Construction for $(0,1)$-$\mathcal{N}_{h,r,h}$ combination network]
\label{constr:comb-network-gen}
Let $$\mycode{D}_t =\{\Mat{C}_1, \Mat{C}_2,\dots, \Mat{C}_{q^t}\}\subset\Fq^{t\times t}$$ be the $\MRD{t,q^t}$ code defined by the companion matrix~$\Mat{C}$ (Theorem~\ref{thm:code-square}) and let $r \leq q^t+1$. Consider the $(0,1)$-$\mathcal{N}_{h,r,h}$ combination network with message vectors $\vec{x}_1, \dots, \vec{x}_h\in\Fq^t$.
One node from the middle layer receives and transmits $\vec{y}_{r} = \vec{x}_h$ and
the other $r-1$ nodes of the middle layer transmit
$$\vec{y}_{i} =
\begin{pmatrix}
\I_t \ \Mat{C}_i \ \Mat{C}_i^2 \ \dots \ \Mat{C}_i^{h-1}
\end{pmatrix}
\cdot
\begin{pmatrix}
\vec{x}_1 \
\vec{x}_2 \
\dots \
\vec{x}_h
\end{pmatrix}^{\top}
\in \Fq^t,$$
for $i=1,\dots,r-1$.
\end{construction}
The matrices $\I_t,\Mat{C}_i, \Mat{C}_i^2,\dots,\Mat{C}_i^{h-1}$, $i=1,\dots,r-1$, are the coding coefficients of the incoming and outgoing edges of node $i$ in the middle layer.
\begin{theorem}[\citet{etzion2018vector}]\label{thm:solution-combination}
Construction~\ref{constr:comb-network-gen} provides a vector linear solution of field size $q$ and
dimension $t$ to the $\mathcal{N}_{h,q^t+1,h}$ combination network, i.e., $\vec{x}_1, \dots, \vec{x}_h$ can be reconstructed at all receivers.
\end{theorem}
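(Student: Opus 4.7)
The plan is to show that every receiver can invert a linear system to recover $\vec{x}_1,\dots,\vec{x}_h$, and that invertibility follows immediately from Corollary~\ref{cor:q-vand-block-matrix}. I would split into two cases depending on whether the given receiver is connected to the distinguished middle node $r$ (which forwards $\vec{x}_h$ directly) or only to nodes of index in $\{1,\dots,r-1\}$.

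\textbf{Case 1: The receiver is connected to $h$ middle nodes $i_1<i_2<\dots<i_h\leq r-1$.} Here the receiver observes the vector
\begin{equation*}
\begin{pmatrix}\vec{y}_{i_1}\\ \vdots \\ \vec{y}_{i_h}\end{pmatrix}
=
\begin{pmatrix}
\I_t & \Mat{C}_{i_1} & \Mat{C}_{i_1}^{2} & \dots & \Mat{C}_{i_1}^{h-1}\\
\vdots & \vdots & \vdots & \ddots & \vdots \\
\I_t & \Mat{C}_{i_h} & \Mat{C}_{i_h}^{2} & \dots & \Mat{C}_{i_h}^{h-1}
\end{pmatrix}
\begin{pmatrix}\vec{x}_1 \\ \vdots \\ \vec{x}_h\end{pmatrix}.
\end{equation*}
Since the codewords $\Mat{C}_{i_1},\dots,\Mat{C}_{i_h}\in\mycode{D}_t$ are pairwise distinct, Corollary~\ref{cor:q-vand-block-matrix} (with the choice $\ell=h$ and the codewords selected as $\Mat{C}_{i_1},\dots,\Mat{C}_{i_h}$) guarantees that the $(ht)\times(ht)$ block matrix above has full rank $ht$. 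Inverting it recovers all $\vec{x}_j$.

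\textbf{Case 2: The receiver is connected to the $r$-th node and $h-1$ further nodes $i_1<\dots<i_{h-1}\leq r-1$.} Then $\vec{x}_h=\vec{y}_r$ is delivered directly, so the receiver first subtracts the known term $\Mat{C}_{i_j}^{h-1}\vec{x}_h$ from each $\vec{y}_{i_j}$ to obtain $h-1$ equations
\begin{equation*}
\vec{y}_{i_j}-\Mat{C}_{i_j}^{h-1}\vec{x}_h = \vec{x}_1 + \Mat{C}_{i_j}\vec{x}_2 + \dots + \Mat{C}_{i_j}^{h-2}\vec{x}_{h-1},\quad j=1,\dots,h-1.
\end{equation*}
The resulting coefficient matrix is of the same block-Vandermonde shape as in Corollary~\ref{cor:q-vand-block-matrix}, only with $h$ replaced by $h-1$; since the $\Mat{C}_{i_j}$ are distinct and $h-1\leq q^t$, a second application of the corollary (with parameters $(h-1,h-1)$ in place of $(h,\ell)$) shows the $((h-1)t)\times((h-1)t)$ matrix has full rank, hence $\vec{x}_1,\dots,\vec{x}_{h-1}$ can be solved for uniquely.

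The only real content of the argument is the invertibility of the block-Vandermonde matrices, which is exactly what Corollary~\ref{cor:q-vand-block-matrix} supplies; the rest is bookkeeping about which of the $r=q^t+1$ middle nodes a receiver is connected to, and why the two cases exhaust all possibilities (any $h$-subset of the $r$ middle nodes either avoids the special node $r$ or contains it). The parameter count also justifies the bound $r\leq q^t+1$: at most $q^t$ nonzero distinct matrices are available from $\mycode{D}_t$ for the first $r-1$ nodes, and the additional direct-forwarding node accounts for the $+1$. No further ingredient is required.
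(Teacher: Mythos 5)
Your proof is correct and follows essentially the same route as the paper's: both cases reduce to the invertibility of block-Vandermonde systems via Corollary~\ref{cor:q-vand-block-matrix}, split according to whether the distinguished middle node $r$ is among the receiver's neighbors. The only cosmetic difference is that in the second case the paper keeps an $ht\times ht$ system by appending the block row $(\0_t \ \dots \ \0_t \ \I_t)$, whereas you eliminate $\vec{x}_h$ first and invert the equivalent $(h-1)t\times(h-1)t$ subsystem.
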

\begin{proof}
Each receiver $i$ obtains $h$ vectors $\vec{y}_{i_1},\dots, \vec{y}_{i_h}$ and has to solve one of the following two systems of linear equations:
\begin{equation*}
\begin{pmatrix}
\vec{y}^\top_{i_1}\\
\vec{y}^\top_{i_{2}}\\
\vdots\\
\vec{y}^\top_{i_h}
\end{pmatrix}
=
\begin{pmatrix}
\I_t & \Mat{C}_{i_1} & \Mat{C}_{i_1}^2 & \dots & \Mat{C}_{i_1}^{{h-1}}\\
\I_t & \Mat{C}_{i_2} & \Mat{C}_{i_2}^2 & \dots & \Mat{C}_{i_2}^{{h-1}}\\
\vdots & \vdots& \vdots & \ddots & \vdots\\
\I_t & \Mat{C}_{i_h} & \Mat{C}_{i_h}^2 & \dots & \Mat{C}_{i_h}^{{h-1}}\\
\end{pmatrix}.
\begin{pmatrix}
\vec{x}^\top_1\\
\vec{x}^\top_2\\
\vdots\\
\vec{x}^\top_h
\end{pmatrix}
\end{equation*}
or
\begin{equation*}
\begin{pmatrix}
\vec{y}^\top_{i_1}\\
\vdots\\
\vec{y}^\top_{i_{h-1}}\\
\vec{y}^\top_r
\end{pmatrix}
=
\begin{pmatrix}
\I_t & \Mat{C}_{i_1} & \Mat{C}_{i_1}^2 & \dots & \Mat{C}_{i_1}^{{h-1}}\\
\vdots & \vdots& \vdots & \ddots & \vdots\\
\I_t & \Mat{C}_{i_{h-1}} & \Mat{C}_{i_{h-1}}^2 & \dots & \Mat{C}_{i_{h-1}}^{{h-1}}\\
\0_t & \0_t & \0_t & \dots & \I_t
\end{pmatrix}.
\begin{pmatrix}
\vec{x}^\top_1\\
\vec{x}^\top_2\\
\vdots\\
\vec{x}^\top_h
\end{pmatrix},
\end{equation*}
for some distinct $i_1,\dots,i_h \in \{2,\dots,r\}$.
According to Corollary~\ref{cor:q-vand-block-matrix}, in both cases, the corresponding matrix
has full rank and there is a unique solution for $(\vec{x}_1 \ \vec{x}_2 \ \dots \ \vec{x}_h)$.
\end{proof}

\begin{construction}[Construction for the $(1,2)$-$\mathcal{N}_{4,r,5}$ generalized combination network]
\label{constr:mrd-notfull-rank-comb-2}
Let $$\mycode{C}=\{\Mat{C}_1,\Mat{C}_2, \dots, \Mat{C}_{q^{2t^2 + 2t}}\}\subseteq\Fq^{2t\times 2t}$$ be an $\MRD{2t,q^{2t^2+2t}}$ code of minimum rank-distance $t$ and let $r \leq q^{2t^2 + 2t}$.
Consider the $(1,2)$-$\mathcal{N}_{4,r,5}$ network with message vectors $\vec{x}_1, \vec{x}_2, \vec{x}_3, \vec{x}_4 \in \Fq^t$.
The $i$-th middle node transmits:
\begin{equation*}
\begin{pmatrix}
\vec{y}^\top_{i_1} \\
\vec{y}^\top_{i_2}
\end{pmatrix}
=
\begin{pmatrix}
\I_{2t} & \Mat{C}_i
\end{pmatrix}
\cdot
\begin{pmatrix}
\vec{x}^\top_1\\
\vec{x}^\top_2\\
\vec{x}^\top_3\\
\vec{x}^\top_4
\end{pmatrix}
\in \Fq^{2t},
\quad
i=1,\dots,r.
\end{equation*}
The direct link from the source which ends in the same receiver as the links from two distinct middle nodes $i,j \in \{ 1,2,\ldots , r\}$ transmits the vector $$\vec{z}_{ij} = \Mat{P}_{ij} \cdot \begin{pmatrix}
\vec{x}_1,
\vec{x}_2,
\vec{x}_3,
\vec{x}_4
\end{pmatrix}^\top\in \Fq^t,$$
where the matrix $\vec{P}_{ij}\in \Fq^{t \times 4t}$ is chosen such that
\begin{equation}\label{eq:mat-constr1}
\rk\begin{pmatrix}
\I_{2t} & \Mat{C}_i \\
\I_{2t} & \Mat{C}_j \\
&\hspace{-4ex} \Mat{P}_{ij}
\end{pmatrix} = 4t.
\end{equation}
\end{construction}
Since $$\rk\left(\begin{matrix}
\I_{2t} & \Mat{C}_i \\
\I_{2t} & \Mat{C}_j \\
\end{matrix}\right)  =
\rk\left(\begin{matrix}
\I_{2t} & \Mat{C}_i \\
\0_{2t} & \Mat{C}_j-\Mat{C}_i \\
\end{matrix}\right)
\geq 3t,$$
it follows that the $t$ rows of $\Mat{P}_{ij}$ can be chosen such that the overall rank of the matrix from~\eqref{eq:mat-constr1} is $4t$.
\begin{theorem}
\label{thm:solution-extra-rate2}
Construction~\ref{constr:mrd-notfull-rank-comb-2} provides a vector solution to the $(1,2)$-$\mathcal{N}_{4,r,5}$ network with messages of length $t$ over $\Fq$ for any $ r\leq q^{2t(t+1)}$.
\end{theorem}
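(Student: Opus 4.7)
The plan is to prove that every receiver of the network can recover all four message vectors. Concretely, for each receiver it suffices to show that the linear map $M$ sending $(\vec{x}_1^\top,\vec{x}_2^\top,\vec{x}_3^\top,\vec{x}_4^\top)^\top \in \Fq^{4t}$ to the $5t$ symbols the receiver observes has full column rank $4t$; the receiver can then left-invert $M$ and recover the messages. A preliminary remark: the bound $r\leq q^{2t(t+1)}=|\mycode{C}|$ is exactly what guarantees that we can assign $r$ \emph{distinct} codewords $\Mat{C}_1,\dots,\Mat{C}_r$ to the middle nodes, so Construction~\ref{constr:mrd-notfull-rank-comb-2} is well-defined.

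Now I would fix an arbitrary receiver, let $i\neq j$ be its two upstream middle nodes, and stack the five incoming vectors into a single column to obtain
\[
\begin{pmatrix}\vec{y}_{i_1}\\\vec{y}_{i_2}\\\vec{y}_{j_1}\\\vec{y}_{j_2}\\\vec{z}_{ij}\end{pmatrix}=M\begin{pmatrix}\vec{x}_1\\\vec{x}_2\\\vec{x}_3\\\vec{x}_4\end{pmatrix},\qquad M=\begin{pmatrix}\I_{2t}&\Mat{C}_i\\ \I_{2t}&\Mat{C}_j\\ &\hspace{-4ex}\Mat{P}_{ij}\end{pmatrix}\in\Fq^{5t\times 4t}.
\]
The remaining task is to verify $\rk(M)=4t$.

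The key step will be to lower-bound the rank of the upper $4t\times 4t$ block of $M$ by block row reduction: subtracting the first block row from the second yields
\[
\begin{pmatrix}\I_{2t}&\Mat{C}_i\\ \0_{2t}&\Mat{C}_j-\Mat{C}_i\end{pmatrix}.
\]
Here the MRD hypothesis enters decisively: since $\Mat{C}_i\neq\Mat{C}_j$ are two distinct codewords of $\mycode{C}$, which has minimum rank distance $t$, we have $\rk(\Mat{C}_j-\Mat{C}_i)\geq t$. Thus the upper block has rank $\geq 2t+t=3t$, i.e., the row span of the upper $4t$ rows of $M$ is a subspace of $\Fq^{4t}$ of codimension at most $t$.

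By the construction, $\Mat{P}_{ij}\in\Fq^{t\times 4t}$ is chosen precisely so that~\eqref{eq:mat-constr1} holds, i.e., so that its $t$ rows extend the span of the upper rows to all of $\Fq^{4t}$. Such a completion always exists, since only $t$ additional vectors are needed to span a quotient of dimension at most $t$. Consequently $\rk(M)=4t$, and since the argument applies uniformly to every pair $\{i,j\}$ and hence to every receiver, Construction~\ref{constr:mrd-notfull-rank-comb-2} yields a vector linear solution over $\Fq$ of dimension $t$. The only non-routine ingredient is the appeal to the MRD property to bound $\rk(\Mat{C}_j-\Mat{C}_i)$ from below; everything else is block matrix bookkeeping and elementary linear algebra.
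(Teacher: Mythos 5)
Your proof is correct and follows essentially the same route as the paper: set up the $5t\times 4t$ coefficient matrix a receiver sees, use block row reduction together with the minimum rank distance $t$ of the MRD code to show the upper $4t\times 4t$ block has rank at least $3t$, and conclude that $\Mat{P}_{ij}$ can complete the rank to $4t$. The only (welcome) addition is your explicit observation that $r\leq q^{2t(t+1)}=|\mycode{C}|$ is exactly the condition needed to assign distinct codewords to the middle nodes, which the paper leaves implicit.
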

\begin{proof} %
Each receiver $R_{ij}$ (i.e., the receiver node connecting to the $i$-th and $j$-th middle node) obtains the vectors $\vec{y}_{i_1}, \vec{y}_{i_2},
\vec{y}_{j_1},\vec{y}_{j_2}\in\Fq^t$ and the vector ${\vec{z}_{ij}}$ from the direct link.
From these five vectors, the receiver wants to reconstruct the message vectors
$\vec{x}_1,\vec{x}_2,\vec{x}_3,\vec{x}_4$ by solving the following linear system of equations:
\begin{equation*}
  (\vec{y}_{i_1}, \vec{y}_{i_2},
\vec{y}_{j_1},\vec{y}_{j_2})^\top
=
\begin{pmatrix}
\I_{2t} & \Mat{C}_i\\
\I_{2t} & \Mat{C}_j\\
& \hspace{-4ex}\Mat{P}_{ij}
\end{pmatrix}
\cdot
(\vec{x}_1,\vec{x}_2,\vec{x}_3,\vec{x}_4)^{\top}
\end{equation*}
The choice of $\Mat{P}_{ij}$ from Construction~\ref{constr:mrd-notfull-rank-comb-2} guarantees that this linear system of equations
has a unique solution for $(\vec{x}_1, \vec{x}_2,\vec{x}_3,\vec{x}_4)$.
\end{proof}
Construction~\ref{constr:mrd-notfull-rank-comb-2} can be further generalized to any $(\eps,\ell)$-$\mathcal{N}_{2\ell,r,2\ell+\eps}$ networks with $\alpha=2$ and $\eps\geq \ell-1$. In particular, we include the construction for the $(\ell-1,\ell)$-$\mathcal{N}_{2\ell,r,3\ell-1}$ generalized combination network.
\begin{construction}[For $(\ell-1,\ell)$-$\mathcal{N}_{2\ell,r,3\ell-1}$ network]
\label{constr:mrd-notfull-rank-comb-2-gen2}
Let $$\mycode{C}=\{\Mat{C}_1,\Mat{C}_2,\dots,\Mat{C}_{q^{\ell\eps t^2 + \ell t}}\}$$ be an $\MRD{\ell t,q^{\ell(\ell-1)t^2+\ell t}}$ code of minimum rank-distance $t$ and let $r$ be any integer such that $r \leq q^{\ell(\ell-1) t^2 + \ell t}$.
Consider the $(\ell-1,\ell)$-$\mathcal{N}_{2\ell,r,3\ell-1}$ network with message vectors $\vec{x}_1, \dots, \vec{x}_{2\ell} \in \Fq^t$, where $\ell \geq 2$.
The $i$-th middle node transmits:
\begin{equation*}
(\vec{y}_{i_1}, \vec{y}_{i_2}
)^\top
=
\begin{pmatrix}
\I_{\ell t} & \Mat{C}_i
\end{pmatrix}
\cdot
(
\vec{x}_1,
\dots, \vec{x}_{2\ell}
)^\top
\in \Fq^{\ell t},
\quad
i=1,\dots,r.
\end{equation*}
The $\ell-1$ direct links from the source, which end at the same receiver as
the links from two distinct nodes $i,j \in \{ 1,2,\ldots , r\}$
of the middle layer, transmit the vectors
$$\vec{z}_{ijs} = \Mat{P}_{ijs} \cdot \begin{pmatrix}
\vec{x}_1,
\dots, \vec{x}_{2\ell}
\end{pmatrix}^\top\in \Fq^t,\ s=1,\dots,\ell-1$$
where the $t \times (2\ell t)$ matrices $\vec{P}_{ijs}$ are chosen such that
\begin{equation}\label{eq:matrix-nplus-network}
\rk\begin{pmatrix}
\I_{\ell t} & \Mat{C}_i \\
\I_{\ell t} & \Mat{C}_j \\
&\hspace{-4ex} \Mat{P}_{ij1}\\
&\hspace{-4ex} \vdots\\
&\hspace{-4ex} \Mat{P}_{ij(\ell-1)}
\end{pmatrix} = 2\ell t.
\end{equation}
\end{construction}
By the rank distance of $\mycode{C}$ we have that
$\rk\left(\begin{smallmatrix}
\I_{\ell t} & \Mat{C}_i \\
\I_{\ell t} & \Mat{C}_j \\
\end{smallmatrix}\right) \geq \ell t + t = (\ell+1)t$, and hence the $(\ell-1)t$
rows of the matrices~$\Mat{P}_{ijs}$ can be chosen such that the overall rank
of the matrix from \eqref{eq:matrix-nplus-network} is~$2\ell t$.

The following result is an immediate consequence of this construction.
\begin{corollary}
\label{thm:solution-extra-rate2-gen2}
Construction~\ref{constr:mrd-notfull-rank-comb-2-gen2} provides a vector solution of field size $q$
and dimension $t$ to the $(\ell-1,\ell)$-$\mathcal{N}_{2\ell,r,3\ell-1}$ network
for any $r \leq q^{\ell(\ell-1) t^2 + \ell t}$ with $2\ell$ messages for any $\ell \geq 2$.
\end{corollary}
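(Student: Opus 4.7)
The plan is to follow the same template as the proof of Theorem~\ref{thm:solution-extra-rate2}, since Construction~\ref{constr:mrd-notfull-rank-comb-2-gen2} is the natural generalization from $\ell=2$ to arbitrary $\ell\geq 2$. First, I would verify that the ingredients of the construction are available: an $\MRD{\ell t, q^{\ell(\ell-1) t^2 + \ell t}}$ code of minimum rank distance $t$ exists (the Singleton-like bound of Theorem~\ref{theo:Singleton-like-bound} is met with equality by a Gabidulin code over $\Fqm$ with $m = \ell t$ and dimension $(\ell-1)t+1$, giving cardinality $q^{\ell t \cdot ((\ell-1)t+1)} = q^{\ell(\ell-1)t^2+\ell t}$), and the bound $r\leq q^{\ell(\ell-1)t^2+\ell t}$ guarantees that the $r$ distinct codewords $\Mat{C}_1,\dots,\Mat{C}_r$ needed by the middle nodes can be chosen.

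Next I would justify that for every pair $i\neq j\in[1,r]$ the matrices $\Mat{P}_{ij1},\dots,\Mat{P}_{ij(\ell-1)}\in\Fq^{t\times 2\ell t}$ from~\eqref{eq:matrix-nplus-network} exist. Block row reduction yields
\begin{equation*}
\begin{pmatrix}
\I_{\ell t} & \Mat{C}_i \\
\I_{\ell t} & \Mat{C}_j
\end{pmatrix}
\;\sim\;
\begin{pmatrix}
\I_{\ell t} & \Mat{C}_i \\
\0 & \Mat{C}_j-\Mat{C}_i
\end{pmatrix},
\end{equation*}
whose rank equals $\ell t+\rk(\Mat{C}_j-\Mat{C}_i)$. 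Since $\Mat{C}_j-\Mat{C}_i$ is a nonzero codeword of an MRD code of minimum rank distance $t$, this is at least $(\ell+1)t$. The ambient column space has dimension $2\ell t$, so a complement of the row span has dimension at most $(\ell-1)t$; hence we can append $(\ell-1)t$ further rows over $\Fq$ that together with the top block span the full $\Fq$-column space. Partitioning these rows into $\ell-1$ matrices of $t$ rows each yields the required $\Mat{P}_{ijs}$, and the resulting $(3\ell-1)t\times 2\ell t$ matrix in~\eqref{eq:matrix-nplus-network} has rank $2\ell t$.

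Finally, I would show that each receiver $R_{ij}$ reconstructs $\vec{x}_1,\dots,\vec{x}_{2\ell}$. It collects $\vec{y}_i,\vec{y}_j\in\Fq^{\ell t}$ from the two middle nodes to which it is connected and $\vec{z}_{ij1},\dots,\vec{z}_{ij(\ell-1)}\in\Fq^{t}$ from the $\ell-1$ direct links. These satisfy the linear system
\begin{equation*}
\begin{pmatrix}
\vec{y}_i^\top \\ \vec{y}_j^\top \\ \vec{z}_{ij1}^\top \\ \vdots \\ \vec{z}_{ij(\ell-1)}^\top
\end{pmatrix}
=
\begin{pmatrix}
\I_{\ell t} & \Mat{C}_i \\
\I_{\ell t} & \Mat{C}_j \\
 & \Mat{P}_{ij1}\\
 & \vdots\\
 & \Mat{P}_{ij(\ell-1)}
\end{pmatrix}
\cdot
(\vec{x}_1,\dots,\vec{x}_{2\ell})^\top,
\end{equation*}
and by~\eqref{eq:matrix-nplus-network} the coefficient matrix has full column rank $2\ell t$, so the message vector is uniquely determined.

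I do not expect any real obstacle: the only nontrivial ingredient is the rank estimate $\rk(\Mat{C}_j-\Mat{C}_i)\geq t$, which is precisely the MRD property, and the remaining steps are linear algebra that parallel Theorem~\ref{thm:solution-extra-rate2} one-to-one. The one point that deserves some care is that the bound on the rank of the top $2\ell t$ rows is only a lower bound $(\ell+1)t$; if it happens to be larger, fewer new rows from the $\Mat{P}_{ijs}$ are actually needed, but this does not harm the argument because we are free to choose the $\Mat{P}_{ijs}$ rows from any complement, and the field size $q\geq 2$ is always sufficient since we are only completing a $\Fq$-subspace to $\Fq^{2\ell t}$.
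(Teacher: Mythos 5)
Your proposal is correct and follows essentially the same route as the paper: the paper derives the corollary directly from the observation that $\rk\bigl(\begin{smallmatrix}\I_{\ell t} & \Mat{C}_i \\ \I_{\ell t} & \Mat{C}_j\end{smallmatrix}\bigr) \geq (\ell+1)t$ by the MRD property, so the $(\ell-1)t$ rows of the $\Mat{P}_{ijs}$ can complete the matrix to full rank $2\ell t$, and each receiver then solves the resulting full-column-rank linear system exactly as in the proof of Theorem~\ref{thm:solution-extra-rate2}. Your additional remarks on the existence of the MRD code and on the case where the top block has rank exceeding $(\ell+1)t$ are sound and merely make explicit what the paper leaves implicit.
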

\section{Error Control in (Random) Linear Network Coding}\label{sec:error-model}
In the following, the constructions are based on the notion of \emph{subspaces}. Therefore we first introduce the following
notations related to subspaces used throughout the remainder of this chapter.

Given a subspace $\myspace{V}\in\ProjspaceAny{\dimAmbSpace}$, its orthogonal subspace is defined as
\begin{align}\label{eq:defOrthSpace}
  \myspaceDual{\myspace{V}}\coloneqq \{\u\ :\ \v\cdot\u=\0\ \forall\ \v\in\myspace{V}\}
\end{align}

The \emph{subspace distance} between two subspaces $\myspace{U},\myspace{V}$ in $\ProjspaceAny{\dimAmbSpace}$ is defined as
\begin{align}\label{eq:subspaceDistance}
  \Subspacedist{\myspace{U},\myspace{V}}&\coloneqq\dim(\myspace{U}+\myspace{V})-\dim(\myspace{U}\cap \myspace{V}) \nonumber \\
    &\overset{\phantom{def}}{=}\dim(\myspace{U})+\dim(\myspace{V})-2\dim(\myspace{U}\cap \myspace{V})\ .
\end{align}

A different metric on $\ProjspaceAny{\dimAmbSpace}$ is the \emph{injection distance}~\citep{SilvaKschischang-MetricsErrorCorrectionNetworkCoding_2009,Silva_PhD_ErrorControlNetworkCoding} which is defined as
\begin{equation}\label{eq:injectionDistance}
 \Injectiondist{\myspace{U},\myspace{V}}\coloneqq\max\{\dim(\myspace{U}),\dim(\myspace{V})\}-\dim(\myspace{U}\cap\myspace{V}).
\end{equation}

We now give a more detailed description of the  channel models of (random) linear network coding.
\subsection{Matrix Channel}
Suppose $K$ packets $\vec{x}_i\in\Fq^{\dimAmbSpace}$ for $i\in\intervallincl{0}{K-1}$ of length $\dimAmbSpace$ are transmitted from a source to the receivers.
Let the matrix $\Mat{X}\in\Fq^{K\times\dimAmbSpace}$ contain the transmitted packets $\vec{x}_i\in\Fq^{\dimAmbSpace}$ for $i\in\intervallincl{0}{K-1}$ as rows.
The in-network linear combinations from the source to one sink can be modeled by
  \begin{equation*}
   \vec{y}_j = \sum_{i=0}^{K-1}a_{j,i}\vec{x}_i,
   \quad\forall j\in\intervallincl{0}{M-1}
   \quad\Longleftrightarrow\quad
   \Mat{Y}=\Mat{A}\Mat{X}
  \end{equation*}
where the network matrix $\Mat{A}$ depends on the network topology as well as the coefficients of the linear combinations performed in the network.
If $\Mat{A}$ is constant and known at the receiver we call the scenario \emph{coherent network coding}.
If $\Mat{A}$ is not known at the receiver we have \emph{noncoherent} (or ``\emph{channel oblivious}'') network coding~\citep{ChouWuJain-PracticalNetworkCoding_2003,koetter_kschischang}.
A scheme for noncoherent network coding was presented by~\citet{ChouWuJain-PracticalNetworkCoding_2003}.
The idea is to append an identity matrix to the packet matrix $\Mat{X}$ such that if $\Mat{A}$ has full rank $K$ the transmitted packets in $\Mat{X}$ can be recovered from the received matrix $\Mat{Y}$ by Gaussian elimination.

In real networks the network matrix $\Mat{A}$ can have smaller rank than $K$ due to an insufficient number of links from the source to a sink or erased packets due to link failures or an unfortunate choice of the coding coefficients.
Additionally, noisy links may corrupt symbols or entire packets.
A single malicious or lost packet in the received matrix $\Mat{Y}$ causes the scheme by~\citet{ChouWuJain-PracticalNetworkCoding_2003} to fail.

A \emph{matrix channel} model~\citep{koetter_kschischang,silva_rank_metric_approach,Silva_PhD_ErrorControlNetworkCoding} incorporating these types of errors is given by
  \begin{equation}\label{eq:networkCodingMatrixChannel}
   \vec{y}_j = \sum_{i=0}^{K-1}a_{j,i}\vec{x}_i+\sum_{t=0}^{T-1}d_{j,t}\vec{z}_t,
   \quad\forall j\in\intervallincl{0}{M-1}
   \quad\Longleftrightarrow\quad
   \Mat{Y}=\Mat{A}\Mat{X}+\Mat{D}\Mat{Z}
  \end{equation}
  where $\Mat{D}\in\Fq^{M\times T}$ and the matrix $\Mat{Z}\in\Fq^{T\times\dimAmbSpace}$ contains $T$ erroneous packets $\vec{z}_t\in\Fq^{\dimAmbSpace}$ for $t\in\intervallincl{0}{T-1}$ as rows.

The challenge of error control in network coding is that linear combinations with erroneous packets again result in an erroneous packet which makes errors propagate through the entire network.
The problem of error propagation is illustrated in Figure~\ref{fig:errorProp}.

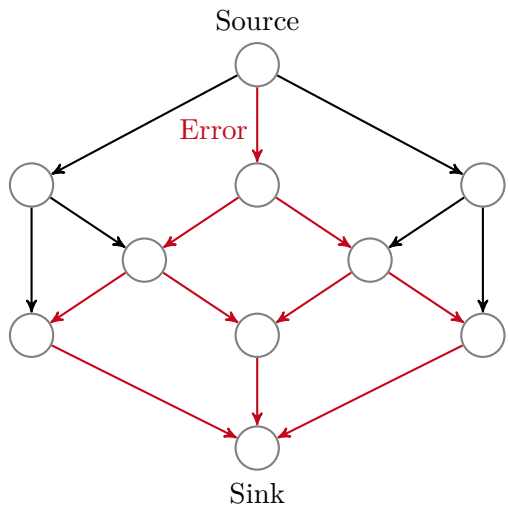
\begin{figure}[t!]
    \centering
    \tikzset{
  treenode/.style = {align=center, inner sep=0pt, text centered},
  corr/.style = {treenode, circle,  gray, draw= gray, text width=1.5em, thick},%
  err/.style = {treenode, circle, red, draw=TUMred, text width=1.5em, thick},%
  link/.style={->,draw=black, thick},
  error/.style={->,TUMred,draw=TUMred, thick}
}

\begin{tikzpicture}[->,>=stealth'] 
\node[corr] (source) at (0,0.1){};
\node[corr] (1a) at (-3,-1.5){};
\node[corr] (1b) at (0,-1.5){};
\node[corr] (1c) at (3,-1.5){};

\node[corr] (2a) at (-1.5,-2.5){};
\node[corr] (2b) at (1.5,-2.5){};

\node[corr] (3a) at (-3,-3.5){};
\node[corr] (3b) at (0,-3.5){};
\node[corr] (3c) at (3,-3.5){};

\node[corr] (sink) at (0,-5){};

\node[TUMred] at (-0.58,-0.75) {Error};
\node[black] at (0,0.7) {Source};
\node[black] at (0,-5.6) {Sink};

\path (source)  edge [link]   	(1a);
\path (source)  edge [error]   	(1b);
\path (source)  edge [link]   	(1c);

\path (1a)  edge [link]   	(2a);
\path (1a)  edge [link]   	(3a);
\path (1b)  edge [error]   	(2a);
\path (1b)  edge [error]   	(2b);
\path (1c)  edge [link]   	(2b);
\path (1c)  edge [link]   	(3c);

\path (2a)  edge [error]   	(3a);
\path (2a)  edge [error]   	(3b);
\path (2b)  edge [error]   	(3b);
\path (2b)  edge [error]   	(3c);

\path (3a)  edge [error]   	(sink);
\path (3b)  edge [error]   	(sink);
\path (3c)  edge [error]   	(sink);

\end{tikzpicture}
    \caption{Error propagation in RLNC. A single link error can corrupt the entire transmission.}\label{fig:errorProp}
\end{figure}

Error control for coherent network coding, i.e., if $\Mat{A}$ is known at the transmitter and the receiver, was considered by~\citet{NWCerrorCorrectionCai02,yeung2006errorCorrectionPart1,cai2006errorCorrectionPart2}.
These schemes select the network coding coefficients such that they can be used for error correction.

\subsection{Operator Channel}
In 2008, Kötter and Kschischang observed that in the error-free case for noncoherent RLNC (i.e., $\Mat{Z}=\Mat{0}$ and $\rk_q(\Mat{A})=K$) the row space of the transmitted packet matrix is preserved by the random and unknown $\Fq$-linear combinations in the network, i.e., we have that $\Rowspace{\Mat{A} \ \Mat{X}}=\Rowspace{\Mat{X}}$~\citep{koetter_kschischang}.
This motivated their idea to consider the row space of the transmitted and received matrices $\Mat{X}$ and $\Mat{Y}$.
The row space $\Rowspace{\Mat{Y}}$ of the received $\Mat{Y}$ in~\eqref{eq:networkCodingMatrixChannel} can be decomposed into a direct sum
  \begin{equation}\label{eq:networkCodingMatrixChannelRowSpace}
   \Rowspace{\Mat{Y}}=\tilde{\myspace{X}}\oplus\myspace{Z} ,
  \end{equation}
  where $\tilde{\myspace{X}}=\Rowspace{\Mat{X}}\cap\Rowspace{\Mat{Y}}$ is a subspace of $\Rowspace{\Mat{X}}$ and $\myspace{Z}$ is an error space that intersects trivially with $\Rowspace{\myspace{\Mat{X}}}$ (i.e., $\myspace{Z}\cap \Rowspace{\Mat{X}}=\0$).
  Motivated by the decomposition~\eqref{eq:networkCodingMatrixChannelRowSpace}, \citet{koetter_kschischang} proposed a channel model that abstracts the linear network coding channel on the packet level~\eqref{eq:networkCodingMatrixChannel} to the subspace level, i.e., the linear vector spaces spanned by the transmitted and received packets.
The \emph{operator channel} \citep[Definition~1]{koetter_kschischang} is a discrete memoryless channel that has input and output from an alphabet $\ProjspaceAny{\dimAmbSpace}$.
The output $\myspace{U}$ is related to the input $\myspace{V}$ with $\nTransmit\coloneqq\dim(\myspace{V})$ by
\begin{equation}\label{eq:operatorChannel}
 \myspace{U}=\delOp{\nTransmit-\deletions}(\myspace{V})\oplus \myspace{E} ,
\end{equation}
where $\delOp{\nTransmit-\deletions}(\myspace{V})$ returns a random $(\nTransmit-\deletions)$-dimensional subspace of $\myspace{V}$, and $\myspace{E}$ denotes an error space of dimension~$\insertions$ (see Figure~\ref{fig:opChannel}). We call $\deletions$ the number of \emph{deletions} and $\insertions$ the number of \emph{insertions}.
We consider the worst case that $\myspace{V}$ and $\myspace{E}$ intersect trivially, i.e., $\myspace{V}\cap\myspace{E}=\{\vec{0}\}$, since otherwise vectors that are contained in $\myspace{V}$ and $\myspace{E}$ but are not contained in $\delOp{\nTransmit-\deletions}(\myspace{V})$ might decrease the number of observed deletions at the channel output~\citep{koetter_kschischang}.
\begin{figure}[t!]
    \centering
    \begin{tikzpicture}[auto]
	\node (a) at (0,0) {};
	\node [draw, inner sep=15pt,thick] (channel) at (130pt,0) {$\delOp{\nTransmit-\deletions}(\myspace{V})\oplus\myspace{E}$};
	\node [coordinate,thick] (end) at (250pt,0){};
	\path[->,thick] (a) edge node {$\myspace{V}\in\ProjspaceAny{\dimAmbSpace}$} (channel);
	\path[->,thick] (channel) edge node {$\myspace{U}\in\ProjspaceAny{\dimAmbSpace}$} (end);
\end{tikzpicture}
    \caption{The operator channel~\citep{koetter_kschischang}.}\label{fig:opChannel}
\end{figure}
Thus the noncorrupted subspace is $\myspace{U}\cap\myspace{V}=\delOp{\nTransmit-\deletions}(\myspace{V})$ and we have
\begin{equation}\label{eq:constraintInsertionsDeletions}
  \dim(\myspace{U}\cap\myspace{V})\geq0
  \quad\Longleftrightarrow\quad
  \nTransmit\geq\deletions
  \quad\text{and}\quad
  \insertions=\dim(\myspace{E})\leq\dimAmbSpace-\nTransmit.
\end{equation}

The relation between the input and output of the operator channel is illustrated in Figure~\ref{fig:SubspaceChannel}.
\begin{figure}[ht!]
\centering
  \def\transSpace{(0,0) rectangle (3,1.5);}
\def\recSpace{(1,0) rectangle (4,1.5);}
\def\intSpace{(1,0) rectangle (3,1.5);}
\def\errSpace{(3,0) rectangle (4,1.5);}

\begin{tikzpicture}

\draw   [fill=TUMblue, blueskydeep] \transSpace;
\fill   [fill=TUMblue, blueskydeep, pattern=north west lines] \transSpace;
\draw   [fill=TUMgreen!50!white, TUMgreen!70!white] \intSpace;
\draw   [fill=lightcoral, lightcoral] \errSpace;

\node at (2, 0.75) {$\myspace{V} \cap \myspace{U}$};
\node[left] at (0, 0.75) {$\myspace{V}$};
\node[left] at (4.6, 0.75) {$\myspace{U}$};
\node[right] at (3.25, 0.75) {$\myspace{E}$};

\path [<->,thick,black] (0,-0.25) edge node[below] {$\deletions$} (1,-0.25);
\path [<->,thick,black] (1,-0.25) edge node[below] {$\nTransmit-\deletions$} (3,-0.25);
\path [<->,thick,black] (3,-0.25) edge node[below] {$\insertions$} (4,-0.25);

\end{tikzpicture}
  \caption{Illustration of the relation between the input and output of the operator channel~\eqref{eq:operatorChannel}.
  The transmitted space is $\myspace{V}$ (blue and green) and the received space is $\myspace{U}=(\myspace{V}\cap\myspace{U})\oplus\myspace{E}$ (green and red).
  The green intersection space is returned by $\delOp{\nTransmit-\deletions}(\myspace{V})$ and the red error space is $\myspace{E}$.
}
  \label{fig:SubspaceChannel}
\end{figure}

The distribution of $\delOp{\nTransmit-\deletions}(\myspace{V})$ does not affect the performance of the code and can be chosen to be uniform (see~\citep{koetter_kschischang}). %
Any type of errors occurring in the matrix channel~\eqref{eq:networkCodingMatrixChannel} can be modeled by the operator channel~\eqref{eq:operatorChannel} and vice versa.
The dimension of the output subspace $\myspace{U}$ is then %
\begin{equation*}
  \nReceive\coloneqq\dim(\myspace{U})=\dim\left(\delOp{\nTransmit-\deletions}(\myspace{V})\right)+\dim(\myspace{E})=\nTransmit-\deletions+\insertions.
\end{equation*}
The subspace distance between the input subspace $\myspace{V}$ and the output subspace $\myspace{U}$ is
\begin{align}\label{eq:subDistOpChannel}
 \Subspacedist{\myspace{V},\myspace{U}}&=\dim(\myspace{V})+\dim(\myspace{U})-2\dim(\myspace{V}\cap\myspace{U}) \nonumber  \\
    &=\nTransmit+\nReceive-2(\nTransmit-\deletions) \nonumber
    \\
    &=\insertions+\deletions.
\end{align}

We further relate the input $\myspace{V}$ and the output  $\myspace{U}$ with the following definition.
\begin{definition}[$(\insertions,\deletions)$-Reachability]
 We say that a subspace $\myspace{V}$ is \emph{$(\insertions,\deletions)$-reachable} from a subspace $\myspace{U}$ if there exists a realization of the operator channel~\eqref{eq:operatorChannel} with $\insertions$ insertions and $\deletions$ deletions that transforms the input $\myspace{V}$ to the output $\myspace{U}$.
\end{definition}
If a space $\myspace{V}$ is $(\insertions,\deletions)$-reachable from a space $\myspace{U}$, then we have that $\Subspacedist{\myspace{U},\myspace{V}}=\insertions+\deletions$.

The operator channel is of particular interest to evaluate the performance of decoders for \emph{subspace codes} (cf.~Section~\ref{sec:subspace-codes}).
Later we present decoding schemes that can correct insertions and deletions beyond the unique decoding region by allowing a very small decoding failure probability.
The decoding failure probability depends on the number of insertions $\insertions$  and deletions $\deletions$.
We use the operator channel to validate the upper bounds for the decoding failure probability for particular values of $\insertions$ and $\deletions$.

\section{Subspace Codes}
\label{sec:subspace-codes}
Subspace codes have been proposed for error control for noncoherent RLNC, e.g., when the network topology and the random in-network linear combinations are not known or used by the transmitter and the receiver~\citep{koetter_kschischang, silva_rank_metric_approach}.
This idea was motivated by \emph{Grassmann} codes in the field of complex numbers that are used for multiple antenna channels~\citep{Tse2002GrassmCodes}.
Constructions of subspace codes from Gabidulin codes were proposed in 2003 as linear authentication codes~\citep{wang2003linear}.
K\"otter and Kschischang revisited the Reed--Solomon like construction by~\citet{wang2003linear} in the context of error correction in RLNC and proposed a suitable metric for subspaces~\citep{koetter_kschischang}.

\begin{definition}[Subspace Code]\label{def:subspaceCode}
 A subspace code $\CSub$ is a non\-empty subset of $\ProjspaceAny{\dimAmbSpace}$.
\end{definition}

The minimum distance $\Subspacedist{\CSub}$ of a subspace code $\CSub$ is defined as
\begin{equation}
  \Subspacedist{\CSub}\coloneqq \min_{\myspace{V},\myspace{V}'\in\CSub,\myspace{V}\neq\myspace{V}'}\Subspacedist{\myspace{V},\myspace{V}'}.
\end{equation}

\subsection{Constant-Dimension Subspace Codes}

An important class of subspace codes are constant-dimension subspace codes, which are defined as follows.

\begin{definition}[Constant-Dimension Subspace Code]\label{def:CDsubspaceCode}
 A constant-dimension subspace code of dimension $\nTransmit$ is a nonempty subset of $\Grassm{\dimAmbSpace,\nTransmit}$.
\end{definition}
The relation between the subspace distance~\eqref{eq:subspaceDistance} and the injection distance~\eqref{eq:injectionDistance} of a constant-dimension subspace code $\CSub$ is (see~\citep{Silva_PhD_ErrorControlNetworkCoding})
\begin{equation}\label{eq:relationInjSubDist}
  \Subspacedist{\CSub}=2\Injectiondist{\CSub}.
\end{equation}
In the following we consider constant-dimension codes only and use the subspace distance as a metric.
All results can be expressed in terms of the injection distance by using~\eqref{eq:relationInjSubDist}.

Constructions of constant-dimension subspace codes were first considered by~\citet{wang2003linear} for linear authentication codes, later applied to RLNC by~\citet{koetter_kschischang, silva_rank_metric_approach, Silva_PhD_ErrorControlNetworkCoding}.
The code rate of a constant-dimension subspace code $\CSub\subseteq\Grassm{\dimAmbSpace,\nTransmit}$ is defined as
\begin{equation}\label{eq:rateSubspaceCode}
  R=\frac{\log_q(|\CSub|)}{\dimAmbSpace\nTransmit}.
\end{equation}
For a constant-dimension subspace code $\CSub\subseteq\Grassm{\dimAmbSpace,\nTransmit}$ with minimum distance $\Subspacedist{\CSub}$, the \emph{complementary} code is define as the set of all orthogonal subspaces (see~\eqref{eq:defOrthSpace})
\begin{equation}\label{eq:defComplementaryCode}
  \CSubComp\coloneqq\left\{\myspaceDual{V}:\myspace{V}\in\CSub\right\}.
\end{equation}
The complementary code is a constant-dimension code $\CSubComp\subseteq\Grassm{\dimAmbSpace,\dimAmbSpace-\nTransmit}$ of size $|\CSubComp|=|\CSub|$, minimum distance $\Subspacedist{\CSubComp}=\Subspacedist{\CSub}$ \citep{koetter_kschischang} and code rate
\begin{equation}\label{eq:rateComplementarySubspaceCode}
  R^{\perp}=\frac{\log_q(|\CSub|)}{\dimAmbSpace(\dimAmbSpace-\nTransmit)}=\frac{\nTransmit}{\dimAmbSpace-\nTransmit}R.
\end{equation}
Hence, we consider only codes with $\nTransmit\leq\dimAmbSpace/2$ since for each code with $\nTransmit>\dimAmbSpace/2$ there exists a complementary code with $\nTransmit<\dimAmbSpace/2$ that has the same minimum distance and a higher code rate.

\subsection{Lifted Rank-Metric Codes}
It was shown by~\citet{silva_rank_metric_approach} that constant-dimension subspace codes can be obtained by \emph{lifting} rank-metric codes.
Lifted MRD codes are ``near-optimal''~\citep[Theorem~4.24]{Silva_PhD_ErrorControlNetworkCoding}.
The lifting operation~\citep{silva_rank_metric_approach,Silva_PhD_ErrorControlNetworkCoding} appends the identity matrix to each rank-metric codeword and considers the row space of the resulting augmented matrix as a subspace codeword.

To describe a large variety of constant-dimension subspace codes that are constructed from rank-metric codes, we define a \emph{generalized lifting} operation.

\begin{definition}[Generalized Lifting]\label{def:generalizedLifting}
 Let $\Mat{A}\in\Fq^{\nTransmit\times\nTransmit}$ with $\rk_q(\Mat{A})=\nTransmit$ and let $\Mat{C}\in\Fq^{\nTransmit\times M}$.
 Define the map $\lifting:\Fqm^{\nTransmit\times M}\mapsto\Grassm{\nTransmit+M,\nTransmit}$
\begin{equation}
   \Mat{C}\mapsto\liftingMap{\Mat{A}}{\Mat{C}}=\Rowspace{\left(\Mat{A} \ \Mat{C}\right)}.
\end{equation}
The subspace $\liftingMap{\Mat{A}}{\Mat{C}}\in\Grassm{\nTransmit+M,\nTransmit}$ is called an $\Mat{A}$-lifting of $\Mat{C}$.
Given a matrix code $\mycode{C}\subseteq\Fq^{\nTransmit\times M}$ the corresponding $\Mat{A}$-lifted code $\liftingMap{\Mat{A}}{\mycode{C}}\subseteq\Grassm{\nTransmit+M,\nTransmit}$ is defined as
\begin{align*}
 \liftingMap{\Mat{A}}{\mycode{C}}\coloneqq \left\{\liftingMap{\Mat{A}}{\Mat{C}}:\Mat{C}\in\mycode{C}\right\}.
\end{align*}
\end{definition}

Let $\Gab{n_t,k}\subseteq\Fq^{\nTransmit\times \nTransmit}$ be a Gabidulin code (see Definition~\ref{def:GabCode}) with evaluation points $\vecalpha$.
There are two common choices for the lifting matrix $\Mat{A}$.

The interpolation-based decoding scheme~\citep{koetter_kschischang} uses $\Mat{A}=\extsmallfieldinput{\vecalpha}$ to construct the subspace code $\liftingMap{\Mat{A}}{\Gab{n,k}}$.
This construction is beneficial for interpolation-based decoding schemes since the basis vectors contain the code locators and thus can be used directly for decoding.
We call codes of this form \emph{locator-lifted} rank-metric codes.

In the syndrome-based approach~\citep{silva_rank_metric_approach} an identity matrix is used to construct subspace codes of the form $\liftingMap{\Mat{I}_{n_t}}{\Gab{n_t,k}}$.
For syndrome-based decoding schemes a canonical form of the received basis is required which can be obtained easily by Gaussian elimination if the code is lifted using an identity matrix.
Thus we call constructions of this form \emph{identity-lifted} rank-metric codes. %

For identity-lifted rank-metric codes the subspace distance of the lifted code is twice the rank distance of the rank-metric code~\citep{silva_rank_metric_approach,Silva_PhD_ErrorControlNetworkCoding}.
The following lemma extends this result to $\Mat{A}$-lifted codes.
\vspace*{5pt}
\begin{lemma}[Subspace Distance of $\Mat{A}$-Lifted Codes]\label{lem:distanceAliftedCodes}
 Let $\CRank\subset\Fq^{\nTransmit\times m}$ be a rank-metric code of length $\nTransmit$ and minimum distance $\Rankdist{\CRank}$ over the field $\Fq$.
 Let $\Mat{A}\in\Fq^{\nTransmit\times\nTransmit}$ be nonsingular.
 Then the $\Mat{A}$-lifted subspace code $\CSub=\liftingMap{\Mat{A}}{\CRank}$ has minimum subspace distance $\Subspacedist{\CSub}=2\Rankdist{\CRank}$.
\end{lemma}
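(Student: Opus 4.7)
The plan is to compute the subspace distance between two arbitrary lifted codewords in closed form in terms of the rank distance of their preimages, and then take the minimum.

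First I would fix two distinct codewords $\Mat{C}_1,\Mat{C}_2 \in \CRank$ and consider the lifted subspaces $\myspace{U}=\liftingMap{\Mat{A}}{\Mat{C}_1}=\Rowspace{(\Mat{A}\ \Mat{C}_1)}$ and $\myspace{V}=\liftingMap{\Mat{A}}{\Mat{C}_2}=\Rowspace{(\Mat{A}\ \Mat{C}_2)}$. Since $\Mat{A}$ is nonsingular, each of the matrices $(\Mat{A}\ \Mat{C}_i)$ has $\Fq$-rank $\nTransmit$, so $\dim(\myspace{U})=\dim(\myspace{V})=\nTransmit$. By the definition of the subspace distance in \eqref{eq:subspaceDistance}, it suffices to determine $\dim(\myspace{U}\cap\myspace{V})$, or equivalently $\dim(\myspace{U}+\myspace{V})$.

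The central observation is that $\myspace{U}+\myspace{V}$ is the row space of the stacked matrix
\begin{equation*}
\begin{pmatrix} \Mat{A} & \Mat{C}_1 \\ \Mat{A} & \Mat{C}_2 \end{pmatrix}.
\end{equation*}
Subtracting the upper block of rows from the lower block (an invertible $\Fq$-row operation, hence preserving the row space and its dimension) yields
\begin{equation*}
\begin{pmatrix} \Mat{A} & \Mat{C}_1 \\ \Mat{0} & \Mat{C}_2-\Mat{C}_1 \end{pmatrix}.
\end{equation*}
Because $\Mat{A}$ is nonsingular, the top $\nTransmit$ rows contribute rank $\nTransmit$ and are $\Fq$-linearly independent of any row whose left block is $\Mat{0}$. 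Hence
\begin{equation*}
\dim(\myspace{U}+\myspace{V}) \;=\; \nTransmit + \rk_q(\Mat{C}_2-\Mat{C}_1).
\end{equation*}

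From the dimension formula $\dim(\myspace{U}\cap\myspace{V})=\dim(\myspace{U})+\dim(\myspace{V})-\dim(\myspace{U}+\myspace{V})=\nTransmit-\rk_q(\Mat{C}_2-\Mat{C}_1)$, so by \eqref{eq:subspaceDistance}
\begin{equation*}
\Subspacedist{\myspace{U},\myspace{V}} \;=\; 2\nTransmit - 2\bigl(\nTransmit-\rk_q(\Mat{C}_2-\Mat{C}_1)\bigr) \;=\; 2\,\rk_q(\Mat{C}_2-\Mat{C}_1).
\end{equation*}
Minimising over all pairs of distinct codewords $\Mat{C}_1,\Mat{C}_2\in\CRank$ (noting that $\Mat{C}_1\neq\Mat{C}_2$ if and only if the lifted subspaces differ, since $\lifting_{\Mat{A}}$ is injective when $\Mat{A}$ has full rank) gives $\Subspacedist{\CSub}=2\Rankdist{\CRank}$.

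There is no serious obstacle here; the proof is essentially the same as in the identity-lifted case \citep{silva_rank_metric_approach}, with the only new ingredient being the remark that nonsingularity of $\Mat{A}$ is exactly what is needed so that (i) $(\Mat{A}\ \Mat{C}_i)$ still has full row rank $\nTransmit$, (ii) the row reduction above isolates $\Mat{C}_2-\Mat{C}_1$ in a block of $\Mat{0}$'s on the left, and (iii) the lifting map remains injective. None of the computation depends on the particular structure of $\Mat{A}$ beyond invertibility.
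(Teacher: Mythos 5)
Your proof is correct, but it takes a different route from the paper's. The paper observes that left-multiplying $(\Mat{A} \ \Mat{C})$ by $\Mat{A}^{-1}$ does not change the row space, so $\liftingMap{\Mat{A}}{\CRank}=\liftingMap{\Mat{I}}{\Mat{A}^{-1}\CRank}$; it then invokes the cited result (Proposition~4 of Silva--Kschischang) for identity-lifted codes and finishes by noting that $\Rankdist{\Mat{A}^{-1}\CRank}=\Rankdist{\CRank}$ since $\Mat{A}^{-1}\in\GL{\nTransmit}{\Fq}$ is a rank isometry. You instead prove the statement from scratch: you stack $(\Mat{A}\ \Mat{C}_1)$ over $(\Mat{A}\ \Mat{C}_2)$, row-reduce to a block-triangular form isolating $\Mat{C}_2-\Mat{C}_1$, read off $\dim(\myspace{U}+\myspace{V})=\nTransmit+\rk_q(\Mat{C}_2-\Mat{C}_1)$, and apply the dimension formula. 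Your approach is self-contained and in effect reproves the identity-lifted case as a byproduct, which makes the role of $\Mat{A}$'s nonsingularity transparent at each step; the paper's argument is shorter and more modular, delegating the core computation to an already-established lemma. Both hinge on the same ingredient, namely that an invertible $\Fq$-matrix on the left preserves both row spaces and $\Fq$-ranks, so there is no loss in either route.
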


\begin{proof}
 We have
 \begin{equation*}
  \Subspacedist{\CSub}=\liftingMap{\Mat{A}}{\CRank}=\liftingMap{\Mat{I}}{\Mat{A}^{-1}\CRank}.
 \end{equation*}
 Using Proposition~4 from \citet{silva_rank_metric_approach} we obtain
 \begin{equation*}
  \Subspacedist{\CSub}
  =\liftingMap{\Mat{I}}{\Mat{A}^{-1}\CRank}
  =2\Rankdist{\Mat{A}^{-1}\CRank}
  =2\Rankdist{\CRank}.
 \end{equation*}
\end{proof}
A well-known method to obtain a parity check matrix from a systematic generator matrix of a code~\citep[p.~55]{LinCostello-ErrorControlCoding_2004} can be used to construct the complementary code~\eqref{eq:defComplementaryCode} of an identity-lifted rank-metric code.

\begin{proposition}[Complementary Code of Identity-Lifted Code]
  Let $\CRank\subset\Fq^{\nTransmit\times m}$ be a rank-metric code.
  Let $\CSub=\liftingMap{\Mat{I}}{\CRank}$ be the corresponding identity-lifted rank-metric code in $\Grassm{\dimAmbSpace,\nTransmit}$ with $\dimAmbSpace=\nTransmit+m$.
  Then the complementary code $\CSubComp$ with dimension $\nTransmit^\perp=\dimAmbSpace-\nTransmit$ can be constructed from $\CRank$ by
  \begin{equation}
    \CSubComp=\left\{\Rowspace{\left(-\Mat{C}^\top \ \Mat{I}_{\nTransmit^{\perp}}\right)}:\Mat{C}\in\CRank\right\}.
  \end{equation}
\end{proposition}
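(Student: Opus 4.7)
The statement is an analog of the classical linear-algebra fact that $\vec{H} = (-\Mat{P}^\top \ \Mat{I}_{n-k})$ is a parity-check matrix of the code generated by $\vec{G} = (\Mat{I}_k \ \Mat{P})$, lifted to the setting of subspaces over $\Fq$ inside $\Fq^{\dimAmbSpace}$. The plan is to verify, for each fixed $\Mat{C} \in \CRank$, that the subspace $\myspace{W} \coloneqq \Rowspace{(-\Mat{C}^\top \ \Mat{I}_{\nTransmit^{\perp}})}$ coincides with the orthogonal complement of $\myspace{V} \coloneqq \liftingMap{\Mat{I}}{\Mat{C}} = \Rowspace{(\Mat{I}_{\nTransmit} \ \Mat{C})}$ under the standard bilinear form on $\Fq^{\dimAmbSpace}$.

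First I would argue inclusion $\myspace{W} \subseteq \myspaceDual{\myspace{V}}$ by a direct computation. Writing the generator matrices in block form and taking the product
\begin{equation*}
\begin{pmatrix} \Mat{I}_{\nTransmit} & \Mat{C} \end{pmatrix}
\begin{pmatrix} -\Mat{C}^{\top} & \Mat{I}_{\nTransmit^{\perp}} \end{pmatrix}^{\!\top}
= -\Mat{I}_{\nTransmit}\,\Mat{C} + \Mat{C}\,\Mat{I}_{\nTransmit^{\perp}} = \0,
\end{equation*}
shows that every basis row of $\myspace{W}$ is orthogonal to every basis row of $\myspace{V}$, and hence to all of $\myspace{V}$ by bilinearity.

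Next I would match the dimensions. Since the right block of $(-\Mat{C}^\top \ \Mat{I}_{\nTransmit^{\perp}})$ is the identity $\Mat{I}_{\nTransmit^{\perp}}$, its $\nTransmit^{\perp}$ rows are $\Fq$-linearly independent, so $\dim(\myspace{W}) = \nTransmit^{\perp} = \dimAmbSpace - \nTransmit$. By the standard dimension formula for orthogonal complements in $\Fq^{\dimAmbSpace}$, $\dim(\myspaceDual{\myspace{V}}) = \dimAmbSpace - \dim(\myspace{V}) = \dimAmbSpace - \nTransmit$ as well (the left block $\Mat{I}_{\nTransmit}$ of the generator of $\myspace{V}$ guarantees $\dim(\myspace{V}) = \nTransmit$). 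Combined with the inclusion, this forces $\myspace{W} = \myspaceDual{\myspace{V}}$, i.e., $\myspace{W} \in \CSubComp$.

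Finally I would note that the construction $\Mat{C} \mapsto \myspace{W}$ is injective (the right block $\Mat{I}_{\nTransmit^{\perp}}$ pins down the row-reduced form, so distinct $\Mat{C}$ produce distinct $-\Mat{C}^{\top}$ in the left block, hence distinct row spaces), so the map exhausts $\CSubComp$ which has the same cardinality as $\CSub$ and hence as $\CRank$. There is no real obstacle here; the only thing to be careful about is keeping the block dimensions straight (the transpose turns an $\nTransmit \times m$ matrix into an $m \times \nTransmit = \nTransmit^{\perp} \times \nTransmit$ matrix, matching the $\nTransmit^{\perp} \times \nTransmit^{\perp}$ identity on the right to yield an $\nTransmit^{\perp} \times \dimAmbSpace$ generator), and noting that the bilinear form used to define orthogonality is the standard one on $\Fq^{\dimAmbSpace}$ rather than any form over $\Fqm$.
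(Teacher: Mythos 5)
Your proof is correct and follows exactly the route the paper intends: the paper gives no explicit proof but points to the classical fact that $(-\Mat{P}^\top \ \Mat{I})$ is a parity-check matrix for the code generated by $(\Mat{I} \ \Mat{P})$, and your argument (orthogonality of the block products, dimension count, injectivity of $\Mat{C}\mapsto\Rowspace{(-\Mat{C}^\top \ \Mat{I}_{\nTransmit^\perp})}$) is precisely the fleshed-out version of that fact applied codeword-by-codeword.
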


\subsection{Interleaved Subspace Codes}\label{subsec:interleavedSubspace}

Identity-lifted \emph{interleaved} Gabidulin codes were considered by~\citet{silva_rank_metric_approach} to reduce the rate-loss due to the lifting and the computational complexity rather than increasing the decoding region.
Decoding schemes for identity-lifted interleaved Gabidulin codes with an improved decoding region were proposed by~\citet{SidBoss_InterlGabCodes_ISIT2010,Li2014transformDomainDCC}.
Locator-lifted interleaved Gabidulin codes - further called interleaved subspace codes - were first considered by~\citet{BartzWachterInterleavedSubspace2014}.

\begin{definition}[$\intOrder$-Interleaved Subspace Code]\label{def:interleaved_subspace}
 Let $\vecalpha=\left(\alpha_0 \ \alpha_1 \ \dots \alpha_{\nTransmit-1}\right)^\top$ with $\nTransmit\leq m$ be a vector containing $\Fq$-linearly independent code locators from $\Fqm$.
 For fixed integers $k^{(1)},\dots,k^{(\intOrder)} \leq \nTransmit$, an interleaved subspace code $\IntSub{\intOrder; \nTransmit, k^{(1)},\dots,k^{(\intOrder)}}$ of dimension $\nTransmit$ and interleaving order $\intOrder$ is defined as
 \begin{equation}\label{eq:DefIntSubspaceCode}
  \left\{
  \Rowspace{\left(\vecalpha \ f^{(1)}(\vecalpha) \ f^{(2)}(\vecalpha) \ \dots \ f^{(\intOrder)}(\vecalpha)\right)}:
  f^{(j)}(x)\in\Linpolyring_{<k^{(j)}}, \forall j\in\intervallincl{1}{\intOrder}
  \right\}.
 \end{equation}
\end{definition}
Let $\Mat{A}=\extsmallfieldinput{\vecalpha}$. $\intOrder$-interleaved subspace codes are $\Mat{A}$-lifted interleaved Gabidulin codes $\IntGabcode{\intOrder;n, k^{(1)},\dots,k^{(\intOrder)}}$ with code locators $\vecalpha$ (see Definition~\ref{def:IntGabCode}),
i.e., we have
\begin{equation*}
 \IntSub{\intOrder;\nTransmit,k^{(1)},\dots,k^{(\intOrder)}}
 =\liftingMap{\Mat{A}}{ \IntGabcode{\intOrder;\nTransmit,k^{(1)},\dots, k^{(\intOrder)} }^\top} ,
\end{equation*}
where $\IntGab^\top\subset \Fqm^{n_t\times L}$ is obtained by transposing all codewords of $\IntGab\subset \Fqm^{L\times n_t}$.

If $k^{(j)}=k,\forall j\in\intervallincl{1}{\intOrder}$ we call the code a \emph{homogeneous} interleaved subspace code and denote it by $\IntSub{\intOrder,\vecalpha; \nTransmit, k}$.
The basis vectors in~\eqref{eq:DefIntSubspaceCode} are of the form
\begin{equation*}
 \left(\alpha,\beta^{(1)},\dots, \beta^{(\intOrder)}\right)\text{ with }\alpha \in \Rowspace{\vecalpha}, \beta^{(1)},\dots,\beta^{(\intOrder)} \in \Fqm
\end{equation*}
and are expanded over the field $\Fq$ before transmission.
The ambient vector space is
 \begin{align*}
 \ambSpace&= \Rowspace{\vecFq{\vecalpha}} \times \underbrace{\Fq^{m}\times \dots \times \Fq^m}_{\intOrder\text{ times}}
 \end{align*}
 with dimension $\dim(\ambSpace)=\dimAmbSpace=\nTransmit+\intOrder m$.

The code rate of $\IntSub{\intOrder; \nTransmit, k^{(1)},\dots,k^{(\intOrder)}}$ is

\begin{equation}\label{eq:codeRateISub}
 R=\frac{\log_q\left(|\IntSub{\intOrder; \nTransmit, k^{(1)},\dots,k^{(\intOrder)}}|\right)}{\nTransmit\dimAmbSpace}
 =\frac{m\sum_{j=1}^\intOrder k^{(j)}}{\nTransmit(\nTransmit+\intOrder m)}.
\end{equation}

For increasing interleaving order $\intOrder$ the rate loss caused by the appended code locators decreases since $\nTransmit<<\intOrder m$.
For $\intOrder=1$ (no interleaving) the codes from Defintion~\ref{def:interleaved_subspace} are equivalent to Kötter-Kschischang subspace codes~\citep{koetter_kschischang}.

\begin{proposition}[Minimum Distance of $\intOrder$-Interleaved Subspace Codes]\label{lem:IntSubDistance}
 The minimum sub\-space distance of an $\intOrder$-interleaved subspace code $\IntSub{\intOrder,\vecalpha; \nTransmit, k^{(1)},\dots,k^{(\intOrder)}}$ as in Definition~\ref{def:interleaved_subspace} is
 \begin{equation*}
  \Subspacedist{\mathcal{IS}}
  = 2\left(\nTransmit-\max_{j\in\intervallincl{1}{\intOrder}}\left\{k^{(j)}\right\}+1\right).
 \end{equation*}
\end{proposition}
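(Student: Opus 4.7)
The plan is to reduce the minimum subspace distance of $\IntSub{\intOrder,\vecalpha;\nTransmit,k^{(1)},\dots,k^{(\intOrder)}}$ to the minimum $\Fq$-rank distance of the underlying interleaved Gabidulin code, and then exploit the component-wise structure of the latter. Since the code is, by the remark following Definition~\ref{def:interleaved_subspace}, an $\Mat{A}$-lifting of $\IntGabcode{\intOrder;\nTransmit,k^{(1)},\dots,k^{(\intOrder)}}^{\top}$ with $\Mat{A}=\extsmallfieldinput{\vecalpha}$ of full $\Fq$-rank $\nTransmit$, a direct application of Lemma~\ref{lem:distanceAliftedCodes} yields $\Subspacedist{\IntSub}=2\,\Rankdist{\IntGab}$, so it suffices to prove that $\Rankdist{\IntGab}=\nTransmit-\max_{j}k^{(j)}+1$.

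Alternatively, one can bypass the lifting lemma and argue directly on subspaces. For two codewords $\mathcal V_f,\mathcal V_g\in\IntSub$ with associated polynomial tuples $(f^{(1)},\dots,f^{(\intOrder)})$ and $(g^{(1)},\dots,g^{(\intOrder)})$, I would stack the two generating matrices and perform $\Fq$-linear row operations that cancel the shared $\vecalpha$-block in the bottom half. Because the entries of $\vecalpha$ are $\Fq$-linearly independent, the top $\nTransmit$ rows already contribute $\nTransmit$ independent directions, and the bottom block collapses to $(\vec{0}\,\mid\,h^{(1)}(\vecalpha)\,\cdots\,h^{(\intOrder)}(\vecalpha))$ with $h^{(j)}=g^{(j)}-f^{(j)}$. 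Consequently $\dim(\mathcal V_f+\mathcal V_g)=\nTransmit+\rk_q(M_h)$, where $M_h$ is the matrix with columns $h^{(j)}(\vecalpha)$, and therefore $\Subspacedist{\mathcal V_f,\mathcal V_g}=2\,\rk_q(M_h)$. Minimising over non-zero tuples of polynomials is exactly the minimum $\Fq$-rank of a non-zero codeword of $\IntGab^{\top}$, i.e.\ $\Rankdist{\IntGab}$.

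It remains to pin down $\Rankdist{\IntGab}=\nTransmit-\max_{j}k^{(j)}+1$, which I would handle by matching upper and lower bounds. For the lower bound, take any non-zero codeword $\vec{c}$ of $\IntGab$ with constituent Gabidulin codewords $\vec{c}^{(1)},\dots,\vec{c}^{(\intOrder)}$; at least one component $\vec{c}^{(j^{*})}$ is non-zero, and because the $\Fq$-rank of the stacked expansion is at least the $\Fq$-rank of any single row block, the minimum rank distance of $\Gab{\nTransmit,k^{(j^{*})}}$ gives
\begin{equation*}
\rk_q(\vec{c})\;\geq\;\rk_q(\vec{c}^{(j^{*})})\;\geq\;\nTransmit-k^{(j^{*})}+1\;\geq\;\nTransmit-\max_{j}k^{(j)}+1.
\end{equation*}
For the matching upper bound, let $j^{\star}=\arg\max_{j}k^{(j)}$, set $\vec{c}^{(j)}=\vec{0}$ for $j\neq j^{\star}$, and pick $\vec{c}^{(j^{\star})}$ to be a minimum-weight codeword of $\Gab{\nTransmit,k^{(j^{\star})}}$ (which attains the Singleton-like bound and has rank $\nTransmit-k^{(j^{\star})}+1$). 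The resulting interleaved codeword is non-zero with $\Fq$-rank exactly $\nTransmit-\max_{j}k^{(j)}+1$. Combining the two bounds and doubling delivers the claim.

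The main obstacle, such as it is, lies in the clean bookkeeping of the row reduction (or, equivalently, in verifying the non-square version of Lemma~\ref{lem:distanceAliftedCodes}); the rest amounts to the elementary observation that stacking cannot decrease rank, together with the known minimum distance of each constituent Gabidulin code. No interplay between the $k^{(j)}$ is needed because the worst-case codeword concentrates all its weight in the single component with largest dimension.
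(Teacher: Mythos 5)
Your proof is correct and takes the route the paper itself intends: the proposition is stated without proof, but the surrounding text (and the explicit treatment of the folded case, which ``combines'' the rank distance of the underlying code with Lemma~\ref{lem:distanceAliftedCodes}) shows the intended argument is precisely your primary one — reduce to the minimum $\Fq$-rank distance of $\IntGab$ via the lifting, then compute that distance. Your computation $\Rankdist{\IntGab}=\nTransmit-\max_{j}k^{(j)}+1$ (stacking cannot decrease rank for the lower bound; a minimum-weight codeword concentrated in the component with the largest $k^{(j)}$ for the upper bound) correctly supplies the piece the paper leaves implicit, and your direct row-reduction argument is a valid way to sidestep the fact that the lifting block here is $\nTransmit\times m$ rather than square.
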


\subsection{Decoding of Interleaved Subspace Codes}

Decoding of interleaved subspace codes was considered in~\citep{bartz2017algebraic,bartz2018efficient}.
The algorithm consists of two steps: the \emph{interpolation step} computes $r\leq\intOrder$ non-zero and (left) $\Linpolyring$-independent vectors of linearized polynomials
\begin{equation*}
\vec{Q}^{(i)}\!=\![Q_0^{(i)},Q_1^{(i)},\dots,Q_\intOrder^{(i)}] \in \Linpolyring^{\intOrder+1} \setminus \{\0\}, \ \forall i=1,\dots,r
\end{equation*}
such that they fulfill certain degree and evaluation conditions with respect to the received space.
The \emph{root-finding step} finds all message polynomials $f_j$ of degrees $\deg f_j < k$ such that
\begin{equation*}
Q_0^{(i)} + \sum_{j=1}^{u} Q_j^{(i)} f_j = 0 \quad \forall \, i=1,\dots,r.
\end{equation*}
If the number of insertions $\insertions$ and deletions $\deletions$ satisfies $\insertions+\intOrder\deletions<\intOrder(\nTransmit-k+1)$, then at least one satisfactory interpolation vector $\vec{Q}^{(i)}$ exists, see~\citep{bartz2017algebraic,bartz2018efficient}. The output list contains the transmitted message polynomial vector.
The algorithm can be considered as a partial unique or list decoder~\citep{bartz2017algebraic,bartz2018efficient}.

\subsection{Folded Subspace Codes}\label{subsec:FoldedSubspace}

We present a family
of folded subspace codes that can be decoded from insertions and deletions beyond the unique decoding region for any code rate $R$.
This class of folded subspace codes is motivated by the constructions by~\citet{Mahdavifar2012Listdecoding,GuruswamiWang2013LinearAlgebraic} and was published by~\citet{BartzSidorenko_FoldedSubspace2015_ISIT}.
\begin{definition}[$\foldPar$-Folded Subspace Code]\label{def:foldedSubspace}
 Let $\pe$ be a primitive element of $\Fqm$.
 An $\foldPar$-folded subspace code \FSub{\foldPar,\pe;\nTransmit,k} of dimension $\nTransmit$, where $\foldPar\nTransmit\leq m$, is defined as the set of subspaces
 \begin{equation*}
  \left\{
    \RowspaceHuge{
  \begin{pmatrix}
   \pe^{0} & f(\pe^{0})  & \dots & f(\pe^{\foldPar-1})
   \\
   \pe^{\foldPar} & f(\pe^{\foldPar})  & \dots & f(\pe^{2\foldPar-1})
   \\
   \vdots & \vdots & \vdots & \vdots
   \\
   \pe^{\foldPar\nTransmit-\foldPar} & f(\pe^{\foldPar\nTransmit-\foldPar}) & \dots & f(\pe^{\foldPar\nTransmit-1})
 \end{pmatrix}
 }
  :f(x)\in\LinpolyringK
  \right\}.
 \end{equation*}

\end{definition}
Defining the column vector $\foldedVec=\left(\pe^0 \ \pe^{\foldPar} \ \dots \ \pe^{\foldPar\nTransmit-\foldPar}\right)^\top$ we can write each codeword of \FSub{\foldPar,\pe;\nTransmit,k} as
\begin{equation}\label{eq:defFSubShort}
  \Rowspace{\left(\foldedVec \ f(\foldedVec) \ f(\pe\foldedVec) \ \dots \ f(\pe^{\foldPar-1}\foldedVec)\right)}
\end{equation}
where $f(x)\in\LinpolyringK$.
The dimension of the ambient vector space
 \begin{equation*}
  \ambSpace = \Rowspace{\vecFq{\foldedVec}} \times \underbrace{\Fq^{m}\times \dots \times \Fq^{m}}_{\foldPar\:\text{times}}
 \end{equation*}
 is $\dimAmbSpace=\nTransmit+\foldPar m$, since the vectors in the space $\Rowspace{\vecFq{\foldedVec}}$ have nonzero components at the $\nTransmit$ known positions $0,\foldPar,2\foldPar,\dots,\foldPar\nTransmit-\foldPar$ only.
 The zeroes at the known positions do not need to be transmitted and can be inserted at the receiver.
 The size of $\FSub{\foldPar,\pe;\nTransmit,k}$ is $|\FSub{\foldPar,\pe;\nTransmit,k}|=q^{mk}$ and the code rate is
\begin{equation}\label{eq:codeRateFSub}
  R=\frac{\log_q\left(|\FSub{\foldPar,\pe;\nTransmit,k}|\right)}{\nTransmit\dimAmbSpace}
  =\frac{km}{\nTransmit(\nTransmit+\foldPar m)}.
\end{equation}
The $\foldPar$-folded subspace codes in Definition~\ref{def:foldedSubspace} are locator-lifted $\foldPar$-folded Gabidulin codes (see Definition~\ref{def:hFoldedGab}).
Combining Theorem~\ref{thm:minimumDistance} and Lemma~\ref{lem:distanceAliftedCodes} we obtain the minimum distance of $\foldPar$-folded subspace codes.
\begin{proposition}[Minimum Distance of $\foldPar$-Folded Subspace Codes]\label{lem:subDistFS}
The minimum subspace distance of an $\foldPar$-folded subspace code \FSub{\foldPar,\pe;\nTransmit,k} is
\begin{equation}
  \SubspacedistNoInput=2\left(\nTransmit-\left\lceil\frac{k}{\foldPar}\right\rceil+1\right).
\end{equation}
\end{proposition}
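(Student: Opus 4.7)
The plan is to exploit the observation, made immediately before the proposition, that an $\foldPar$-folded subspace code is simply an $\Mat{A}$-lifted $\foldPar$-folded Gabidulin code, with lifting matrix $\Mat{A}=\extsmallfieldinput{\foldedVec}$ for $\foldedVec=(\pe^0,\pe^{\foldPar},\dots,\pe^{\foldPar\nTransmit-\foldPar})^\top$. Thus the result should follow by combining the known minimum rank distance of folded Gabidulin codes (\cref{thm:minimumDistance}) with the general lifting identity $\Subspacedist{\CSub}=2\Rankdist{\CRank}$ (\cref{lem:distanceAliftedCodes}), and the only thing to verify is that the hypotheses of these two results are met in our setting.

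First I would check that $\Mat{A}$ has full rank $\nTransmit$, which is required by \cref{lem:distanceAliftedCodes}. Since $\pe$ is primitive in $\Fqm$ and $\foldPar\nTransmit\leq m$ by assumption, the $\nTransmit$ powers $\pe^{0},\pe^{\foldPar},\dots,\pe^{\foldPar\nTransmit-\foldPar}$ are $\Fq$-linearly independent elements of $\Fqm$ (they are a subset of the $\Fq$-linearly independent set $\{1,\pe,\dots,\pe^{\foldPar\nTransmit-1}\}$), so $\Mat{A}=\extsmallfieldinput{\foldedVec}\in\Fq^{\nTransmit\times \nTransmit}$ is indeed invertible and \cref{lem:distanceAliftedCodes} applies.

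Second I would write each codeword of $\FSub{\foldPar,\pe;\nTransmit,k}$ in the form $\Rowspace{(\Mat{A}\ \Mat{C})}=\liftingMap{\Mat{A}}{\Mat{C}}$, where the block
\[
\Mat{C}=\bigl(f(\foldedVec)\ f(\pe\foldedVec)\ \dots\ f(\pe^{\foldPar-1}\foldedVec)\bigr)\in\Fqm^{\nTransmit\times\foldPar}
\]
ranges, as $f$ ranges over $\LinpolyringK$, over exactly the codeword set of the folded Gabidulin code $\FGab{\foldPar,\pe;\nTransmit,k}$ (cf.~\eqref{eq:defFGabShort} with $\lenFG=\nTransmit$). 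In other words, $\FSub{\foldPar,\pe;\nTransmit,k}=\liftingMap{\Mat{A}}{\FGab{\foldPar,\pe;\nTransmit,k}}$ as sets of subspaces.

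Finally, \cref{thm:minimumDistance} applied to $\FGab{\foldPar,\pe;\nTransmit,k}$ (with $\lenFG=\nTransmit$) yields $\Rankdist{\FGab{\foldPar,\pe;\nTransmit,k}}=\nTransmit-\lceil k/\foldPar\rceil+1$, and \cref{lem:distanceAliftedCodes} then gives
\[
\Subspacedist{\FSub{\foldPar,\pe;\nTransmit,k}}=2\Rankdist{\FGab{\foldPar,\pe;\nTransmit,k}}=2\left(\nTransmit-\left\lceil\tfrac{k}{\foldPar}\right\rceil+1\right),
\]
which is the claim. There is no real obstacle here since both ingredients have already been established; the only subtlety is the bookkeeping check that $\Mat{A}$ is nonsingular under the standing assumption $\foldPar\nTransmit\leq m$.
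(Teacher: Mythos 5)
Your proof is correct and follows exactly the route the paper takes: immediately before the proposition the text observes that $\foldPar$-folded subspace codes are locator-lifted $\foldPar$-folded Gabidulin codes and that combining Theorem~\ref{thm:minimumDistance} with Lemma~\ref{lem:distanceAliftedCodes} gives the result. You have merely spelled out the two hypothesis checks (that the locator block is nonsingular because $\foldPar\nTransmit\leq m$ and $\pe$ is primitive, and that the lifted code coincides setwise with $\liftingMap{\Mat{A}}{\FGab{\foldPar,\pe;\nTransmit,k}}$), which the paper leaves implicit.
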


\subsection{Decoding of Folded Subspace Codes}

Based on the decoding schemes for folded Gabidulin codes~\citep{Mahdavifar2012Listdecoding}, an interpolation-based decoding scheme for folded subspace codes was considered in~\citep{bartz2017algebraic,BartzSidorenko_FoldedSubspace2015_ISIT}.
The algorithm consists of two steps: the \emph{interpolation step} computes $r\leq\intOrder$ non-zero and (left) $\Linpolyring$-independent vectors of linearized polynomials
\begin{equation*}
\vec{Q}^{(i)}\!=\![Q_0^{(i)},Q_1^{(i)},\dots,Q_\intDim^{(i)}] \in \Linpolyring^{\intDim+1} \setminus \{\0\}, \ \forall i=1,\dots,r
\end{equation*}
such that they fulfill certain degree and evaluation conditions with respect to the received subspace.
The \emph{root-finding step} finds all message polynomials $f$ of degrees $\deg f < k$ such that
\begin{equation*}
Q_0^{(i)} + \sum_{j=1}^{u} Q_j^{(i)} f(\pe^{j-1}x) = 0 \quad \forall \, i=1,\dots,r.
\end{equation*}
If the number of insertions $\insertions$ and deletions $\deletions$ satisfies $\insertions+\intDim\deletions<\intDim(\nTransmit-\tfrac{k-1}{\foldPar-\intDim+1})$, then at least one satisfactory interpolation vector $\vec{Q}^{(i)}$ exists, see~\citep{bartz2017algebraic,bartz2018efficient}. The output list contains the transmitted message polynomial vector.
The algorithm can be considered as a partial unique or list decoder~\citep{bartz2017algebraic,bartz2018efficient}.

\section{Upper Bounds on Subspace Codes}
\label{sec:upperbound-size}
In this section, we consider upper bounds on the cardinality and the average list size of constant-dimension subspace codes.
Since we are interested in constructions of codes of maximum size we focus on upper bounds. %
An extensive survey on lower and upper bounds on the size of subspace codes can be found in~\citep{KhaleghiSilvaKschischang-SubspaceCodes_2009}.
We compare the interleaved subspace codes from Section~\ref{subsec:interleavedSubspace} and the folded subspace codes from Section~\ref{subsec:FoldedSubspace} with the bounds and show that the bounds can be asymptotically achieved by the considered codes while keeping the field size low.

Consider a constant-dimension subspace code $\CSub\subset\Grassm{\dimAmbSpace,\nTransmit}$ and define the \emph{normalized weight} $\lambda$, the \emph{code rate} $R$ and the \emph{normalized distance} $\normSubspacedist$ %
as

\begin{equation}\label{eq:normCodeParameters}
  \begin{split}
  \lambda&=\frac{\nTransmit}{\dimAmbSpace}\ , \\
  R&=\frac{\log_q(|\CSub|)}{\dimAmbSpace\nTransmit}=\frac{\log_q(|\CSub|)}{\lambda\dimAmbSpace^2}\ , \\
  \normSubspacedist&=\frac{\Subspacedist{\CSub}}{2\nTransmit}=\frac{\Subspacedist{\CSub}}{2\lambda\dimAmbSpace}. %
\end{split}
\end{equation}

\subsection{Singleton-like Bound for Subspace Codes}\label{subsec:SingletonBound}

The Singleton-like bound for subspace codes~\citep{koetter_kschischang} upper bounds the size of a subspace code $\CSub\in\Grassm{\dimAmbSpace,\nTransmit}$ by
\begin{equation}\label{eq:SingletonSubspace}
  |\CSub|\leq\quadbinom{\dimAmbSpace-(\Subspacedist{\CSub}-2)/2}{\max\left\{\nTransmit,\dimAmbSpace-\nTransmit\right\}}_q.
\end{equation}
The Gaussian coefficient can be lower and upper bounded by~\citep{KK08}
\begin{equation}\label{eq:boundGaussianCoeff}
q^{\ell(N-\ell)}\leq  \quadbinom{N}{\ell}_q\leq 3.5q^{\ell(N-\ell)} .
\end{equation}
Using~\eqref{eq:boundGaussianCoeff} we can upper bound~\eqref{eq:SingletonSubspace} by
\begin{equation}\label{eq:SingletonSubspaceExp}
  |\CSub|\leq 3.5q^{\max\left\{\nTransmit, \dimAmbSpace-\nTransmit\right\}\left(\dimAmbSpace-\Subspacedist{\CSub}/2+1-\max\left\{\nTransmit, \dimAmbSpace-\nTransmit\right\}\right)}.
\end{equation}
For $\nTransmit\leq \dimAmbSpace/2$ we have
\begin{equation*}
 |\CSub|\leq 3.5q^{(\dimAmbSpace-\nTransmit)(\nTransmit-\Subspacedist{\CSub}/2+1)}
\end{equation*}
and can express the Singleton-like bound~\eqref{eq:SingletonSubspaceExp} in terms of normalized parameters as
\begin{align}\label{eq:SingletonSubspaceNormPar}
  \log_q(|\CSub|)&\leq (\dimAmbSpace-\nTransmit)(\nTransmit-\Subspacedist{\CSub}/2+1)+\log_q(3.5)  \nonumber
  \\ \Longleftrightarrow\quad
  R&\leq
  (1-\lambda)\left(1-\normSubspacedist + \frac{1}{\lambda\dimAmbSpace}\right)+\frac{\log_q(3.5)}{\lambda\dimAmbSpace^{2}}.
\end{align}

Notice that for fixed dimension $\nTransmit$ we have $\lambda\sim\frac{1}{\dimAmbSpace}$ for $\dimAmbSpace\gg\nTransmit$.
Thus the term $\log_q(3.5)/(\lambda\dimAmbSpace^{2})$ in~\eqref{eq:SingletonSubspaceNormPar} vanishes asymptotically for $\dimAmbSpace\rightarrow\infty$ with order $1/\dimAmbSpace$.

For a code $\CSub'$ with $\nTransmit'\geq\dimAmbSpace/2$ we have
\begin{equation}
  |\CSub'|\leq 3.5q^{\nTransmit'(\dimAmbSpace-\nTransmit'-\Subspacedist{\CSub'}/2+1)}
\end{equation}
which in terms of the normalized parameters $\lambda'=\nTransmit'/\dimAmbSpace$ and $\normSubspacedist'=\Subspacedist{\CSub'}/(2\nTransmit')$ of $\CSub'$ gives
\begin{equation}\label{eq:SingletonNormParNtBiggerNhalf}
  R'\leq 1-\lambda'-\lambda'\normSubspacedist'+\frac{1}{\dimAmbSpace}+\frac{\log_q(3.5)}{\lambda'\dimAmbSpace^2}.
\end{equation}

For the complementary code $\CSubComp$ (see~\eqref{eq:defComplementaryCode}) of $\CSub$ we have $\nTransmit'=\dimAmbSpace-\nTransmit\geq\dimAmbSpace/2$, $\lambda'=1-\lambda$ and $\normSubspacedist'=\frac{\lambda}{1-\lambda}\normSubspacedist$.
By substituting $\lambda'$ and $\normSubspacedist'$ in~\eqref{eq:SingletonNormParNtBiggerNhalf} we can
write the Singleton bound for $\CSubComp$ in terms of the normalized parameters of $\CSub$ as
\begin{align}\label{eq:SingletonNormParCompRelation}
  R^\perp&\leq \lambda\left(1-\normSubspacedist+\frac{1}{\lambda\dimAmbSpace}\right)+\frac{\log_q(3.5)}{(1-\lambda)\dimAmbSpace^2} \nonumber
  \\
  &=\frac{\lambda}{1-\lambda}\left(  (1-\lambda)\left(1-\normSubspacedist + \frac{1}{\lambda\dimAmbSpace}\right)+\frac{\log_q(3.5)}{\lambda\dimAmbSpace^{2}}\right).
\end{align}
Recall from~\eqref{eq:rateComplementarySubspaceCode} that the code rate $R$ of a subspace code $\CSub$ is related to the code rate $R^\perp$ of the dual code $\CSubComp$ by $R^\perp=\frac{\lambda}{1-\lambda} R$.
From~\eqref{eq:SingletonSubspace} and~\eqref{eq:SingletonNormParCompRelation} we see that the complementary code of a Singleton-bound-achieving code also achieves the Singleton bound.
This relation is analog to the dual codes of Singleton bound achieving codes in the Hamming metric.

\subsection{Anticode Bound}\label{subsec:AnticodeBound}

The \emph{Anticode} bound was proposed by Delsarte for arbitrary association schemes~\citep[p.~32]{DelsarteAnticode73}.
Any subset $\AntiCode{t}$ of $\Grassm{\dimAmbSpace,\nTransmit}$ with $\Subspacedist{\myspace{U},\myspace{V}}\leq 2t$ for all $\myspace{U},\myspace{V}\in\AntiCode{t}$ is called an \emph{Anticode} of diameter $t$.
Let $\CSub\subseteq\Grassm{\dimAmbSpace,\nTransmit}$ be a constant-dimension subspace code.
The Anticode bound implies that
\begin{equation}\label{eq:AntiCodeBoundGeneral}
 |\CSub|
 \leq \frac{|\Grassm{\dimAmbSpace,\nTransmit}|}{|\AntiCode{t-1}|}
 =\frac{|\Grassm{\dimAmbSpace,\nTransmit}|}{\left|\AntiCode{\frac{\Subspacedist{\CSub}-2}{2}}\right|}.
\end{equation}
The bound is tight for the largest Anticode of diameter $t$ in $\Grassm{\dimAmbSpace,\nTransmit}$ which has size $|\AntiCode{t}|=\quadbinom{\dimAmbSpace-\nTransmit+t}{t}_q$ for $\nTransmit\leq\dimAmbSpace/2$~\citep{FranklWilson86}.
Using this result in~\eqref{eq:AntiCodeBoundGeneral} we get
\begin{equation}\label{eq:AnticodeBoundQbinom}
 |\CSub|
 \leq \frac{\quadbinom{\dimAmbSpace}{\nTransmit}_q}{\quadbinom{\nTransmit}{\nTransmit-\frac{\Subspacedist{\CSub}}{2}+1}_q}
 =\frac{\quadbinom{\dimAmbSpace}{\nTransmit-\frac{\Subspacedist{\CSub}}{2}+1}_q}{ \quadbinom{\nTransmit}{\nTransmit-\frac{\Subspacedist{\CSub}}{2}+1}_q}.
\end{equation}
The Anticode bound~\eqref{eq:AnticodeBoundQbinom} was proposed in~\citep[Theorem~1]{Etzion2011ErrorCorrecting} and described in~\citep[Theorem~3]{KhaleghiSilvaKschischang-SubspaceCodes_2009}.
Using~\eqref{eq:boundGaussianCoeff} we can bound~\eqref{eq:AnticodeBoundQbinom} from above by
\begin{equation}\label{eq:AnticodeBoundKK}
 |\CSub|\leq 3.5q^{(\dimAmbSpace-\nTransmit)(\nTransmit-\Subspacedist{\CSub}/2+1)}
\end{equation}
which coincides with the Singleton-like bound~\eqref{eq:SingletonSubspaceExp}.
Thus~\eqref{eq:AnticodeBoundKK} can be expressed in terms of normalized parameters as~\eqref{eq:SingletonSubspaceNormPar}, i.e., we have
\begin{align}\label{eq:AnticodeBoundNormPar}
 R&\leq
 (1-\lambda)\left(1-\normSubspacedist+\frac{1}{\lambda\dimAmbSpace}\right)+\frac{\log_{q}(3.5)}{\dimAmbSpace^{2}\lambda}.
\end{align}

\subsection{Upper Bounds for Interleaved and Folded Subspace Codes}

We now evaluate the bounds from Section~\ref{subsec:SingletonBound} and Section~\ref{subsec:AnticodeBound} for the parameters of interleaved and folded subspace codes.

\subsubsection*{Evaluation of Bounds for Interleaved Subspace Codes}

Consider an interleaved subspace code $\CSub=\IntSub{\intOrder,\vecalpha;\nTransmit,k^{(1)},\dots,k^{(\intOrder)}}$ and define $k_\text{max}\coloneqq\max_{j\in\intervallincl{1}{\intOrder}}\{k^{(j)}\}$.
The normalized parameters~\eqref{eq:normCodeParameters} for $\CSub$ are
\begin{align*}
 \dimAmbSpace&=\nTransmit+\intOrder m\ , \\
 \lambda&=\frac{\nTransmit}{\nTransmit+\intOrder m}\ , \\
 R&=\frac{m\sum_{j=1}^{\intOrder}k^{(j)}}{\nTransmit(\nTransmit+\intOrder m)}\ ,\\
 \normSubspacedist&=\frac{\nTransmit-k_\text{max}+1}{\nTransmit}\ .
\end{align*}
For fixed $\nTransmit$ and $m$ the limit $\dimAmbSpace\rightarrow\infty$ corresponds to $\intOrder\rightarrow\infty$ and we get $\lim_{\intOrder\rightarrow\infty}\lambda=0$ and $\lim_{\intOrder\rightarrow\infty}(R)=\frac{k_\text{max}}{\nTransmit}$ which is the asymptotic code rate of the corresponding interleaved Gabidulin code $\IntGabcode{\intOrder;\nTransmit,k^{(1)},\dots,k^{(\intOrder)}}$ for $\intOrder\rightarrow\infty$.
This illustrates that the rate loss due to the lifting becomes negligible for large $\intOrder$.

Evaluating the Singleton-like bound~\eqref{eq:SingletonSubspace} for the code parameters of $\intOrder$-interleaved subspace codes $\IntSub{\intOrder,\vecalpha;\nTransmit,k^{(1)},\allowbreak\dots,k^{(\intOrder)}}$ gives
\begin{align}\label{eq:SingletonBoundEval_ISub}
  |\CSub|&\leq
           \quadbinom{\nTransmit+\intOrder m-(\nTransmit-k_\text{max}+1)+1}{\nTransmit+\intOrder m-\nTransmit}_q\\
  &=\quadbinom{\intOrder m+k_\text{max}}{\intOrder m}_q\\
  &\leq
  3.5\cdot q^{\intOrder mk_\text{max}}.
\end{align}
Equation~\eqref{eq:SingletonBoundEval_ISub} shows that the size of an interleaved subspace code $\IntSub{\intOrder,\vecalpha;\nTransmit,k^{(1)},\allowbreak\dots,k^{(\intOrder)}}$ (which is $q^{m\sum_{j=1}^{\intOrder}k^{(j)}}$) has the same asymptotic behavior as the Singleton bound if $k^{(j)}=k_\text{max}$ for all $j\in\intervallincl{1}{\intOrder}$, i.e., if the code is a \emph{homogeneous} interleaved subspace code.
A code that achieves the Singleton-like bound in subspace metric can have at most $3.5$ times more codewords than a homogeneous interleaved subspace code $\IntSub{\intOrder,\vecalpha;\nTransmit,k}$ of size $q^{\intOrder mk}$.
The code rate of a \emph{homogeneous} interleaved subspace code $\CSub=\IntSub{\intOrder,\vecalpha;\nTransmit,k}$ in terms of normalized parameters is
\begin{align}\label{eq:rateISubHom}
  \log_q(|\CSub|)= \intOrder mk \nonumber
  \quad&\Longleftrightarrow\quad\frac{\log_q(|\CSub|)}{\dimAmbSpace\nTransmit}= \frac{1}{\dimAmbSpace\nTransmit}(\dimAmbSpace-\nTransmit)(\nTransmit-\Subspacedist{\CSub}/2+1)  \nonumber
  \\
  \quad&\Longleftrightarrow\quad
  R=(1-\lambda)\left(1-\normSubspacedist+\frac{1}{\lambda\dimAmbSpace}\right)
\end{align}
which has the same asymptotic behavior as the Singleton-like bound~\eqref{eq:SingletonSubspaceNormPar} and the Anticode bound~\eqref{eq:AnticodeBoundNormPar} for $\intOrder\rightarrow\infty$ (i.e., $\dimAmbSpace\rightarrow\infty$).

\subsubsection*{Evaluation of Bounds for Folded Subspace Codes}\label{subsubsec:SingletonFSub}

The normalized parameters~\eqref{eq:normCodeParameters} for a folded subspace code $\CSub=\FSub{\foldPar,\alpha;\nTransmit,k}$ are
\begin{align*}
 \dimAmbSpace&=\nTransmit+\foldPar m, \quad
 \lambda=\frac{\nTransmit}{\nTransmit+\foldPar m}, \\\
 R&=\frac{mk}{\nTransmit(\nTransmit+\foldPar m)} \quad\text{and}\quad
 \normSubspacedist=\frac{\nTransmit-\left\lceil k / \foldPar\right\rceil+1}{\nTransmit}.
\end{align*}
Evaluating the Singleton-like bound~\eqref{eq:SingletonSubspace} for these parameters and using~\eqref{eq:boundGaussianCoeff} gives
\begin{equation}\label{eq:SingletonBoundEval_FSub}
  \begin{split}
  |\CSub|
  &\leq\quadbinom{\nTransmit+\foldPar m-(\nTransmit-\left\lceil k / \foldPar\right\rceil+1)+1}{\nTransmit+\foldPar m-\nTransmit}_q\\
  &=\quadbinom{\foldPar m+\left\lceil k / \foldPar\right\rceil}{\foldPar m}_q
  \leq
  3.5\cdot q^{\foldPar m \left\lceil k / \foldPar\right\rceil}.
  \end{split}
\end{equation}
Recall that the size of a folded subspace code is $|\FSub{\foldPar,\alpha;\nTransmit,k}|=q^{mk}$.
Thus the size of $\FSub{\foldPar,\alpha;\nTransmit,k}$ has the same asymptotic behavior as the Singleton-like bound~\eqref{eq:SingletonBoundEval_FSub} if and only if $\foldPar$ divides $k$.
In this case a Singleton bound achieving code can have at most $3.5$ times more codewords than $\FSub{\foldPar,\alpha;\nTransmit,k}$.

Notice that for fixed $\nTransmit$ the degree $m$ of the field $\Fqm$ increases in $\foldPar$ since we require $\foldPar\nTransmit\leq m$.
For $\foldPar\nTransmit = n=m$, where $n$ is the length of the unfolded code of $\FGab{\foldPar,\alpha;\nTransmit,k}$ (see Definition~\ref{def:hFoldedGab}) and $\foldPar>>\nTransmit$,
we have
\begin{equation*}
 R=\frac{k\foldPar\nTransmit}{\nTransmit(\nTransmit+\foldPar^2\nTransmit)}
 =\frac{k\foldPar}{\nTransmit(\foldPar^2+1)}
 \approx\frac{k}{n}
\end{equation*}
which is the code rate of $\FGab{\foldPar,\alpha;\nTransmit,k}$ (see~\eqref{eq:codeRateFGab}).
This shows that the code rate loss due to the lifting is negligible for large $\foldPar>>\nTransmit$.

For the case when $\foldPar$ divides $k$ the code rate of a folded subspace code $\FSub{\foldPar,\alpha;\nTransmit,k}$ in terms of normalized parameters is
\begin{align}\label{eq:rateFSubHom}
  \log_q(|\CSub|)= \foldPar mk \nonumber
  \quad &\Longleftrightarrow\quad
  \frac{\log_q(|\CSub|)}{\dimAmbSpace\nTransmit}= \frac{1}{\dimAmbSpace\nTransmit}(\dimAmbSpace-\nTransmit)(\nTransmit-\Subspacedist{\CSub}/2+1)  \nonumber
  \\
  \quad &\Longleftrightarrow\quad
  R=(1-\lambda)\left(1-\normSubspacedist+\frac{1}{\lambda\dimAmbSpace}\right)
\end{align}
which has the same asymptotic behavior as the Singleton-like bound~\eqref{eq:SingletonSubspaceNormPar} and the Anticode bound~\eqref{eq:AnticodeBoundNormPar} for $\dimAmbSpace\rightarrow\infty$.

Hence, with increasing interleaving order $\intOrder$ and folding parameter $\foldPar$ homogeneous interleaved subspace codes and folded subspace codes where $\foldPar$ divides $k$ show the same asymptotic behavior as the Singleton-like bound~\eqref{eq:SingletonSubspaceNormPar} and the Anticode bound~\eqref{eq:AnticodeBoundNormPar}.

\subsubsection{Comparison of Interleaved and Folded Subspace Codes with Upper Bounds}

We now compare the code rate $R$ of interleaved and folded subspace codes with the Singleton-like bound~\eqref{eq:SingletonSubspaceNormPar} and the Anticode bound~\eqref{eq:AnticodeBoundNormPar}.
Figure~\ref{fig:normDistm6nt6} shows the normalized distance $\normSubspacedist$ over the code rate $R$ for Kötter-Kschischang codes ($\intOrder=\foldPar=1$) and interleaved/folded subspace codes with $\intOrder=\foldPar=3,5,10$.
The Singleton-like bound for the corresponding $\dimAmbSpace$ and $\lambda$ is computed using~\eqref{eq:SingletonSubspaceNormPar}.
Recall that in terms of normalized parameters the Anticode bound~\eqref{eq:AnticodeBoundNormPar} coincides with the Singleton-like bound~\eqref{eq:SingletonSubspaceNormPar}.

With increasing interleaving order $\intOrder$ and folding parameter $\foldPar$ the interleaved and folded subspace codes approach the Singleton-like bound for subspace codes since the normalized weight $\lambda$ decreases for increasing interleaving order~$\intOrder$ or increasing folding parameter~$\foldPar$.

Notice that a noninterleaved subspace code in~\citep{koetter_kschischang} with $M=\intOrder m$ shows the same behavior.
This code has to be decoded in $\mathbb{F}_{q^{M}}$ whereas the interleaved code is decoded in the (sub-) field $\Fqm$ which is in general more efficient.
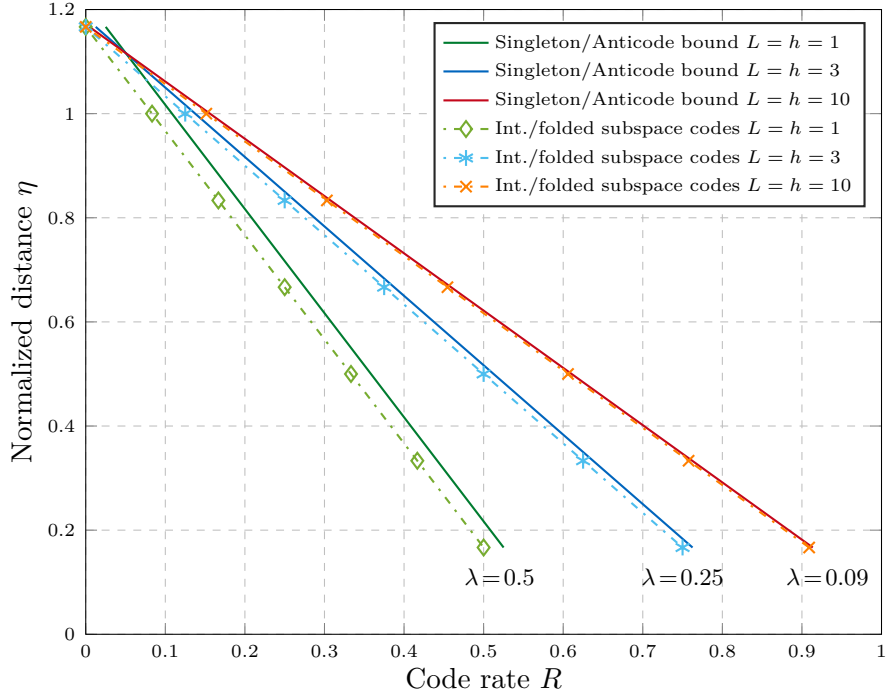
\begin{figure}[t!]
\centering
  \definecolor{mycolor1}{rgb}{0.00000,0.44700,0.74100}%
\definecolor{mycolor2}{rgb}{0.85000,0.32500,0.09800}%
\definecolor{mycolor3}{rgb}{0.92900,0.69400,0.12500}%
\definecolor{mycolor4}{rgb}{0.49400,0.18400,0.55600}%
\definecolor{mycolor5}{rgb}{0.46600,0.67400,0.18800}%
\definecolor{mycolor6}{rgb}{0.30100,0.74500,0.93300}%
\definecolor{mycolor7}{rgb}{0.63500,0.07800,0.18400}%
\begin{tikzpicture}

\begin{axis}[%
xmin=0,
xmax=1,
xlabel={Code rate $R$},
xmajorgrids,
ymin=0,
ymax=1.2,
ylabel={Normalized distance $\normSubspacedist$},
ymajorgrids,
title style={font=\bfseries},
legend style={legend cell align=left,align=left,draw=white!15!black},
mystyle
]

\addplot [color=TUMgreen,solid]
  table[row sep=crcr]{%
0.0251021516952445	1.16666666666667\\
0.108435485028578	1\\
0.191768818361911	0.833333333333333\\
0.275102151695245	0.666666666666667\\
0.358435485028578	0.5\\
0.441768818361911	0.333333333333333\\
0.525102151695245	0.166666666666667\\
};
\addlegendentry{Singleton/Anticode bound $\intOrder=\foldPar=1$};

\addplot [color=TUMblue,solid]
  table[row sep=crcr]{%
0.0125510758476222	1.16666666666667\\
0.137551075847622	1\\
0.262551075847622	0.833333333333333\\
0.387551075847622	0.666666666666667\\
0.512551075847622	0.5\\
0.637551075847622	0.333333333333333\\
0.762551075847622	0.166666666666667\\
};
\addlegendentry{Singleton/Anticode bound $\intOrder=\foldPar=3$};

\addplot [color=TUMred,solid]
  table[row sep=crcr]{%
0.00456402758095348	1.16666666666667\\
0.156079179096105	1\\
0.307594330611257	0.833333333333333\\
0.459109482126408	0.666666666666667\\
0.61062463364156	0.5\\
0.762139785156711	0.333333333333333\\
0.913654936671863	0.166666666666667\\
};
\addlegendentry{Singleton/Anticode bound $\intOrder=\foldPar=10$};

\addplot [color=mycolor5,dash pattern=on 1pt off 3pt on 3pt off 3pt,mark=diamond,mark options={solid}]
  table[row sep=crcr]{%
0	1.16666666666667\\
0.0833333333333333	1\\
0.166666666666667	0.833333333333333\\
0.25	0.666666666666667\\
0.333333333333333	0.5\\
0.416666666666667	0.333333333333333\\
0.5	0.166666666666667\\
};
\addlegendentry{Int./folded subspace codes $\intOrder=\foldPar=1$};

\addplot [color=mycolor6,dash pattern=on 1pt off 3pt on 3pt off 3pt,mark=asterisk,mark options={solid}]
  table[row sep=crcr]{%
0	1.16666666666667\\
0.125	1\\
0.25	0.833333333333333\\
0.375	0.666666666666667\\
0.5	0.5\\
0.625	0.333333333333333\\
0.75	0.166666666666667\\
};
\addlegendentry{Int./folded subspace codes $\intOrder=\foldPar=3$};

\addplot [color=TUMorange,dash pattern=on 1pt off 3pt on 3pt off 3pt,mark=x,mark options={solid}]
  table[row sep=crcr]{%
0	1.16666666666667\\
0.151515151515152	1\\
0.303030303030303	0.833333333333333\\
0.454545454545455	0.666666666666667\\
0.606060606060606	0.5\\
0.757575757575758	0.333333333333333\\
0.909090909090909	0.166666666666667\\
};
\addlegendentry{Int./folded subspace codes $\intOrder=\foldPar=10$};

\node (A) at (axis cs: 0.52, 0.11){\footnotesize{$\lambda\!=\!0.5$}};
\node (B) at (axis cs: 0.75, 0.11){\footnotesize{$\lambda\!=\!0.25$}};
\node (D) at (axis cs: 0.932, 0.11){\footnotesize{$\lambda\!=\!0.09$}};

\end{axis}
\end{tikzpicture}%
  \caption{Normalized distance $\normSubspacedist$ over the code rate $R$ for $q=2,m=6$ and $\nTransmit=6$ and interleaving orders / folding parameters $\intOrder=\foldPar=1,3,10$ with the corresponding normalized weight $\lambda$.}
  \label{fig:normDistm6nt6}
\end{figure}

\subsection{Upper Bound on the Average List Size of Subspace Codes}
\label{subsec:list-decoding-subspace-codes}
Given an $\nReceive$-dimensional received space $\myspace{U}\in\ProjspaceAny{\dimAmbSpace}$ the list decoding problem of a subspace code $\CSub$ is to find the list
\begin{equation}
  \List=\left\{\myspace{V}\in\CSub:\Subspacedist{\myspace{U},\myspace{V}}\leq r\right\}
\end{equation}
where $r$ is the decoding radius in subspace metric.
The challenge of list decoding subspace codes is to decrease the size of the list of candidate codewords, which is exponential in the dimension of the transmitted subspace~\citep{wa13a}.
List decodable variants of subspace codes have been proposed in~\citep{Mahdavifar2010Algebraic, GuruswamiInsertionsDeletions, TrautmannSilbersteinRosenthal-ListDecodingLiftedGabidulinCodes, wachter2014list} and allow to correct insertions and deletions beyond half the minimum subspace distance.
Most list decodable subspace codes are based on locator-lifted Gabidulin codes and control the list size by either restricting the message symbols or the code locators to belong to a subfield.
The list size for this decoder is further reduced in~\citep{GuruswamiWang13Explicit} by restricting the coefficients of the message polynomials to belong to the \emph{hierarchical subspace evasive sets}.
The output of this decoder is a \emph{basis} for the affine space of candidate solutions which in the \emph{worst case} results in a very large list of exponential size in the dimension of the transmitted subspace.
Bounds on the list-decodability of random subspace codes were given in~\citep{DingListRandomRMandSubspace}.

For constant-dimension subspace codes the receiver knows that all codewords have dimension $\nTransmit$.
Recall that the distance between the input and the output of the operator channel \eqref{eq:operatorChannel} with parameters $\insertions$ and $\deletions$ is $\insertions+\deletions$ (see~\eqref{eq:subDistOpChannel}).
We now give an upper bound on the \emph{average list size} of constant-dimension subspace codes, i.e., the average number of codewords that are in subspace distance within $r=\insertions+\deletions$ from an $\nReceive$-dimensional received subspace.
The bound uses ideas for the average list size of Reed-Solomon codes~\citep{McEliece_OntheaveragelistsizefortheGuruswami-Sudandecoder_2003} and Gabidulin codes~\citep{wachter2014list}.

The number of $\nTransmit$-dimensional subspaces in $\ProjspaceAny{\dimAmbSpace}$ at subspace distance at most $r$ from a fixed $\nReceive$-dimensional subspace $\myspace{U}$ in $\ProjspaceAny{\dimAmbSpace}$ is denoted by
 \begin{equation}
   \volSubBallVardy{\nReceive,\nTransmit,r}=|\{\myspace{V}\in\Grassm{\dimAmbSpace,\nTransmit}:\Subspacedist{\myspace{U},\myspace{V}}\leq r\}|.
 \end{equation}
 It is shown in~\citep[Lemma~7]{Etzion2011ErrorCorrecting} $\volSubBallVardy{\nReceive,\nTransmit,r}$ is independent of the center $\myspace{U}$ and given by
\begin{align}\label{eq:volSub}
\volSubBallVardy{\nReceive,\nTransmit,r}
&=\sum_{j=\lceil\frac{\nTransmit+\nReceive-r}{2}\rceil}^{\min\{\nTransmit,\nReceive\}}q^{(\nTransmit-j)(\nReceive-j)}\quadbinom{\nReceive}{j}_q\quadbinom{N-\nReceive}{\nTransmit-j}_q.
\end{align}
For $\nTransmit=\nReceive$ and $r=2t$ this definition coincides with the volume of a sphere in~\citep{koetter_kschischang}
\begin{equation}
  S(\myspace{V},\nReceive,t)=\{\myspace{U}\in\Grassm{\dimAmbSpace,\nReceive}:\Subspacedist{\myspace{U},\myspace{V}}\leq 2t\}
\end{equation}
and we get
\begin{equation*}
  |S(\myspace{V},\nReceive,t)|=\volSubBallVardy{\nReceive,\nReceive,2t}=\sum_{i=0}^{t} q^{i^{2}} \quadbinom{\ell}{i}_q\ \quadbinom{\dimAmbSpace-\ell}{i}_q.
\end{equation*}
For deriving the average list size we need the number of $\nTransmit$-dimensional subspaces within distance at most $\insertions+\deletions$ around a $(\nReceive=\nTransmit+\insertions-\deletions)$-dimensional subspace.
We now derive an upper bound for $\volSubBallVardy{\nTransmit+\insertions-\deletions,\nTransmit,\insertions+\deletions}\eqqcolon\volSubBall{\nTransmit,\insertions,\deletions}$.

\begin{lemma}[Volume of Balls in Subspace Metric]\label{lem:volSubBound}
 The number $\volSubBall{\nTransmit,\insertions,\deletions}$ of $\nTransmit$-dimensional subspaces in $\ProjspaceAny{\dimAmbSpace}$ at subspace distance at most $\insertions+\deletions$ from a fixed $(\nReceive=\nTransmit+\insertions-\deletions)$-dimensional subspace in $\ProjspaceAny{\dimAmbSpace}$ satisfies
 \begin{align}
   \volSubBall{\nTransmit,\insertions,\deletions}
   &=\sum_{i=0}^{\min\{\insertions,\deletions\}}q^{(\deletions-i)(\insertions-i)}
 \quadbinom{\nTransmit+\insertions-\deletions}{\insertions-i}_q
 \quadbinom{N-(\nTransmit+\insertions-\deletions)}{\deletions-i}_q
 \\
 &<16\cdot (\min\{\insertions,\deletions\}+1)\cdot q^{\insertions(\nTransmit-\deletions)+\deletions(\dimAmbSpace-\nTransmit)}.
 \end{align}
 \end{lemma}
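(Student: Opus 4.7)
The plan is to prove the equality first by an index substitution in \eqref{eq:volSub}, and then obtain the upper bound by applying the inequality $\quadbinom{n}{k}_q \leq 3.5 \, q^{k(n-k)}$ from~\eqref{eq:boundGaussianCoeff} term-by-term and observing that every exponent that arises is dominated by $\gamma(\nTransmit-\deletions)+\deletions(\dimAmbSpace-\nTransmit)$.

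For the equality, I set $\nReceive = \nTransmit+\insertions-\deletions$ and $r=\insertions+\deletions$ in~\eqref{eq:volSub}. A short computation gives $\lceil(\nTransmit+\nReceive-r)/2\rceil = \nTransmit-\deletions$ and $\min\{\nTransmit,\nReceive\} = \nTransmit-\deletions+\min\{\insertions,\deletions\}$, so the substitution $i = j-(\nTransmit-\deletions)$ reindexes the sum over $i \in \{0,1,\dots,\min\{\insertions,\deletions\}\}$. Under this substitution $\nTransmit-j = \deletions-i$ and $\nReceive-j = \insertions-i$, which directly yields the exponent $(\deletions-i)(\insertions-i)$ and the second binomial $\quadbinom{\dimAmbSpace-(\nTransmit+\insertions-\deletions)}{\deletions-i}_q$. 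Using the symmetry $\quadbinom{n}{k}_q = \quadbinom{n}{n-k}_q$, the first binomial $\quadbinom{\nTransmit+\insertions-\deletions}{\nTransmit-\deletions+i}_q$ becomes $\quadbinom{\nTransmit+\insertions-\deletions}{\insertions-i}_q$, establishing the claimed closed form.

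For the upper bound, I would apply $\quadbinom{n}{k}_q \leq 3.5\, q^{k(n-k)}$ to each Gaussian binomial in the summand. Writing $a = \insertions-i$ and $b = \deletions-i$, the total $q$-exponent of the $i$-th term is
\begin{equation*}
  (\deletions-i)(\insertions-i) + (\insertions-i)(\nTransmit-\deletions+i) + (\deletions-i)(\dimAmbSpace-\nTransmit-\insertions+i) = a\,\nTransmit + b(\dimAmbSpace-\nTransmit) - ab.
\end{equation*}
Substituting back $a = \gamma - i$ and $b = \delta - i$ and comparing with the target exponent yields the identity
\begin{equation*}
  \insertions(\nTransmit-\deletions)+\deletions(\dimAmbSpace-\nTransmit) - \bigl[a\,\nTransmit + b(\dimAmbSpace-\nTransmit) - ab\bigr] = i\bigl(\dimAmbSpace-\insertions-\deletions+i\bigr).
\end{equation*}
This correction is non-negative because the operator channel constraints~\eqref{eq:constraintInsertionsDeletions} give $\deletions \leq \nTransmit$ and $\insertions \leq \dimAmbSpace-\nTransmit$, hence $\insertions+\deletions \leq \dimAmbSpace$. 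Consequently every summand is at most $(3.5)^2\, q^{\insertions(\nTransmit-\deletions)+\deletions(\dimAmbSpace-\nTransmit)}$, and summing the $\min\{\insertions,\deletions\}+1$ terms gives a bound with prefactor $12.25\,(\min\{\insertions,\deletions\}+1) < 16\,(\min\{\insertions,\deletions\}+1)$.

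The steps are routine once the right change of variables is in place; the only mild obstacle is organizing the algebraic simplification of the exponent so that the dependence on $i$ collapses to a manifestly non-negative correction $i(\dimAmbSpace-\insertions-\deletions+i)$, which is what makes the crude constant $16$ uniform in $i$.
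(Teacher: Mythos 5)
Your proof is correct. Note that the paper itself does not prove this lemma --- it defers to Appendix~A.1.1 of the cited reference --- so there is no in-text argument to compare against; your route (reindex \eqref{eq:volSub} via $i=j-(\nTransmit-\deletions)$ together with the symmetry of the Gaussian binomial, then apply \eqref{eq:boundGaussianCoeff} termwise and check that the exponent deficit $i(\dimAmbSpace-\insertions-\deletions+i)$ is non-negative under \eqref{eq:constraintInsertionsDeletions}, with $(3.5)^2=12.25<16$ absorbing the constants) is the natural one and fills that gap cleanly.
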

The proof of Lemma~\ref{lem:volSubBound} can be found in~\citep[Appendix A.1.1]{bartz2017algebraic}. %

We now derive an upper bound on the average list size of constant-dimension subspace codes.

\begin{theorem}[Average List Size of Subspace Codes]\label{thm:avgListSize}
 Let $\CSub\subseteq\Grassm{\dimAmbSpace,\nTransmit}$ be a constant-dimension subspace code over $\Fq$.
 Let $\myspace{U}$ be an $\nReceive=\nTransmit+\insertions-\deletions$ dimensional subspace chosen uniformly at random from all subspaces in $\Grassm{\dimAmbSpace,\nReceive}$ that are within distance at most $\insertions+\deletions$ to a codeword of $\CSub$.
 The average list size $\avgListSizeCS{\insertions,\deletions}$, i.e., the average number of codewords from $\CSub$ at subspace distance at most $\insertions+\deletions$ from the fixed $(\nReceive=\nTransmit+\insertions-\deletions)$-dimensional subspace $\myspace{U}$, is upper bounded by
 \begin{align}\label{eq:upperBoundAvgListSizeSub}
   \avgListSizeCS{\insertions,\deletions}
   &<1+\frac{|\CSub|-1}{\quadbinom{N}{\nTransmit}_q}\cdot\volSubBall{\nTransmit,\insertions,\deletions}
   <1+ q^{\log_{q}{(|\CSub|)}-(\dimAmbSpace-\nTransmit-\insertions)(\nTransmit-\deletions)}.
 \end{align}
\end{theorem}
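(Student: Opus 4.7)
The plan follows the template of McEliece's average-list-size analysis of Reed--Solomon codes~\citep{McEliece_OntheaveragelistsizefortheGuruswami-Sudandecoder_2003} and Wachter-Zeh's rank-metric version~\citep{wachter2014list}, adapted to the Grassmannian via a symmetric double counting. I split the list into one tautological codeword (the $+1$) and the remaining $|\CSub|-1$ ``spurious'' codewords, bound the per-codeword probability of being in the list by $\volSubBall{\nTransmit,\insertions,\deletions}/\quadbinom{\dimAmbSpace}{\nTransmit}_q$, and finish by substituting Lemma~\ref{lem:volSubBound} and the standard Gaussian-binomial bound of~\eqref{eq:boundGaussianCoeff}.

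Let $r\coloneqq\insertions+\deletions$. Counting the pairs $(\myspace{V},\myspace{U})\in\Grassm{\dimAmbSpace,\nTransmit}\times\Grassm{\dimAmbSpace,\nReceive}$ with $\Subspacedist{\myspace{V},\myspace{U}}\leq r$ in two ways---fixing $\myspace{V}$ gives $\volSubBallVardy{\nTransmit,\nReceive,r}$ companions, while fixing $\myspace{U}$ gives $\volSubBallVardy{\nReceive,\nTransmit,r}=\volSubBall{\nTransmit,\insertions,\deletions}$ by the definition $\nReceive=\nTransmit+\insertions-\deletions$---yields the duality identity
\begin{equation*}
\quadbinom{\dimAmbSpace}{\nTransmit}_q\cdot\volSubBallVardy{\nTransmit,\nReceive,r}=\quadbinom{\dimAmbSpace}{\nReceive}_q\cdot\volSubBall{\nTransmit,\insertions,\deletions}.
\end{equation*}
This is the source of the denominator $\quadbinom{\dimAmbSpace}{\nTransmit}_q$ in the final bound, even though $\myspace{U}$ lives in $\Grassm{\dimAmbSpace,\nReceive}$.

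For the main step, observe that, by the definition of the support of the distribution of $\myspace{U}$, at least one codeword $\myspace{V}_0\in\CSub$ satisfies $\Subspacedist{\myspace{V}_0,\myspace{U}}\leq r$; this contributes the tautological $1$. For any other $\myspace{V}\in\CSub\setminus\{\myspace{V}_0\}$, the probability that $\myspace{V}\in\List(\myspace{U})$ is at most the fraction of $\Grassm{\dimAmbSpace,\nReceive}$ that is within distance $r$ of $\myspace{V}$, that is, $\volSubBallVardy{\nTransmit,\nReceive,r}/\quadbinom{\dimAmbSpace}{\nReceive}_q$, which the duality identity rewrites as $\volSubBall{\nTransmit,\insertions,\deletions}/\quadbinom{\dimAmbSpace}{\nTransmit}_q$. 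Summing this uniform tail estimate over the $|\CSub|-1$ spurious codewords and adding the tautological $1$ yields the first inequality of~\eqref{eq:upperBoundAvgListSizeSub}.

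The second inequality is purely analytic. Inserting $\volSubBall{\nTransmit,\insertions,\deletions}<16(\min\{\insertions,\deletions\}+1)\,q^{\insertions(\nTransmit-\deletions)+\deletions(\dimAmbSpace-\nTransmit)}$ from Lemma~\ref{lem:volSubBound} and $\quadbinom{\dimAmbSpace}{\nTransmit}_q\geq q^{\nTransmit(\dimAmbSpace-\nTransmit)}$ from~\eqref{eq:boundGaussianCoeff}, a direct expansion simplifies the exponent:
\begin{equation*}
\insertions(\nTransmit-\deletions)+\deletions(\dimAmbSpace-\nTransmit)-\nTransmit(\dimAmbSpace-\nTransmit)=-(\dimAmbSpace-\nTransmit-\insertions)(\nTransmit-\deletions),
\end{equation*}
and the polynomial prefactor in $\insertions,\deletions$ is absorbed into the constant implicit in the stated form $1+q^{\log_q(|\CSub|)-\cdots}$. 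The main obstacle I anticipate is making the per-codeword tail estimate fully rigorous under the stated conditioning that $\myspace{U}$ is uniform on $\set{B}$ rather than on the full Grassmannian: restricting to $\set{B}$ can only \emph{increase} the per-codeword probability, so the double count on $\Grassm{\dimAmbSpace,\nReceive}$ gives the right estimate only after pairing the tautological $1$ (which remains valid on $\set{B}$) against a compatible lower bound $|\set{B}|\geq\quadbinom{\dimAmbSpace}{\nReceive}_q\cdot\volSubBall{\nTransmit,\insertions,\deletions}/\quadbinom{\dimAmbSpace}{\nTransmit}_q$ obtained from a single codeword via the duality identity. Verifying that these two estimates combine exactly to cancel the Gaussian binomials---rather than leaving a multiplicative slack---is the delicate bookkeeping in the argument.
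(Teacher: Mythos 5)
Your proposal follows essentially the same route as the paper's proof: split off the causal codeword as the additive $1$, charge each of the $|\CSub|-1$ noncausal codewords an average contribution of $\volSubBall{\nTransmit,\insertions,\deletions}/\quadbinom{N}{\nTransmit}_q$, then insert Lemma~\ref{lem:volSubBound} together with $\quadbinom{N}{\nTransmit}_q\geq q^{\nTransmit(N-\nTransmit)}$ and simplify the exponent exactly as you do. The double-counting identity you introduce is a correct and useful way to explain why the denominator is $\quadbinom{N}{\nTransmit}_q$ rather than $\quadbinom{N}{\nReceive}_q$, but be aware that the ``delicate bookkeeping'' you flag at the end is not carried out in the paper either: the proof simply \emph{asserts} the expected number of noncausal codewords under the conditioned distribution (where, as you correctly observe, the per-codeword probability can only increase), and it also silently drops the $16\cdot(\min\{\insertions,\deletions\}+1)$ prefactor when passing to the final stated inequality, so your concern identifies a genuine looseness in the original argument rather than a gap in your own reconstruction of it.
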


\begin{proof}
 By assumption $\myspace{U}$ is chosen uniformly at random from all subspaces in $\Grassm{\dimAmbSpace,\nReceive}$ that are within subspace distance at most $\insertions+\deletions$ to a codeword of $\CSub$.
 There are $|\CSub|-1$ noncausal\footnote{Here noncausal codewords refer to all codewords in a code except for the transmitted codeword.} codewords (subspaces) out of $\quadbinom{N}{\nTransmit}_q$ possible $\nTransmit$-dimensional subspaces.
 Thus there are on average
 \begin{align}\label{eq:avgNumSubSpaces}
  \avgListSizeCSstar{\insertions,\deletions}=\frac{|\CSub|-1}{\quadbinom{N}{\nTransmit}_q}\cdot\volSubBall{\nTransmit,\insertions,\deletions}
  <\frac{|\CSub|}{\quadbinom{N}{\nTransmit}_q}\cdot\volSubBall{\nTransmit,\insertions,\deletions}
 \end{align}
 noncausal codewords within subspace distance at most $\insertions+\deletions$ from the received subspace $\myspace{U}$.
 Using Lemma~\ref{lem:volSubBound} we can upper bound~\eqref{eq:avgNumSubSpaces} by
 \begin{align*}
 \avgListSizeCSstar{\insertions,\deletions}
 &<\frac{|\CSub|}{\quadbinom{N}{\nTransmit}_q}\cdot\volSubBall{\nTransmit,\insertions,\deletions}
 \\
 &<q^{\log_{q}{(|\CSub|)}-\nTransmit(N-\nTransmit)}\cdot 16\cdot(\min\{\insertions,\deletions\}+1)\cdot
 q^{\insertions(\nTransmit-\deletions)+\deletions(\dimAmbSpace-\nTransmit)}
 \\
 &=16\cdot\left(\min\{\insertions,\deletions\}+1\right)\cdot q^{\log_{q}{(|\CSub|)}-\nTransmit(N-\nTransmit)+\insertions(\nTransmit-\deletions)+\deletions(\dimAmbSpace-\nTransmit)}
 \\
 &=16\cdot\left(\min\{\insertions,\deletions\}+1\right)\cdot q^{\log_{q}{(|\CSub|)}-(\nTransmit-\deletions)(\dimAmbSpace-\nTransmit)+\insertions(\nTransmit-\deletions)}
  \\
 &=16\cdot\left(\min\{\insertions,\deletions\}+1\right)\cdot q^{\log_{q}{(|\CSub|)}-(\dimAmbSpace-\nTransmit-\insertions)(\nTransmit-\deletions)}.
 \end{align*}
 Including the causal codeword we get $\avgListSizeCS{\insertions,\deletions}=1+\avgListSizeCSstar{\insertions,\deletions}$.
 \end{proof}
In terms of normalized parameters~\eqref{eq:normCodeParameters} we can write~\eqref{eq:upperBoundAvgListSizeSub} as
\vspace*{-5pt}
\begin{align}\label{eq:avgListSizeNormPar}
  \avgListSizeCS{\insertions,\deletions}
   &<1+ q^{\log_{q}{(|\CSub|)}-((1-\lambda)\dimAmbSpace-\insertions)(\lambda\dimAmbSpace-\deletions)}
   \\
   &=1+ q^{\log_{q}{(|\CSub|)}-\frac{1-\lambda}{\lambda}\left(\lambda\dimAmbSpace-\frac{\lambda}{1-\lambda}\insertions\right)(\lambda\dimAmbSpace-\deletions)}.
   \vspace*{-5pt}
\end{align}
From~\eqref{eq:avgListSizeNormPar} we see that the influence of insertions and deletions on the average list size is \emph{asymmetric}.
The degree of this asymmetry depends on the normalized weight~$\lambda$.
If $\lambda\leq1/2$ (i.e., if $\nTransmit\leq\dimAmbSpace/2$) a \emph{deletion} affects the average list size $(1-\lambda)/\lambda$ times more than an insertions.
For $\lambda>1/2$ (i.e., if $\nTransmit>\dimAmbSpace/2$) an \emph{insertion} affects the average list size $\lambda/(1-\lambda)$ times more than a deletion.

Hence, a code with normalized weight $\lambda\leq\dimAmbSpace/2$ should be more robust against insertions whereas a code with $\lambda>\dimAmbSpace/2$ should be able to tolerate more deletions.

We now evaluate Theorem~\ref{thm:avgListSize} for the parameters of lifted rank-metric codes.
\begin{corollary}[Average List Size of Lifted Rank-Metric Codes]\label{cor:AvgListLiftedRankMetric}
 Let $\CRank\subset\Fq^{\nTransmit\times M}$ be a rank-metric code with code rate $R_r$ and let $\CSub=\liftingMap{\Mat{I}}{\CRank}$.
 Let the received space $\myspace{U}$ be chosen uniformly at random among all subspaces from $\Grassm{\dimAmbSpace,\nReceive}$ that contain a codeword.
 The average list size $\avgListSizeCS{\insertions,\deletions}$, i.e., the average number of codewords within subspace distance at most $\insertions+\deletions$ from an $(\nReceive=\nTransmit+\insertions-\deletions)$-dimensional received subspace $\myspace{U}$, is upper bounded by
 \begin{equation}\label{eq:avgListLiftedRankMetric}
   \abovedisplayskip=8pt
   \belowdisplayskip=6pt
    \avgListSizeCS{\insertions,\deletions}<1+16\cdot (\min\{\insertions,\deletions\}+1)\cdot
    q^{\nTransmit R_r M -(M-\insertions)(\nTransmit-\deletions)}.
 \end{equation}
\end{corollary}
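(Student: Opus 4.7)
The plan is to apply Theorem~\ref{thm:avgListSize} directly to the lifted code $\CSub = \liftingMap{\Mat{I}}{\CRank}$, identifying the correct values of the relevant parameters and then substituting them into the second inequality of \eqref{eq:upperBoundAvgListSizeSub}. No genuinely new combinatorial estimate is needed; the work is almost entirely book-keeping of how the parameters of the ambient projective space and of the code transform under the identity-lifting map.

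First I would fix the ambient space. By Definition~\ref{def:generalizedLifting}, every codeword of $\CSub = \liftingMap{\Mat{I}}{\CRank}$ is the row space of an $\nTransmit\times(\nTransmit+M)$ matrix of the form $(\Mat{I}_{\nTransmit} \ \Mat{C})$ with $\Mat{C}\in\Fq^{\nTransmit\times M}$; hence $\CSub\subseteq\Grassm{\dimAmbSpace,\nTransmit}$ with ambient dimension $\dimAmbSpace=\nTransmit+M$. Second, since the presence of the identity block in the first $\nTransmit$ columns uniquely determines $\Mat{C}$ from $\Rowspace{(\Mat{I}_\nTransmit \ \Mat{C})}$, the lifting map is injective on $\CRank$, so $|\CSub|=|\CRank|$. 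By the definition of the rate of a rank-metric code, $|\CRank|=q^{\nTransmit M R_r}$, and therefore $\log_q(|\CSub|)=\nTransmit M R_r$.

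Next, I would substitute these two identities into the sharper form of the bound from Theorem~\ref{thm:avgListSize}, namely
\begin{equation*}
\avgListSizeCSstar{\insertions,\deletions} < 16\cdot(\min\{\insertions,\deletions\}+1)\cdot q^{\log_q(|\CSub|)-(\dimAmbSpace-\nTransmit-\insertions)(\nTransmit-\deletions)}
\end{equation*}
(which is the intermediate inequality obtained inside the proof of Theorem~\ref{thm:avgListSize} via Lemma~\ref{lem:volSubBound}, before the simpler but looser final bound of \eqref{eq:upperBoundAvgListSizeSub} is stated). Using $\dimAmbSpace-\nTransmit-\insertions=M-\insertions$ and $\log_q(|\CSub|)=\nTransmit M R_r$, this specialises to
\begin{equation*}
\avgListSizeCSstar{\insertions,\deletions} < 16\cdot(\min\{\insertions,\deletions\}+1)\cdot q^{\nTransmit R_r M-(M-\insertions)(\nTransmit-\deletions)} .
\end{equation*}
Adding the contribution of the transmitted codeword, $\avgListSizeCS{\insertions,\deletions}=1+\avgListSizeCSstar{\insertions,\deletions}$, yields exactly \eqref{eq:avgListLiftedRankMetric}.

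The only subtlety is that the final displayed bound in Theorem~\ref{thm:avgListSize} is the coarser estimate without the $16(\min\{\insertions,\deletions\}+1)$ factor; to recover the explicit constant appearing in \eqref{eq:avgListLiftedRankMetric} one has to invoke the tighter intermediate inequality that comes from plugging Lemma~\ref{lem:volSubBound} into \eqref{eq:avgNumSubSpaces}, rather than the simplified tail of the proof of Theorem~\ref{thm:avgListSize}. This is the only step that requires any care; apart from that the corollary is immediate. No new bound on the volume of balls in subspace metric is needed, since Lemma~\ref{lem:volSubBound} already provides the required estimate uniformly in the center of the ball.
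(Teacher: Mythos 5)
Your proof is correct and follows exactly the intended route: identify $\dimAmbSpace=\nTransmit+M$ and $\log_q|\CSub|=\nTransmit M R_r$ for the identity-lifted code, then substitute into the bound of Theorem~\ref{thm:avgListSize}. You are also right that the explicit constant $16(\min\{\insertions,\deletions\}+1)$ comes from the chain of inequalities inside the proof of Theorem~\ref{thm:avgListSize} (via Lemma~\ref{lem:volSubBound}); note only that your labeling of that chain as ``tighter'' than the theorem's displayed rightmost bound is the reverse of the numerical relation—the displayed bound omits the prefactor and so looks sharper than what the proof actually establishes, and the corollary correctly retains the constant that the proof produces.
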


\subsubsection{Average List Size of Interleaved Subspace Codes}

We now estimate the average number of codewords of an interleaved subspace code that are within subspace distance $\insertions+\deletions$ from the received subspace.
By evaluating Corollary~\ref{cor:AvgListLiftedRankMetric} for the parameters of interleaved subspace codes we obtain
 \begin{equation}\label{eq:avgListSizeIS}
   \abovedisplayskip=8pt
   \belowdisplayskip=6pt
    \avgListSizeIS{\insertions,\deletions}<1+16\cdot (\min\{\insertions,\deletions\}+1)\cdot q^{\intOrder(mk-(\nTransmit-\deletions)(m-\frac{\insertions}{\intOrder}))}.
 \end{equation}

Note that if we choose $\nTransmit\approx m$ in~\eqref{eq:avgListSizeIS} we observe an asymmetry between insertions and deletions of degree $\intOrder$, i.e., deletions affect the average list size of {the code} $\intOrder$ times more than insertions.

Consider a homogeneous interleaved subspace code $\CSub=\IntSub{\intOrder,\vecalpha;\nTransmit, k}$.
To decode interleaved subspace codes with a probabilistic unique decoder we require the average list size to be close to one. %
This is fulfilled if the exponent in~\eqref{eq:avgListSizeIS} becomes negative, i.e., if we have
\begin{align*}
  mk-(\nTransmit-\deletions)\left(m-\frac{\insertions}{\intOrder}\right) &< 0
  \\ \Longleftrightarrow\quad
  mk&<\nTransmit m - \nTransmit \frac{\insertions}{\intOrder} -\deletions m + \frac{\deletions\insertions}{\intOrder}
  \\ \Longleftrightarrow\quad
  k&<\nTransmit - \frac{\nTransmit}{m} \frac{\insertions}{\intOrder} -\deletions + \frac{\deletions\insertions}{\intOrder m}.
\end{align*}
For $\nTransmit\approx m$ we get
\begin{align}\label{eq:avgListISub}
  \frac{\insertions}{\intOrder} + \deletions<\nTransmit - k + \frac{\deletions\insertions}{\intOrder m}
  \quad&\Longleftrightarrow\quad
   \insertions + \intOrder\deletions<\intOrder\left(\nTransmit - k\right) + \frac{\deletions\insertions}{m} \nonumber
   \\ \quad&\Longleftrightarrow\quad
   \insertions + \deletions\left(\intOrder - \frac{\insertions}{m}\right) <\intOrder\left(\nTransmit - k\right) \nonumber
   \\ \quad&\Longleftrightarrow\quad
   \insertions + \deletions\left(\intOrder - \frac{\insertions}{m}\right) <\intOrder\left(\frac{\Subspacedist{\CSub}-2}{2}\right).
\end{align}
From~\eqref{eq:avgListISub} we see that a good list decoder for interleaved subspace codes should be able to tolerate approximately $\intOrder$ times more insertions $\insertions$ than deletions $\deletions$ and return on average a list of size close to one if $\insertions$ and $\deletions$ satisfy~\eqref{eq:avgListISub}.

The asymmetry between insertions and deletions is due to the influence of $\intOrder$ on the normalized weight $\lambda=\nTransmit/(\nTransmit+\intOrder m)$ (see~\eqref{eq:avgListSizeNormPar}).

\subsubsection{Average List Size of Folded Subspace Codes}

By Corollary~\ref{cor:AvgListLiftedRankMetric}, the average list size for folded subspace codes is upper bounded by
 \begin{equation}\label{eq:avgListSizeFSubCodes}
   \avgListSizeFS{\insertions,\deletions}<1+16\cdot(\min\{\insertions,\deletions\}+1)\cdot q^{mk-(\nTransmit-\deletions)(\foldPar m-\insertions)}.
 \end{equation}

Let $\CSub=\FSub{\foldPar,\pe;\nTransmit,k}$ and assume that $\foldPar$ divides $k$.
We get an average list size close to one if the exponent in~\eqref{eq:avgListSizeFSubCodes} becomes negative, i.e., if for $\nTransmit\foldPar\approx m$ we have
\begin{align}\label{eq:avgListFSub}
 &mk<(\nTransmit-\deletions)(\foldPar m-\insertions) \nonumber
 \\ \Longleftrightarrow\quad
&mk<\nTransmit\foldPar m - \nTransmit\insertions-\deletions\foldPar m + \deletions\insertions= \nTransmit\foldPar m - \frac{m\insertions}{\foldPar}-\deletions\foldPar m + \deletions\insertions \nonumber
 \\ \Longleftrightarrow\quad
 &\frac{\insertions}{h} + \deletions\foldPar < \nTransmit\foldPar -k + \frac{\deletions\insertions}{m} \nonumber
  \\ \Longleftrightarrow\quad
 &\frac{\insertions}{h} + \deletions\left(\foldPar-\frac{\insertions}{m}\right) < \foldPar\left(\nTransmit -\frac{k}{\foldPar}\right)=\foldPar\left(\frac{\Subspacedist{\CSub}-2}{2}\right).
\end{align}

Compared to interleaved codes~\eqref{eq:avgListISub} the average list size of folded subspace codes~\eqref{eq:avgListFSub} shows an even a higher degree of asymmetry between the tolerable insertions and deletions.
Thus, a good list decoder for $\foldPar$-folded subspace codes should tolerate approximately $\foldPar^2$-times more insertions $\insertions$ than deletions $\deletions$ and return a list of average size close to one if $\insertions$ and $\deletions$ satisfy~\eqref{eq:avgListFSub}.

\chapter{Conclusion}\label{chap:concl}

In this survey, we have presented rank-metric codes and some of their most important applications. Chapter~\ref{ch:introRankMetric} formally introduced several classes of and properties of rank-metric codes with a focus on Gabidulin codes and then summarized a selection of known results on their decoding. Chapter~\ref{chap:crypto} investigated the application of rank-metric codes in cryptography. We briefly recalled a number of rank-metric based cryptosystems and formally defined the respective hard problems as well as some known attacks on these systems. Chapter \ref{chap:storage} focused on applications in storage, first highlighting the role of MRD codes in the construction of codes with locality followed by a brief summary of a coded caching scheme based on Gabidulin codes. Finally, in Chapter~\ref{chap:network_coding} we explored error-correction schemes for network coding utilizing MRD codes.

In the following, we list some further results and applications of rank-metric codes that we have not discussed in detail in this survey.

\subsection*{Further Results on Rank-Metric Codes}

In \citep{ne18sys}, the structure of systematic generator matrices of Gabidulin codes was studied. It was shown that the non-systematic part of these matrices are $q$-analogs of Cauchy matrices, which can be seen as the rank-metric analog result of \citep{roth1985generator} on systematic generator matrices of generalized Reed--Solomon codes.

There is a very simple rank-metric code construction, introduced in \citep{gaborit2017identity} under the name ``simple codes''. These codes are able to correct probabilistic up to a radius that is larger than the one of Gabidulin codes of the same parameters. This comes at the cost of a high decoding complexity (but still polynomial), and the codes have minimum rank distance one, which means that decoding fails already for some (very few) errors of rank one.

Rank-metric codes can also be constructed over infinite fields. 
For instance, \citet{Roth_RankCodes_1991} gives bounds for such codes, and presents a simple construction for rank-metric codes over algebraically closed fields (here, we mean sets of matrices over an algebraically closed field, and the rank is taken w.r.t.\ this field).
In \citep{roth1996tensor,augot2018generalized}, constructions of Gabidulin codes over arbitrary Galois field extensions were given. Decoding of Gabidulin codes over Galois extensions was studied by \citet{robert2016quadratic,muelich2016alternative,augot2018generalized}. For exact computation domains (such as number fields), it is still an open problem to properly analyze the intermediate coefficient growth in these decoding algorithms, e.g., analog to \citep{sippel2019reed} for Reed-Solomon codes. In \citep{roth2017decoding}, decoding of rank-metric codes over algebraically closed fields in \citep{Roth_RankCodes_1991} was studied.

There is also a line of work that studies rank-metric codes over finite rings, mainly motivated by network coding applications. The first work was by \citet{kamche2019rank}, who studied Gabidulin codes over principal ideal rings, their decoding, and applications. \citet{puchinger2021efficient} studied the first decoding algorithm for Gabidulin codes over Galois rings that has a provable quadratic complexity in the code length.
The papers \citep{renner2020low,renner2020lowDCC,djomou2021generalization,kamche2021low} study low-rank parity-check codes over various finite rings.

In \citep{renner2019interleavedlrpc}, interleaved low-rank parity-check codes were defined (similar as vertically and heterogeneously interleaved Gabidulin codes), an efficient decoding algorithm for this code class was devised, and upper bounds on the decoding failure rate of this algorithm were derived.

There are also various papers that study weight distributions and MacWilliams identities for codes in the rank metric \citep{Delsarte_1978,Gabidulin_TheoryOfCodes_1985,gadouleau2008macwilliams,jurrius2015defining,ravagnani2016rank,blanco2018rank,byrne2020rank}. For more details, we refer to the survey by \citet{gorla2018codes}.

In \citep{neri2018genericity}, it was shown that for growing extension degree, $\Fqm$-linear MRD codes become dense in the set of rank-metric codes, i.e., their relative number converges to one. This agrees with the result on the density of $\Fqm$-linear MRD codes in~\citep{byrne2020partition}, which derives upper and lower bounds on the probability that a randomly chosen $\Fqm$-linear $[n,k]$ code has a given minimum $\Fq$-rank distance $d$. In contrast to MDS codes in the Hamming metric and $\Fqm$-linear MRD codes, $\Fq$-linear MRD codes are not dense. \citet{byrne2020partition} showed that the density of $\Fq$-linear MRD codes is asymptotically at most $\frac{1}{2}$, both as $q\to\infty$ and $m\to\infty$. \citet{gluesing-luerssen2020sparseness} showed that $\Fq$-linear $3\times 3$ MRD codes of minimum rank distance $3$ are \textit{sparse}, i.e., its proportion approaches $0$ as $q\to\infty$. In a recent work by~\citet{gruica2020common}, the asymptotic density of $\Fq$-linear $n\times m$ MRD of minimum rank distance $d$ is $O(q^{-(d-1)(n-d+1)+1})$ as $q\to\infty$, which means that $\Fq$-linear MRD codes are also sparse, unless $d=1$ or $n=d=2$.

This survey has discussed block codes in the rank metric. There is also a line of work on convolutional codes in the rank metric \citep{wachter2011partial,wachter2012rank,wachter2015convolutional,napp2017mrd,napp2017column,napp2018faster,almeida2021new}. These codes are considered in the related sum-rank metric, which can be seen as a mix of the Hamming and rank metric.
The sum-rank metric has attracted a lot of attention recently due to promising applications in network coding, distributed data storage, and space-time coding.
For more details on sum-rank-metric codes, we refer to the recent survey by \citet{martines2021survey}.

\subsection*{Further Applications of Rank-Metric Codes}

There are also several further applications of the codes that we have not discussed.
For instance, rank-metric codes can be used in combinatorics~\citep{brewster2020rook}, to construct linear authentication codes~\citep{wang2003linear} and space-time codes \citep{Gabidulin2000Spacetime,Lusina2003Maximum,Lu2004Generalized,AugotLoidreauRobert-GabidulinCharacteristicZero_2013,puchinger2016space,kamche2019rank} (see \citep{martines2021survey} for a recent survey). They have also been used for digital image watermarking \citep{lefevre2019application}, low-rank matrix recovery \citep{forbes2012identity,mu17} (using rank-metric codes over infinite fields \citep{Roth_RankCodes_1991,roth1996tensor,augot2018generalized}), and private information retrieval over networks \citep{TajedineHollantiWZ-2018}.

\begin{acknowledgements}
  The work of L.~Holzbaur and A.~Wachter-Zeh was supported by the German Research Foundation (Deutsche Forschungsgemeinschaft, DFG) under Grant No. WA3907/1-1. This work has also been supported by the European Research Council (ERC) under the European Union’s Horizon 2020 research and innovation programme (grant agreement no.~801434).\\
  H.~Liu has been supported by a German Israeli Project Cooperation (DIP) grant under grant no.~PE2398/1-1 and KR3517/9-1.\\
  This work was done while S.~Puchinger was with the Department of Applied Mathematics and Computer Science, Technical University of Denmark (DTU), Lyngby, Denmark and the Department of Electrical and Computer Engineering, Technical University of Munich, Munich, Germany. He was supported by the European Union’s Horizon 2020 research and innovation program under the Marie Sklodowska-Curie grant agreement no. 713683 and by the European Research Council (ERC) under the European Union’s Horizon 2020 research and innovation programme (grant agreement no. 801434).
\end{acknowledgements}

\backmatter
\printbibliography

\end{document}